%
%
%
%
%
%
%

\documentclass[preprint,nocopyrightspace]{sigplanconf}



\usepackage{amsmath}

\usepackage{stmaryrd}
\usepackage{enumerate}
\usepackage{xspace}
\usepackage{amsmath} 
\usepackage{amssymb} 
\usepackage{mathpartir}
\usepackage{amsthm}
\usepackage{hyperref}

\usepackage{proof}
\usepackage{color}

\begin{document}

\setlength{\pdfpageheight}{\paperheight}
\setlength{\pdfpagewidth}{\paperwidth}

\conferenceinfo{CONF 'yy}{Month d--d, 20yy, City, ST, Country} 
\copyrightyear{20yy} 
\copyrightdata{978-1-nnnn-nnnn-n/yy/mm} 
\doi{nnnnnnn.nnnnnnn}



                                  
\newcommand{\mlab}[2]{\ensuremath{\{\!\!\{#1\}\!\!\}^{#2}}}
\newcommand{\glab}[2]{\ensuremath{\{\!\!\{#1\}\!\!\}_{#2}}}
\newcommand{\rv}[1]{\mathcal{#1}} 
\newcommand{\recc}{\mathbf{corec}\,}
\newcommand{\closetc}[1]{\ensuremath{\mathcal{C}_{#1}}}
\newcommand{\pceq}{\simeq_{c}}
\newcommand{\eqsw}{\ensuremath{\simeq_{\mathtt{sw}}}}
\newcommand{\encp}[1]{\llbracket #1 \rrbracket}
\newcommand{\proj}[2]{\ensuremath{#1\!\!\upharpoonright\!{#2}}}
\newcommand{\sproj}[2]{\ensuremath{#1\!\wr{#2}}}
\newcommand{\ether}[1]{\ensuremath{\mathsf{M}\encp{#1}}}
\newcommand{\retheraux}[3]{\ensuremath{\mathsf{\mathbf{M}}^\mu\encp{#1}^{#3}_{#2}}}
\newcommand{\rether}[2]{\ensuremath{\mathsf{\mathbf{M}}^\mu\encp{#1}_{#2}}}
\newcommand{\raether}[3]{\ensuremath{\mathcal{M}^\mu\encp{#1}_{#3}}}
\newcommand{\etherb}[2]{\ensuremath{\mathtt{M}\encp{#1}_{#2}}}
\newcommand{\SWF}{SWF\xspace}
\newcommand{\MWF}{MWF\xspace}
\newcommand{\lt}[1]{\ensuremath{\langle\!\langle #1\rangle\!\rangle}}
\newcommand{\clt}[3]{\ensuremath{\langle\!| #1|\!\rangle^{#2}_{#3}}}
\newcommand{\gt}[1]{\ensuremath{(\!| #1|\!)}}
\newcommand{\partp}[1]{\ensuremath{\mathsf{part}(#1)}}
\newcommand{\rpart}[1]{\ensuremath{\pi(#1)}}
\newcommand{\lend}{\ensuremath{\mathsf{end}}\xspace}
\newcommand{\gend}{\ensuremath{\mathtt{end}}\xspace}
\newcommand{\nnum}{\nonumber}
\newcommand{\gto}[2]{\ensuremath{\mathtt{#1}\!\twoheadrightarrow\!\mathtt{#2}{:}}}
\newcommand{\igto}[2]{\ensuremath{\mathtt{#1}\!\leadsto\!\mathtt{#2}{:}\,}}
\newcommand{\mywith}[2]{\ensuremath{\with\!\{#1\}_{#2}}}
\newcommand{\myoplus}[2]{\ensuremath{\oplus\!\{#1\}_{#2}}}
\newcommand{\mytau}[2]{\ensuremath{\boldsymbol{\tau}\{#1\}_{#2}}}
\newcommand{\mycase}[3]{\ensuremath{#1\triangleright\!\{#2\}_{#3}}}
\newcommand{\mycasecol}[3]{\ensuremath{\jblue{#1\,\triangleright}\{#2\}_{#3}}}
\newcommand{\mycaseu}[3]{\ensuremath{\underline{#1\triangleright}\{#2\}_{#3}}}
\newcommand{\mycasebig}[3]{\ensuremath{#1\triangleright\!\big\{#2\big\}_{#3}}}
\newcommand{\mycasebigcol}[3]{\ensuremath{\jblue{#1\,\triangleright}\big\{#2\big\}_{#3}}}
\newcommand{\mycaseb}[3]{\ensuremath{#1\triangleright\!\{#2\}_{}}}
\newcommand{\mysel}[2]{\ensuremath{#1\triangleleft\!#2}}
\newcommand{\lb}[1]{\ensuremath{\mathtt{#1}}}
\newcommand{\pt}[1]{\ensuremath{\mathtt{#1}}}
\newcommand{\merg}{\bowtie}
\newcommand{\fuse}{\sqcup}
\newcommand{\subt}{\preceq^\fuse}
\newcommand{\subts}{\preceq^\oplus}

\newtheorem{theorem}{Theorem}[section]
\newtheorem{remark}[theorem]{Remark}
\newtheorem{nota}[theorem]{Notation}
\newtheorem{proposition}[theorem]{Proposition}
\newtheorem{corollary}[theorem]{Corollary}

\newtheorem{example}[theorem]{Example}
\newtheorem{lemma}[theorem]{Lemma}
\newtheorem{definition}[theorem]{Definition}

\newcommand{\Names}{\Lambda}
\newcommand{\fn}[1]{\mbox{\it fn}(#1)}
\newcommand{\bn}[1]{\mbox{\it bn}(#1)}
\newcommand{\fv}[1]{\mbox{\it fv}(#1)}

\def\sep{  ~\mbox{\large{$\mid$}}~}
\newcommand{\caseopof}[1]{\textbf{(Case of #1)}}

\newcommand{\linkr}[2]{[#1\!\leftrightarrow\!#2]}
\newcommand{\pvar}[1]{\mathcal{#1}}
\newcommand{\para}{\mathord{\;\boldsymbol{|}\;}}
\newcommand{\zero}{{\bf 0}}
\newcommand{\nub}{{\boldsymbol{\nu}}}
\newcommand{\taub}{{\boldsymbol{\tau}}}
\newcommand{\mub}{{\boldsymbol{\mu}}}
\newcommand{\send}[2]{\mathord{#1\!\left<#2\right>}}
\newcommand{\inp}[3]{#1(#2).#3}
\newcommand{\out}[1]{ \ov{#1}\,}
\newcommand{\ov}[1]{\overline{#1}}
\newcommand{\name}[1]{\mbox{{{(#1)}}}}
\newcommand{\cbox}[1]{\vspace{1ex}\centerline{#1}\vspace{1ex}} 
\newcommand{\til}[1]{\widetilde{#1}}

\newcommand{\eqcc}{\ensuremath{\simeq_{c}}\xspace}
\newcommand{\typconp}{\eqcc}
\newcommand{\trelind}[4]{\ensuremath{#1 \vdash #3 \tybis   #4\,{::}\,#2}}
\newcommand{\tybis}{\ensuremath{\approx}\xspace}

\newcommand{\lol}[0]{\lolli} 
\newcommand{\one}{{\bf 1}}
\newcommand{\lolli}{\mathord{\multimap}}
\newcommand{\tensor}{\otimes}
\newcommand{\with}{\mathbin{\binampersand}}

\newcommand{\bang}{\mathbf{!}}

\newcommand{\lft}[1]{{{#1}\mathsf{L}}}
\newcommand{\rgt}[1]{{{#1}\mathsf{R}}}
\newcommand{\cut}{\mathsf{cut}}
\newcommand{\cpy}{\mathsf{copy}}
\newcommand{\cutbang}{{\mathsf{cut}^\bang}}
\newcommand{\D}{\Delta}
\newcommand{\G}{\Gamma}

\newcommand{\cpar}{{\mathsf{comp}}}

\newcommand{\outp}[2]{\ov{#1}(#2)}
\newcommand{\about}[1]{\widehat{#1}}
\newcommand{\outa}[2]{\ov{#1}(#2)}

\def\subst#1#2{\{\raisebox{.5ex}{\small$#1$}\! / \mbox{\small$#2$}\}}
\newcommand{\red}{\rightarrow}

\newcommand{\aand}{{,\;}}

\newcommand{\tra}[1]{\xrightarrow{#1}}
\newcommand{\wtra}[1]{\stackrel{#1}{\Longrightarrow}}
\newcommand{\labelset}{\ensuremath{\lambda}}
\newcommand{\gtlabelset}{\ensuremath{\sigma}}

\newtheorem{myfact}[theorem]{Fact}
\newtheorem{convention}[theorem]{Convention}


\conferenceinfo{CONF'15}{XXX} 
\copyrightyear{2014} 
\copyrightdata{[to be supplied]} 

\titlebanner{DRAFT}        
\preprintfooter{DRAFT}   

\title{
A Typeful Characterization of Multiparty Structured Conversations Based on 
Binary Sessions
}

\authorinfo{Lu\'{i}s Caires}
           {Universidade Nova de Lisboa}
           {lcaires@fct.unl.pt}
\authorinfo{Jorge A. P\'{e}rez}
   {University of Groningen}
   {j.a.perez@rug.nl}

\maketitle

\begin{abstract}
Relating the specification of the global communication behavior of a distributed system and  
the specifications of the local 
communication behavior of each of its nodes/peers (e.g., to check if
the former is realizable by the latter under some safety and/or liveness conditions) is a challenging
problem addressed in many relevant scenarios. 
In the context of networked software services, 
a widespread programming language-based approach relies on global specifications defined by session types or behavioral contracts.  
Static type checking can then be used to ensure that components follow the prescribed interaction protocols.
In the case of 
session types, developments  have  been mostly framed within quite different type theories for 
either   \emph{binary} (two-party) or \emph{multiparty} ($n$-party) protocols.
Unfortunately, the precise relationship between
analysis techniques for 
multiparty and binary protocols is yet to be understood. 

In this work, we bridge this previously open gap in a principled way: we show that the  analysis of multiparty protocols can also be developed within a
much simpler type theory for binary protocols, ensuring protocol fidelity and deadlock-freedom.
We present characterization theorems which provide new insights on the relation between
two existing, yet very differently motivated, session type systems---one based on linear logic,
the other based on automata theory---and suggest useful type-based verification techniques for
 multiparty systems relying on reductions to the binary case.
\end{abstract}



\keywords
 Concurrency, Behavioral Types, Multiparty Communication, Linear Logic,
Session Types, Process Calculi.

\section{Introduction}

Relating the global specification of a distributed system and the set of components that implement such a specification is a problem found in many relevant scenarios. 
For instance, 
the analysis of security protocol narrations relies on formal correspondences between a global protocol specification and implementations for the each roles/principals in the protocol (see, e.g.,~\cite{DBLP:conf/esorics/McCarthyK08}).
Also, the formal connection between  Message Sequence Charts (MSCs) and  Communicating Finite State Machines (CFSMs) has been thoroughly studied (see, e.g.,~\cite{DBLP:conf/acsd/GenestM05}). A recent survey by Castagna et al.~\cite[\S7]{CastagnaDezaniPadovani12} contrasts these two scenarios.

Establishing relationships between global and local specifications is also important in
the analysis of networked software services.
In this context, the emphasis is in the analysis of message-passing, communication-oriented programs, which often feature advanced forms of concurrency, distribution, and trustworthiness. Within programming language-based techniques, 
notable approaches include \emph{interface contracts} (see, e.g.,~\cite{CastagnaGesbertPadovani08}) and 
\emph{behavioral types}~\cite{DBLP:conf/concur/Honda93,DBLP:conf/tgc/BonelliC07,DBLP:conf/esop/HondaVK98,DBLP:journals/tcs/IgarashiK04,DBLP:conf/popl/HondaYC08,DBLP:journals/tcs/CairesV10,DBLP:conf/popl/CarboneM13}. 
Our focus is in the latter, often defined on top of core programming calculi (usually, some dialect of the $\pi$-calculus~\cite{MilnerPW92a}) which specify the inherently interactive nature of communication-based systems.
By classifying behaviors (rather than values), behavioral types define abstract descriptions of structured protocols, and 
enforce disciplined exchanges of values and communication channels. 
Advantages of the behavioral types approach include simplicity and flexibility; successful integrations of behavioral types into functional and object-oriented languages~\cite{DBLP:journals/eatcs/Vasconcelos11} offer compelling evidence of these benefits. A variety of  frameworks based on behavioral types has been put forward, revealing a rather rich landscape of models and languages in which communication is delineated by types. 
In particular, frameworks 
based on \emph{session types}~\cite{DBLP:conf/concur/Honda93,DBLP:conf/esop/HondaVK98}
have received much attention in  academic and applied contexts.
In these models, multiparty conversations are organized as concurrent \emph{sessions}, which define structured dialogues. 
Unfortunately, the diversity of underlying models and techniques makes
formally relating independently defined typed frameworks a challenging task.
This limitation 
makes it difficult to objectively compare the expressiveness and significance of seemingly related behavioral type theories. 
Also, it 
hinders the much desirable transfer of reasoning/verification techniques 
between different typed frameworks---a notable example being techniques for ensuring deadlock-free protocols (see below).

In this paper, we identify formal relationships between 
two  distinct typed frameworks for structured communications.
Precisely, we establish natural and useful bridges between typed models for \emph{binary} and \emph{multiparty} session communications~\cite{DBLP:conf/esop/HondaVK98,DBLP:conf/popl/HondaYC08}: by relying on a theory of binary session types rooted on linear logic~\cite{CairesP10,Toninho2014}, we elucidate new deep foundational connections between both frameworks. 
Our results not only reveal new logically motivated justifications for key concepts
in models of typeful specifications of global and local behaviors;
they also 
 enable the principled transfer of useful reasoning techniques. 
As we argue next, this is rather significant for session types, 
as (well-understood) techniques in the binary setting are usually hard to generalize to multiparty sessions.

In binary communications~\cite{DBLP:conf/esop/HondaVK98} protocols involve exactly two partners, each one abstracted by a behavioral type; 
correct interactions  depend on 
 \emph{compatibility}, which intuitively means that when one partner 
 performs an action (e.g., send), the other performs a complementary one (e.g., receive).
In the 
multiparty setting, protocols may involve an arbitrary number of 
partners.
There is a global specification 
(or \emph{choreography}) to which all  partners, from their local perspectives, should adhere.
In multiparty session types~\cite{DBLP:conf/popl/HondaYC08,DBLP:conf/popl/DenielouY11,DBLP:conf/icalp/DenielouY13},
these two visions are described by
a \emph{global type} and \emph{local types}, respectively;  a \emph{projection function}  
formally relates the two. 
The expressiveness jump from binary to multiparty communications is  significant.
Extensive research has shown that type systems for multiparty communication have a much more involved theory than binary  ones.
In session types, e.g., 
this shows up in basic concepts such as 
compatibility:
while binary compatibility can be simply characterized as \emph{type duality}~\cite{DBLP:conf/esop/HondaVK98},
a formal characterization of multiparty compatibility was given only recently~\cite{DBLP:conf/icalp/DenielouY13}.
Also, some analysis techniques, such as those for  deadlock-freedom, 
are a difficult issue in the multiparty setting, and certainly harder than in the binary case.
The question that arises
is then: 
could multiparty session types be 
reduced, in some precise sense, into  binary ones?
As discussed in~\cite{DBLP:conf/popl/HondaYC08}, 
such a reduction entails decomposing
a single global specification into several binary fragments. 
Defining 
such a decomposition is far from trivial, for it should satisfy at least two  requirements.
First, it must preserve crucial \emph{sequencing information} among multiparty exchanges.
Second, 
the resulting collection of binary interactions should not exhibit 
\emph{undesirable behaviors}, such as, e.g., 
synchronization errors, 
unspecified protocol steps,
deadlocks, 
and unproductive, non-terminating reductions.

This paper answers the above question in the affirmative.
We exhibit a tight correspondence between:
\begin{enumerate}[$\bullet$]
\item a standard theory of multiparty session types
(as first formulated 
by Honda, Yoshida, and Carbone~\cite{DBLP:conf/popl/HondaYC08}
and recently characterized via communicating automata 
by Deni{\'e}lou and Yoshida~\cite{DBLP:conf/icalp/DenielouY13}), and
\item the theory of deadlock-free binary session types proposed by Caires and Pfenning in~\cite{CairesP10},
which interprets linear logic propositions as session types, in the 
style of Curry-Howard. 
\end{enumerate}

\noindent We  briefly motivate our approach.
Deni{\'e}lou and Yoshida have
recently identified the set of global types that admit a sound and complete characterization of
multiparty compatibility in terms of communicating automata~\cite{DBLP:conf/icalp/DenielouY13}.
This suggests whether 
multiparty communication 
could be related to binary communication. 
An initial observation is that 
a communication from  $\pt{p}$ to  $\pt{q}$
can be \emph{decoupled} into two simpler steps: 
a \emph{send} action from $\pt{p}$ to an intermediate entity, 
followed by a step in which the 
entity
\emph{forwards} the message to $\pt{q}$.
Our approach 
generalizes 
this observation:
given a multiparty conversation 
$G$ (a global type),  we extract its semantics in terms of a \emph{medium process}, denoted \ether{G}, 
an intermediate entity in all protocol exchanges (\S\,\ref{s:medium}).
Process \ether{G}  may uniformly capture all the sequencing information prescribed by $G$. 
The next step 
for a full characterization is determining the conditions under which \ether{G}
may be well-typed in a theory for binary session types,
with respect to  the local types for $G$ (i.e., its associated projections).
A key  ingredient in our characterization is the 
theory for binary session types introduced in~\cite{CairesP10} and extended in~\cite{Toninho2014}.
Due to their logical foundations, 
well-typed processes are naturally 
type preserving; this entails  
 \emph{fidelity} (i.e., protocols are respected) and \emph{safety} (i.e., absence of runtime communication errors).
Moreover, well-typed processes are 
\emph{deadlock-free} (i.e., processes do not get stuck) and  \emph{compositionally non-diverging} (i.e., infinite observable behavior is allowed, but unproductive, infinite internal behavior is ruled out). 
Particularly relevant for our approach is deadlock-freedom, not 
directly ensured by alternative type systems.
Our core developments rely on tight relationships between global and binary session types: 
\begin{enumerate}[a)]
\item For any  global type $G$ which is \emph{well-formed} (in the sense of Deni{\'e}lou and Yoshida~\cite{DBLP:conf/icalp/DenielouY13}), a medium process $\ether{G}$ can be constructed such that $\ether{G}$ is well-typed in the binary session type system of~\cite{CairesP10}, under a typing environment
in which participants are assigned binary types that correspond to the expected 
 projections of $G$ onto all participants.
 \item Conversely, for any $G$ 
such that 
$\ether{G}$ is well-typed in the type system of~\cite{CairesP10,Toninho2014}, under a typing environment assigning some binary session types for participants, such binary types 
correspond, in a very precise sense, to the projections of $G$.
\end{enumerate}

\noindent Notice that (a) 
immediately provides an alternative proof of global progress/deadlock-freedom for global systems as in \cite{DBLP:conf/popl/HondaYC08,DBLP:conf/icalp/DenielouY13}. 
Moreover, given the uniform definition of medium processes, 
our results immediately apply to more expressive systems, in particular supporting name passing, session delegation, 
and parallel composition of global types, all of these features being beyond the scope of~\cite{DBLP:conf/icalp/DenielouY13}.
It is also worth highlighting that, 
unlike, e.g., \cite{DBLP:conf/icalp/DenielouY13},
our medium characterization of multiparty session types s
bridges all the way from global types to actual 
processes implementing the system.
This allows us to explore known properties of the underlying typed processes, in particular, behavioral equivalences~\cite{DBLP:conf/esop/PerezCPT12} (see below), to reason about (and justify) properties of multiparty systems.

\paragraph{Contributions.}
This paper offers the following contributions.

\begin{enumerate}[1.]
\item We offer 
an analysis 
of multiparty session types using
a theory of 
binary session types, ensuring 
fidelity and deadlock-freedom. 
We give a two-way correspondence relating well-formed global types 
with typability of its associated medium  on a logically motivated theory of binary sessions.
This results holds both for global types without recursion (but with parallel composition, cf. Theorems~\ref{l:ltypesmedp} and \ref{l:medltypes}) and for global types with recursion (and without parallel composition, exactly the same type structure studied in~\cite{DBLP:conf/icalp/DenielouY13}, cf. Theorem~\ref{l:ltypesmedprec} and~\ref{l:medltypesrec}).

\item 
For global types without recursion (but with parallel composition), we  show how known typed behavioral equivalences for binary sessions~\cite{DBLP:conf/esop/PerezCPT12} may be used to justify behavioral transformations of global types, expressed at the level of mediums (Theorem~\ref{p:swapmeds}). This
provides a deep semantic justification of useful structural identities on global types, such as those capturing parallelism 
via commutation or interleaving of causally independent  communications~\cite{DBLP:conf/popl/CarboneM13}.


\item For global types with recursion (and without parallel composition),  we prove an operational correspondence result relating the behavior 
of a global type with the observable behavior of the composition of (a)~its medium process (instrumented in a natural way) and (b)~arbitrary 
implementations for local participants~(Theorem~\ref{th:opcorr}).
This confirms that mediums faithfully mirror global specifications.
That is, going through the medium does not introduce extra sequentiality in protocols, as mediums are essentially transparent from an operational standpoint.

\item We show how an approach based on mediums allows to effectively transfer techniques from binary sessions to multiparty session types. We show how to enrich global type specifications with an expressive \emph{join primitive}. Also, we describe how to analyze global types with \emph{parametric polymorphism} based on already existing theories in the binary setting (\S\,\ref{s:exts}). The latter is remarkable, for we do not know of multiparty session type theories supporting parametric polymorphism.

  \end{enumerate}

\noindent Our results not only 
clarify 
the relationship between multiparty and binary session types. They also give further evidence on the fundamental character
of the notions involved (projection functions, type primitives, multiparty compatibility),
since they can be independently and precisely explained within the separate worlds of
communicating automata and linear logic.

Next we illustrate our approach and key results by means of an example.
\S\,\ref{s:prelim} collects definitions on multiparty and binary sessions.
Our main results are reported in \S\,\ref{s:tmedium}.
Further illustration is given in \S\,\ref{s:exx}. 
In \S\,\ref{s:exts} we describe extensions to our approach, while 
\S\,\ref{s:relwork} discusses related works.
\S\,\ref{s:concl} collects some concluding remarks.

\emph{An accompanying appendix gives details of omitted definitions and proofs.}

\section{Our Approach and Main Results, By Example}\label{s:example}

\newcommand{\angg}[5]{\gto{#1}{#2}\{\lb{#3}\langle \mathsf{#4}\rangle.#5\}\!}
\newcommand{\mangg}[5]{\gto{#1}{#2}\big\{\lb{#3}\langle \mathsf{#4}\rangle.#5\big\}_{}}
\newcommand{\anggb}[5]{\gto{#1}{#2}\{\lb{#3}\langle \mathsf{#4}\rangle. ~~~~~~~~  \\ ~~ #5\}\!}
\newcommand{\anggd}[8]{\gto{#1}{#2}\{\lb{#3}\langle \mathsf{#4}\rangle.#5 \, , \, \lb{#6}\langle \mathsf{#7}\rangle.#8 \}\!}

We now elaborate further on our contributions by 
illustrating how our approach bridges all the way from global types (choreographies) to actual $\pi$-calculus processes
which realize multiparty scenarios and inherit key properties from binary session typing, most notably, protocol fidelity and freedom from deadlocks. 

We illustrate the multiparty session types approach by an example.
In this paper, we consider the following syntax of global types:
\[
\begin{array}{rll}
 G ::= & \gto{p}{q}\{\lb{l}_i\langle U_i\rangle.G_i\}_{i \in I} &  \text{communication from $\pt{p}$ to $\pt{q}$} \\
\sep & G_1 \para G_2 & \text{composition of global types}\\
\sep & \mu\rv{X}.\,G  \sep \rv{X} & \text{recursive global type} \\
\sep& \gend & \text{terminated global type}
\end{array}
\]
Consider global type $G_c$ below, which abstracts a  
variant of the \emph{commit protocol} in~\cite{DBLP:conf/icalp/DenielouY13}.
The protocol involves three participants:  \pt{A}, \pt{B}, and \pt{C}. First, 
\pt{A} orders to \pt{B} to \emph{act} or to \emph{quit}. 
In the first case, \pt{B} sends a \emph{signal} to \pt{C} and subsequently \pt{A} sends to \pt{C}
a \emph{commit} order. In the second case, \pt{B} orders to \pt{C} to \emph{save}, then \pt{A} orders \pt{C} to \emph{finish}:
\begin{align*}
G_c = \gto{A}{B}\{  \lb{~act}\langle \mathsf{int} \rangle.\gto{B}{C}\{\lb{sig}\langle \mathsf{str}\rangle.
\gto{A}{C}\{\lb{comm}\langle \one 
\rangle.\gend\} \}  , \\
  \lb{quit}\langle \mathsf{int} \rangle.\gto{B}{C}\{\lb{save}\langle \one \rangle.
 \gto{A}{C}\{\lb{fini}\langle \one \rangle.\gend\} \} ~\} 
\end{align*}
Given a global specification such as $G_c$, 
to derive code for each participant in the choreography, one first extracts a set of so-called \emph{local types}. 
Formally, these local types are obtained using a \emph{projection function} (cf. Def.~\ref{d:proj}). 
This way, e.g.,
the local behavior for  $\pt{A}$ and $\pt{C}$,
denoted $\proj{G_c}{\pt{A}}$ and $\proj{G_c}{\pt{C}}$, 
 is given by the local types:
$$
\!\!\!\!\!\!\begin{array}{rl}
\proj{G_c}{\pt{A}}  & = \mathtt{A}!\big\{\lb{act}\langle \mathsf{int}\rangle.  \mathtt{A}! \{\lb{comm}\langle \one 
\rangle.\lend \} ,  \\
&  \qquad \quad \lb{~quit}\langle \mathsf{int}\rangle.  \mathtt{B}! \{\lb{sig}\langle \mathsf{str} \rangle.\lend \}\, \big\} \\
\proj{G_c}{\pt{C}} & =  \mathtt{B}?\big\{\lb{sig}\langle \mathsf{str}\rangle.  \mathtt{A}? \{\lb{comm}\langle \one 
\rangle.\lend \} ,  \\
&  \qquad \quad \lb{~save}\langle \one \rangle.  \mathtt{A}?\{\lb{fini}\langle \one  \rangle.\lend \}\, \big\} 
\end{array}
$$
The local type $\proj{G_c}{\pt{B}}$ is obtained similarly.
Not all global types are meaningful: following~\cite{DBLP:conf/icalp/DenielouY13},
we focus on 
\emph{well-formed global types}---global types which have a well-defined local type for each participant.
For well-formed global types, one may  independently obtain  implementations for each participant; then, using 
associated type systems~\cite{DBLP:conf/popl/HondaYC08}, it can be ensured that such 
implementations are type-safe and
realize the intended global scenario.
Using complementary techniques one may guarantee other advanced properties, such as deadlock-freedom in interleaved sessions (cf.~\cite{DBLP:conf/coordination/CoppoDPY13}).
Also, using the correspondence with communicating automata~\cite{DBLP:conf/icalp/DenielouY13}, safety and liveness guarantees carry over to choreographies. 

In this paper, we propose analyzing global types  by means of an existing type system for binary sessions, which 
defines a Curry-Howard isomorphism between intuitionistic linear logic propositions and session types~\cite{CairesP10,Toninho2014}.
The analysis that we propose is thus endowed with a deep foundational significance.
The main conceptual device in our approach is the  \emph{medium process} of a global type (or simply \emph{medium}). Extracted following the syntax of the global type, a medium  process is intended to interact with all the implementations which conform to the local types obtained via projection. Before describing the medium for $G_c$ above, let us give some intuitions on our process syntax---a standard $\pi$-calculus with $n$-ary labeled choice (cf. Def.~\ref{d:procs}). 
We write $a(v)$ and $\outp{b}{w}$ to denote, respectively, 
 prefixes for \emph{input} (along name $a$, with $v$ being a placeholder)
and \emph{bound output} (along name $b$, with $w$ being a freshly generated name).
Labeled choice is specified using
processes $\mycase{a}{\lb{label_1}:P_1, \cdots, \lb{label_n}:P_n}{}$ (branching)
and $\mysel{a}{\lb{label_j}};Q$ (selection).
The interaction of input and output prefixes (resp. branching and selection constructs) defines a \emph{reduction}.
Process $\linkr{u}{y}$ equates names $u$ and $y$, while 
$(\nub x)P$ and 
$P \para Q$ denote the usual restriction and parallel composition constructs.

The medium  of a global type is a process in which every directed labeled communication is captured by its decomposition into simpler prefixes. This way, e.g. the medium of $G_c$, denoted $\ether{G_c}$, is as in Fig.~\ref{f:exam},
\begin{figure}
$$
\begin{array}{ll}
\mycasebig{a} { & \!\!\!\lb{act} : a(v).\mysel{b}{\lb{act}};\outp{b}{w}.(\linkr{w}{v} \para \\
		& \quad ~~ \mycaseb{b}{\lb{sig} : b(n).\mysel{c}{\lb{sig}};\outp{c}{m}.(\linkr{n}{m} \para \\
		& \quad \quad  \mycaseb{a}{\lb{comm}: a(u).\mysel{c}{\lb{comm}};\outp{c}{y}.(\linkr{u}{y} \para \zero) 
 		}{} )     }{}\, )~~,  \\ 
		& \!\!\!\lb{quit} : a(v).\mysel{b}{\lb{quit}};\outp{b}{w}.(\linkr{w}{v} \para \\
		& \quad ~~  \mycaseb{b}{\lb{save} : b(n).\mysel{c}{\lb{save}};\outp{c}{m}.(\linkr{n}{m} \para \\
		& \quad \quad   \mycaseb{a}{\lb{fini}: a(u).\mysel{c}{\lb{fini}};\outp{c}{y}.(\linkr{u}{y} \para \zero) 
 		}{}  )     }{} \, ) }{} \\ 
\end{array}
$$
\caption{Medium process $\ether{G_c}$ for global type $G_c$\label{f:exam}}
\vspace{-4mm}
\end{figure}
\noindent
where we have used names $a, b$, and $c$ to indicate interactions associated to \pt{A}, \pt{B}, and \pt{C}, respectively.
With this in mind, we may now informally explain  how the medium $\ether{G_c}$ gives semantics to the global type $G_c$. Consider the first labeled communication in $G_c$, in which $\pt{A}$ sends a value to $\pt{B}$ by selecting a label $\lb{act}$ or $\lb{quit}$.
We assume that process implementations for $\pt{A}$ and $\pt{B}$ are available on names $a$ and $b$, respectively. 
The implementation for $\pt{A}$ should first select a label (say $\lb{act}$) and then output a message for $\pt{B}$ (say, a name $y$ denoting a reference to an integer).
 Accordingly, the first two actions of $\ether{G_c}$ are on name $a$: it first 
 commits to the labeled alternative $\lb{act}$ and then it receives $y$.
This completes the involvement of $\pt{A}$ in the communication to $\pt{B}$.
Subsequently, the medium acts on name $b$, first selecting label $\lb{act}$ in the implementation of 
$\pt{B}$ and then sending a fresh name $w$. Then, names $w$ and $y$ (i.e., the one received from $\pt{A}$) are ``linked'' together,  and 
the medium  for the continuation of the global protocol is spawned.

Mediums give a simple characterization of global types; intuitively, they define the code for ``gluing together''  the behavior of all local participants. However, a medium by itself does not relate  
a global type and the local types obtained by projection.
Giving a logically motivated justification for this connection is the central contribution of this paper.
To this end,
we rely on the theory of binary session types developed in~\cite{CairesP10,Toninho2014}, which  connects intuitionistic linear logic propositions and session types.
This way, e.g., session types 
$!A; B$ and $?A;B$, introduced in \cite{DBLP:conf/esop/HondaVK98} to abstract input and output, 
are represented in~\cite{CairesP10,Toninho2014} by the tensor $A\otimes B$ and the linear arrow $A\lol B$.
More precisely, we have:
$$
\begin{array}{rll}
A,B,C  ::= & \one  & \text{terminated binary session}\\
  \sep & A \lol B  \, \sep A \otimes B & \text{input, output}\\ 
       \sep &   A \with B  \sep   A \oplus B & \text{branching and selection} \\
   \sep & ! A     ~~\sep   \nu \rv{X}.A  \sep \rv{X} &  \text{replicated and  coinductive type }
\end{array}
$$
\noindent 
We use the expected extension of the binary operators $\with$ and $\oplus$ to the $n$-ary case. 
Type assignments are of the form
$x{:}A$, where $x$ is a name and $A$ a session type. Given a process $P$,
the typing judgment 
$$\Gamma ; \Delta \vdash_\eta P :: z{:}C$$
asserts that $P$ provides a behavior described by session type $C$ at
channel/name $z$, building on \emph{linear} session behaviors declared in type environment $\Delta$ and \emph{unrestricted} (or \emph{persistent}) behaviors declared in type environment $\Gamma$. 
The map $\eta$ relates (corecursive) type variables to typing contexts~\cite{Toninho2014}.
Thus, judgments in~\cite{CairesP10,Toninho2014} specify both 
the behavior that a process \emph{offers} (or \emph{implements}) and 
the  (unrestricted, linear) behaviors that it \emph{requires} to do so.
As discussed above,  
well-typed processes in this system are naturally type-preserving and deadlock-free.
They are also non-diverging. 

Our characterization results concern well-typedness of mediums 
in the logic-based theory of binary sessions.
In fact, Theorems~\ref{l:ltypesmedp} and~\ref{l:medltypes} 
ensure that the medium process $\ether{G_c}$ given above is typable as follows:
\begin{equation}\label{eq:preview}
\cdot \,;\, a{:}\lt{\proj{G_c}{\pt{A}}},\,  b{:}\lt{\proj{G_c}{\pt{B}}},\,  c{:}\lt{\proj{G_c}{\pt{C}}} \vdash_\eta \ether{G_c} :: z : \one
\end{equation}
where 
`$\cdot$' stands for the empty (shared) environment and
$\lt{\cdot}$ relates local types and binary session types (cf. Def.~\ref{d:loclogt}).
Well-typedness of $\ether{G_c}$ as captured by~\eqref{eq:preview} has  significant consequences: 
\begin{enumerate}[a.]
\item It formalizes the  dependence of the medium on behaviors with local types defined by projection:~\eqref{eq:preview} says that
$\ether{G_c}$ requires exactly behaviors 
$\lt{\proj{G_c}{\pt{A}}}$,  $\lt{\proj{G_c}{\pt{B}}}$, and $\lt{\proj{G_c}{\pt{C}}}$, which should be available, as linear resources, along names $a$, $b$, and $c$.

\item The judgement also ensures that the medium does not add extraneous behaviors: since 
the offered behavior of $\ether{G_c}$ (in the right-hand side) is $\one$, we are sure that it  acts  as a faithful mediator among the behaviors described in the left-hand side. 

\item By well-typedness, \ether{G_c} 
inherits 
type preservation, deadlock-freedom, and non-divergence as ensured by the type system in~\cite{CairesP10,Toninho2014}.
This not only means that $\ether{G_c}$ in isolation behaves as intended. 
Consider processes $P_\pt{A}$, $Q_\pt{B}$, and $R_\pt{c}$ which 
implement
$\lt{\proj{G_c}{\pt{A}}}$,  $\lt{\proj{G_c}{\pt{B}}}$, and $\lt{\proj{G_c}{\pt{C}}}$
in appropriate names:
\begin{eqnarray*}
\cdot \, ; \cdot \vdash_\eta  P_\pt{A} :: a{:}\lt{\proj{G_c}{\pt{A}}}
\quad
\cdot \, ; \cdot \vdash_\eta  Q_\pt{B} :: b{:}\lt{\proj{G_c}{\pt{B}}}
\\
\cdot \, ; \cdot \vdash_\eta  R_\pt{C} :: c{:}\lt{\proj{G_c}{\pt{C}}}
\end{eqnarray*}
where, for  simplicity, we have assumed no linear/shared dependencies.
These processes can be constructed independently from (and unaware of) $\ether{G_c}$, and type-checked in the system of~\cite{CairesP10,Toninho2014}, inheriting all key properties. 
Moreover, 
the composition of $\ether{G_c}$ with such processes (i.e., a \emph{system} realizing $G_c$)
$$
(\nub c)((\nub b)((\nub a)(\ether{G_c} \para P_\pt{A}) \para Q_\pt{B}) \para R_\pt{C})
$$
is also well-typed, and therefore
type-preserving, deadlock-free, and non-diverging. Deadlock-freedom can be seen to be crucial in ensuring  faithful sets of interactions between the local implementations $P_\pt{A}$, $Q_\pt{B}$, $R_\pt{C}$ and the medium $\ether{G_c}$.
\end{enumerate}
\noindent To further support point (b) above, we define the \emph{annotated medium} of $G_c$, denoted $\raether{G_c}{k}{k}$ (cf. Def.~\ref{d:raether}).
This process extends \ether{G_c} with visible actions on name $k$ which mimic those in $G_c$. Our main  results extend smoothly to annotated mediums, and we may derive the following:
 \begin{equation}\label{eq:preview2}
\cdot \, ; a{:}\lt{\proj{G_c}{\pt{A}}},  b{:}\lt{\proj{G_c}{\pt{B}}},  c{:}\lt{\proj{G_c}{\pt{C}}} \vdash_\eta \raether{G_c}{k}{k} :: k{:} \gt{G_c}
\end{equation}
where \gt{G_c} denotes a session type which captures the sequentiality of $G_c$ (cf. Def.~\ref{d:glolog}). Building upon~\eqref{eq:preview2}, 
we may state a rather strong result of operational correspondence relating global types and annotated mediums, which is given by Theorem~\ref{th:opcorr}. Roughly speaking, such a result identifies the conditions under which each step of type $G_c$ (as formalized by a simple LTS for global types) can be matched by a labeled transition of  process \raether{G_c}{k}{k}.

\section{Preliminaries: Multiparty and Binary Sessions}\label{s:prelim}
Here we collect key definitions for multiparty session types, as
presented in~\cite{DBLP:conf/popl/HondaYC08,DBLP:conf/icalp/DenielouY13} (\S\,\ref{ss:multip}).
We also summarize the key concepts of the logically motivated theory of binary sessions,
as first introduced in~\cite{CairesP10} and extended with coinductive session types in~\cite{Toninho2014} (\S\,\ref{ss:btypes}).

\subsection{Multiparty Session Types}\label{ss:multip}
Our syntax of global and local types subsumes constructs from the original presentation~\cite{DBLP:conf/popl/HondaYC08}
and from recent formulations~\cite{DBLP:conf/icalp/DenielouY13}.
With respect to~\cite{DBLP:conf/popl/HondaYC08}, we 
consider labeled communication, recursion, and
retain parallel composition, which 
enables  compositional reasoning over global specifications.
With respect to~\cite{DBLP:conf/icalp/DenielouY13},
we consider value/session passing 
in branching (cf. $U$ below) and add parallel composition.
Below, \emph{participants} and \emph{labels}
are ranged over by $\pt{p},  \pt{q}, \pt{r}, \ldots$.
and $\lb{l}_1, \lb{l}_2, \ldots$, respectively. 
\begin{definition}[Global and Local Types]\label{d:gltypes}
Define global and local types as
{
\begin{eqnarray*}
&G  ::= & \!\!\!\!\gend \sep \gto{p}{q}\{\lb{l}_i\langle U_i\rangle.G_i\}_{i \in I} \sep G_1 \para G_2 \sep \mu\rv{X}.\,G \sep \rv{X}\\ 
&U  ::=  & \!\!\!\!\mathsf{bool} \sep \mathsf{nat} \sep \mathsf{str} \sep\ldots \sep T\\
&T  ::= & \!\!\!\!\lend \sep \mathtt{p}?\{\lb{l}_i\langle U_i\rangle.T_i\}_{i \in I} \sep \mathtt{p}!\{\lb{l}_i\langle U_i\rangle.T_i\}_{i \in I}  \sep \mu\rv{X}.\,T \sep \rv{X}
\end{eqnarray*}
}
The set of participants of $G$, denoted \partp{G}, is defined as:
$\partp{\gto{p}{q}\{\lb{l}_i\langle U_i\rangle.G_i\}_{i \in I}} = \{\mathtt{p} , \mathtt{q}\} \cup \bigcup_{i \in I} \partp{G_i}$, \\ ~ $\partp{G_1 \para G_2} = \partp{G_1 } \cup \partp{G_2}$, ~$\partp{\mu\rv{X}.\,G} = \partp{G}$, ~$\partp{\gend} = \partp{\rv{X}} =\emptyset$.
We sometimes write $\pt{r} \in G$ to abbreviate $\pt{r} \in \partp{G}$.
\end{definition}
\noindent The global type 
$ \gto{p}{q}\{\lb{l}_i\langle U_i\rangle.G_i\}_{i \in I}$ specifies that, by choosing label $\lb{l}_i$,
 \pt{p} may send to 
 \pt{q} a message of type $U_i$, with subsequent behavior $G_i$.
 As in~\cite{DBLP:conf/popl/HondaYC08,DBLP:conf/icalp/DenielouY13}, we decree $\pt{p} \neq \pt{q}$, 
 so reflexive interactions are disallowed. Also, set $I$ is finite and labels are assumed pairwise different.
  The global type $ G_1 \para G_2$, introduced in~\cite{DBLP:conf/popl/HondaYC08}, 
  allows the concurrent execution of $G_1$ and $G_2$.
  Global type $\mu \rv{X}.G$ defines recurring conversation structures. 
As in~\cite{DBLP:conf/popl/HondaYC08,DBLP:conf/icalp/DenielouY13}, we 
restrict to global recursive types in which type variables $\rv{X}$ are all \emph{guarded}, i.e., 
they may only occur under branchings.
  Global type $\gend$ denotes the completed choreography. 
  At the local level, branching types $ \mathtt{p}?\{\lb{l}_i\langle U_i\rangle.T_i\}_{i \in I}$
  and selection types  $\mathtt{p}!\{\lb{l}_i\langle U_i\rangle.T_i\}_{i \in I}$ have expected readings.
  The terminated local type is denoted $\lend$.


We now define \emph{projection} and \emph{well-formedness} for global types.
We consider 
\emph{merge-based} projection as first proposed in 
\cite{DBLP:journals/corr/abs-1208-6483}
and used in \cite{DBLP:conf/icalp/DenielouY13}.
The definition below adds flexibility to the one in~\cite{DBLP:conf/popl/HondaYC08} by 
relying on a \emph{merge} operator on local types; we give a   
simpler presentation of the definition in~\cite[\S\,3]{DBLP:conf/icalp/DenielouY13},
considering messages $U$.


\begin{definition}[Merge]\label{d:mymerg}
We define $\fuse$ 
as the commutative partial operator 
on base and local types such that: 
\vspace{-1mm}
\begin{enumerate}[1.]
\item $\mathsf{bool} \fuse \mathsf{bool} =  \mathsf{bool}$ (and analogously for other base types)
\item $\lend \fuse \lend = \lend$
\item $\pt{p}!\{\lb{l}_i \langle U_i\rangle.T_i\}_{i \in I} \fuse \pt{p}!\{\lb{l}_i \langle U_i\rangle.T_i\}_{i \in I} = \pt{p}!\{\lb{l}_i \langle U_i\rangle.T_i\}_{i \in I}$
\item $\pt{p}?\{\lb{l}_k\langle U_k\rangle.T_k\}_{k \in K} \fuse \mathtt{p}?\{\lb{l}'_j\langle U'_j\rangle.T'_j\}_{j \in J} =$ \\ $ 
\pt{p}?(\{\lb{l}_k\langle U_k\rangle.T_k\}_{k \in K} \cup \{\lb{l}'_j\langle U'_j\rangle.T'_j\}_{j \in J})$ if 
for all $k \in K, j \in J$, $(\lb{l}_k = \lb{l}'_j)$ implies that both 
$U_k \fuse U'_j$ and $T_k \fuse T'_j$ are defined.
\item $\rv{X} \fuse \rv{X} = \rv{X}$ and $\mu \rv{X}.G_1 \fuse \mu \rv{X}.G_2 = \mu \rv{X}.(G_1 \fuse G_2)$.
\end{enumerate}
and is undefined otherwise.
\end{definition}

\noindent Intuitively, 
for $U_1 \fuse U_2$ to be defined there are two options:
(a) $U_1$ and $U_2$ are both identical base, terminated or selection types;
 (b) $U_1$ and $U_2$ are both branching types, but not necessarily identical:
they may offer different options but with the condition that the
behavior in labels occurring in both $U_1$ and $U_2$ must be the same, up to
$\fuse$.

\begin{definition}[Projection~\cite{DBLP:journals/corr/abs-1208-6483,DBLP:conf/icalp/DenielouY13}]\label{d:proj}
Let $G$ be a global type.
The \emph{(merge-based) projection} of $G$ under participant $\pt{r}$, 
denoted \proj{G}{\pt{r}}, is defined as follows:
\begin{enumerate}[$\bullet$]
\item $\proj{\gend}{\pt{r}}  =  \lend$

\item $\proj{\gto{p}{q}\{\lb{l}_i\langle U_i\rangle.G_i\}_{i \in I}}{\pt{r}}  = $ \\
$\begin{cases}
\pt{p}!\{\lb{l}_i \langle U_i\rangle.\proj{G_i}{\pt{r}}\}_{i \in I} & \text{if $\pt{r} = \pt{p}$} \\
\pt{p}?\{\lb{l}_i \langle U_i\rangle.\proj{G_i}{\pt{r}}\}_{i \in I} & \text{if $\pt{r} = \pt{q}$} \\
\fuse_{i \in I} \,\proj{G_i}{\pt{r}} & \text{otherwise, with $\fuse$ as in Def.~\ref{d:mymerg}} 
\end{cases} $

\item $\proj{(G_1 \para G_2) }{\pt{r}}  = 
\begin{cases} \proj{G_i}{\pt{r}} & \text{if $\pt{r} \in G_i$ and $\pt{r} \not\in G_j$, $i \neq j \in \{1,2\}$ } \\
														 \lend & \text{if  $\pt{r} \not\in G_1$ and $\pt{r} \not\in G_2$}
											\end{cases} $

\item $\proj{ \rv{X}}{\pt{r}}  =  \rv{X}$

\item $\proj{\mu \rv{X}.G}{\pt{r}}  = 
\begin{cases}
\mu \rv{X}.\proj{G}{\pt{r}}  & \text{if $ \proj{G}{\pt{r}}  \neq \rv{X}$} \\
\gend & \text{otherwise} 
\end{cases} $
\end{enumerate}
When a side condition does not hold, the map is undefined.
\end{definition}



\begin{definition}[Well-Formed Global Types~\cite{DBLP:conf/icalp/DenielouY13}]\label{d:wfltypes}
We say global type $G$ is \emph{well-formed  (WF, in the following)} if for all $\pt{r} \in G$, 
the projection $\proj{G}{\pt{r}}$ is defined.
\end{definition}


\newcommand{\spt}{\ensuremath{\bowtie}}

\subsection{Binary Session Types Based on Linear Logic}\label{ss:btypes}
In this paper we build upon the theory of binary session types of~\cite{CairesP10,Toninho2014},
based on a Curry-Howard interpretation of 
session types as linear logic propositions.
In the remainder, we assume no special background on linear logic; 
we refer to~\cite{CairesP10,Toninho2014} for further details.

\paragraph{The Process Model: Syntax and Semantics.}
We define a 
synchronous $\pi$-calculus~\cite{sangiorgi-walker:book} with replication, 
forwarding, $n$-ary labeled choice, and co-recursive definitions/variables (cf. \cite{Toninho2014}).
As for global types, we use $\lb{l}_1, \lb{l}_2, \ldots$ to range over \emph{labels}.

\begin{definition}[Processes]\label{d:procs}
Given an infinite set $\Names$ of {\em names} $(x,y,z,u,v)$, 
the set of {\em processes} $(P,Q,R)$ is  defined by
$$ 
\begin{array}{rllllllllllllllllllllllllll}
   P  ::= & \mathbf{0} &    \sep & P \para Q   &  \sep &  (\nub y)P \\
         \sep & \out{x}y.P  &  \sep & x(y).P   &  \sep &   \bang x(y).P &    \\
   \sep &  \mysel{x}{\lb{l}_i};P &  \sep &   \mycase{x}{\lb{l}_i : P_i}{i \in I}  & \sep  &\linkr{x}{y} \\
   \sep &  \rv{X}(\widetilde{c}) & \sep & (\recc \rv{X}(\widetilde{y}).P)\,\widetilde{c}
\end{array}
$$
\end{definition}
\noindent The operators $\zero$ (inaction), $P\para Q$ (parallel composition),
and $(\nub y)P$ (name restriction) are standard.
We then have $\out{x}y.P $ (send   $y$ on $x$ and
proceed as $P$), $\inp{x}{y}P $ (receive a   $z$ on $x$ and proceed
as $P$ with parameter $y$ replaced by $z$), and $ \bang
x(y).P$ which denotes replicated (persistent) input. 
The forwarding construct $\linkr{x}{y}$ equates names $x$ and $y$; 
it is a primitive representation of a copycat process.
The operators
$\mysel{x}{\,\lb{l}};P$ and $\mycase{x}{\lb{l}_i : P_i}{i \in I}$,
 define labeled choice  as in~\cite{DBLP:conf/esop/HondaVK98}.
 Given a sequence of names  
 $\widetilde{c}$, constructs 
 $(\recc \rv{X}(\widetilde{y}).P)\,\widetilde{c}$ and $\rv{X}(\widetilde{c})$ represent 
 co-recursive definitions and co-recursive variables, respectively.
We write $\outp{x}{y}$ to stand for $(\nub y)\out{x} y$.

In restriction $(\nub y)P$ and input $x(y).P$ the distinguished
occurrence of name $y$ is binding, with scope  $P$.
The set of
{\em free names} of a process $P$ is denoted $\fn{P}$. 
A process is {\em closed} if it does not
contain free occurrences of names.  We identify processes
up to consistent renaming of bound names, writing $\equiv_{\alpha}$
for this congruence.
We write
$P\subst{x}{y}$ for the 
capture-avoiding
substitution of  $x$ for  $y$ in $P$.
\emph{Structural congruence} ($P \equiv Q$) expresses basic identities on the structure of
processes. It is defined as 
  the least congruence relation on processes such that
  {
  $$
\begin{array}{c}
  P \para \zero  \equiv P \quad 
  P \para Q \equiv Q \para P  \quad 
  P \para (Q \para R) \equiv (P \para Q) \para R \\ 
    (\nub x)(\nub y)P \equiv (\nub y)(\nub x)P \quad
       (\nub x)\zero \equiv \zero \quad 
   P \equiv_{\alpha} Q \Rightarrow P \equiv Q \\
   \linkr{x}{y} \equiv \linkr{y}{x}  \quad 
     x\not\in\fn{P} \Rightarrow P \para (\nub x)Q \equiv (\nub x)(P \para Q) 
\end{array}$$
}
%
\emph{Reduction} 
specifies the computations a process performs on its own.
Closed  under structural congruence, 
  it is 
  the binary relation on processes defined by
  the rules in Fig.~\ref{fig:reduc}.
  
\begin{figure}[!t]
$$
\begin{array}{c}
\send{x}{y}.Q\para \inp{x}{z}{P} \red Q\para P\subst{y}{z} \\
 \send{x}{y}.Q\para \bang \inp{x}{z}{P} \red  Q\para P\subst{y}{z} \para \bang \inp{x}{z}{P} \\
(\nub x)(\linkr{x}{y} \para P) \red  P\subst{y}{x} \\
Q\red Q'\Rightarrow P \para Q \red P \para Q'  \\
P\red Q\Rightarrow(\nub y)P\red(\nub y)Q \\
P\equiv P'\aand P' \red  Q'\aand Q'\equiv Q\Rightarrow P\red Q \\
\mysel{x}{\lb{l}_j};P \para \mycase{x}{\lb{l}_i : Q_i}{i \in I}
\red  P \para Q_j ~~(j \in I)~ \\
(\recc \rv{X}(\widetilde{y}).P)\,\widetilde{c} \red P\subst{\widetilde{c}}{\widetilde{y}}\subst{\recc \rv{X}(\widetilde{y}).P\,}{\rv{X}}
\end{array}
$$
\caption{Process Reduction\label{fig:reduc}}
\vspace{-4mm}
\end{figure}

\paragraph{Session Types as Linear Logic Propositions.}
The theory of binary session types of~\cite{CairesP10} connects
session types as linear logic propositions. 
This correspondence has been extended in~\cite{Toninho2014} with coinductive session types.
Main properties derived from typing, important for our work and absent from other binary session type theories are \emph{global progress} and \emph{compositional non-divergence}.

%
%


\begin{definition}[Binary Types] \label{d:types}Types $(A,B,C)$ are given by 

\cbox{$
\begin{array}{rcl}
A,B & ::= & \one  \sep  ! A    \sep    A \otimes B 
  \sep  A \lol B   \sep  \rv{X} \sep   \nu \rv{X}.A \\
   &  \sep &   \mywith{\lb{l}_i : A_i}{i \in I}  \sep   \myoplus{\lb{l}_i : A_i}{i \in I}
\end{array}
$}
\end{definition}
Types are assigned to   names; assignment $x{:}A$ 
enforces the use of $x$ according to discipline $A$.
As hinted at above, we use $A\otimes B$ 
(resp. $A \lolli B$) 
to type  a name that performs an output (resp. an input) 
to its partner, sending (resp. receiving) a name of type $A$, 
and then behaves as
type $B$.  
We generalize the type syntax in~\cite{CairesP10} 
by considering $n$-ary  
offer $\with$ and choice $\oplus$.
Given a finite index set $I$, 
 $\mywith{\lb{l}_i{:}A_i}{i \in I}$ types a name that offers 
 a choice between an $l_i$. 
Dually, $\myoplus{\lb{l}_i{:}A_i}{i \in I}$ types the selection of one of the $\lb{l}_i$. 
Type $\bang A$ types a shared  channel, 
to be
used by a server for spawning an arbitrary number of new sessions
(possibly none), each one conforming to type $A$.
We use $\nu \rv{X}.A$ to type coinductive sessions, here required to type the medium processes of global types with recursion.
Coinductive types are required to have strictly positive occurrences of the type variable.
Also, coinductive types without session behavior before the occurrence of the type variable (e.g., $\nu \rv{X}.\rv{X}$) are excluded.
Finally, 
type
$\one$ means that the session terminated, no further interaction will
take place on it; names of type $\one$ may be passed
around in sessions, as opaque values.  

\begin{figure*}[t]
{\footnotesize
$$
\begin{array}{c}
\inferrule[\name{T$\mathsf{id}$}]{\mathstrut}{\Gamma; x{:}A \vdash_\eta \linkr{x}{z} :: z{:}A}
\qquad
\inferrule[\name{T$\lft\one$}]{\Gamma;\Delta \vdash_\eta P :: z{:}C}
{\Gamma;\Delta, x{:}\mathbf{1} \vdash_\eta P :: z{:}C}
\qquad
\inferrule[\name{T$\rgt\one$}]{\mathstrut}{\Gamma ; \cdot \vdash_\eta \zero :: x{:} \mathbf{1}}
\qquad
\inferrule[\name{T$\cut$}]{\Gamma; \Delta\vdash_\eta P :: x{:}A\;\;\; \Gamma; \Delta', x{:}A \vdash_\eta Q::z{:}C}
{\Gamma; \Delta, \Delta' \vdash_\eta (\nub x)(P\para Q):: z{:}C}
\vspace{0.2cm}
\\
\inferrule*[left=\name{T$\lft\lolli$}]{\Gamma; \Delta \vdash_\eta P :: y{:}A\;\;\;\; \Gamma; \Delta', x{:}B\vdash_\eta Q:: z{:}C}
{\Gamma; \Delta, \Delta', x{:} A\lol B \vdash_\eta  \outp{x}{y}. (P\para Q):: z{:}C}
\qquad
\inferrule*[left=\name{T$\lft\otimes$}]{\Gamma ; \Delta, y{:}A, x{:}B \vdash_\eta P :: z{:}C}
{\Gamma; \Delta, x{:} A\otimes B \vdash_\eta x(y).P :: z{:}C}
\vspace{0.2cm}
\\
\inferrule*[left=\name{T$\lft\oplus$}]{\Gamma; \Delta, x{:}A_1 \vdash_\eta P_1:: z{:}C \quad \cdots \quad \Gamma;  \Delta, x{:}A_k \vdash_\eta P_k :: z{:}C \quad I = \{1, \ldots, k\}}
{\Gamma; \Delta, x{:}\myoplus{\lb{l}_i : A_i}{i \in I} \vdash_\eta \mycase{x}{\lb{l}_i : P_i}{i \in I}:: z{:}C}
\vspace{0.2cm}
\\
\inferrule[\name{T$\rgt\oplus_1$}]{\Gamma; \Delta \vdash_\eta P:: x{:}A}
{\Gamma; \Delta \vdash_\eta \mysel{x}{\lb{l}_i};P::x{:}\myoplus{\lb{l}_i : A}{\{i\}}}
\quad
\inferrule[\name{T$\rgt\oplus_2$}]{\Gamma; \Delta \vdash_\eta P:: x{:}\myoplus{\lb{l}_i : A_i}{i \in I} \quad k \not\in I}
{\Gamma; \Delta \vdash_\eta P:: x{:}\myoplus{\lb{l}_j : A_j}{j \in I\cup\{k\}} }
\vspace{0.2cm}
\quad
\inferrule[\name{T$\lft\with_1$}]{\Gamma; \Delta, x{:}A \vdash_\eta P:: z{:}C}
{\Gamma; \Delta, x{:}\mywith{\lb{l}_i : A}{\{i\}}  \vdash_\eta \mysel{x}{\lb{l}_i};P:: z{:}C}
\\
\inferrule[\name{T$\rgt\with$}]{\Gamma; \Delta \vdash_\eta P_1:: x{:}A_1 ~~ \cdots ~~\Gamma;  \Delta \vdash_\eta P_k ::  x{:}A_k \quad I = \{1, \ldots, k\}}
{\Gamma; \Delta \vdash_\eta \mycase{x}{\lb{l}_i : P_i}{i \in I}:: x{:}\mywith{\lb{l}_i : A_i}{i \in I} }
\quad
\inferrule[\name{T$\lft\with_2$}]{\Gamma; \Delta, x{:}\with\{\lb{l}_i : A_i\}_{i \in I} \vdash_\eta P:: z{:}C \quad k \not\in I}
{\Gamma; \Delta, x{:}\mywith{\lb{l}_j : A_j}{j \in I\cup\{k\}} \vdash_\eta P:: z{:}C}
\vspace{0.2cm}
\\
\inferrule[$(\lft\nu)$]
{\G ; \D , c{:}A\subst{\nu \rv{X}.A}{\rv{X}} \vdash_\eta Q :: d{:}D}
{\G ; \D ,c{:}\nu \rv{X}.A \vdash_\eta Q :: d{:}D}
\qquad
\inferrule[$(\rgt\nu)$]
{\G ; \D \vdash_{\eta'} P :: c{:}A \qquad \eta' =
  \eta[\rv{X}(\widetilde{y}) \mapsto  \G ; \D \vdash_\eta c{:}\rv{Y}] }
{ \G ; \D \vdash_\eta 
  (\recc\;\rv{X}(\widetilde{y}).P\subst{\widetilde{y}}{\widetilde{z}})\;\widetilde{z}
  :: c{:}\nu \rv{Y}.A}
\qquad
\inferrule[\name{var}]
{ \eta(\rv{X}(\widetilde{y})) =  \G ; \D \vdash_\eta d{:}\rv{Y}
    \qquad \rho = \subst{\widetilde{z}}{\widetilde{y}}}
{ \rho(\G) ; \rho(\D) \vdash_\eta \rv{X}(\widetilde{z}) :: \rho(d){:}\rv{Y}}
\end{array}
$$
}
\caption{\label{fig:type-systemii}The Type System for Binary Sessions: Selected Rules.}
\vspace{-4mm}
\end{figure*}

A \emph{type environment} collects type assignments of the form
$x{:}A$, where $x$ is a name and $A$ a type, the names being pairwise disjoint.  
We consider two typing environments, subject to different structural properties: a
\emph{linear} part $\Delta$ and an \emph{unrestricted} part $\Gamma$,
where weakening and contraction principles hold for $\Gamma$ but not
for $\Delta$.  

A \emph{type judgment} 
is of the form $\Gamma ; \Delta \vdash_\eta P :: z{:}C$.
It asserts that 
$P$ provides behavior $C$ at
channel $z$, building on ``services'' declared in $\Gamma;\Delta$. 
Recall that $\eta$ denotes a map from (corecursive) type variables to typing contexts.
The domains of $\G,\D$ and $z{:}C$ are required to be pairwise disjoint.
As $\pi$-calculus terms are considered up to structural
congruence, typability is closed under $\equiv$ by definition. 
As a simple example of type judgment, 
a client $Q$ that relies on external
services and does not provide any is typed as $\Gamma ; \Delta \vdash Q::z{:}\one$.
Empty environments 
$\G, \D$ are   denoted by `$\,\cdot\,$'.
Also, we sometimes abbreviate $\G ; \D \vdash_\eta P::z{:}\one$ 
as $\G ; \D \vdash P$. 

Fig.~\ref{fig:type-systemii} presents selected rules of the type system; see~\cite{CairesP10,Toninho2014} for a full account.
Due to the logic correspondence,
we have 
right~($\mathsf{R}$) and left~($\mathsf{L}$) rules. 
The former 
detail how a process can implement the behavior described by the given connective; the latter 
explain how a process may use of a session of a given type.
Given these intuitions, 
the interpretation of the various rules should be clear. 
Rule~\name{T$\mathsf{id}$} defines identity in terms of forwarding.
Rule~(T$\cut$) define typed composition, restricting the scope of involved processes.
Based on rules~\name{T$\cut$} and \name{T$\lft\one$}, 
a rule 
for \emph{independent parallel composition}, enabling the  composition of 
$\Gamma; \Delta_1 \vdash P :: z {:}\one$ (with $z \not\in\fn{P}$, cf. rule \name{T$\rgt\one$}) and $\Gamma; \Delta_2 \vdash Q :: x{:}A$
into $\Gamma; \Delta_1,\Delta_2 \vdash P \para Q :: x{:}A$
is derivable. 
Implementing a session with type $\mywith{\lb{l}_i : A_i}{i \in I}$
amounts to offering a choice between $n$ sessions with type $A_i$ (cf. rule~\name{T$\rgt\with$}).
Using a session of type $\mywith{\lb{l}_i : A_i}{i \in I}$ 
on name $x$
entails selecting one of the alternatives, using a prefix $\mysel{x}{\,\lb{l}_j}$
(cf. rules~\name{T$\lft\with_1$} and \name{T$\lft\with_2$}). 
The  interpretation for the $n$-ary additive disjunction $\myoplus{\lb{l}_i : A_i}{i \in I}$ is  dual.

We now recall some main results for well-typed processes.
For any $P$, define
$live(P)$ iff $P \equiv (\nub \til{n})(\pi.Q \para R)$,
for some names $\til{n}$, 
a process $R$,  and a
\emph{non-replicated} guarded process $\pi.Q$.
Also, we write $P \Downarrow$, if there is no infinite reduction path from process $P$.


\begin{theorem}[Properties of Well-Typed Session Processes, \cite{Toninho2014}] Suppose $\G ; \D \vdash_\eta P :: z{:}A$.
\label{th:prop}
\begin{enumerate}[1.]
\item Type Preservation: If $P \red Q$ then $\G ; \D \vdash_\eta Q :: z{:}A$.
\item Progress: If $live(P)$ then there is $Q$ with $P \red Q$ or one of the following holds: 
\begin{enumerate}[(a)]
\item  $\D = \D',y{:}B$, for some $\D'$ and $y{:}B$. \\
There exists $\G ; \D''  \vdash_\eta R :: y{:}B$ s.t. 
$(\nub y)(R \mid P) \red Q$.
\item  Exists $\G ; z{:}A , \D' \vdash_\eta R :: w{:}C$ s.t. $(\nub z)(P \!\mid\! R) \tra{} Q$.
\item $\G = \G' , u{:}B$, for some $\G'$ and $u{:}B$. \\ 
There exists $\G ; \cdot \vdash_\eta R :: x{:}B$ s.t.  $(\nub u)(\bang u(x).R \mid P) \red Q$.
\end{enumerate}
\item Non-Divergence: $P \Downarrow$.
\end{enumerate}
\end{theorem}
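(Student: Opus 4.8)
\emph{Remark on the proof.} Theorem~\ref{th:prop} is established in~\cite{Toninho2014}; here I outline how I would reconstruct its proof, treating the three statements in the order listed, since each relies on the previous.

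\textbf{Type Preservation.} I would proceed by induction on the derivation of $P \red Q$, with a case analysis on the reduction rules of Fig.~\ref{fig:reduc}, supported by the usual auxiliary lemmas: inversion of typing, a renaming/substitution lemma ($\G;\D\vdash_\eta P :: z{:}A$ implies $\G\subst{w}{y};\D\subst{w}{y}\vdash_\eta P\subst{w}{y} :: (z{:}A)\subst{w}{y}$), and closure of typing under $\equiv$. The congruence and $\equiv$-closure cases are immediate from the induction hypothesis. The substantive cases are the principal cuts, where a right rule for a connective meets the matching left rule across a \name{T$\cut$}: for $\send{x}{y}.Q\para\inp{x}{z}{P}$ one inverts the cut to see $x$ typed $A\otimes B$, then retypes the reduct by recomposing the residual processes under two smaller cuts (on $A$ and on $B$); the replicated-input case is analogous, using contraction in $\G$; the labeled-choice case is the same pattern on $\oplus$/$\with$; the forwarding reduction uses the renaming lemma; and the corecursion-unfolding case relies on the invariant recorded by $\eta$ through rules $(\rgt\nu)$ and \name{var}, unfolding $\nu\rv{X}.A$ once in lockstep on process and type.

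\textbf{Progress.} This is the principal result and, I expect, the main obstacle. The strategy is: assuming $live(P)$ and $\G;\D\vdash_\eta P :: z{:}A$, put $P$ in the form $(\nub\til n)(\pi.Q\para R)$ and argue about the \emph{shape of the typing derivation}. The linear discipline of $\D$ together with the acyclicity inherent in the logical proof structure (cuts form a tree, not a general graph) prevents cyclic waiting among blocked prefixes. Following the name on which $\pi$ acts, either it is bound by some $\nub n_i$ and cut against a dual prefix inside $P$ --- yielding a reduction by one of the rules of Fig.~\ref{fig:reduc} --- or it is a free name occurring in $\D$ or equal to $z$, in which case one of clauses (a)--(c) applies: supply a complementary partner $R$ on that interface name and reduce. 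Corecursive variables $\rv{X}(\til c)$ are never blocked, since the corecursion rule always fires; this is precisely where guardedness of recursive types is used. The delicate part is making the ``no cyclic dependency'' argument precise and carrying it through the coinductive rules $(\lft\nu)$, $(\rgt\nu)$, \name{var}.

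\textbf{Non-Divergence.} Finally I would prove $P\Downarrow$ by a linear logical-relations (reducibility) argument: define, by induction on types, a family of reducibility predicates closed under the typing rules, with the coinductive type $\nu\rv{X}.A$ interpreted via a greatest fixed point stratified by the observations a session can make; establish the \emph{Fundamental Lemma} (every well-typed process is reducible) by induction on typing; and conclude that reducibility excludes infinite internal reduction paths. The subtle point is the corecursive fragment: the excluded degenerate coinductive types (those with no session behavior before the variable) guarantee that each unfolding of a $\recc$ is separated by a genuine communication, so the reducibility predicate can be arranged so that ``productivity implies termination of internal computation'' goes through. I expect this construction over coinductive session types to be the second most technically demanding part, after Progress.
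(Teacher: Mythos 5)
This theorem is not proved in the paper at all --- it is recalled, with attribution, from \cite{Toninho2014} as background for the binary session discipline --- so there is no in-paper proof to compare against, only the cited source. Your reconstruction matches the approach taken in that cited work: preservation by induction on reductions with the principal-cut cases resolved by re-cutting the residuals, progress by analyzing the shape of the typing derivation (the tree structure of cuts ruling out cyclic waiting) together with the compositional clauses (a)--(c) for free names in $\D$, $z{:}A$, or $\G$, and non-divergence by a linear logical-relations argument extended to corecursion, where the exclusion of coinductive types with no session behavior before the type variable is precisely what guarantees productivity and lets the termination-of-internal-computation argument go through.
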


In particular, Theorem~\ref{th:prop}(2), key in our developments, implies that our type discipline ensures freedom from deadlocks.

\section{Relating Multiparty Protocols and Binary Session Typed Processes}\label{s:tmedium}

\newcommand{\fgtypes}{\ensuremath{\mathcal{G}^{\text{fin}}}\xspace}
\newcommand{\rgtypes}{\ensuremath{\mathcal{G}^{\mu}}\xspace}

Our typeful characterization of multiparty conversations  as binary session types relies on the 
\emph{medium process} 
of a 
global type. Mediums provide a simple conceptual device for 
analyzing global types using the logically motivated binary session types of \cite{CairesP10,Toninho2014}.
In fact, as 
the medium
takes part in all message exchanges between local participants,  it uniformly and cleanly
captures the  sequencing behavior  stipulated by the global type. 

For  technical convenience, we divide the presentation of our main results into two representative sub-languages of global types:
\begin{enumerate}[$\bullet$]
\item We shall write \fgtypes to denote the class of global types generated by the syntax in Definition~\ref{d:gltypes}
\emph{without recursion}.
\item We shall write \rgtypes to denote the class of global types generated by the syntax in Definition~\ref{d:gltypes}
\emph{without the composition operator}.
\end{enumerate}
Global types in \fgtypes describe \emph{finite choreographies}. Isolating this class is useful to illustrate the simplicity of our approach; in particular, to illustrate the fact that it is fully orthogonal from infinite behaviors induced by recursion. Investigating  global types in \rgtypes is insightful: this is exactly the class of global types for which Deni{\'e}lou and Yoshida discovered the sound and complete characterization as communicating automata~\cite{DBLP:conf/icalp/DenielouY13}. In \S\,\ref{ss:parrec} we discuss further the tension between composition and recursion in process characterizations of global types.

\subsection{Medium Processes}\label{s:medium}
Based on the distinction between \fgtypes and \rgtypes,
we now introduce different definitions of medium processes. 
They all realize the simple concept motivated above and
 provide the basis for developing our technical results:
\begin{enumerate}[$\bullet$]
\item For $G \in \fgtypes$, we define \emph{finite mediums} $\ether{G}$ (Def.~\ref{d:ether}) and establish characterization results connecting well-typed mediums and well-formed global types (Theorems~\ref{l:ltypesmedp} and \ref{l:medltypes}). Based on finite mediums we also offer a behavioral characterization of \emph{swapping} in global types (Theorem~\ref{p:swapmeds}).

\item For $G \in \rgtypes$, we define \emph{recursive mediums} $\rether{G}{k}$ (Def.~\ref{d:rether}) to extend the characterization results to global types featuring infinite behavior (Theorem~\ref{l:ltypesmedprec} and~\ref{l:medltypesrec}). Then, 
we define \emph{annotated mediums} $\raether{G}{}{k}$ (Definition~\ref{d:raether}).
This notion, a slight variation of Def.~\ref{d:rether}, allows us to precisely relate actions of the global type and the observable behavior of its associated annotated medium (Theorem~\ref{th:opcorr}).
\end{enumerate}

\noindent We now proceed to define each of these different representations of the behavior of a global type.

\begin{definition}[Finite Mediums]\label{d:ether}
Let $G \in \fgtypes$ be a 
global type. 
The \emph{finite medium process} of $G$, denoted \ether{G}, is  defined inductively as follows:
\begin{enumerate}[$\bullet$]
\item $\ether{\gend}  =  \mathbf{0}$ 
\item $\ether{\gto{p}{q}\{\lb{l}_i\langle U_i\rangle.G_i\}_{i \in I}}   = $\\
$ 
\mycasebig{c_\pt{p}}{\lb{l}_i : c_\pt{p}(u).\mysel{c_\pt{q}}{\lb{l}_i};\outp{c_\pt{q}}{v}.( \linkr{u}{v} \para \ether{G_i} )}{i \in I} $
\item $\ether{G_1 \para G_2}  =  \ether{G_1} \para \ether{G_2}$
\end{enumerate}
\end{definition}

We now introduce recursive mediums, which represent recursive global types using co-recursive processes.
For technical reasons related to typability, we find it useful to ``signal'' when the global type ends and recurses. 
For simplicity, this signal is represented as a selection prefix (of a carried label) on a fresh name $k$:

\begin{definition}[Recursive Mediums]\label{d:rether}
Let $G \in \rgtypes$ be a
global type. 
Also, let 
$k$
be 
and a name assumed distinct from any other name. 
The \emph{recursive medium of $G$ 
with respect to label $\lb{l}$},
denoted \retheraux{G}{k}{\lb{l}}, 
is  defined  as follows:
\begin{enumerate}[$\bullet$]
\item $\retheraux{\gend}{k}{\lb{l}}  =  \mysel{k}{\lb{l}};\mathbf{0}$ 
\item $\retheraux{\gto{p}{q}\{\lb{l}_i\langle U_i\rangle.G_i\}_{i \in I}}{k}{\lb{l}}   = $ \\ $\mycasebig{c_\pt{p}}{\lb{l}_i : c_\pt{p}(u).\mysel{c_\pt{q}}{\,\lb{l}_i};\outp{c_\pt{q}}{v}.( \linkr{u}{v} \para \retheraux{G_i}{k}{\lb{l}_i} )}{i \in I} $
\item $\retheraux{\mu \rv{X}.G}{k}{\lb{l}}  =  (\recc \rv{X}(\til{z}).\retheraux{G}{k}{\lb{l}})\, \til{c}$ 
\item $\retheraux{\rv{X}}{k}{\lb{l}}   = \mysel{k}{\lb{l}};\rv{X}(\til{z})$
\end{enumerate}
Let $\lb{lb}$ be a label not in $G$.
The \emph{recursive medium of $G$}, denoted $\rether{G}{k}$,  is defined as $\retheraux{G}{k}{\lb{lb}}$.
\end{definition}

\begin{example}\label{ex:reccomm}
Finite mediums have already been  illustrated in \S\ref{s:example}. To illustrate recursive mediums, 
consider the following variant of the commit protocol in \S\ref{s:example}.
This is the running example in~\cite{DBLP:conf/icalp/DenielouY13}, here extended with base types:
$$
\begin{array}{rl}
G_{rc} = \mu \rv{X}.\,\gto{A}{B}\{ & \lb{~act}\langle \mathsf{int} \rangle. 
 \gto{B}{C}\{\lb{sig}\langle \mathsf{str}\rangle. \\
& \qquad \gto{A}{C}\{\lb{comm}\langle \one 
\rangle.\rv{X}\} \} \, , \\
 & \lb{quit}\langle \mathsf{int} \rangle.\gto{B}{C}\{\lb{save}\langle \one \rangle. \\
 & \qquad \gto{A}{C}\{\lb{fini}\langle \one \rangle.\gend\} \} ~~\} 
\end{array}
$$
Then, associating participants $\pt{A}$, $\pt{B}$, and $\pt{C}$ in $G_{rc}$ to names 
$a$, $b$, and $c$, respectively, process $\rether{G_{rc}}{k}$ is as in Figure~\ref{f:reccomm}.
\begin{figure*}[t]
$$
\begin{array}{ll}
\rether{G_{rc}}{k} = 
\recc\rv{X}(\til{z}).\,\big(\mycasebig{a} { 	& \lb{act} : a(v).\mysel{b}{\lb{act}};\outp{b}{w}.(\linkr{w}{v} \para 
		  \mycaseb{b}{\lb{sig} : b(n).\mysel{c}{\lb{sig}};\outp{c}{m}. \\
		& \qquad \qquad  \qquad (\linkr{n}{m} \para \mycaseb{a}{\lb{comm}: a(u).\mysel{c}{\lb{comm}};\outp{c}{y}.(\linkr{u}{y} \para \mysel{k}{\lb{comm}};\rv{X}(\til{z})\,) 
 		}{}  
		 )     }{} 
		 )~~,  \\ 
		& \lb{quit} : a(v).\mysel{b}{\lb{quit}};\outp{b}{w}.(\linkr{w}{v} \para 
		 \mycaseb{b}{\lb{save} : b(n).\mysel{c}{\lb{save}};\outp{c}{m}.\\
		& \qquad \qquad  \qquad (\linkr{n}{m} \para \mycaseb{a}{\lb{fini}:  a(u).\mysel{c}{\lb{fini}};\outp{c}{y}.(\linkr{u}{y} \para \mysel{k}{\lb{fini}};\zero \,) 
 		}{}  
		 )     }{} 
		 \, ) }{} \big)\,\til{c}\\ 
\end{array}
$$
\caption{Recursive medium process for the commit protocol, as in Example \ref{ex:reccomm}.\label{f:reccomm}}
\vspace{-4mm}
\end{figure*}
\end{example}

We now introduce the third class of mediums, \emph{annotated mediums}.
As in Definition~\ref{d:rether}, also in this case it is convenient to consider an additional fresh session $k$.
However, rather than signaling termination/recursion, in this case we use $k$ to 
emit an observable signal on $k$ for each action of $G$.
To this end, below we use the following notational conventions.
First, we write $k.P$ to stand for process $k(x).P$ whenever $x$ is not relevant in $P$.
Also, $\about{k}.P$ stands for the process $\outp{k}{v}.(\zero \para P)$
in which name $v$ 
is unimportant.

\begin{definition}[Annotated Mediums]\label{d:raether}
Let $G \in \rgtypes$ be a 
global type.  
Also, let $k$ be a fresh name.
The \emph{annotated medium} of $G$ with respect to $k$, 
denoted \raether{G}{\lb{l}}{k}, is  defined inductively as follows:
\begin{enumerate}[$\bullet$]
\item $\raether{\gend}{\lb{l}'}{k}  =  \mathbf{0}$ 

\item $\raether{\gto{p}{q}\{\lb{l}_i\langle U_i\rangle.G_i\}_{i \in I}}{\lb{l}}{k}    = $ \\
 $\qquad \quad \mycasebig{c_\pt{p}}{\lb{l}_i :~ \mysel{k}{\lb{l}_i}; c_\pt{p}(u).\about{k}.$\\
 $~~~~~~~~~~~\big(\mysel{c_\pt{q}}{\lb{l}_i};\mycase{k}{\lb{l}_i : \outp{c_\pt{q}}{v}.( \linkr{u}{v} \para k.\raether{G_i}{\lb{l}_i}{k} )}{\{i\}}\, \big)}{i \in I} $
\item $\raether{\mu \rv{X}.G}{\lb{l}}{k}  =  (\recc \rv{X}(\til{z}).\raether{G}{\lb{l}}{k})\, \til{c}$ 
\item $\raether{\rv{X}}{\lb{l}}{k}   = \rv{X}(\til{z})$
\end{enumerate}
\end{definition}

The key case is 
$\raether{\gto{p}{q}\{\lb{l}_i\langle U_i\rangle.G_i\}_{i \in I}}{\lb{l}}{k}$.
First, 
the selection of  label $\lb{l}_i$  by $\pt{p}$  is followed by a selection of $\mysel{k}{\,\lb{l}_i}$; 
then, the output from $\pt{p}$, 
captured by the medium by the input $c_\pt{p}(u)$, 
 is followed by an output on $k$; 
 subsequently, 
 the selection on $c_\pt{q}$ of label $\lb{l}_i$ is followed 
 by a branching in $k$ on label $\lb{l}_i$; 
 finally, the output $\outp{c_\pt{q}}{v}$ is signaled by an input on $k$, which prefixes the execution of the continuation $\raether{G_i}{\lb{l}_i}{k}$. This way, actions on $k$ induce a fine-grained correspondence with the behavior of  $G$. 

We assume the following name convention for mediums:
the actions of every participant $\pt{p}$ in $G$ 
are described in $\ether{G}{}$ 
by prefixes on name $c_{\pt{p}}$ (similarly for $\rether{G}{k}$ and $\raether{G}{}{k}$).
Since in labeled communications
$G = \gto{p}{q}\{\lb{l}_i\langle U_i\rangle.G_i\}_{i \in I}$
we always assume $\pt{p} \neq \pt{q}$, 
in \ether{G}{}  we will have $c_{\pt{p}} \neq c_{\pt{q}}$
(similarly for $\rether{G}{k}$ and $\raether{G}{k}{k}$).
Due to this convention we have:

\begin{myfact}\label{fc:name}
Let $G$ and \ether{G}{} be a global type and its medium, resp.
For each 
$\mathtt{r}_j \in \partp{G}$ there is a name $c_j \in \fn{\ether{G}{}}$.
(And analogously for $\rether{G}{k}$ and $\raether{G}{}{k}$.)
\end{myfact}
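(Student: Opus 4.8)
The plan is to prove Fact~\ref{fc:name} by a straightforward structural induction on the global type $G$, simultaneously for all three flavours of medium ($\ether{G}$, $\rether{G}{k}$, $\raether{G}{}{k}$), since their defining clauses agree on which participant names appear free. Concretely, I would phrase the induction hypothesis as: for every subterm $G'$ of $G$ and every $\pt{r} \in \partp{G'}$, the name $c_\pt{r}$ occurs free in the corresponding medium of $G'$; and conversely every free name of the medium (other than the signalling name $k$, where applicable, and bound placeholders) is of the form $c_\pt{r}$ for some $\pt{r} \in \partp{G'}$.

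First I would dispatch the base cases. For $G = \gend$ we have $\partp{\gend} = \emptyset$ and $\ether{\gend} = \zero$ (resp.\ $\mysel{k}{\lb{l}};\zero$, resp.\ $\zero$), so $\fn{\ether{\gend}}$ contains no participant names, vacuously satisfying the claim. For $G = \rv{X}$ we likewise have $\partp{\rv{X}} = \emptyset$, and the medium is $\mysel{k}{\lb{l}};\rv{X}(\til{z})$ or $\rv{X}(\til{z})$, whose free names are among $k$ and the recursion parameters $\til{z}$ (which, by the name convention, are exactly the $c_\pt{r}$ carried by the enclosing $\mu$-binder, hence already accounted for at that level).

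Then I would handle the inductive cases. For $G = \gto{p}{q}\{\lb{l}_i\langle U_i\rangle.G_i\}_{i\in I}$, the medium's outermost prefixes are a branching on $c_\pt{p}$ followed, in each branch, by an input on $c_\pt{p}$, a selection on $c_\pt{q}$, and a bound output on $c_\pt{q}$; then it continues with $\ether{G_i}$ under the parallel with a forwarder $\linkr{u}{v}$ on bound names. Hence $\fn{\ether{G}} \supseteq \{c_\pt{p}, c_\pt{q}\} \cup \bigcup_{i\in I} \fn{\ether{G_i}}$ (the forwarder adds no free names since $u,v$ are bound), which by the induction hypothesis on each $G_i$ covers $\{\pt{p},\pt{q}\} \cup \bigcup_i \partp{G_i} = \partp{G}$; the converse inclusion follows because the only free names introduced beyond those of the $\ether{G_i}$ are $c_\pt{p}$ and $c_\pt{q}$. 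For $G = G_1 \para G_2$, since $\ether{G_1 \para G_2} = \ether{G_1}\para\ether{G_2}$ we get $\fn{\ether{G}} = \fn{\ether{G_1}} \cup \fn{\ether{G_2}}$, and the claim follows from the two induction hypotheses together with $\partp{G} = \partp{G_1}\cup\partp{G_2}$. For $G = \mu\rv{X}.G'$, the medium is $(\recc \rv{X}(\til{z}).\rether{G'}{k})\,\til{c}$; here $\partp{\mu\rv{X}.G'} = \partp{G'}$, and the free names are exactly $\til{c}$ together with the free names of $\rether{G'}{k}$ other than the formal parameters $\til{z}$ — and by the name convention $\til{c}$ is precisely the tuple $(c_\pt{r})_{\pt{r}\in\partp{G'}}$, so the induction hypothesis on $G'$ closes the case.

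I expect the only real subtlety — more a matter of careful bookkeeping than of genuine difficulty — to be the treatment of the recursive case and the interplay with the name convention stated just before the Fact: one must be explicit that $\til c$ ranges exactly over $\{c_\pt{r} \mid \pt{r}\in\partp{G}\}$, and that the co-recursion variable $\rv{X}(\til z)$ re-binds those same names, so that no participant name is accidentally captured or lost when passing from the body $G'$ to $\mu\rv{X}.G'$. For the annotated medium $\raether{G}{}{k}$ the argument is verbatim the same, with the extra prefixes and branchings on the fresh name $k$ contributing only $k$ (never a $c_\pt{r}$) to the free names; since $k$ is assumed distinct from all other names, it never interferes with the correspondence $\pt{r}\mapsto c_\pt{r}$. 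No additional lemmas beyond Definitions~\ref{d:ether}, \ref{d:rether}, and \ref{d:raether} are needed.
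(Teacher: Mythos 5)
Your induction is correct, and it simply spells out what the paper treats as immediate: Fact~\ref{fc:name} is stated as a direct consequence of the name convention and the defining clauses of $\ether{G}$, $\rether{G}{k}$, and $\raether{G}{}{k}$ (each communication $\gto{p}{q}$ contributes free prefixes on $c_\pt{p}$ and $c_\pt{q}$, and the only binders introduced are the payload placeholders, the signalling name $k$, and the recursion parameters), with no separate proof given. Your case analysis, including the bookkeeping for $\til{c}$ and $\til{z}$ in the $\mu\rv{X}$ case, matches that by-construction reading, so there is nothing further to add.
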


We stress that from the standpoint of the protocol participants, the existence of the medium is inessential:
the local implementations may be constructed (and type-checked) exactly as prescribed by their projected local types, unaware of the medium and its internal structure. 
In contrast, the medium 
\emph{depends} on well-behaved participants as stipulated by the global type. 
In the following we will formalize these intuitions,
using the theory of binary session types described in \S\ref{ss:btypes}.
We will then be able to formally define the dependence of well-typed mediums on  local participants which are 
well-typed with respect to projections of the given global type.

We first present characterization results for  global types in $\fgtypes$~(\S\,\ref{ss:frela}).
We extend these results to global types in \rgtypes (\S\,\ref{ss:rela}). 

\subsection{Relating Well-Formed Global Types and Typed Mediums: The Finite Case}\label{ss:frela}
We formally relate a global type $G \in \fgtypes$, its associated medium $\ether{G}$, 
and its corresponding  local types $\proj{G}{\pt{p}_1}, \ldots, \proj{G}{\pt{p}_n}$.
We first introduce some useful auxiliary notions. 
\emph{Compositional typings} are a class of type judgments which in line with 
the name convention for mediums (cf. Fact~\ref{fc:name}).
Below, we sometimes write $\G; \D \vdash_\eta \ether{G}$ instead of $\G; \D \vdash_\eta \ether{G} :: z {:}\one$, when $z \not\in\fn{\ether{G}}$.



\begin{definition}[Compositional Typing]\label{d:compty}
Let $G$ be a global type. 
We say that judgement 
$\G; \D \vdash \ether{G} ::z{:}C$
is a \emph{compositional typing} 
for $\ether{G}$
if:  (i)~it is a valid typing derivation;
(ii)~$\D = c_1{:}A_1, \ldots, c_n{:}A_n$;
(iii)~for all 
$\pt{r}_i \in G$
 there is a $c_i{:}A_i \in \D$;  
 (iv)~$C = \one$. 
 In case only conditions (i)--(iii) hold, we say that the judgment is a \emph{left-compositional typing} for $\ether{G}$.
\end{definition}

Intuitively, compositional typings 
formalize the intuitions hinted at the end of \S\,\ref{s:medium}.
These typed interfaces formalize 
the fact that the medium does not offer
any behaviors of its own (cf. the right-hand side  $z{:}\one$) while depending
on behaviors which should be available on its free names (cf. the condition on left-hand side typing $\D$).

The main difference between local types and binary session types is that
the latter do not mention participants.
Below, we use $B$ to range over base types ($\mathsf{bool}, \mathsf{nat}, \ldots$) in 
Definition~\ref{d:gltypes}.
\begin{definition}[Local Types and Binary Types]\label{d:loclogt}
Mapping \lt{\cdot} from local types $T$ (cf.~Def.~\ref{d:gltypes}) into binary types $A$ (cf. Def.~\ref{d:types}) is 
inductively defined as:
$$
\begin{array}{rcl}
\lt{\lend} = \lt{B} &=&  \one  \\
\lt{\mathtt{p}!\{\lb{l}_i\langle U_i\rangle.T_i\}_{i \in I}}  & = &  \myoplus{\lb{l}_i : \lt{U_i} \otimes \lt{T_i}}{i \in I} \\
 \lt{\mathtt{p}?\{\lb{l}_i\langle U_i\rangle.T_i\}_{i \in I}} & = &   \mywith{\lb{l}_i : \lt{U_i} \lolli \, \lt{T_i}}{i \in I} 
\end{array}
$$
\end{definition}

Given a global type $G$,
we now 
formally relate 
process $\ether{G}$ (typed with a compositional typing) and binary session types representing 
the projections of $G$.


\subsubsection{Characterization Results}
We now present the key correspondence results between global types and well-typed finite mediums
(Theorems~\ref{l:ltypesmedp} and \ref{l:medltypes}).
The first direction of the characterization says that
 well-formedness of global types (Def.~\ref{d:wfltypes})
suffice to ensure compositional typings for mediums with (linear logic based) binary session types:

\begin{theorem}[From Well-Formedness To Typed Mediums]\label{l:ltypesmedp}
Let $G \in \fgtypes$  be a global type, with~$\partp{G} = \{\pt{p}_1, \ldots, \pt{p}_n\}$.
If $G$ is 
WF
(Def.~\ref{d:wfltypes}) then $$\G; c_1{:}\lt{\proj{G}{\pt{p}_1}}, \ldots, c_n{:}\lt{\proj{G}{\pt{p}_n}} \vdash \ether{G}$$ is a compositional typing
for \ether{G}, for some $\G$.
\end{theorem}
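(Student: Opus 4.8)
The plan is to proceed by induction on the structure of the global type $G \in \fgtypes$ (i.e., without recursion), following exactly the inductive definition of $\ether{G}$ in Definition~\ref{d:ether}. Since $G$ is WF, by Definition~\ref{d:wfltypes} the projection $\proj{G}{\pt{r}}$ is defined for every $\pt{r} \in \partp{G}$, and crucially every subterm $G'$ occurring in $G$ is itself WF, so the induction hypothesis applies to all continuations. At each step I will exhibit a typing derivation in the system of Figure~\ref{fig:type-systemii} whose conclusion has the shape demanded by Definition~\ref{d:compty}: empty/shared-arbitrary $\G$ on the left of the semicolon, a linear context listing exactly one $c_i{:}\lt{\proj{G}{\pt{p}_i}}$ per participant $\pt{p}_i \in \partp{G}$, and right-hand side $z{:}\one$ with $z \notin \fn{\ether{G}}$.

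For the base case $G = \gend$ we have $\ether{\gend} = \zero$, each $\proj{\gend}{\pt{p}_i} = \lend$, hence $\lt{\lend} = \one$, and we apply \name{T$\rgt\one$} (giving $\cdot\,;\cdot \vdash \zero :: z{:}\one$) followed by repeated \name{T$\lft\one$} to introduce each $c_i{:}\one$ into the linear context; this yields the required compositional typing. The composition case $G = G_1 \para G_2$ is handled by the derivable \emph{independent parallel composition} rule noted after Figure~\ref{fig:type-systemii} (built from \name{T$\cut$} and \name{T$\lft\one$}): by the IH we get compositional typings $\G; \D_1 \vdash \ether{G_1} :: z_1{:}\one$ and $\G; \D_2 \vdash \ether{G_2} :: z_2{:}\one$; the participants of $G_1$ and $G_2$ may overlap, so here I need a small lemma that when $\pt{r} \in \partp{G_1} \cap \partp{G_2}$ the projection clause for $\para$ forces $\pt{r}$ into at most one side — actually Definition~\ref{d:proj} only defines $\proj{(G_1\para G_2)}{\pt{r}}$ when $\pt{r}$ occurs in exactly one of $G_1, G_2$ (or neither), so WF of $G_1 \para G_2$ already guarantees $\partp{G_1} \cap \partp{G_2} = \emptyset$, the linear contexts $\D_1, \D_2$ are disjoint, and the composition goes through cleanly.

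The interesting case is the communication prefix $G = \gto{p}{q}\{\lb{l}_i\langle U_i\rangle.G_i\}_{i\in I}$, where $\ether{G} = \mycasebig{c_\pt{p}}{\lb{l}_i : c_\pt{p}(u).\mysel{c_\pt{q}}{\lb{l}_i};\outp{c_\pt{q}}{v}.(\linkr{u}{v}\para\ether{G_i})}{i\in I}$. By WF, $\proj{G}{\pt{p}} = \pt{p}!\{\lb{l}_i\langle U_i\rangle.\proj{G_i}{\pt{p}}\}_{i\in I}$, so $\lt{\proj{G}{\pt{p}}} = \myoplus{\lb{l}_i : \lt{U_i}\otimes\lt{\proj{G_i}{\pt{p}}}}{i\in I}$; dually $\lt{\proj{G}{\pt{q}}} = \mywith{\lb{l}_i : \lt{U_i}\lolli\lt{\proj{G_i}{\pt{q}}}}{i\in I}$; and for any third participant $\pt{r}$, $\proj{G}{\pt{r}} = \fuse_{i\in I}\proj{G_i}{\pt{r}}$. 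I build the derivation outside-in: the outer $\mycase{c_\pt{p}}{\cdots}{}$ is typed by \name{T$\lft\oplus$} (reading $c_\pt{p}$ at type $\myoplus{\cdots}$), producing one premise per branch $i$; inside branch $i$, the input $c_\pt{p}(u)$ is typed by \name{T$\lft\otimes$}, splitting $c_\pt{p}$'s type into $u{:}\lt{U_i}$ and $c_\pt{p}{:}\lt{\proj{G_i}{\pt{p}}}$; then $\mysel{c_\pt{q}}{\lb{l}_i}$ is typed by \name{T$\lft\with_1$} (selecting branch $\lb{l}_i$ of $c_\pt{q}$'s $\with$-type), leaving $c_\pt{q}{:}\lt{U_i}\lolli\lt{\proj{G_i}{\pt{q}}}$; then the bound output $\outp{c_\pt{q}}{v}$ is typed by \name{T$\lft\lolli$}, which requires typing $\linkr{u}{v}$ as $v{:}\lt{U_i}$ from $u{:}\lt{U_i}$ (exactly \name{T$\mathsf{id}$}) in parallel with typing $\ether{G_i}$ using the residual context $c_\pt{p}{:}\lt{\proj{G_i}{\pt{p}}}, c_\pt{q}{:}\lt{\proj{G_i}{\pt{q}}}$ together with the unchanged $c_\pt{r}$'s — and here the IH on $G_i$ gives a compositional typing for $\ether{G_i}$ with linear context indexed by $\partp{G_i}$.

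The main obstacle — and the one place the argument needs genuine care rather than bookkeeping — is reconciling the IH's context for $\ether{G_i}$ with the context we are handed for $\ether{G}$, on two counts. First, participants: $\partp{G_i} \subseteq \partp{G}$ but the inclusion may be strict, so for $\pt{r} \in \partp{G}\setminus\partp{G_i}$ we have $c_\pt{r}$ in the ambient context but $\ether{G_i}$ does not use it; these are absorbed because $\proj{G}{\pt{r}} = \fuse_{i}\proj{G_i}{\pt{r}}$ and when $\pt{r}\notin\partp{G_i}$ one shows $\proj{G_i}{\pt{r}} = \lend$, so $\lt{\cdot} = \one$ and rule \name{T$\lft\one$} drops it — but I should record this as an auxiliary fact (projection onto a non-participant yields $\lend$). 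Second, and more delicate, the merge: for $\pt{r} \in \bigcap_{i}\partp{G_i}$ we have $\proj{G}{\pt{r}} = \fuse_{i}\proj{G_i}{\pt{r}}$, so the type we must supply $c_\pt{r}$ with in the conclusion is $\lt{\fuse_i \proj{G_i}{\pt{r}}}$, whereas the IH for branch $i$ only furnishes $c_\pt{r}{:}\lt{\proj{G_i}{\pt{r}}}$. Bridging this requires a \emph{subtyping / merge-coherence lemma}: whenever $T \fuse T'$ is defined, the binary type $\lt{T\fuse T'}$ admits (in the sense of the $\oplus$-width rules \name{T$\rgt\oplus_2$} and the $\with$-width rules \name{T$\lft\with_2$} of Figure~\ref{fig:type-systemii}) a subsumption into each of $\lt{T}$ and $\lt{T'}$; concretely, merging two branching local types unions their option sets, which on the binary side is exactly the $\with$-width weakening $\mywith{\lb{l}_j:A_j}{j\in I\cup\{k\}} \vdash P :: z{:}C$ from $\mywith{\lb{l}_j:A_j}{j\in I} \vdash P :: z{:}C$ — i.e.\ a process that uses the smaller $\with$-type also uses the merged one. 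So the step is: apply the IH to get $\ether{G_i}$ typed with $c_\pt{r}{:}\lt{\proj{G_i}{\pt{r}}}$, then apply \name{T$\lft\with_2$} (iterated, and pushed through the $\lolli$/$\otimes$ structure of the deeper local type by a straightforward subinduction) to coerce that to $c_\pt{r}{:}\lt{\fuse_{i}\proj{G_i}{\pt{r}}}$, uniformly across all branches $i$ so that \name{T$\lft\oplus$} can combine them. I expect this merge-coherence lemma to be stated separately (likely the authors have it as a preceding lemma), and the body of the present proof to be the induction with that lemma invoked at the communication case for each non-$\{\pt{p},\pt{q}\}$ participant.
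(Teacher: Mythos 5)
Your proposal is correct and follows essentially the same route as the paper's proof: structural induction on $G$, with the communication case built from \name{T$\lft\oplus$}, \name{T$\lft\otimes$}, \name{T$\lft\with_2$}/\name{T$\lft\with_1$}, \name{T$\lft\lolli$} and \name{T$\mathsf{id}$}, the per-branch contexts produced by the induction hypothesis reconciled by silently widening branching ($\with$) types via \name{T$\lft\with_2$} until they coincide, and the $\para$ case handled by independent parallel composition using disjointness of participants forced by the projection of parallel types. The only cosmetic difference is that you package the reconciliation as a separate merge-coherence lemma, whereas the paper performs it inline (an inner induction on the context with a case analysis on the shape of the merged types), relying on the observation that merged branching types differ only in their top-level label sets.
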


\begin{proof}
By a structural induction on $G$; see Appendix~\ref{app:ltypesmedp}.
In case 
$$G = \proj{\gto{p}{q}\{\lb{l}_i\langle U_i\rangle.G_i\}_{i \in I}}{\pt{r}}$$ with $\{\pt{r}\}\#\{\pt{p}, \pt{q}\}$, 
the flexibility given by $\fuse$ (Def.~\ref{d:mymerg})
results into $\with$ types in the left-hand side typing for \ether{G} which may not be identical.
To derive the desired compositional typing,
we 
use 
rule~\name{T$\lft\with_2$} so as to silently add/remove labeled options in left $\with$ types
until achieving identical typings (as required to use rule~\name{T$\lft\oplus$}).
The case $G = G_1 \para G_2$ uses independent parallel composition.
\end{proof}

\noindent The following theorem
states the converse of Theorem~\ref{l:ltypesmedp}:
it says that compositional typings for mediums induce global types which are WF.
We require the following auxiliary definition, which relies on the merge operator given in Definition~\ref{d:mymerg}.
\begin{definition}\label{d:subt}
Given local types $T_1, T_2$, we write 
$T_1 \subt T_2$ if there exists a local type $T'$ such that $T_1 \fuse T' = T_2$.
\end{definition}


\begin{theorem}[From Well-Typedness To WF Global Types]\label{l:medltypes}
Let $G \in \fgtypes$ be a 
global type. 
If 
$\G; c_1{:}A_1, \ldots, c_ n{:}A_ n \vdash \ether{G}$
is a compositional typing for \ether{G}
then $\exists 
T_1, \ldots, T_ n$ 
s.t. 
$\proj{G}{\mathtt{r}_j} \subt T_j$ and 
$\lt{T_j} = A_j$, 
for all $\pt{r}_j \in G$.
\end{theorem}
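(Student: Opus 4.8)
The plan is to proceed by structural induction on $G$, mirroring the inductive definition of $\ether{G}$ (Def.~\ref{d:ether}) and peeling off the typing derivation in parallel. The statement is really a converse of Theorem~\ref{l:ltypesmedp}: from a valid compositional typing we must reconstruct candidate local types $T_j$ that dominate (via $\subt$) the projections of $G$ and whose translations $\lt{T_j}$ are exactly the binary types $A_j$ appearing in the judgment. The use of $\subt$ rather than equality is forced by the same phenomenon exploited in the proof of Theorem~\ref{l:ltypesmedp}: the $\with$-weakening rules \name{T$\lft\with_1$} and \name{T$\lft\with_2$} allow the typing derivation to silently enlarge branching types on non-active participants, so the binary type recorded in $\D$ may be strictly larger (more branches) than the one obtained by projecting $G$ directly.

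\textbf{Key steps.} First, the base case $G = \gend$: here $\ether{G} = \zero$, the only way to type it with right-hand side $\one$ forces $\D$ to consist entirely of $\one$-typed names (via repeated \name{T$\lft\one$}), so each $A_j = \one = \lt{\lend} = \lt{\proj{\gend}{\pt{r}_j}}$, and we take $T_j = \lend$. Second, the parallel case $G = G_1 \para G_2$: since $\ether{G} = \ether{G_1} \para \ether{G_2}$ and the only applicable rules are \name{T$\cut$}/\name{T$\lft\one$} giving independent parallel composition, the linear context splits as $\D = \D_1, \D_2$ with $\ether{G_i}$ typed under $\D_i$; we apply the induction hypothesis to each sub-derivation, obtaining local types for the participants of each $G_i$, and then glue them using the definition of projection for $\para$ (a participant of $G$ lies in exactly one of the $G_i$, or in neither, in which case its type is $\one$). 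The main case is $G = \gto{p}{q}\{\lb{l}_i\langle U_i\rangle.G_i\}_{i \in I}$. Here $\ether{G}$ begins with a branching on $c_\pt{p}$, so the derivation must start with \name{T$\lft\oplus$}, forcing $A_\pt{p} = \myoplus{\lb{l}_i : A_i'}{i \in I}$ and yielding, for each $i$, a sub-derivation of $c_\pt{p}(u).\mysel{c_\pt{q}}{\lb{l}_i};\outp{c_\pt{q}}{v}.(\linkr{u}{v} \para \ether{G_i})$ under $\D, c_\pt{p}{:}A_i'$. Analysing this branch forces $A_i' = \lt{U_i} \otimes B_i$ via \name{T$\lft\otimes$} (giving back the value type $\lt{U_i}$ on $u$ and a residual $B_i$ on $c_\pt{p}$), then on $c_\pt{q}$ a selection handled by \name{T$\lft\with_1$} possibly preceded/followed by weakenings \name{T$\lft\with_2$}, so $A_\pt{q} = \mywith{\lb{l}_j : \ldots}{j \in J}$ with $I \subseteq J$ and the $\lb{l}_i$-branch of type $\lt{U_i} \lolli C_i$; the forwarder $\linkr{u}{v}$ is typed by \name{T$\mathsf{id}$}, forcing the carried type on $v$ to also be $\lt{U_i}$, and leaving $\ether{G_i}$ typed under $\D', c_\pt{p}{:}B_i, c_\pt{q}{:}C_i$ where $\D'$ collects the untouched remaining names. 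Invoking the induction hypothesis on $\ether{G_i}$ yields local types $T_j^i$ with $\proj{G_i}{\pt{r}_j} \subt T_j^i$ and $\lt{T_j^i} = $ the corresponding type in the $i$-th sub-context. We then set $T_\pt{p} = \pt{q}!\{\lb{l}_i\langle U_i\rangle.T_\pt{p}^i\}_{i \in I}$, $T_\pt{q} = \pt{p}?\{\lb{l}_j\langle U_j\rangle.(\ldots)\}_{j \in J}$ (using the larger index set $J$ supplied by the $\with$-weakening, filling in for $j \notin I$ from the residual types), and for a third participant $\pt{r}$ we fuse the $T_\pt{r}^i$ over $i \in I$, which is exactly what $\proj{G}{\pt{r}}$ does — and crucially, $\subt$ is preserved under $\fuse$, $!$-prefixing, and $?$-prefixing, so the domination relation lifts from the continuations to $G$.

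\textbf{Expected obstacle.} The main difficulty will be the bookkeeping in the key communication case: tracking how the left-$\with$ weakening rules interleave with the genuine structural rules so that the reconstructed $T_\pt{q}$ (over the possibly-larger index set $J$) still satisfies $\proj{G}{\pt{q}} \subt T_\pt{q}$ and $\lt{T_\pt{q}} = A_\pt{q}$ simultaneously — one must check that the ``extra'' branches in $A_\pt{q}$ can always be back-translated to legitimate local types via $\lt{\cdot}$, i.e.\ that they have the shape $\lt{U} \lolli C$ with $C$ itself in the image of $\lt{\cdot}$. This in turn requires an auxiliary lemma characterizing the image of $\lt{\cdot}$ (which binary types arise as translations of local types), and an induction-hypothesis strengthening so that the IH delivers not just \emph{some} $T_j$ but one whose translation lands exactly on the prescribed $A_j$. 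A secondary subtlety is the treatment of the shared context $\G$, which plays no essential role here (mediums of finite global types use no replication) and should simply be carried along unchanged; and one must be slightly careful that the name convention (Fact~\ref{fc:name}) guarantees $c_\pt{p} \neq c_\pt{q}$ so the two prefixes act on distinct context entries, which is what licenses the clean decomposition of the derivation. Full details are deferred to the appendix.
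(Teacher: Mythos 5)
Your proposal matches the paper's proof essentially step for step: structural induction on $G$, inversion on the typing derivation in the communication case (with $\subt$ absorbing the alternatives silently added by \name{T$\lft\with_2$}), the induction hypothesis applied to the continuations $\ether{G_i}$, and the disjoint splitting of the linear context via independent composition for $G_1 \para G_2$. The only cosmetic difference is the treatment of passive participants: where you fuse the per-branch witnesses $T^i_{\pt{r}}$ (which would additionally require checking these witnesses are mergeable), the paper observes that the context $\D$ is literally unchanged across the branch sub-derivations and so a single witness serves for all branches; otherwise the argument, including the subtlety you flag about back-translating the extra $\with$-branches through $\lt{\cdot}$, coincides with the paper's.
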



\noindent The proof of 
Theorem~\ref{l:medltypes}
is by structural induction on $G$; 
see Appendix~\ref{app:medltypes}.
Observe how notation $\subt$ allows us to handle the occurrence of labeled alternatives which may be silently introduced by rule~\name{T$\lft\with_2$}.

%

Remarkably, our results tightly and formally connect global types (in $\fgtypes$), local types, and projection (on the multiparty approach)
and medium processes and  deadlock-free binary session types (rooted in linear logic).
Our results provide 
an independent deep justification, through purely logical arguments, to 
 the forms of projection proposed in the literature.
We do not know of works in which the semantics of global type projection
is compared/assessed based on different foundations; this also seems an interesting contribution of our  approach to multiparty protocol analysis.

\begin{remark}\label{rem:swf}
Theorems~\ref{l:ltypesmedp} and~\ref{l:medltypes} concern well-formed global types as in~\cite{DBLP:journals/corr/abs-1208-6483,DBLP:conf/icalp/DenielouY13}.
The theorems hold also when global types are well-formed as in~\cite{DBLP:conf/popl/HondaYC08}; we call those types \emph{simply well-formed (or SWF)}. In the analog of Theorem~\ref{l:ltypesmedp} for SWF global types, the proof is simpler as 
projectibility in ~\cite{DBLP:conf/popl/HondaYC08}  ensures identical behavior in all branches.
See Appendix~\ref{app:sec:swf}.
\end{remark}

\subsubsection{A Behavioral Characterization of Global Swapping}
The \emph{swapping relation} over global types was proposed in~\cite{DBLP:conf/popl/CarboneM13}
as a way of enabling behavior-preserving transformations among causally independent communications.
Such transformations may represent optimizations, 
in which parallelism is increased while preserving the overall intended semantics. 
We now show a characterization of swapping on global types in terms of a typed behavioral equivalence on mediums.

\begin{definition}[Swapping for Global Types]\label{d:gswap}
We define \emph{swapping}, denoted \eqsw, as the smallest congruence on global types which satisfies the rules 
in Fig.~\ref{fig:swap}.
\end{definition}
\begin{figure}[t!]
{\small
 $$
\begin{array}{c}
\inferrule*[left=\name{sw1}]{\{\mathtt{p_1},\mathtt{q_1}\} \# \{\mathtt{p_2},\mathtt{q_2}\}}
{
\begin{array}{c}
\gto{p_1}{q_1}\big\{\lb{l}_i\langle U_i\rangle.\gto{p_2}{q_2}\{\lb{l}'_j\langle U'_j\rangle.G_{ij}\}_{j \in J}\big\}_{i \in I} \vspace{-0.05cm} \\
\eqsw \qquad \vspace{-0.00cm}\\
\gto{p_2}{q_2}\big\{\lb{l}'_j\langle U'_j\rangle.\gto{p_1}{q_1}\{\lb{l}_i\langle U_i\rangle.G_{ij}\}_{i \in I}\big\}_{j \in J}
\end{array}}
\vspace{0.2cm}
\\
\inferrule[\name{sw2}]{\{\mathtt{p},\mathtt{q}\} \# \partp{G_1} \qquad \forall i, j \in I. G^1_i = G^1_j}
{
\begin{array}{c}
\gto{p}{q}\{\lb{l}_i\langle U_i\rangle.(G^1_i \para G^2_i)\}_{i \in I} 
\eqsw 
G^1_1 \para \gto{p}{q}\big\{\lb{l}_i\langle U_i\rangle.G^2_i\}_{i \in I}
\end{array}
}
\end{array}
$$
}
\caption{Swapping  on global types (cf. Def.~\ref{d:gswap}).  $A\#B$ denotes that sets $A, B$ are disjoint. The symmetric of~\name{\textsc{sw2}} is omitted. \label{fig:swap}}
\vspace{-3mm}
\end{figure}

To characterize swapping, we briefly discuss \emph{proof conversions}
and
\emph{typed behavioral equivalences} for logic-based binary session types, following~\cite{DBLP:conf/esop/PerezCPT12}.
The correspondence in~\cite{CairesP10,Toninho2014}
is 
realized by relating \emph{proof conversions} in linear logic with appropriate behavioral 
equivalences in the process setting.
Most conversions correspond to either reductions or structural congruences at the level of processes.
There is  a group of 
\emph{commuting conversions} which actually induce a behavioral congruence on typed processes, denoted $\eqcc$.
Process equalities justified by $\eqcc$ include, e.g., the following ones:
{
$$
\begin{array}{c}
(\nub x)(P \para y(z).Q)  \typconp   y(z).(\nub x) (P \para Q)  \\
(\nub x)(P \para \outp{y}{z}.(Q \para R))   \typconp   \outp{y}{z}.(Q \para (\nub x)(P \para R)) \\
(\nub x)(P\para \mysel{y}{\lb{l}_i}; Q)  \typconp  \mysel{y}{\lb{l}_i};(\nub x)(P \para  Q)  
\end{array}
$$
}
Processes 
equated by $\eqcc$
are syntactically very different and yet they are associated to the session typed (contextual) behaviour.
These equalities reflect a natural 
typed behavioral equivalence over session-typed processes, called \emph{typed context bisimilarity}~\cite{DBLP:conf/esop/PerezCPT12}.
 Roughly, 
 typed processes $\G; \D \vdash P ::x{:}A$ and $\G; \D \vdash Q ::x{:}A$ are 
 typed context bisimilar, denoted \trelind{\Gamma; \Delta}{x{:}A}{P}{Q} if, once composed with their requirements (as described by $\G$ and $\D$), they perform the same actions on $x$ (following $A$).
 Typed context bisimilarity is a congruence on well-typed processes.

\begin{theorem}[\cite{DBLP:conf/esop/PerezCPT12}]\label{t:soundcc}
Let $P, Q$ be well-typed processes. \\
If  
$\Gamma; \Delta \vdash P \eqcc Q :: z{:}C$
then \trelind{\Gamma; \Delta}{z{:}C}{P}{Q}.
\end{theorem}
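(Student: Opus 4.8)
The statement to prove is Theorem~\ref{t:soundcc}, which asserts that if $\Gamma; \Delta \vdash P \eqcc Q :: z{:}C$ then $\trelind{\Gamma; \Delta}{z{:}C}{P}{Q}$, i.e.\ proof-conversion equivalence is contained in typed context bisimilarity. Since this theorem is attributed to~\cite{DBLP:conf/esop/PerezCPT12}, the proof I would present is essentially a recollection of the argument from that work, adapted to the present $n$-ary choice and coinductive setting. The plan is to proceed by induction on the derivation of $\Gamma; \Delta \vdash P \eqcc Q :: z{:}C$, i.e.\ on the number of elementary conversion steps and the structure of the congruence closure. The base cases are the individual commuting conversions (such as the three displayed equalities pushing a cut under an input, a bound output, and a selection prefix); the inductive cases handle reflexivity, symmetry, transitivity, and closure under typed process contexts.

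\textbf{Key steps.} First I would fix a suitable formulation of typed context bisimilarity $\tybis$ as a typed relation closed under the operations of the calculus, recalling from~\cite{DBLP:conf/esop/PerezCPT12} that it coincides with (or refines) barbed congruence on well-typed processes and is itself a congruence. Second, for each elementary commuting conversion $P \eqcc Q$, I would exhibit a candidate bisimulation — typically the typed relation generated by all instances of that conversion together with the identity, closed under the relevant contexts — and check the transfer property: every typed transition of $P$ on $z$ (dictated by $C$) is matched by a transition of $Q$ reaching related processes, and symmetrically. Because the two sides of a commuting conversion differ only in the relative order of a cut (a $\nu$-bound composition) and a prefix on an independent name, the matching is immediate: the prefix action on $z$ is available on both sides and its residuals are again related by the same conversion; internal $\tau$-steps coming from the hidden cut are matched by the corresponding $\tau$-step on the other side (using that reduction is confluent up to $\eqcc$, a consequence of Theorem~\ref{th:prop}). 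Third, the congruence cases follow because $\tybis$ is a congruence: if $P \tybis Q$ then $\mathcal{C}[P] \tybis \mathcal{C}[Q]$ for any typed context $\mathcal{C}[\cdot]$, and transitivity/symmetry/reflexivity are built into the definition of $\tybis$ as an equivalence. I would also remark that the coinductive-type rules $(\rgt\nu)$, $(\lft\nu)$, and \name{var} introduce no new commuting conversions beyond unfolding, which is already a reduction and hence absorbed by $\tybis$.

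\textbf{Main obstacle.} The delicate point is establishing the transfer property for the commuting conversions that move a cut past a prefix, specifically ensuring that the \emph{hidden} communications inside the restricted composition are properly simulated on both sides — one must argue that a reduction step enabled by the cut in $(\nub x)(P \para y(z).Q)$ corresponds to a reduction step in $y(z).(\nub x)(P \para Q)$ only after the $y$-action has fired, so the two processes are bisimilar rather than literally reduction-matching step-for-step. The clean way to handle this is to appeal to the fact, proved in~\cite{DBLP:conf/esop/PerezCPT12}, that the union of all commuting-conversion instances (closed under contexts) is a \emph{bisimulation up to} structural congruence and $\tybis$-context, together with type preservation and strong normalization of the hidden reductions from Theorem~\ref{th:prop}, which guarantees that the bookkeeping of pending internal steps terminates. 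Hence the core of the proof is the routine but careful verification of the up-to bisimulation clauses for each conversion schema, and the rest is closure under the equivalence- and congruence-generating rules.
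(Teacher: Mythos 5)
There is nothing to compare your sketch against inside this paper: Theorem~\ref{t:soundcc} is stated with a citation precisely because it is imported from~\cite{DBLP:conf/esop/PerezCPT12}, and the paper gives no proof of it (it is only \emph{used}, e.g.\ in the proof of Theorem~\ref{p:swapmeds}). Your blind reconstruction — induction on the derivation of $\eqcc$, a per-conversion bisimulation (up-to) check for each commuting conversion, and discharge of the reflexivity/symmetry/transitivity/context cases via the fact that typed context bisimilarity is an equivalence and a congruence — is consistent with how the cited work argues, so as an outline it is reasonable. Two caveats, though. First, you justify the matching of the hidden reductions by ``reduction is confluent up to $\eqcc$, a consequence of Theorem~\ref{th:prop}''; that theorem gives type preservation, progress and non-divergence, \emph{not} confluence, so this step is unsupported as written — you would either have to prove a confluence/diamond property separately or, as in~\cite{DBLP:conf/esop/PerezCPT12}, argue directly through the contextual closure built into the definition of $\tybis$ (both sides, once composed with implementations of $\Gamma;\Delta$, exhibit the same observable behavior on $z{:}C$). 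Second, the present paper's type system extends the cited one with $n$-ary choice and coinductive types ($\nu\rv{X}.A$ with rules $(\rgt\nu)$, $(\lft\nu)$, \name{var}); your one-line dismissal of the coinductive cases is exactly where a genuinely new argument would be needed, since~\cite{DBLP:conf/esop/PerezCPT12} does not treat corecursion — note that the paper itself only invokes the theorem for finite (recursion-free) global types, which is why the citation suffices there.
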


It turns out that swapping in global types (Def.~\ref{d:gswap}) can also be directly justified
from crisper, more primitive notions, based on the correspondence established by Theorems~\ref{l:ltypesmedp} and~\ref{l:medltypes}.
Indeed, by formalizing the behavior of a global type in terms of its medium 
we may reduce transformations at the level of global types to
sound transformations at the level of  processes. 

Theorem~\ref{p:swapmeds} below 
gives a strong connection between swapping on global types ($\eqsw$) 
with typed context bisimilarity (\tybis), as 
motivated above (and defined by P\'{e}rez et al. in~\cite{DBLP:conf/esop/PerezCPT12}).
Thanks to the theorem, the sequentiality of mediums can be relaxed in the case of causally independent communications formalized by swapping.

\begin{theorem}\label{p:swapmeds}
Let $G_1 \in \fgtypes$ be a global type, such that $\ether{G_1}$ has 
a compositional typing $\G; \D \vdash \ether{G_1} $, for some $\G, \D$. \\
If $G_1 \eqsw G_2$ then  $\G;  \D \vdash \ether{G_1} \tybis \ether{G_2}  $.
\end{theorem}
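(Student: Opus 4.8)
The plan is to prove the statement by induction on the derivation of $G_1 \eqsw G_2$, which by Definition~\ref{d:gswap} is the least congruence generated by rules \name{sw1} and \name{sw2}. Since $\tybis$ is itself a congruence on well-typed processes (as recalled before Theorem~\ref{t:soundcc}) and since Theorem~\ref{l:ltypesmedp}/Theorem~\ref{l:medltypes} guarantee that compositional typings are preserved along the relevant subterms, the congruence and reflexivity/symmetry/transitivity cases reduce to the two base cases: it suffices to show that whenever $G_1$ and $G_2$ are related by a single application of \name{sw1} or \name{sw2}, and $\ether{G_1}$ has a compositional typing, then $\ether{G_1} \tybis \ether{G_2}$ under that typing. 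For each base case I would first unfold $\ether{\cdot}$ using Definition~\ref{d:ether}, obtaining two concrete process terms, and then exhibit a chain of $\eqcc$-steps (and structural congruences) connecting them; the result then follows from Theorem~\ref{t:soundcc}, since $\eqcc \subseteq {\tybis}$ on well-typed processes. Along the way I must check that every intermediate process in the chain is well-typed under a compatible typing, so that Theorem~\ref{t:soundcc} applies at each step.

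For the \name{sw1} case, the side condition $\{\pt{p_1},\pt{q_1}\} \# \{\pt{p_2},\pt{q_2}\}$ ensures (by the name convention, Fact~\ref{fc:name}) that the four channels $c_{\pt{p_1}}, c_{\pt{q_1}}, c_{\pt{p_2}}, c_{\pt{q_2}}$ are pairwise distinct. The medium $\ether{G_1}$ is a branching on $c_{\pt{p_1}}$ whose continuations each contain, after the prefixes $c_{\pt{p_1}}(u).\mysel{c_{\pt{q_1}}}{\lb{l}_i};\outp{c_{\pt{q_1}}}{v}.(\linkr{u}{v} \para -)$, a nested branching on $c_{\pt{p_2}}$; the medium $\ether{G_2}$ is the same data with the two layers of prefixes interchanged. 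The key observation is that the block of prefixes coming from the $\pt{p_1}\to\pt{q_1}$ communication acts on channels disjoint from those of the $\pt{p_2}\to\pt{q_2}$ communication, so the commuting conversions listed after Theorem~\ref{t:soundcc} — precisely the ones pushing a $\nu$-bound composition past an input, a bound output, and a selection — let me migrate each prefix of one communication across the branching/prefixes of the other. Iterating over all branches $i \in I$ and $j \in J$, I assemble a finite $\eqcc$-chain from $\ether{G_1}$ to $\ether{G_2}$; the linking processes $\linkr{u}{v}$ and $\linkr{u'}{v'}$ commute freely since they act on fresh bound names. For \name{sw2}, the condition $\forall i,j.\,G^1_i = G^1_j$ means the ``extracted'' parallel component $\ether{G^1_1}$ is literally the same term in every branch, and $\{\pt{p},\pt{q}\} \# \partp{G_1}$ means $\ether{G^1_1}$ shares no free channel names with the prefixes of the $\pt{p}\to\pt{q}$ communication; hence $\ether{G^1_1}$ can be hoisted out of the branching by the scope-extrusion structural congruence together with the commuting conversion for $\para$ under input/output/selection, yielding $\ether{G^1_1} \para \ether{\gto{p}{q}\{\lb{l}_i\langle U_i\rangle.G^2_i\}_{i\in I}}$.

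The main obstacle I anticipate is the bookkeeping of well-typedness for the intermediate processes: Theorem~\ref{t:soundcc} requires each conversion step to be applied between genuinely well-typed processes under a common typing, and as prefixes are permuted the shape of the typing derivation (which left-rule is applied when) changes, so I need to argue that each rearranged term still admits a left-compositional typing with the linear context split appropriately. This is where the independence of the channel sets is doing real work — it is what allows the linear context $\D$ to be partitioned so that the two communications' prefixes typecheck independently and can be reordered without a scheduling conflict — and I would isolate this as a lemma (roughly: if $c_{\pt{p}},c_{\pt{q}}$ are disjoint from the free names of $R$, then $\ether{\text{-}}$-style prefixes on $c_{\pt{p}},c_{\pt{q}}$ commute with $R$ up to $\eqcc$ while preserving typing). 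A secondary nuisance is the \name{sw1} case when some branch index set interacts with the $\fuse$-induced flexibility of projections from Theorem~\ref{l:ltypesmedp}; but since the statement of Theorem~\ref{p:swapmeds} only assumes that \emph{some} compositional typing exists for $\ether{G_1}$ — not that it comes from projection — I can work with whatever typing is given and need not reconcile it with a canonical projected one, which keeps the argument purely about process-level conversions.
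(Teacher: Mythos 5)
Your proposal matches the paper's proof: it proceeds by induction on $\eqsw$, first establishing $\G;\D \vdash \ether{G_1} \eqcc \ether{G_2}$ via chains of commuting conversions whose applicability is guaranteed by the name distinctness induced by the disjointness side conditions (Fact~\ref{fc:name}), and then concludes by soundness of $\eqcc$ with respect to $\tybis$ (Theorem~\ref{t:soundcc}); your treatment of \name{sw2} by hoisting the common parallel component is likewise what the paper does. The only slight imprecision is that the \name{sw1} case rests mainly on the prefix--prefix commutations of Fig.~\ref{f:cc2} (two left rules on distinct channels) rather than the cut-versus-prefix conversions you cite, and typing of the intermediate terms is automatic because each $\eqcc$ equality is itself stated as a typed judgment, so no extra lemma is needed.
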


\begin{proof}[Proof (Sketch)]
The proof proceeds by induction on the definition of $\eqsw$ (Def.~\ref{d:gswap}). 
To relate swapping with typed context bisimilarity we rely on 
$\eqcc$.
We first show that
\begin{equation}
\text{If $G_1 \eqsw G_2$ then $\G; \D \vdash \ether{G_1} \eqcc \ether{G_2}$}  \label{pr:swpa}
\end{equation}
To establish \eqref{pr:swpa}, we exploit 
the relation between participant identities in global types and names in associated mediums (Fact~\ref{fc:name}): this allows to infer that
disjointness conditions for swapping rules imply name distinctions,  which in turn enables type-preserving transformations via $\eqcc$.
The needed   transformations rely on equalities 
detailed by P\'{e}rez et al. in~\cite{DBLP:conf/esop/PerezCPT12}.
The thesis follows by combining \eqref{pr:swpa} with Theorem~\ref{t:soundcc}. 
See Appendix~\ref{app:swapmeds} for details.
\end{proof}

\noindent The converse of Theorem~\ref{p:swapmeds} does not hold in general: 
given $\ether{G}$, 
the existence of a   $P'$ such that $\ether{G} \eqcc P'$ does not necessarily imply 
the existence of a $G'$ such that $G \eqsw G'$ and $P' = \ether{G'}$.
For instance, consider the global type 
$$G_1 = \gto{p}{q}\big\{\lb{l}_i\langle U_i\rangle.\gto{r}{p}\{\lb{l}'_j\langle U'_j\rangle.G_{ij}\}_{j \in J}\big\}_{i \in I}$$
It cannot be swapped and yet 
prefixes for \pt{q} and \pt{r} 
in $\ether{G_1}$ could be commuted.
In general, mediums are a fine-grained representation of global types:
as a single communication in $G$ is implemented in $\ether{G}$ using several prefixes,
swapping of a type $G$ occurs only  when \emph{all} involved prefixes in $\ether{G}$ can be commuted  via \eqcc.
We stress that commutations induced by \eqcc are always type-preserving; hence,
typing for $\ether{G}$ is invariant under swapping.

\subsection{Results for Well-Formed Global Types With Recursion}\label{ss:rela}
In this sub-section we 
consider the language of global types
with recursion and without parallel and 
extend the characterization results in \S\,\ref{ss:frela}. We also present an operational correspondence result.

\subsubsection{Characterization Results for Recursive Mediums} 
We require the following (expected) extension to 
mapping $\lt{\cdot}$, given in
Definition~\ref{d:loclogt}:
\begin{eqnarray*}
\lt{\rv{X}} & = & \rv{X} \\
\lt{\mu\rv{X}.T} & = & \nu\rv{X}.\lt{T}
\end{eqnarray*}

To state the analogous  of Theorems~\ref{l:ltypesmedp} and~\ref{l:medltypes}
for co-recursive mediums, we need to close the local projections of a global type.
The following definition defines a closure for such recursion variables, using mapping $\eta$.
Given $ \eta =  \eta'[\rv{X}(\widetilde{y}) \mapsto  \G ; \D \vdash k{:}\rv{Y}]$
with $c_i{:}A_i \in \D$
we write $\eta(\rv{X})(c_i)$ to denote the type $A_i$.
We write $\fv{G}$ to denote the set of free recursion variables in $G$.

\begin{definition}[Closure for Local Types]
Let $G$, $\mathcal{P}$, and $\eta$  be a global type, a set of participants, and a mapping from process variables to typing contexts, respectively.
Also, let \lt{\cdot} be the map of Def.~\ref{d:loclogt}, extended as above.
We define:
\begin{itemize}
\item $\clt{\proj{G}{\pt{p}_i}}{\eta}{\mathcal{P}} =$ \\
$
\begin{cases}
\lt{\proj{G}{\pt{p}_i}}\subst{\eta(\rv{X})(c_i)}{\rv{X}} & \text{if $\fv{\proj{G}{\pt{p}_i}} = \{\rv{X}\}$ and $\pt{p}_i \in \mathcal{P}$} \\
\eta(\rv{X})(c_i) & \text{\text{if $\fv{\proj{G}{\pt{p}_i}} = \{\rv{X}\}$ and $\pt{p}_i \not\in \mathcal{P}$}} \\
\lt{\proj{G}{\pt{p}_i}} & \text{if $\fv{\proj{G}{\pt{p}_i}} = \emptyset$}
\end{cases}
$
\end{itemize}
\end{definition}

Concerning typing for mediums, 
the main consequence of adding recursion is that we no longer have $\one$ at the right-hand side typing (cf. Def.~\ref{d:compty}).
Intuitively, this is because we can never fully consume a recursive behavior, which is essentially infinite. 
If recursion is required in the left-hand side typing  then
some recursive behavior must show up in the right-hand side along name $k$.
(Notice that by Def.~\ref{d:proj}, all the local projections of a recursive global type will be also recursive.)

We now extend 
Theorems~\ref{l:ltypesmedp} and~\ref{l:medltypes}
to  global types in \rgtypes. 
As before, the first direction of the characterization says that
the conditions that merge-based well-formedness induces on global types
suffice to ensure compositional typings for mediums (cf. Definition~\ref{d:compty}):

\begin{theorem}[From Well-Formedness To Typed Mediums]\label{l:ltypesmedprec}
Let $G \in \rgtypes$ be a global type with~$\partp{G} = \mathcal{P} = \{\pt{p}_1, \ldots, \pt{p}_n\}$.
If $G$ is WF
(Def.~\ref{d:wfltypes}) then 
 $$\G; c_1{:}\clt{\proj{G}{\pt{p}_1}}{\eta}{\mathcal{P}}, \ldots, c_n{:}\clt{\proj{G}{\pt{p}_n}}{\eta}{\mathcal{P}} \vdash_\eta \rether{G}{k} :: k{:} A$$
 is a left compositional typing
for \rether{G}{k} 
for some $\G, \eta, A$.
\end{theorem}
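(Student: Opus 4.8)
The plan is to proceed by structural induction on $G \in \rgtypes$, following exactly the inductive structure of the recursive medium $\retheraux{G}{k}{\lb{l}}$ (Def.~\ref{d:rether}), with a suitably strengthened induction hypothesis that tracks the label $\lb{l}$ currently ``in flight'', the coinductive variable environment $\eta$, and the shape of the right-hand side type along $k$. Concretely, the strengthened statement should say: for any sub-global-type $G'$ reachable inside $G$ and any label $\lb{l}'$, the process $\retheraux{G'}{k}{\lb{l}'}$ is typable under a context assigning each participant $c_i$ the (closed) type $\clt{\proj{G'}{\pt{p}_i}}{\eta}{\mathcal{P}}$, offering on $k$ a type $A'$ that is (a) $\myoplus{\lb{l}' : \one}{}$-like when $G' = \gend$, (b) the coinductive unfolding $\nu\rv{Y}.A''$ when $G' = \mu\rv{X}.G''$, (c) a type variable $\rv{Y}$ recorded in $\eta$ when $G' = \rv{X}$, and (d) built from the continuations otherwise. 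The reason we only get a \emph{left} compositional typing (conditions (i)--(iii) of Def.~\ref{d:compty}, not (iv)) is precisely that the right-hand side along $k$ is now forced to carry recursive behavior rather than $\one$, as the introductory discussion explains.

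The key steps, in order, are as follows. First, the base case $G = \gend$: here $\retheraux{\gend}{k}{\lb{l}} = \mysel{k}{\lb{l}};\zero$, which is typed by \name{T$\rgt\one$} followed by \name{T$\lft\with_1$}/\name{T$\rgt\oplus_1$}-style rules to give $k{:}\myoplus{\lb{l}:\one}{}$, with all $c_i$ assigned $\one = \lt{\lend}$, matching $\proj{\gend}{\pt{r}} = \lend$. Second, the communication case $G = \gto{p}{q}\{\lb{l}_i\langle U_i\rangle.G_i\}_{i \in I}$: this is essentially the same argument as in the finite case (Theorem~\ref{l:ltypesmedp}), reading off the medium's decomposition $\mycasebig{c_\pt{p}}{\lb{l}_i : c_\pt{p}(u).\mysel{c_\pt{q}}{\lb{l}_i};\outp{c_\pt{q}}{v}.(\linkr{u}{v} \para \retheraux{G_i}{k}{\lb{l}_i})}{i \in I}$ using \name{T$\rgt\with$}, \name{T$\lft\otimes$}, \name{T$\lft\with_1$}, \name{T$\lft\lolli$}, \name{T$\mathsf{id}$}, and \name{T$\cut$}, and invoking the IH on each $G_i$ with label $\lb{l}_i$; the merge-based projection of participants $\pt{r}\notin\{\pt{p},\pt{q}\}$ again produces possibly-non-identical $\with$ types in the branches, repaired by \name{T$\lft\with_2$} exactly as in Theorem~\ref{l:ltypesmedp}. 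Third, the recursion case $G = \mu\rv{X}.G'$: here $\retheraux{\mu\rv{X}.G'}{k}{\lb{l}} = (\recc\rv{X}(\til{z}).\retheraux{G'}{k}{\lb{l}})\,\til{c}$, handled by rule $(\rgt\nu)$, which requires setting $\eta' = \eta[\rv{X}(\til{y}) \mapsto \G;\D \vdash_\eta k{:}\rv{Y}]$ and then applying the IH to $G'$ under $\eta'$; the closure operation $\clt{\cdot}{\eta}{\mathcal{P}}$ is designed precisely so that the contexts line up. Fourth, the variable case $G = \rv{X}$: $\retheraux{\rv{X}}{k}{\lb{l}} = \mysel{k}{\lb{l}};\rv{X}(\til{z})$, typed by the \name{var} rule preceded by an $\oplus$-selection, reading off $\eta(\rv{X})$ for both the $c_i$ types and the $k$ type.

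The main obstacle I expect is the bookkeeping around the $(\rgt\nu)$ and \name{var} rules: making the induction hypothesis strong enough that when we reach a bound variable $\rv{X}$ the type recorded in $\eta(\rv{X})$ for each $c_i$ genuinely coincides with $\clt{\proj{G}{\pt{p}_i}}{\eta}{\mathcal{P}}$, and that the coinductive type $A$ built at the $\mu$-binder has the strictly-positive, guarded form required by Def.~\ref{d:types}. This requires observing that guardedness of recursion variables in $G$ (they occur only under branchings, cf.~Def.~\ref{d:gltypes}) translates into genuine session behavior on every channel $c_i$ and on $k$ before the recursive call, so no degenerate type like $\nu\rv{Y}.\rv{Y}$ arises; and it requires a coherence lemma showing that projection commutes with the $\mu$/variable structure in the way the closure definition anticipates (essentially that $\proj{\mu\rv{X}.G'}{\pt{r}}$ is $\mu\rv{X}.\proj{G'}{\pt{r}}$ when non-trivial, and $\gend$ otherwise, as in Def.~\ref{d:proj}, matched by $\nu\rv{X}.\lt{\proj{G'}{\pt{r}}}$ via the extended $\lt{\cdot}$). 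Packaging all of this into the statement of a single strengthened induction claim, proved once, is where the real work lies; the individual rule applications are then routine. The detailed argument is deferred to the appendix.
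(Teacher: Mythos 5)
Your proposal is correct and takes essentially the same route as the paper: structural induction on $G$, reusing the finite-case argument (including the \name{T$\lft\with_2$} repair for merge-based projection) for the communication case, and handling $\mu\rv{X}.G'$ and $\rv{X}$ via rules $(\rgt\nu)$, $(\lft\nu)$ and \name{var}, with the closure $\clt{\cdot}{\eta}{\mathcal{P}}$ making the contexts line up. The only difference is presentational: your strengthened induction hypothesis fixes the $\eta$-binding for $\rv{X}$ up front, whereas the paper applies the induction hypothesis to the open body $G'$ and then recovers the required shape of $\eta_0(\rv{X})$ (the coinductive types $\nu\rv{X}.\lt{\proj{G'}{\pt{p}_i}}$, up to $(\lft\nu)$ rewriting) by inversion on typing --- the same argument organized forwards rather than backwards.
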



We now state the converse of Theorem~\ref{l:ltypesmedprec}.
It says that well-typed mediums induce global types which are well-formed.
That is,  the sequential structure of mediums can be precisely captured by binary session types
which have a corresponding local type.

\begin{theorem}[From Well-Typedness To WF Global Types]\label{l:medltypesrec}
Let $G \in \rgtypes$ be a 
global type, 
with~$\partp{G} = \mathcal{P} = \{\pt{p}_1, \ldots, \pt{p}_n\}$. 
If 
$$\G; c_1{:}A_1, \ldots, c_ n{:}A_ n \vdash_\eta \rether{G}{k} ::k{:}B$$
is a 
left compositional typing for \rether{G}{k}
then $\exists 
T_1, \ldots, T_ n$ 
s.t. 
$\proj{G}{\mathtt{p}_j} \subt T_j$ and 
$\clt{T_j}{\eta}{\mathcal{P}} = A_j$
for all $\pt{p}_j \in G$.
\end{theorem}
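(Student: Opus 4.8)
The plan is to prove Theorem~\ref{l:medltypesrec} by structural induction on $G \in \rgtypes$, mirroring closely the argument for the finite case (Theorem~\ref{l:medltypes}), and adding the treatment of the two new syntactic forms $\mu\rv{X}.G$ and $\rv{X}$. Throughout, the hypothesis is that $\G; c_1{:}A_1, \ldots, c_n{:}A_n \vdash_\eta \rether{G}{k} :: k{:}B$ is a left compositional typing, i.e.\ it is a valid derivation whose linear context consists exactly of one assignment $c_i{:}A_i$ per participant $\pt{p}_i \in G$. The goal is to extract, for each $\pt{p}_j$, a local type $T_j$ with $\proj{G}{\pt{p}_j} \subt T_j$ and $\clt{T_j}{\eta}{\mathcal{P}} = A_j$; the $\subt$ (rather than equality) accounts, just as before, for the labeled options that rule~\name{T$\lft\with_2$} may silently inject into branching types.

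First I would handle the base and communication cases by re-using the finite-case analysis essentially verbatim. For $G = \gend$, $\rether{\gend}{k} = \mysel{k}{\lb{lb}};\zero$, which forces $B = \myoplus{\lb{lb}:\one}{}$ and leaves $\D$ empty, so $\partp{G} = \emptyset$ and there is nothing to produce. For $G = \gto{p}{q}\{\lb{l}_i\langle U_i\rangle.G_i\}_{i\in I}$, the medium begins with $\mycasebig{c_\pt{p}}{\lb{l}_i : c_\pt{p}(u).\mysel{c_\pt{q}}{\lb{l}_i};\outp{c_\pt{q}}{v}.(\linkr{u}{v}\para\rether{G_i}{k})}{i\in I}$; inverting the typing derivation (possibly after peeling off applications of \name{T$\lft\with_2$}) I read off that $A_\pt{p}$ must be of the form $\mywith{\lb{l}_i : \lt{U_i}\lolli\lt{T_i^\pt{p}}}{i\in I'}$ with $I \subseteq I'$, that $A_\pt{q}$ must be $\myoplus{\lb{l}_i:\lt{U_i}\otimes\lt{T_i^\pt{q}}}{i\in I}$ (the $\otimes$ coming from the $\outp{c_\pt{q}}{v}$ prefix and the forwarder $\linkr{u}{v}$ supplying the type $\lt{U_i}$), and that for each $i$ the residual judgment types $\rether{G_i}{k}$ with a context assigning $T_i^r$-related types to the remaining participants. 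Applying the induction hypothesis to each $G_i$ gives local types witnessing $\proj{G_i}{\pt{r}}\subt T_i^r$; I then reassemble: $T_\pt{p} = \pt{p}!\{\lb{l}_i\langle U_i\rangle.T_i^\pt{p}\}_{i\in I'}$, $T_\pt{q} = \pt{p}?\{\lb{l}_i\langle U_i\rangle.T_i^\pt{q}\}_{i\in I}$, and for a third party $\pt{r}$, $T_\pt{r} = \fuse_{i\in I} T_i^\pt{r}$, which is well-defined exactly because the $A_\pt{r}$ agree across branches and $\fuse$ reflects this. The $\subt$ relation for $\pt{p}$ absorbs the extra labels in $I'\setminus I$; for $\pt{q}$ and $\pt{r}$ it is inherited/composed from the sub-calls using that $\fuse$ and $\subt$ interact well (Def.~\ref{d:subt}).

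The genuinely new work is in the recursion cases, and this is where I expect the main obstacle. For $G = \rv{X}$, we have $\retheraux{\rv{X}}{k}{\lb{l}} = \mysel{k}{\lb{l}};\rv{X}(\til{z})$, so the typing must end with \name{var}, meaning $\eta(\rv{X}(\til{z})) = \G;\D\vdash k{:}\rv{Y}$ for some contexts, and hence each $A_i$ is precisely $\eta(\rv{X})(c_i)$; taking $T_j = \proj{\rv{X}}{\pt{p}_j} = \rv{X}$ and using the second clause of $\clt{\cdot}{\eta}{\mathcal{P}}$ (the case $\pt{p}_j\not\in\mathcal{P}$ in the local sub-type, or more precisely the clause that returns $\eta(\rv{X})(c_i)$) we get $\clt{T_j}{\eta}{\mathcal{P}} = A_j$. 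Note $\partp{\rv{X}} = \emptyset$, so strictly this case only matters through how $\eta$ is threaded. For $G = \mu\rv{X}.G'$, the medium is $(\recc\rv{X}(\til{z}).\retheraux{G'}{k}{\lb{l}})\,\til{c}$, so the derivation ends with rule~$(\rgt\nu)$: the body $\retheraux{G'}{k}{\lb{l}}$ is typed under $\eta' = \eta[\rv{X}(\til{y})\mapsto \G;\D\vdash k{:}\rv{Y}]$ with $B = \nu\rv{Y}.B'$. I apply the induction hypothesis to $G'$ under $\eta'$, obtaining $T_j'$ with $\proj{G'}{\pt{p}_j}\subt T_j'$ and $\clt{T_j'}{\eta'}{\mathcal{P}} = A_j$ — but here one must be careful, because the $A_j$ in the body's context are the ``unfolded'' types $A_j\subst{\nu\rv{X}.A_j}{\rv{X}}$ after possible applications of $(\lft\nu)$, so I need a lemma that unfolding on the type side corresponds to $\mu$-unfolding on the local-type side under $\lt{\cdot}$, i.e.\ that $\lt{\mu\rv{X}.T} = \nu\rv{X}.\lt{T}$ commutes with the substitutions used by $(\lft\nu)$ and \name{var}. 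Granting that, I set $T_j = \mu\rv{X}.T_j'$ when $\rv{X}\in\fv{T_j'}$ (and use the $\gend$-collapsing clause of Def.~\ref{d:proj} otherwise, matching how $\proj{\mu\rv{X}.G'}{\pt{p}_j}$ is defined), and verify $\proj{\mu\rv{X}.G'}{\pt{p}_j}\subt T_j$ from $\proj{G'}{\pt{p}_j}\subt T_j'$ using clause~5 of Def.~\ref{d:mymerg} ($\mu\rv{X}.G_1\fuse\mu\rv{X}.G_2 = \mu\rv{X}.(G_1\fuse G_2)$), and $\clt{\mu\rv{X}.T_j'}{\eta}{\mathcal{P}} = A_j$ by reconciling the $\eta'$-closure of $T_j'$ with the $\eta$-closure of $\mu\rv{X}.T_j'$ via the same commutation lemma. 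The main obstacle, then, is this bookkeeping: ensuring that the co-recursive context map $\eta$, the folding/unfolding performed by $(\rgt\nu)$, $(\lft\nu)$ and \name{var}, the $\subst{}{}$ in the closure operator $\clt{\cdot}{\eta}{\mathcal{P}}$, and the syntactic guardedness restriction on recursive global types all line up so that the inductive invariant ``$\proj{G}{\pt{p}_j}\subt T_j$ with $\clt{T_j}{\eta}{\mathcal{P}} = A_j$'' is literally preserved; the process-calculus content of each step is otherwise routine inversion. Full details are in Appendix~\ref{app:medltypes} (recursive case).
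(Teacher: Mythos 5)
Your overall strategy matches the paper's: its proof of Theorem~\ref{l:medltypesrec} is literally ``by structural induction on $G$, following the lines of the proof of Theorem~\ref{l:medltypes}'', and your handling of the recursion cases via $(\rgt\nu)$, $(\lft\nu)$, \name{var} and the closure $\clt{\cdot}{\eta}{\mathcal{P}}$ mirrors the bookkeeping the paper itself performs for recursion in the converse direction (Theorem~\ref{l:ltypesmedprec}), so the skeleton of your plan is the intended one.

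There is, however, a concrete error in your inversion for the communication case $G = \gto{p}{q}\{\lb{l}_i\langle U_i\rangle.G_i\}_{i\in I}$: you have the two channel types swapped. Both $c_\pt{p}$ and $c_\pt{q}$ live in the \emph{left} context, so the branching $\mycase{c_\pt{p}}{\cdots}{}$ is typed by \name{T$\lft\oplus$} and the input $c_\pt{p}(u)$ by \name{T$\lft\otimes$}, forcing $A_\pt{p} = \myoplus{\lb{l}_i : \lt{U_i}\otimes C_i}{i\in I}$ with label set exactly $I$; dually, the selection $\mysel{c_\pt{q}}{\lb{l}_i}$ and the output $\outp{c_\pt{q}}{v}$ are typed by \name{T$\lft\with_1$}/\name{T$\lft\with_2$} and \name{T$\lft\lolli$} --- not by an $\otimes$ rule: an output prefix on a left-context channel corresponds to $\lolli$, with the forwarder $\linkr{u}{v}$ discharging its premise --- forcing $A_\pt{q} = \mywith{\lb{l}_i : \lt{U_i}\lolli D_i}{i\in I'}$ with $I' \supseteq I$, the surplus labels being precisely what \name{T$\lft\with_2$} can silently add. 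Consequently the $\subt$ slack belongs to the \emph{receiver} (and, via merge, to third parties), not to the sender as you claim: the paper's finite-case argument obtains $\proj{G}{\pt{p}} = T_\pt{p}$ on the nose and only $\proj{G}{\pt{q}} \subt T_\pt{q}$. As written, your reconstruction $T_\pt{p} = \pt{p}!\{\lb{l}_i\langle U_i\rangle.T_i^\pt{p}\}_{i\in I'}$ cannot satisfy $\clt{T_\pt{p}}{\eta}{\mathcal{P}} = A_\pt{p}$, since $\lt{\cdot}$ sends output local types to $\oplus/\otimes$ while you claimed $A_\pt{p}$ is a $\with/\lolli$ type. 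Once this duality is restored (as in the derivation displayed in Appendix~\ref{app:medltypes}), the rest of your plan --- the $\gend$ case, the $\rv{X}$ case forcing $A_i = \eta(\rv{X})(c_i)$, and the $\mu\rv{X}.G'$ case reconciling the $\eta'$-closure of $T_j'$ with the $\eta$-closure of $\mu\rv{X}.T_j'$ through the commutation of $\lt{\cdot}$ with unfolding --- goes through and coincides with the paper's intended argument.
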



\subsubsection{Operational Correspondence via Annotated Mediums}\label{s:opcorr}
The results already presented focus on the static semantics of multiparty and binary systems, and are
already key to justify essential properties such as preservation of global deadlock.
We now 
move on to dynamic semantics, and
establish the expected precise operational correspondence result between a global type  and its medium process (Theorem~\ref{th:opcorr}).
To this end, we rely on the annotated mediums of Definition~\ref{d:raether}: 
given a global type $G \in \rgtypes$, 
its annotated medium $\raether{G}{}{k}$ 
contains  
an independent session $k$ which signals the 
behavior of $G$.
In typing, the observable behavior on $k$ 
will appear on the right-hand side typing.
 

The following definition relates the behavior of a global type
and that of $k$.

\begin{definition}[Global Types and Binary Session Types]\label{d:glolog}
Let $\rpart{\cdot}$ denote a mapping from participants to binary session types.
The mapping \gt{\cdot} from a global types $G \in \rgtypes$ 
into binary session types $A$ (cf. Def.~\ref{d:types}) is 
inductively 
defined as:
\begin{enumerate}[$\bullet$]
\item $\gt{\gend}   =    \one$
\item $\gt{\gto{p}{q}\{\lb{l}_i\langle U_i\rangle.G_i\}_{i \in I}}   =$ \\
$~~~~~~~~~~~~~~~ \myoplus{\lb{l}_i : \rpart{\pt{p}} \otimes \mywith{\lb{l}_i :\rpart{\pt{q}} \lolli \gt{G_i}}{\{i\}}}{i \in I}$ 
\item $\gt{\rv{X}}  =  \rv{X}$ 
\item $\gt{\mu\rv{X}.G}  =  \nu\rv{X}.\gt{G}$
\end{enumerate}
\end{definition}

For simplicity, we shall assume $\rpart{\pt{p}} = \one$, for every $\pt{p}$.
We may recast Theorem~\ref{l:ltypesmedprec} above for annotated mediums as follows.

\begin{theorem}[From Well-Formedness To Typed Annotated Mediums]\label{l:ltypesmedpan}
Let $G \in \rgtypes$ be a 
global type with~$\partp{G} = \mathcal{P} = \{\pt{p}_1, \ldots, \pt{p}_n\}$.
If $G$ is WF
(Def.~\ref{d:wfltypes}) then 
judgment 
$$\G; c_1{:}\clt{\proj{G}{\pt{p}_1}}{\eta}{\mathcal{P}}, \ldots, c_n{:}\clt{\proj{G}{\pt{p}_n}}{\eta}{\mathcal{P}} \vdash_\eta \raether{G}{k}{k} :: k{:} \gt{G}$$
is well-typed,  for some $\G$.
\end{theorem}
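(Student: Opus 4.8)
The plan is to prove this by structural induction on $G \in \rgtypes$, following the same strategy as Theorem~\ref{l:ltypesmedprec} but tracking the additional session $k$ that records protocol actions. The key observation is that the annotated medium $\raether{G}{k}{k}$ differs from $\rether{G}{k}$ only in the interleaved prefixes on $k$, and the type $\gt{G}$ (with the simplifying assumption $\rpart{\pt{p}} = \one$) precisely mirrors the communication structure of $G$. So I would first establish the relationship between the syntactic shape of $\raether{\gto{p}{q}\{\lb{l}_i\langle U_i\rangle.G_i\}_{i \in I}}{\lb{l}}{k}$ and the type $\gt{\gto{p}{q}\{\lb{l}_i\langle U_i\rangle.G_i\}_{i \in I}} = \myoplus{\lb{l}_i : \one \otimes \mywith{\lb{l}_i : \one \lolli \gt{G_i}}{\{i\}}}{i \in I}$: the outer branching $\mycase{c_\pt{p}}{\cdots}{i \in I}$ together with the leading $\mysel{k}{\lb{l}_i}$ on $k$ corresponds to building up $\myoplus{\cdots}{i \in I}$ on the right via rules \name{T$\rgt\oplus_1$}; the input $c_\pt{p}(u)$ uses $c_\pt{p}$ at a type starting with $\lolli$; the output $\about{k}$ contributes the $\one \otimes -$ layer; then $\mysel{c_\pt{q}}{\lb{l}_i}$ consumes a layer of $c_\pt{q}$'s $\oplus$-type, the branching $\mycase{k}{\lb{l}_i : \cdots}{\{i\}}$ builds the singleton $\mywith{\cdots}{\{i\}}$ on $k$, and $\outp{c_\pt{q}}{v}$ with the forwarder $\linkr{u}{v}$ discharges the message exchange, leaving $k.\raether{G_i}{\lb{l}_i}{k}$ to be typed with the continuation $\gt{G_i}$ on $k$.

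The main technical steps, in order, are: (1) the base case $G = \gend$, where $\raether{\gend}{\lb{l}'}{k} = \zero$ is typed by \name{T$\rgt\one$} against $k{:}\one = \gt{\gend}$ with all participant channels assigned $\lt{\lend} = \one$, discharged by repeated \name{T$\lft\one$}; (2) the recursion-variable case $G = \rv{X}$, where $\raether{\rv{X}}{\lb{l}}{k} = \rv{X}(\til{z})$ is typed by rule \name{var} using the context stored in $\eta$; (3) the $\mu$-case $G = \mu\rv{X}.G'$, handled by rule $(\rgt\nu)$, which requires choosing $\eta' = \eta[\rv{X}(\til{y}) \mapsto \G;\D \vdash k{:}\rv{Y}]$ appropriately so that $\gt{\mu\rv{X}.G'} = \nu\rv{X}.\gt{G'}$ and the closed projections $\clt{\proj{\mu\rv{X}.G'}{\pt{p}_i}}{\eta}{\mathcal{P}}$ match (this is where the closure operator and the coinductive types do their work, using $\lt{\mu\rv{X}.T} = \nu\rv{X}.\lt{T}$); and (4) the inductive communication case, which is the bulk of the argument: assuming by the induction hypothesis that each $\raether{G_i}{\lb{l}_i}{k}$ is well-typed against $k{:}\gt{G_i}$ with channels carrying the (closed) projections of $G_i$, assemble a typing derivation for $\raether{\gto{p}{q}\{\lb{l}_i\langle U_i\rangle.G_i\}_{i \in I}}{\lb{l}}{k}$ by applying, for each branch, rules \name{T$\rgt\oplus_1$}, \name{T$\lft\otimes$} / \name{T$\rgt\otimes$} (for the $\about{k}$ and $c_\pt{p}(u)$ prefixes), \name{T$\lft\with_1$}, \name{T$\rgt\with$}, \name{T$\lft\otimes$} again, \name{T$\lft\lolli$} together with \name{T$\mathsf{id}$} for the forwarder $\linkr{u}{v}$, and finally the branching on $c_\pt{p}$ via \name{T$\rgt\with$} on the right $\oplus$-structure — wait, more precisely the outer $\mycase{c_\pt{p}}{\cdots}{i\in I}$ is typed by \name{T$\lft\oplus$} since $c_\pt{p}$ offers a selection to the medium — using the fact that $\lt{\pt{p}!\{\lb{l}_i\langle U_i\rangle.T_i\}_{i\in I}} = \myoplus{\lb{l}_i : \lt{U_i}\otimes\lt{T_i}}{i\in I}$ and $\lt{\pt{q}?\{\cdots\}} = \mywith{\cdots}{}$. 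As in Theorem~\ref{l:ltypesmedprec}, when $\pt{r} \notin \{\pt{p},\pt{q}\}$ the merge-based projection may produce non-identical $\with$-types across branches, so rule \name{T$\lft\with_2$} must be invoked to pad the branch typings to a common shape before applying \name{T$\lft\oplus$}; and when a participant does not occur in some $G_i$, its channel carries $\one$ there, absorbed by \name{T$\lft\one$}.

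The step I expect to be the main obstacle is the recursion case (3), specifically getting the coinductive typing rule $(\rgt\nu)$ and the \name{var} rule to line up: one must verify that the typing context recorded in $\eta$ for $\rv{X}$ is exactly $\G; c_1{:}\clt{\proj{G}{\pt{p}_1}}{\eta}{\mathcal{P}},\ldots \vdash k{:}\rv{Y}$, so that when the recursion variable is reached inside the body the \name{var} rule's side condition is met, and simultaneously that unfolding $\nu\rv{Y}.\gt{G'}$ once via $(\lft\nu)$-style reasoning on the left channels reproduces the non-closed projections — this is the subtle interaction between the closure operator $\clt{\cdot}{\eta}{\mathcal{P}}$, the guardedness of recursion in global types (which guarantees the coinductive types have strictly positive, non-trivial occurrences of the variable, as required by Def.~\ref{d:types}), and the fact that $\gt{\cdot}$ and $\lt{\cdot}$ commute with $\mu/\nu$. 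A secondary difficulty, shared with Theorem~\ref{l:ltypesmedprec}, is the bookkeeping needed for the $\with$-padding argument via \name{T$\lft\with_2$} in the communication case; but since annotated mediums have the same branching structure as recursive mediums, this argument is essentially inherited. I would present the base and variable cases in full, sketch the $\mu$-case by reference to the corresponding step in Theorem~\ref{l:ltypesmedprec}, and give the communication case as an explicit (though compressed) derivation tree, deferring the full details to the appendix.
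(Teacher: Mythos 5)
Your plan is essentially the paper's own proof: structural induction on $G$, with the base, variable, and $\mu$ cases handled exactly as for Theorem~\ref{l:ltypesmedprec} (via the closure map $\clt{\cdot}{\eta}{\mathcal{P}}$ and rules $(\rgt\nu)$/\name{var}), and the communication case assembled by precisely the derivation the paper records in Fig.~\ref{f:typann}, including the \name{T$\lft\with_2$} padding inherited from the merge-based argument. The only blemishes are bookkeeping slips in your rule enumeration --- the input $c_\pt{p}(u)$ consumes a $\otimes$-type via \name{T$\lft\otimes$} (not a type starting with $\lolli$), the trailing $k$-prefixed continuation is typed by \name{T$\rgt\lolli$} plus \name{T$\lft\one$} rather than a second \name{T$\lft\otimes$}, and each branch's right-hand $\oplus$ over $i \in I$ also needs \name{T$\rgt\oplus_2$} padding --- none of which affects the argument.
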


The proof of Theorem~\ref{l:ltypesmedpan} extends the one for Theorem~\ref{l:ltypesmedprec} by considering
session $k$, which is causally independent from all other sessions of the medium. 
An analogous of 
Theorem~\ref{l:medltypesrec} holds for annotated mediums.
Because of the silent character of rule~\name{T$\rgt\oplus_2$}, we require some additional notation.
Below we write $A_1 \subts A_2$ iff 
either $A_1 = A_2$ or 
$A_1 = \myoplus{\lb{l}_i:A_i}{i \in I}$
and $A_2 = \myoplus{\lb{l}_j:A_j}{j \in I \cup J}$, for some $J$.

\begin{theorem}[From Well-Typed Annotated Mediums To WF Global Types]\label{l:medltypespan}
Let $G \in \rgtypes$ be a 
global type. 
If the following judgment is well-typed
$$\G; c_1{:}A_1, \ldots, c_ n{:}A_ n \vdash \raether{G}{k}{k} :: k:A_0$$
then 
$\gt{G} \subts A_0$ and 
$\exists T_1, \ldots, T_ n$ 
s.t. 
$\proj{G}{\mathtt{r}_j} \subt T_j$ and 
$\clt{T_j}{\eta}{\mathcal{P}} = A_j$
for all $\pt{r}_j \in G$.
\end{theorem}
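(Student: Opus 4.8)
The plan is to prove Theorem~\ref{l:medltypespan} by structural induction on $G \in \rgtypes$, closely mirroring the argument for Theorem~\ref{l:medltypesrec} but tracking the additional session $k$ that appears on the right-hand side. The key observation is that the annotated medium $\raether{G}{k}{k}$ has exactly the same ``$c_\pt{p}$/$c_\pt{q}$'' prefix structure as $\rether{G}{k}$ --- the only difference is the interleaving of actions on $k$. So for the participant-indexed part of the statement (the existence of $T_1,\dots,T_n$ with $\proj{G}{\pt{r}_j} \subt T_j$ and $\clt{T_j}{\eta}{\mathcal P} = A_j$) I would simply replay, case by case, the inversion-of-typing argument already used for Theorem~\ref{l:medltypesrec}, checking at each step that the interposed prefixes on $k$ do not disturb the reconstruction of the local types on the $c_i$'s. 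The new content is the claim $\gt{G} \subts A_0$, which I would establish simultaneously in the same induction.

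First I would treat the base cases. For $G = \gend$ we have $\raether{\gend}{k}{k} = \zero$; by inversion ($\name{T$\rgt\one$}$ possibly preceded by $\name{T$\lft\one$}$ applications) the only type derivable at $k$ is $A_0 = \one = \gt{\gend}$, and $\one \subts \one$, while each $A_j = \one = \clt{\lend}{\eta}{\mathcal P}$, matching $\proj{\gend}{\pt{r}_j} = \lend$. The case $G = \rv{X}$ gives $\raether{\rv{X}}{k}{k} = \rv{X}(\til z)$; inversion on rule~\name{var} forces $A_0 = \rv{Y}$ for the variable recorded in $\eta$, and $\gt{\rv X} = \rv X \subts \rv X$ holds, with the $A_j$ read off directly from $\eta$. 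For $G = \mu\rv X.G'$, $\raether{\mu\rv X.G'}{k}{k} = (\recc \rv X(\til z).\raether{G'}{k}{k})\,\til c$; inversion on rule~$(\rgt\nu)$ yields $A_0 = \nu\rv Y.A'$ with $\G;\D \vdash_{\eta'} \raether{G'}{k}{k} :: k{:}A'$ for the extended $\eta'$, and applying the induction hypothesis gives $\gt{G'} \subts A'$, hence $\gt{\mu\rv X.G'} = \nu\rv X.\gt{G'} \subts \nu\rv Y.A'$ (using that $\subts$ is a congruence for the $\nu$-binder, which follows because $\nu$ does not add $\oplus$-labels at the head); the participant part is inherited from the IH composed with the definition of $\clt{\cdot}{\eta}{\mathcal P}$.

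The main case is $G = \gto{p}{q}\{\lb{l}_i\langle U_i\rangle.G_i\}_{i \in I}$. Here $\raether{G}{k}{k}$ begins with $\mycase{c_\pt{p}}{\lb{l}_i : \cdots}{i \in I}$, so by inversion on rule~\name{T$\lft\oplus$} (preceded possibly by silent $\name{T$\lft\with_2$}$ steps on other $c_j$) the type on $c_\pt{p}$ is $\myoplus{\lb{l}_i : A'_i}{i\in I}$ and for each branch we get a derivation of the body $\mysel{k}{\lb{l}_i}; c_\pt{p}(u).\about{k}.(\cdots)$. Inverting in turn the prefix $\mysel{k}{\lb{l}_i}$ (rule~$\name{T$\rgt\oplus_1$}$, which forces $A_0$ to have head $\oplus$ with label $\lb{l}_i$, modulo the silent additions captured exactly by $\subts$), then $c_\pt{p}(u)$ (rule~\name{T$\lft\otimes$}, exposing $u{:}\lt{U_i}$ since $\rpart{\pt p}=\one$ and $\clt{\cdot}{}{}$ on base/session messages is $\lt{\cdot}$), then $\about k = \outp k v.(\zero \para \cdot)$ (rule~$\name{T$\rgt\otimes$}$ together with independent parallel composition, giving $\rpart{\pt p}=\one$ on the emitted name and continuing at $k$ with a $\with$ type), then $\mysel{c_\pt{q}}{\lb{l}_i}$ (rule~$\name{T$\lft\with_1$}$), then the $\mycase{k}{\lb{l}_i : \cdots}{\{i\}}$ (rule~$\name{T$\rgt\with$}$), then $\outp{c_\pt q}{v}$ (rule~$\name{T$\lft\lolli$}$, sending $\lt{U_i}$), then $\linkr{u}{v}$ (rule~$\name{T$\mathsf{id}$}$, forcing the type of $v$ to match $u$, i.e.\ $\lt{U_i}$, consistent with $U_i \fuse U_i'$-type compatibility), and finally the guard $k.(\cdot)$ (rule~$\name{T$\lft\lolli$}$ or $\name{T$\lft\otimes$}$ on $k$) in front of $\raether{G_i}{\lb{l}_i}{k}$. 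Applying the induction hypothesis to each $\raether{G_i}{\lb{l}_i}{k}$ yields $\gt{G_i} \subts A^{(i)}_0$ (the residual type on $k$ inside branch $i$) and local types $T^{(i)}_j$ with $\proj{G_i}{\pt r_j} \subt T^{(i)}_j$. Reassembling: for $\pt r_j = \pt p$ set $T_j = \pt p!\{\lb{l}_i\langle U_i\rangle.T^{(i)}_j\}_{i\in I}$; for $\pt r_j = \pt q$ set $T_j = \pt p?\{\lb{l}_i\langle U_i\rangle.T^{(i)}_j\}_{i\in I}$ (where the branch set recovered may be a superset, handled by $\subt$ via the $\name{T$\lft\with_2$}$ steps); for other $\pt r_j$, take $T_j = \fuse_{i\in I} T^{(i)}_j$ and use that $\proj{G_i}{\pt r_j} \subt T^{(i)}_j$ implies $\fuse_i \proj{G_i}{\pt r_j} \subt \fuse_i T^{(i)}_j$ (monotonicity of $\fuse$ under $\subt$, the same lemma used for Theorem~\ref{l:medltypesrec}); then $\clt{T_j}{\eta}{\mathcal P} = A_j$ by definition of $\lt{\cdot}$ and the matching we recorded. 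Folding the $k$-type back up through the chain of prefixes reconstructs $\myoplus{\lb{l}_i : \one \otimes \mywith{\lb{l}_i : \one \lolli A^{(i)}_0}{\{i\}}}{i\in I}$ for $A_0$ (up to $\subts$ from silent $\oplus$-extensions), and since $\gt{G_i}\subts A^{(i)}_0$ we conclude $\gt{G} = \myoplus{\lb{l}_i : \one\otimes\mywith{\lb{l}_i:\one\lolli\gt{G_i}}{\{i\}}}{i\in I} \subts A_0$ by monotonicity of $\subts$ under the $\otimes$, $\with$ and $\oplus$ constructors appearing here.

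The step I expect to be the main obstacle is the bookkeeping around $\subts$ versus $\subt$ and their interaction with the silent rule $\name{T$\rgt\oplus_2$}$: one must be careful that the label set added silently on the right-hand side $k$-type is tracked by $\subts$ at \emph{every} nesting level (once at the head of $\gt{G}$ and once inside each $\with\{\cdots\}_{\{i\}}$), and argue that $\subts$ is preserved by the constructors $\otimes$, $\with$, $\oplus$, and $\nu$ that arise when re-folding --- essentially a congruence lemma for $\subts$ restricted to the shapes produced by $\gt{\cdot}$. A secondary subtlety is ensuring the annotation prefixes on $k$ are always typed by the \emph{independent} parallel composition derived rule (so that $k$ stays causally separate from the $c_i$ sessions and the inversion on the $c_i$-prefixes goes through unchanged from the Theorem~\ref{l:medltypesrec} proof); this is where I would lean on the observation, used already for Theorem~\ref{l:ltypesmedpan}, that $k$ is causally independent from all other sessions of the medium. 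Modulo these two points, the induction is a routine --- if lengthy --- inversion argument, and I would relegate the full case analysis to the appendix alongside the proof of Theorem~\ref{l:medltypesrec}.
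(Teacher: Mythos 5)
Your proposal is correct and takes essentially the same route as the paper, which disposes of this theorem simply ``by structural induction on $G$'': your prefix-by-prefix inversion matches the derivation shape used for Theorem~\ref{l:ltypesmedpan} (Fig.~\ref{f:typann}), and the participant-indexed part is inherited from the argument for Theorem~\ref{l:medltypesrec}, exactly as the paper intends. One trivial correction: the final guard $k.(\cdot)$ is an input \emph{offered} at $k$, so it is inverted with the right rule for $\lolli$ (as in Fig.~\ref{f:typann}), not a left rule as your parenthetical suggests.
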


\noindent The operational correspondence 
between global types and annotated mediums
is given by Theorem~\ref{th:opcorr} below.

To state operational correspondence, we consider the set of \emph{multiparty systems} of a global type.
Intuitively, 
given a global type $G$,
a particular multiparty system is 
obtained by the composition of well-typed implementations of the local behaviors stipulated by $G$ (with no linear/shared dependencies)
with the annotated medium \raether{G}{k}{k}, which provides the ``glue code'' for connecting them all.
We write $\mathcal{S}^k(G)$ to denote the set of all multiparty systems of $G$; hence, by construction, any $P \in \mathcal{S}^k(G)$ is a particular implementation of the multiparty conversations specified by $G$.
More formally, we require the following auxiliary definition.
  
  \begin{definition}[Closure]\label{d:elrauxc}
Let 
$\Delta = \{x_{j}{:}A_{j}\}_{j \in J}$
be 
a 
linear 
typing environment. 
We define the set of processes \closetc{\Delta} as:
$$
 \closetc{\Delta}  \stackrel{\textrm{def}}{=} \Big\{\prod_{j \in J} Q_{j} ~~\sep~ \cdot ; \cdot \vdash Q_{j} :: x_{j}{:}A_{j}\Big\}
$$
\end{definition}

Using closures, we can now define systems:

\begin{definition}[System]\label{d:systems}
Let $G \in \rgtypes$ be a WF global type, such that $\partp{G} = \mathcal{P} = \{\pt{p}_1, \ldots, \pt{p}_n\}$.
Also, let 
$$\D = c_1{:}\clt{\proj{G}{\pt{p}_1}}{\eta}{\mathcal{P}}, \ldots, c_n{:}\clt{\proj{G}{\pt{p}_n}}{\eta}{\mathcal{P}}$$
be an environment such that 
$\G;  \D \vdash \raether{G}{k}{k} :: k{:} \gt{G}$, for some  $\G$.
The set of multiparty systems of $G$, written $\mathcal{S}^k(G)$, is  defined as:
$$\big\{(\nub c_{\pt{p}_1}, \ldots, c_{\pt{p}_n})(Q \para \raether{G}{k}{k})  ~~\sep~ Q \in \closetc{\Delta }\big\}
$$
\end{definition}
By construction,
processes in $\mathcal{S}^k(G)$ will only have observable behavior on name $k$.
We rely on labeled transition systems (LTSs) for processes and global types;  they are
denoted $P \tra{\,\labelset\,} P'$ and $G \tra{\,\gtlabelset\,} G'$, respectively.
While the former  is standard for session $\pi$-calculi (see, e.g.,~\cite{CairesP10}), the latter   results by extending the LTS in~\cite{DBLP:conf/icalp/DenielouY13} with intermediate states.
Given a name $k$ and a participant   $\pt{p}$, 
two auxiliary mappings, denoted  $\mlab{\cdot}{k}$ and $\glab{\cdot}{\pt{p}}$,
tightly relate global type labels $\gtlabelset$ to process labels $\labelset$.
(See Appendix \ref{app:lts} for details in the LTS for global types and the auxiliary mappings.)
We  have:

\begin{theorem}[Global Types and  Mediums: Operational Correspondence]\label{th:opcorr}
Let $G \in \rgtypes$ be a WF global type and $P$ any process in $\mathcal{S}^k(G)$. We have:
\begin{enumerate}[(a)]
\item If $G \tra{\,\gtlabelset\,} G'$ then 
there exist 
$\labelset, P'$ s.t. $P \wtra{\,\labelset\,} P'$,
$\labelset = \mlab{\gtlabelset}{k}$, 
 and $P' \in \mathcal{S}^k(G')$.
 \item If there is some $P_0$ s.t. $P \wtra{} P_0 \tra{\,\labelset\,} P'$ with
 $\labelset \neq \tau$ then there exist $\gtlabelset,  G'$ s.t. $G \tra{\,\gtlabelset\,} G'$, 
 $subj(\gtlabelset) = \pt{p}$,
 $ \gtlabelset = \glab{\labelset}{\pt{p}} $, 
 and $P' \in \mathcal{S}^k(G')$.
\end{enumerate}
\end{theorem}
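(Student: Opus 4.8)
The plan is to prove both directions by induction on the structure of the derivation $G \tra{\gtlabelset} G'$ (resp. on the prefix that fires in $P \tra{\labelset} P'$), following the recursive structure of $\raether{G}{k}{k}$ given in Definition~\ref{d:raether}. The key observation is that each single communication step $\gto{p}{q}\{\lb{l}_i\langle U_i\rangle.G_i\}_{i\in I}$ in the global type is realized in the annotated medium by a fixed, short sequence of prefixes, all of them either on $c_\pt{p}$, on $c_\pt{q}$, or on the signalling name $k$: namely $\mysel{c_\pt{q}}{\lb{l}_i}$-style selection followed by an input on $c_\pt{p}$, an output on $k$, a selection on $c_\pt{q}$, a branch on $k$, a bound output on $c_\pt{q}$, and an input on $k$ prefixing the continuation. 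So I would first isolate, as a lemma, the exact ``macro-step'' behavior: for $P \in \mathcal{S}^k(G)$ with $G$ a labelled-communication type and $Q \in \closetc{\D}$ the composed local implementations, $P$ weakly transitions on $k$ by emitting the label $\lb{l}_i$ (matching the choice internally made by the implementation of $\pt{p}$), and the residual is again in $\mathcal{S}^k(G_i)$ up to the linking reductions $(\nub w)(\linkr{w}{v}\para R)\red R\subst{v}{w}$ that splice the forwarded value into the continuation. This lemma, together with the LTS for global types (which, per the appendix, refines the one in~\cite{DBLP:conf/icalp/DenielouY13} with intermediate states precisely so that it matches the medium's decomposition), reduces both (a) and (b) to a routine case analysis.

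For direction (a) I would take $G \tra{\gtlabelset} G'$. The top-level transition of a labelled communication picks some $i \in I$ and moves to an intermediate state, eventually to $G_i$; by the macro-step lemma, $\raether{G}{k}{k}$ composed with the $Q_j$'s can perform the corresponding weak sequence of $\tau$'s and one visible action on $k$, whose label is exactly $\mlab{\gtlabelset}{k}$ by definition of the mapping $\mlab{\cdot}{k}$. That the residual $P'$ lies in $\mathcal{S}^k(G')$ uses Theorem~\ref{th:prop}(1) (type preservation) to know $P'$ is still well-typed, Theorem~\ref{l:ltypesmedpan} to recognize the residual medium as $\raether{G'}{k}{k}$ up to the consumed prefixes, and the fact that the surviving local implementations $Q_j$ (and the freshly-linked continuations) again form an element of $\closetc{\D'}$ for the updated environment $\D'$ — here one must check that projection commutes with the global step, i.e. $\proj{G'}{\pt{r}}$ is the residual of $\proj{G}{\pt{r}}$ after the corresponding local action, which is immediate from Definition~\ref{d:proj}. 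The $\recc$/variable cases use the reduction rule $(\recc \rv{X}(\widetilde{y}).P)\,\widetilde{c} \red P\subst{\widetilde{c}}{\widetilde{y}}\subst{\recc \rv{X}(\widetilde{y}).P}{\rv{X}}$ to unfold, after which we are back in the communication case.

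For direction (b) I would start from $P \wtra{} P_0 \tra{\labelset} P'$ with $\labelset \neq \tau$. Since, by construction, $P$ only has observable behavior on $k$, the action $\labelset$ is on $k$; by well-typedness and the shape of $\gt{G}$ (Definition~\ref{d:glolog}), the only enabled visible action on $k$ from any reachable $P_0$ is the label-emission corresponding to the head communication of the current global type, so $\labelset = \mlab{\gtlabelset}{k}$ for the unique $\gtlabelset$ with $subj(\gtlabelset) = \pt{p}$; inverting the $\mlab{\cdot}{k}$/$\glab{\cdot}{\pt{p}}$ correspondence gives $\gtlabelset = \glab{\labelset}{\pt{p}}$, and $G \tra{\gtlabelset} G'$ by the LTS for global types. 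Membership $P' \in \mathcal{S}^k(G')$ follows as in (a). The main subtlety is that the weak prefix $P \wtra{} P_0$ may have performed some but not all of the $\tau$'s of a macro-step; I would handle this by a confluence/normalization argument on the medium's $\tau$-actions (they are deterministic up to the internal choice already made by $\pt{p}$'s implementation, and commute with each other), reducing to the case where $P_0$ sits exactly at a $k$-action boundary.

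\textbf{Expected main obstacle.} The hard part will be managing the \emph{intermediate states} cleanly: the medium decomposes one global communication into roughly six prefixes, so ``being a system for $G$'' must be stated up-to these partially-executed macro-steps (and up-to the pending $\linkr{}{}$-splicing reductions), or else one must enrich $\mathcal{S}^k(\cdot)$ to range over intermediate global-type states. I would resolve this by proving the macro-step lemma as a strong-diamond-style statement — every reachable $\tau$-residual of $\raether{G}{k}{k}$ composed with a closure is $\equiv$, after finitely many further deterministic $\tau$'s, to $\raether{G''}{k}{k}$ composed with a closure for some intermediate $G''$ in the refined global LTS — so that the outer induction only ever needs to talk about states of the refined LTS, which is exactly what the appendix's extended LTS for global types is designed to provide.
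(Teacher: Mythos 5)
Your plan is essentially correct and, by the time you reach the resolution in your closing paragraph, it converges on what the paper actually does: Theorem~\ref{th:opcorr} is stated over the \emph{refined} LTS of the appendix, whose intermediate forms $\igto{p}{q}\lb{l}\langle U\rangle.G$, $\igto{p}{q}\lb{l}(\!(U)\!).G$ and $\igto{p}{q}(\!(U)\!).G$ are each given their own annotated medium, so that $\mathcal{S}^k(G')$ is already meaningful at every stage of a partially executed exchange. For this reason the paper needs neither your macro-step lemma nor the diamond/confluence normalization of $\tau$-residuals: its proof of part~(a) is a direct case analysis on the rule used in $G \tra{\,\gtlabelset\,} G'$ (only \name{G1} is spelled out, the others being similar), in which the composed system first performs a finite $\tau$-prefix --- finiteness comes from non-divergence, Theorem~\ref{th:prop}(3), and the readiness of the implementation of $\pt{p}$ to select on $c_{\pt{p}}$ from typing and type preservation, Theorem~\ref{th:prop}(1) --- then the one visible action on $k$ with label $\mlab{\gtlabelset}{k}$, after which the residual is, by unfolding Definition~\ref{d:raether}, literally a system for the intermediate successor type; part~(b) is dispatched briefly using Proposition~\ref{p:opcorrt} (systems are typed at $k{:}\gt{G}$ and have observable behavior only on $k$), plus the observation that the $\tau$-prefix must contain a synchronization between medium and closure. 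Two cautions about your formulation: as first stated, your macro-step lemma (residual in $\mathcal{S}^k(G_i)$ after emitting a single label on $k$) does not match what part~(a) requires, since after only the label emission the residual is a system for $\igto{p}{q}\lb{l}_j\langle U_j\rangle.G_j$, and one lands in $\mathcal{S}^k(G_j)$ only after all four $k$-actions of the exchange; and you should appeal explicitly to progress and non-divergence (Theorem~\ref{th:prop}(2)--(3)), since type preservation alone does not guarantee that the well-typed local implementation eventually offers its selection, which is what makes the weak transition in~(a) exist and be finite. With those adjustments your route and the paper's are the same; your extra confluence machinery buys nothing here because the refined global LTS already serializes the medium's fine-grained steps.
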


\noindent 
By means of this 
 operational correspondence between 
a global type $G$ and its process implementations
(as captured by set 
$\mathcal{S}^k(G)$), we confirm that (annotated) medium processes faithfully mirror the communicating behavior of the given global type.

\subsection{The Tension Between Global Types for Composition and Recursion\label{ss:parrec}}
Distinguishing between \fgtypes and \rgtypes has a conceptual justification, as discussed in \S\,\ref{s:medium}.
There is also a more technical motivation for this distinction, related to typability. The reason why finite mediums (Def.~\ref{d:ether})
can support the composition of global types (as originally proposed in~\cite{DBLP:conf/popl/HondaYC08}) is the following: as finite mediums do not have behavior on their own (i.e., compositional typings ensure that their right-hand side typing is  $\one$) they are amenable to independent parallel composition, as supported by binary session types. This kind of parallel composition neatly coincides with well-formedness for composition of global types in~\cite{DBLP:conf/popl/HondaYC08}, which allow composition the global types with disjoint senders and receivers (cf. Def.~\ref{d:proj}). We find it remarkable that finite mediums are able to cleanly justify natural requirements for multiparty session types. 

Independent parallel composition is no longer possible when we move to co-inductive types, which are needed to type the mediums for global types in $\rgtypes$ (cf. Defs.~\ref{d:rether} and~\ref{d:raether}). In fact, when the right-hand side typing is different from $\one$ we are not able to type the composition of independent processes. Nevertheless, slightly less general forms of composition of global types are still possible. 
For instance, in Def.~\ref{d:raether}, 
we could have combined 
non-annotated and annotated mediums as in, e.g.,
$\ether{G_i} \para \raether{G_j}{}{k}$. 
This resulting ``hybrid annotated medium''  is typable, with type $k{:} \gt{G_j}$ in the right-hand side.
We find these forms of composition useful for modular reasoning on multiparty systems.

\section{Sharing in Finite Multiparty Conversations}\label{s:exx}
Here we further illustrate reasoning about global types in $\fgtypes$ exploiting the properties given in~\S\,\ref{ss:frela}.
In particular, we show that the absence of recursive types does not necessarily preclude specifying and reasoning about non-trivial forms of replication and sharing. 

As an example, let us consider a variant of the 
the two-buyer protocol in~\cite{DBLP:conf/popl/HondaYC08}, in which two buyers ($\pt{B_1}$ and $\pt{B_2}$) 
coordinate to buy an item from a seller ($\pt{S}$). 
The three-party interaction is given by the following global type:
\begin{align*}
G_{\text{BS}} = \mangg{B_1}{S}{send}{str}{\angg{S}{B_1}{rep}{int}{\anggb{S}{B_2}{rep}{int}{\angg{B_1}{B_2}{shr}{int}{\anggd{B_2}{S}{ok}{\one}{\gend}{quit}{\one}{\gend}}}}} 
\end{align*}

We omit the (easy) definition of process \ether{G_{\text{BS}}}, and proceed to examine its properties.
Relying on Theorems~\ref{l:ltypesmedp} and~\ref{l:medltypes}, 
we have the compositional typing:
\begin{equation}
\G;\, c_1{:} \mathsf{B1},   c_2{:} \mathsf{S}, c_3{:} \mathsf{B2} \vdash_\eta \ether{G_{\text{BS}}} ::-{:}\one
\label{ex:int}
\end{equation}
for some $\G$ and with
$\mathsf{B1} = \lt{\proj{G_{\text{BS}}}{\mathtt{B_1}}}$, 
$\mathsf{S} = \lt{\proj{G_{\text{BS}}}{\mathtt{S}}}$, and
$\mathsf{B2} = \lt{\proj{G_{\text{BS}}}{\mathtt{B_2}}}$.
To implement the protocol, one may simply compose 
\ether{G_{\text{BS}}} with type compatible 
processes 
$\cdot; \cdot \vdash \mathit{Buy1} :: c_1{:}\mathsf{B1}$, 
~$\cdot; \cdot \vdash \mathit{Sel} :: c_2{:}\mathsf{S}$, and 
~$\cdot; \cdot \vdash \mathit{Buy2} :: c_3{:}\mathsf{B2}$:
\begin{eqnarray}
\G ; \cdot \vdash_\eta (\nub c_1)(\mathit{Buy1} \!\para\!(\nub c_2)(\mathit{Sel} \para (\nub c_3)(\mathit{Buy2} \!\para\! \ether{G_{\text{BS}}}))) 
\label{ex:comp}
\end{eqnarray}
The binary session types in \S\,\ref{ss:btypes}
allows us to infer that  
the multiparty system defined by~\eqref{ex:comp}
adheres to the declared projected types,  is lock-free, and non-diverging. 
Just as we inherit strong properties for  $\mathit{Buy1}$, $\mathit{Sel}$, and $\mathit{Buy2}$ above, 
we may inherit the same properties for 
more interesting system configurations. 
In particular, local implementations which appeal to replication and sharing, 
admit also precise analyses thanks to the characterizations in~\S\,\ref{ss:frela}. 
Let us consider a setting in which the processes to be composed with the medium must be 
invoked from a replicated service (a source of generic process definitions).
We may  have:
\begin{eqnarray*}
\nnum \cdot; \cdot  \vdash_\eta \, !u_1(w).\mathit{Buy1}_w :: u_1{:}\,!\mathsf{B1} & ~~ &
\nnum \cdot; \cdot  \vdash_\eta \, !u_2(w).\mathit{Sel}_w :: u_2{:}\,!\mathsf{S} \\  
\nnum \cdot; \cdot   \vdash_\eta  \, !u_3(w).\mathit{Buy2}_w :: u_3{:}\,!\mathsf{B2} &&
\end{eqnarray*}
and the following ``initiator processes'' would spawn a copy of the medium's requirements, instantiated at appropriate names:
\begin{align}
\nnum \cdot; u_1{:}\,!\mathsf{B1} & \vdash_\eta \outp{u_1}{x}.\linkr{x}{c_1} :: c_1{:}\mathsf{B1} \\
\nnum \cdot; u_2{:}\,!\mathsf{S} & \vdash_\eta \outp{u_2}{x}.\linkr{x}{c_2} :: c_2{:}\mathsf{S} \\
\nnum \cdot; u_3{:}\,!\mathsf{B2} & \vdash_\eta \outp{u_3}{x}.\linkr{x}{c_3} :: c_3{:}\mathsf{B2} 
\end{align}
Let us write 
$\mathit{RBuy1}$, $\mathit{RBuy2}$, and $\mathit{RSel}$ 
to denote the composition of replicated definitions and initiators above.
Intuitively, they represent the ``remote'' variants
of $\mathit{Buy1}$, $\mathit{Buy2}$, and $\mathit{RSel}$, respectively.
We may then define the multiparty system:
\begin{eqnarray*}
\G ; \cdot \vdash_\eta (\nub c_1)(\mathit{RBuy1} \!\para\!(\nub c_2)(\mathit{RSel} \!\para\! (\nub c_3)(\mathit{RBuy2} \!\para\! \ether{G_{\text{BS}}}))) 
\end{eqnarray*}
which, with a concise specification,
 improves~\eqref{ex:comp}
with concurrent invocation/instantiation of replicated service definitions.
As \eqref{ex:comp}, the revised composition above is correct, lock-free, and terminating.

Rather than appealing to initiators, a scheme in which the medium invokes and instantiates 
services directly is also  expressible in our framework, in a type consistent way.
Using \eqref{ex:int}, and assuming $\G =  u_1{:}\mathsf{B1},   u_2{:}\mathsf{S}, u_3{:}\mathsf{B2}$, 
we may derive:
\begin{equation}
\!\!\!\!\!\!\! \G ;  \cdot \vdash_\eta \outp{u_1}{c_1}.\outp{u_2}{c_2}.\outp{u_3}{c_3}.\ether{G_{\text{BS}}} 
\label{ex:egint}
\end{equation}
Hence, prior to engage in the mediation behavior for $G_{\text{BS}}$, the medium first spawns a copy of the required services.
We may relate the guarded process in \eqref{ex:egint} with the multicast session request construct in
multiparty session processes~\cite{DBLP:conf/popl/HondaYC08}. 
Observe that~\eqref{ex:egint} cleanly distinguishes between session initiation and actual communication behavior:
the distinction is given at the level of processes (cf. output prefixes on $u_1, u_2$, and $u_3$) 
but also at the level of typed interfaces. 

The service invocation \eqref{ex:egint} may be regarded as ``eager'':
all required services must be sequentially invoked prior to executing the protocol. 
We may also obtain, in a type-consistent manner, a medium process implementing a ``lazy'' invocation strategy that
spawns services only when necessary.
For the sake of example, consider process $\mathit{Eager_{\text{BS}}} \triangleq \outp{u_3}{c_3}.\ether{G_{\text{BS}}}$
in which only the invocation on $u_3$ is blocking the protocol,
with ``open'' dependencies on $c_1, c_2$.
That is, we have $\G; c_1{:} \mathsf{B1},   c_2{:} \mathsf{S} \vdash \mathit{Eager_{\text{BS}}} ::z{:}\one$.
It could be desirable to postpone the invocation on $u_3$ as much as possible. 
By combining 
the commutations on process prefixes realised by \eqcc~\cite{DBLP:conf/esop/PerezCPT12}
and Theorem~\ref{t:soundcc},
we may obtain: 
$$
\G; c_1{:} \mathsf{B1},   c_2{:} \mathsf{S} \vdash \mathit{Eager_{\text{BS}}} \tybis \mathit{Lazy_{\text{BS}}} ::-{:}\one
$$
where $\mathit{Lazy_{\text{BS}}}$ is the process obtained 
from $\mathit{Eager_{\text{BS}}}$ by ``pushing inside'' prefix 
$\outp{u_3}{c_3}$ as deep as possible in the process structure.

\section{Further Developments and Extensions}\label{s:exts}
We now briefly describe two possible extensions to multiparty session types which exploit our analysis technique based on mediums.

 \paragraph{Adding A Join Primitive.}
 As we have seen, the mediums offer a clean and simple representation for name-passing in multiparty exchanges.
 Exploiting this feature, we may extend the syntax of global types with a primitive 
 $\mathsf{join}\,s(\pt{r}).G$, which denotes the fact that participant $\pt{r}$, declared in $G$, is to be realized by invoking a shared service $s$.
 This kind of primitive 
 can be found in  Conversation Types~\cite{DBLP:journals/tcs/CairesV10}, but has not been yet considered within 
 multiparty session types. 
 Let $G_1(\pt{r})$ and $G_2(\pt{r})$ be two global types in which participant $\pt{r}$ is declared.
 We may write, e.g., the global type 
$$ 
\gto{p}{q}\big\{\lb{l}_1\langle \mathsf{int}\rangle.\mathsf{join}\,s_1(\pt{r}).G_1(\pt{r}) \, , \,  \lb{l}_2\langle \mathsf{bool}\rangle.\mathsf{join}\,s_2(\pt{r}).G_2(\pt{r})\big\}
$$ 
in which participant $\pt{r}$ may be implemented by different shared services ($s_1$ or $s_2$) depending on the selected label.
 The medium for this primitive would be:
 $$
 \ether{\mathsf{join}\,s(\pt{r}).G} = \outp{s}{c_{\pt{r}}}.\ether{G}
 $$
 Suppose that  $\lt{\proj{G(\pt{r})}{\pt{r}}} = A_{\pt{r}}$.
 The above medium could be typed in the system of~\cite{CairesP10,Toninho2014} as follows:
$$
\G \, ; \, \D,  s:!A_{\pt{r}}, \vdash_\eta \outp{s}{c_{\pt{r}}}.\ether{G} :: z{:}\one
$$
where $\D$ describes the behaviors of other participants in $G$, 
reflecting the fact that $s$ is a shared service.


\paragraph{Parametric Polymorphism.}
Building upon mediums, we may also extend known multiparty session type theories 
with features well-understood in the binary setting but not yet developed for multiparty sessions. 
A particularly relevant such features is \emph{parametric polymorphism} (in the style of the Girard-Reynolds polymorphic $\lambda$-calculus), studied for binary sessions by Caires et al.~\cite{DBLP:conf/esop/CairesPPT13} and Wadler~\cite{DBLP:journals/jfp/Wadler14}. We do not know of multiparty session theories supporting polymorphism; so an extension through our approach would be particularly 
significant.

We follow the approach in~\cite{DBLP:conf/esop/CairesPPT13}, which 
extends the system of~\cite{CairesP10} 
with two kinds of session types, $\forall X.A$ and $\exists X. A$,
corresponding to impredicative universal and existential
quantification over sessions. They are interpreted as 
the input and output of a session type, respectively. The syntax of processes is extended accordingly, with prefixes
$\out{x}A.P$ and $x(X).P$. 
To define global types with polymorphism, we may extend the syntax of $U$ in Def.~\ref{d:gltypes} with session types $A$. Here is a simple example of a polymorphic global type: 
$$
G_{\text{poly}} = \gto{p}{q}\{\lb{l}_1\langle A\rangle.G\}
$$
Global type 
$G_{\text{poly}}$ abstracts a scenario in which \pt{p} sends to \pt{q} a session type $A$ using label $\lb{l}_1$.
This means that 
the local implementation for \pt{q} should be \emph{parametric}
on any session type which is to be received from \pt{p}. We would have the following medium for $G_{\text{poly}}$:
$$\ether{\gto{p}{q}\{\lb{l}_1\langle A\rangle.G\}}   =
\mycasebig{c_\pt{p}}{\lb{l}_1 : c_\pt{p}(X).\mysel{c_\pt{q}}{\lb{l}_1};\out{c_\pt{q}}{X}.\ether{G_i} }{} $$
It is worth stressing that this extension should be completely orthogonal to the results in \S\,\ref{ss:frela},
for the polymorphic binary sessions in~\cite{DBLP:conf/esop/CairesPPT13} are type-preserving, deadlock-free, and terminating.
\section{Related Work}\label{s:relwork}
As already discussed, the key obstacle  in 
reducing multiparty session types into binary ones
consists in defining binary fragments which preserve the sequencing information of the global specification. 
\emph{Correspondence assertions}~\cite{Bonelli05} offer one way of 
relating otherwise independent binary sessions.
Present in the syntax of processes and types, such assertions 
may track data dependencies and detect unintended operations. 
Retaining a standard syntax for binary and multiparty types,
here we capture the sequencing information using a process extracted from a global type.
Our  approach relies on deadlock-freedom (not available in~\cite{Bonelli05}) and 
offers a principled way of transferring it to multiparty systems.
In~\cite{Toninho2011,Toninho2011b} Toninho et al. studied the integration of assertions in session types 
via dependent types and authorization logics,
allowing expressive certified contracts; 
based on the results in this paper, the added expressiveness brought in by~\cite{Toninho2011,Toninho2011b} would carry to the multiparty setting, similarly as described for parametric polymorphism in \S\,\ref{s:exts}.

Typed frameworks  for multiparty interactions were first proposed in~\cite{DBLP:conf/tgc/BonelliC07,DBLP:conf/popl/HondaYC08}. 
To  our knowledge, ours is the first formal characterization
of multiparty session types using binary session types.
Previous works 
have encoded binary session types into other type systems.
For instance,~\cite{DBLP:conf/ppdp/DardhaGS12} 
encodes binary session types into the linear  types of~\cite{DBLP:conf/popl/KobayashiPT96}. 
Combined with~\cite{DBLP:conf/ppdp/DardhaGS12}, our work connects a standard theory of global types with 
the linear types of~\cite{DBLP:conf/popl/KobayashiPT96}; this further results appears new, and
deserves investigation. 
Related to this, 
as a case study for a theory of deadlock-free processes,
 the work~\cite{padovani:hal-00932356}  identified a class of multiparty systems for which 
 the analysis of deadlock-freedom can be reduced to analysis of linear $\pi$-calculus processes.
 In contrast with our work, the reduction in~\cite{padovani:hal-00932356} 
 does not establish formal connections with binary session types, nor exploits other properties of processes
 to reason about global specifications.

 Building upon~\cite{CairesP10}, in~\cite[Ch. 4]{Montesi13} a correspondence between
 two-party choreographies and proofs from LCL, a linear logic with hypersequents, is given.
 Projection is cleanly defined at the level of proofs, but
 the analysis of $n$-ary choreographies, exponentials, and forms of iterative behavior are left for future work.


Our medium processes, the key technical device in our developments, 
are loosely related to the concept of
\emph{orchestrators}  in service-oriented programming. 
The work~\cite{DBLP:conf/apccm/McIlvennaDW09} shows how to synthesize a orchestrator from a  service choreography, using finite state machines to model both choreography and orchestrator, which already distinguishes~
this work
from ours. We consider choreographies specified as behavioral  types; mediums are processes obtained directly from those types. Our work has a foundational character, for it formally connects communicating automata (related to global types) and a Curry-Howard correspondence based on linear logic propositions (which supports binary session types); in contrast, the results in~\cite{DBLP:conf/apccm/McIlvennaDW09} have a more pragmatic spirit, for the goal is to generate Petri net and BPMN models from the obtained orchestrator.

\section{Concluding Remarks}\label{s:concl}

We have developed a comprehensive  
 analysis of multiparty session types 
 on top of an elementary type theory for binary sessions.
Our results rely on
\emph{medium processes}, 
a simple but effective characterization of multiparty interactions as expressed by standard global types. 
Using well-typed mediums under the theory of (linear logic based) binary session types in~\cite{CairesP10,Toninho2014}
we obtained strong characterizations of mediums with respect to the projections of a global type.
Such characterizations allow us
to uniformly transfer to the multiparty setting key properties of the binary session theory (notably, 
deadlock-freedom and
behavioral equivalences). 
In our view, our characterizations do not diminish the relevance  of 
existing 
frameworks of multiparty sessions. Rather, it is most reasonable that 
in applications 
the analysis of multiparty protocols can be effectively done with 
frameworks in which multiparty interaction is a first-class idiom. 
On the other hand, our results
provide further evidence of the fundamental character of key ingredients in multiparty session types,
and build on (perhaps unexpected, but certainly welcome) 
tight connections between two independently motivated
theories of session types with foundational significance: 
one
based on linear logic~\cite{CairesP10}, 
the other 
 based on
communicating automata~\cite{DBLP:conf/icalp/DenielouY13}. These correspondences should be further explored
 in future research, 
in connection with more expressive types (e.g., dependent types) and computational models 
(e.g. asynchrony).

\bibliographystyle{abbrvnat}
 \bibliography{shortreferen}

\begin{thebibliography}{31}
\providecommand{\natexlab}[1]{#1}
\providecommand{\url}[1]{\texttt{#1}}
\expandafter\ifx\csname urlstyle\endcsname\relax
  \providecommand{\doi}[1]{doi: #1}\else
  \providecommand{\doi}{doi: \begingroup \urlstyle{rm}\Url}\fi

\bibitem[Bonelli and Compagnoni(2007)]{DBLP:conf/tgc/BonelliC07}
E.~Bonelli and A.~B. Compagnoni.
\newblock Multipoint session types for a distributed calculus.
\newblock In \emph{TGC}. Springer, 2007.

\bibitem[Bonelli et~al.(2005)Bonelli, Compagnoni, and Gunter]{Bonelli05}
E.~Bonelli, A.~Compagnoni, and E.~Gunter.
\newblock Correspondence assertions for process synchronization in concurrent
  communications.
\newblock \emph{J. Funct. Program.}, 15:\penalty0 219--247, 2005.
\newblock ISSN 0956-7968.
\newblock \doi{10.1017/S095679680400543X}.

\bibitem[Caires and Pfenning(2010)]{CairesP10}
L.~Caires and F.~Pfenning.
\newblock Session types as intuitionistic linear propositions.
\newblock In \emph{CONCUR'2010}, LNCS. Springer, 2010.

\bibitem[Caires and Vieira(2010)]{DBLP:journals/tcs/CairesV10}
L.~Caires and H.~T. Vieira.
\newblock Conversation types.
\newblock \emph{Theor. Comput. Sci.}, 411\penalty0 (51-52):\penalty0
  4399--4440, 2010.

\bibitem[Caires et~al.(2013)Caires, P{\'e}rez, Pfenning, and
  Toninho]{DBLP:conf/esop/CairesPPT13}
L.~Caires, J.~A. P{\'e}rez, F.~Pfenning, and B.~Toninho.
\newblock Behavioral polymorphism and parametricity in session-based
  communication.
\newblock In \emph{ESOP}, volume 7792 of \emph{LNCS}, pages 330--349. Springer,
  2013.

\bibitem[Carbone and Montesi(2013)]{DBLP:conf/popl/CarboneM13}
M.~Carbone and F.~Montesi.
\newblock Deadlock-freedom-by-design: multiparty asynchronous global
  programming.
\newblock In \emph{POPL}, pages 263--274. ACM, 2013.
\newblock ISBN 978-1-4503-1832-7.

\bibitem[Castagna et~al.(2008)Castagna, Gesbert, and
  Padovani]{CastagnaGesbertPadovani08}
G.~Castagna, N.~Gesbert, and L.~Padovani.
\newblock {A} {T}heory of {C}ontracts for {W}eb {S}ervices.
\newblock In \emph{POPL}, ACM SIGPLAN Notices 43, pages 261--272. ACM, 2008.
\newblock \doi{10.1145/1328438.1328471}.

\bibitem[Castagna et~al.(2012)Castagna, Dezani-Ciancaglini, and
  Padovani]{CastagnaDezaniPadovani12}
G.~Castagna, M.~Dezani-Ciancaglini, and L.~Padovani.
\newblock {O}n {G}lobal {T}ypes and {M}ulti-{P}arty {S}essions.
\newblock \emph{Log. Meth. in Comp. Sci.}, 8:\penalty0 1--45, 2012.
\newblock \doi{10.2168/LMCS-8(1:24)2012}.

\bibitem[Coppo et~al.(2013)Coppo, Dezani-Ciancaglini, Padovani, and
  Yoshida]{DBLP:conf/coordination/CoppoDPY13}
M.~Coppo, M.~Dezani-Ciancaglini, L.~Padovani, and N.~Yoshida.
\newblock Inference of global progress properties for dynamically interleaved
  multiparty sessions.
\newblock In \emph{Proc. of COORDINATION}. Springer, 2013.

\bibitem[Dardha et~al.(2012)Dardha, Giachino, and
  Sangiorgi]{DBLP:conf/ppdp/DardhaGS12}
O.~Dardha, E.~Giachino, and D.~Sangiorgi.
\newblock Session types revisited.
\newblock In \emph{PPDP}. ACM, 2012.
\newblock ISBN 978-1-4503-1522-7.

\bibitem[Deni{\'e}lou and Yoshida(2011)]{DBLP:conf/popl/DenielouY11}
P.-M. Deni{\'e}lou and N.~Yoshida.
\newblock Dynamic multirole session types.
\newblock In \emph{POPL}, pages 435--446. ACM, 2011.

\bibitem[Deni{\'e}lou and Yoshida(2013)]{DBLP:conf/icalp/DenielouY13}
P.-M. Deni{\'e}lou and N.~Yoshida.
\newblock Multiparty compatibility in communicating automata: Characterisation
  and synthesis of global session types.
\newblock In \emph{ICALP'13}. Springer, 2013.

\bibitem[Deni{\'e}lou et~al.(2012)Deni{\'e}lou, Yoshida, Bejleri, and
  Hu]{DBLP:journals/corr/abs-1208-6483}
P.-M. Deni{\'e}lou, N.~Yoshida, A.~Bejleri, and R.~Hu.
\newblock Parameterised multiparty session types.
\newblock \emph{Log. Meth. in Comp. Sci.}, 8\penalty0 (4), 2012.

\bibitem[Genest and Muscholl(2005)]{DBLP:conf/acsd/GenestM05}
B.~Genest and A.~Muscholl.
\newblock Message sequence charts: A survey.
\newblock In \emph{ACSD}, pages 2--4. IEEE Computer Society, 2005.

\bibitem[Honda(1993)]{DBLP:conf/concur/Honda93}
K.~Honda.
\newblock Types for dyadic interaction.
\newblock In \emph{CONCUR}, volume 715 of \emph{LNCS}, pages 509--523.
  Springer, 1993.

\bibitem[Honda et~al.(1998)Honda, Vasconcelos, and
  Kubo]{DBLP:conf/esop/HondaVK98}
K.~Honda, V.~T. Vasconcelos, and M.~Kubo.
\newblock Language primitives and type discipline for structured
  communication-based programming.
\newblock In \emph{ESOP'98}, LNCS. Springer, 1998.

\bibitem[Honda et~al.(2008)Honda, Yoshida, and
  Carbone]{DBLP:conf/popl/HondaYC08}
K.~Honda, N.~Yoshida, and M.~Carbone.
\newblock Multiparty asynchronous session types.
\newblock In \emph{POPL}, pages 273--284. ACM, 2008.

\bibitem[Igarashi and Kobayashi(2004)]{DBLP:journals/tcs/IgarashiK04}
A.~Igarashi and N.~Kobayashi.
\newblock A generic type system for the pi-calculus.
\newblock \emph{Theor. Comput. Sci.}, 311\penalty0 (1-3):\penalty0 121--163,
  2004.

\bibitem[Kobayashi et~al.(1996)Kobayashi, Pierce, and
  Turner]{DBLP:conf/popl/KobayashiPT96}
N.~Kobayashi, B.~C. Pierce, and D.~N. Turner.
\newblock Linearity and the pi-calculus.
\newblock In \emph{POPL}, 1996.

\bibitem[McCarthy and Krishnamurthi(2008)]{DBLP:conf/esorics/McCarthyK08}
J.~A. McCarthy and S.~Krishnamurthi.
\newblock Cryptographic protocol explication and end-point projection.
\newblock In \emph{ESORICS}, volume 5283 of \emph{LNCS}, pages 533--547.
  Springer, 2008.

\bibitem[McIlvenna et~al.(2009)McIlvenna, Dumas, and
  Wynn]{DBLP:conf/apccm/McIlvennaDW09}
S.~McIlvenna, M.~Dumas, and M.~T. Wynn.
\newblock Synthesis of orchestrators from service choreographies.
\newblock In \emph{APCCM}, volume~96 of \emph{CRPIT}, pages 129--138.
  Australian Computer Society, 2009.

\bibitem[Milner et~al.(1992)Milner, Parrow, and Walker]{MilnerPW92a}
R.~Milner, J.~Parrow, and D.~Walker.
\newblock {A Calculus of Mobile Processes, part I/II}.
\newblock \emph{Inf. Comput.}, 100\penalty0 (1):\penalty0 1--77, 1992.

\bibitem[Montesi(2013)]{Montesi13}
F.~Montesi.
\newblock \emph{{Choreographic Programming}}.
\newblock PhD thesis, IT Univ. of Copenhagen, 2013.

\bibitem[Padovani(2014)]{padovani:hal-00932356}
L.~Padovani.
\newblock {Deadlock and lock freedom in the linear $\pi$-calculus}, 2014.
\newblock To appear in CSL-LICS'14.

\bibitem[P{\'e}rez et~al.(2012)P{\'e}rez, Caires, Pfenning, and
  Toninho]{DBLP:conf/esop/PerezCPT12}
J.~A. P{\'e}rez, L.~Caires, F.~Pfenning, and B.~Toninho.
\newblock Linear logical relations for session-based concurrency.
\newblock In \emph{ESOP}. Springer, 2012.

\bibitem[Pfenning et~al.(2011)Pfenning, Caires, and Toninho]{Toninho2011b}
F.~Pfenning, L.~Caires, and B.~Toninho.
\newblock Proof-carrying code in a session-typed process calculus.
\newblock In \emph{Proc. of CPP '11}, volume 7086 of \emph{LNCS}. Springer,
  2011.

\bibitem[Sangiorgi and Walker(2001)]{sangiorgi-walker:book}
D.~Sangiorgi and D.~Walker.
\newblock \emph{The $\pi$-calculus: A Theory of Mobile Processes}.
\newblock Cambridge University Press, 2001.

\bibitem[Toninho et~al.(2011)Toninho, Caires, and Pfenning]{Toninho2011}
B.~Toninho, L.~Caires, and F.~Pfenning.
\newblock Dependent session types via intuitionistic linear type theory.
\newblock In \emph{Proc. of PPDP '11}. ACM, 2011.

\bibitem[Toninho et~al.(2014)Toninho, Caires, and Pfenning]{Toninho2014}
B.~Toninho, L.~Caires, and F.~Pfenning.
\newblock { Corecursion and Non-Divergence in Session Types.}
\newblock In \emph{TGC}, 2014.

\bibitem[Vasconcelos(2011)]{DBLP:journals/eatcs/Vasconcelos11}
V.~T. Vasconcelos.
\newblock Sessions, from types to programming languages.
\newblock \emph{Bulletin of the EATCS}, 103:\penalty0 53--73, 2011.

\bibitem[Wadler(2014)]{DBLP:journals/jfp/Wadler14}
P.~Wadler.
\newblock Propositions as sessions.
\newblock \emph{J. Funct. Program.}, 24\penalty0 (2-3):\penalty0 384--418,
  2014.

\end{thebibliography}


\newcommand{\jblue}[1]{{\color{blue} #1}}

\appendix
\onecolumn
\tableofcontents

\section{Additional Definitions for \S\,\ref{s:tmedium}}

\subsection{Independent Composition}\label{app:indc}
As mentioned in \S\,\ref{ss:btypes}, the 
following rule 
for \emph{independent parallel composition} is derivable: 
\begin{equation}
\inferrule*[left=\name{\textsc{indComp}}]{\Gamma; \Delta_1 \vdash P :: - : \one \quad \Gamma; \Delta_2 \vdash Q :: z{:}C}{\Gamma; \Delta_1,\Delta_2 \vdash P \para Q :: z{:}C} \nonumber
\end{equation}
where `$-$' denotes a ``dummy name'' not in $\fn{P}$.

\subsection{Simple Projectability and Well-Formedness}
\begin{definition}[Simple Projection \cite{DBLP:conf/popl/HondaYC08}]\label{d:sproj}
Let $G$ be a global type.
The \emph{simple projection} of $G$ under participant $\pt{r}$, 
denoted \sproj{G}{\pt{r}}, is inductively defined as follows:
\begin{enumerate}[$\bullet$]
\item $\sproj{\gend}{\pt{r}}  =  \lend$
\item $\sproj{\gto{p}{q}\{\lb{l}_i\langle U_i\rangle.G_i\}_{i \in I}}{\pt{r}}  = 
  \begin{cases}
\pt{p}!\{\lb{l}_i \langle U_i\rangle.\sproj{G_i}{\pt{r}}\}_{i \in I} & \text{if $\pt{r} = \pt{p}$} \\
\pt{p}?\{\lb{l}_i \langle U_i\rangle.\sproj{G_i}{\pt{r}}\}_{i \in I} & \text{if $\pt{r} = \pt{q}$} \\
\sproj{G_1}{\pt{r}} & \text{if $\pt{r} \neq \pt{p}$ and $\pt{r} \neq \pt{q}$} 
\text{ and $\forall i, j \in I. \sproj{G_i}{\pt{r}} = \sproj{G_j}{\pt{r}}$}
\end{cases} $
\item $\sproj{(G_1 \para G_2) }{\pt{r}}  =  
\begin{cases} \sproj{G_i}{\pt{r}} & \text{if $\pt{r} \in G_i$ and $\pt{r} \not\in G_j$, with $i \neq j \in \{1,2\}$ } \\
														 \lend & \text{if  $\pt{r} \not\in G_1$ and $\pt{r} \not\in G_2$}
											\end{cases} $
\end{enumerate}
When a side condition does not hold, the map is undefined.
\end{definition}

We then may define \emph{simple well-formedness}:

\begin{definition}[Simple Well-Formedness]\label{d:swfltypes}
We say that global type $G$ is \emph{simply well-formed (SWF, in the following)} if for all $\pt{r} \in G$, 
the simple projection $\sproj{G}{\pt{r}}$ is defined.
\end{definition}

In what follows, we often say that a global type is MWF (merge-based well-formed) if it is well-formed according to Def.~\ref{d:wfltypes}.

\subsection{Proof Conversions}
Figs.~\ref{f:cc1} and~\ref{f:cc2} give the commuting conversions relevant to the present development, in particular for Theorem~\ref{p:swapmeds} (proven in \S\,\ref{app:swapmeds}).
Intuitively, they concern the interaction of (i) two left rules and (ii) a left rule and a rule for composition.
There are other commuting conversions 
(see, e.g.,~\cite{DBLP:conf/esop/PerezCPT12}); however, given our focus on
compositional typings (in which the only typing admitted in the right-hand side typing is $\one$, cf. Def.~\ref{d:compty})
the conversions in Figs.~\ref{f:cc1} and~\ref{f:cc2} are the only relevant ones. 
In the figures, we sometimes appeal to the 
following 
notational abbreviations:

\begin{convention}[Additives]\label{cv:types}
We abbreviate 
$\mywith{\lb{l}_i{:}A_i}{i \in I}$ as 
 $\mywith{\lb{l}_i{:}A_i, \lb{l}_j{:}A_j}{}$
 when $I = |2|$.
 When labels are unimportant, 
 we write  $A_i\with A_j$, with labels having left/right readings, as in~\cite{CairesP10}.
 Similar abbreviations apply for
$\myoplus{\lb{l}_i{:}A_i}{i \in I}$.
\end{convention}

We may define:

\begin{definition}[Proof Conversions]\label{d:eqcc}
We define \eqcc as the least congruence on 
processes induced by the process equalities in Figures~\ref{f:cc1} and~\ref{f:cc2}
(Pages \pageref{f:cc1}--\pageref{f:cc2}).
\end{definition}

\begin{figure}[!t]
\vspace{-2ex}
{\small
\begin{align}
\tag{\text{I-}3} \G ; \D, y{:}A\otimes B & \vdash (\nu x)(P \para y(z).Q)  \typconp   y(z).(\nu x) (P \para Q)  :: k{:}E \\
\tag{\text{I-}4} \G ; \D, y{:}A\otimes B & \vdash (\nu x)(y(z).P \para Q)  \typconp   y(z).(\nu x) (P \para Q)  :: k{:}E \\
\tag{\text{I-}6} \G ; \D, y{:}A\lolli B & \vdash (\nu x)(P \para \outp{y}{z}.(Q \para R))  \typconp  \outp{y}{z}.((\nu x)(P \para  Q)  \para R) :: k{:}E \\
\tag{\text{I-}7} \G ; \D, y{:}A\lolli B & \vdash (\nu x)(P \para \outp{y}{z}.(Q \para R))   \typconp   \outp{y}{z}.(Q \para (\nu x)(P \para  R)) :: k{:}E \\
\tag{\text{I-}8} \G ; \D, y{:}A\lolli B & \vdash (\nu x)(\outp{y}{z}.(Q \para P) \para R)   \typconp  \outp{y}{z}.(Q \para (\nu x)(P \para  R)) :: k{:}E \\
\tag{\text{I-}10} \G ; \D, y{:}A \with B  & \vdash (\nu x)(P \para \mysel{y}{\lb{l}_i}; Q)  \typconp    \mysel{y}{\lb{l}_i};(\nu x)(P \para  Q)  ::k{:}E \\
\tag{\text{I-}14} \G ; \D,y{:}A\oplus B & \vdash (\nu x)(P \para \mycaseb{y}{Q, R}{})  \typconp \mycaseb{y}{(\nu x)(P \para  Q), (\nu x)(P \para R)}{} ::k{:}E  \\
\tag{\text{I-}15} \G,u{:}A ; \D & \vdash (\nu x)(P \para \outp{u}{y}.Q)   \typconp  \outp{u}{y}.(\nu x)(P \para Q)::k{:}E \\  
\tag{\text{I-}18} \G ; \D,y{:}A \with B & \vdash (\nu x)( \mysel{y}{\lb{l}_i}; P \para R )  \typconp    \mysel{y}{\lb{l}_i};(\nu x)(P \para  R) ::k{:}E \\
 \nnum 
\tag{\text{I-}20}
 \G ; \D,y{:}A \oplus B & \vdash (\nu x)(\mycaseb{y}{P, Q}{} \para  R )  \typconp   \mycaseb{y}{(\nu x)(P \para  R), (\nu x)(Q \para R)}{}::k{:}E \\
\tag{\text{I-}21} \G ; \D & \vdash (\nu x)(P \subst{y}{u} \para  Q )  \typconp    (\nu x)(P \para Q)\subst{y}{u}::k{:}E \\
\tag{\text{I-}22} \G ; \D & \vdash (\nu x)(P  \para  Q\subst{y}{u} )  \typconp    (\nu x)(P \para Q)\subst{y}{u}::k{:}E \\
\tag{\text{I-}23} \G,u{:}A ; \D & \vdash (\nu x)(\outp{u}{y}.P  \para  R)  \typconp    \outp{u}{y}.(\nu x)(P \para R)  :: k{:}E \\
\tag{\text{I-}24} \G,u{:}A ; \D & \vdash (\nu x)( P  \para  \outp{u}{y}.R)  \typconp    \outp{u}{y}.(\nu x)(P \para R)  :: k{:}E \\
 \G ; \cdot   \vdash & ~(\nu u )( (!u(y).P)\para \zero ) \typconp   \zero  :: -{:}\one & \tag{\text{I-}25}   \\
  \G ; \D,y{:}A \otimes B  \vdash & ~(\nu u )( (!u(y).P) \para y(z).Q) \typconp  y(z).(\nu u )( (!u(y).P) \para Q)::k{:}E & \tag{\text{I-}28}\\
 \G ; \D,y{:}A\lolli B  \vdash & ~(\nu u )( (!u(y).P) \para  \outp{y}{z}.(Q\para R)) \typconp & \nnum \\
&     ~~ \outp{y}{z}.( ((\nu u )(!u(y).P \para Q) \para (\nu u )( (!u(y).P) \para R))) ::k{:}E & \tag{\text{I-}30}\\
 \G ; \D,y{:}A \with B  \vdash &  ~(\nu u )(!u(z).P \para \mysel{y}{\lb{l}_i};Q)   \typconp  \mysel{y}{\lb{l}_i};(\nu u )( !u(z).P \para Q) :: k{:}E & \tag{\text{I-}32} & \nonumber \\
 \G ; \D, y{:}A \oplus B  \vdash & ~(\nu u )( !u(z).P \para \mycaseb{y}{Q, R}{} )  \typconp  & \nonumber \\
 &    \qquad \mycaseb{y}{(\nu u )( !u(z).P \para Q), (\nu u )( !u(z).P \para R)}{}  :: k{:}E & \tag{\text{I-}36}\\
 \G ; \D  \vdash & ~(\nu u )( !u(y).P \para Q\subst{y}{v}) \typconp     (\nu u )( !u(y).P \para Q) \subst{y}{v} {::}k{:}E & \tag{\text{I-}38}\\
 \G ; \D  \vdash & ~(\nu u )( !u(y).P \para \outp{v}{y}.Q)   \typconp  \outp{v}{y}.(\nu u )( !u(y).P \para Q) ) :: k{:}E & \tag{\text{I-}39}
\end{align} 
\vspace{-2ex}
\caption{Process equalities induced by  proof conversions (Part I - cf. Def.~\ref{d:eqcc})}\label{f:cc1}
}
\end{figure}


\begin{figure}[!t]
\vspace{-2ex}
{\small
\begin{align}
\tag{\text{II-}1}\Gamma; \Delta, x{:}A\otimes B, z{:}C\otimes D  &\vdash  x(y).z(w).P \pceq z(w).x(y).P :: k{:}E \\
\tag{\text{II-}2}
\Gamma; \Delta, z{:}D \lolli C, x{:}A \lolli B & \vdash  \outp{z}{w}. (R \para \outp{x}{y}.(P \para Q)) \pceq \outp{x}{y}.(P \para \outp{z}{w}. (R \para Q)) :: k{:}E\\
\tag{\text{II-}3} \Gamma; \Delta, z{:}D \lolli C, x{:}A \lolli B & \vdash  \outp{z}{w}. (R \para \outp{x}{y}.(P \para Q)) \pceq  \outp{x}{y}.(\outp{z}{w}. (R \para P) \para  Q ) :: k{:}E\\
\tag{\text{II-}4}\Gamma; \Delta, w{:}C \lolli D, x{:}A \otimes B & \vdash  \outp{w}{z}.(Q \para x(y).P) \pceq x(y).\outp{w}{z}.(Q \para P) :: k{:}E \\
\tag{\text{II-}5}\Gamma; \Delta, w{:}C \lolli D, x{:}A \otimes B & \vdash  \outp{w}{z}.( x(y).P  \para Q) \pceq x(y).\outp{w}{z}.(P \para Q) :: k{:}E \\
\tag{\text{II-}6}\Gamma,u{:}A,v{:}C; \Delta & \vdash   \outp{u}{y}.\outp{v}{x}.  P \pceq \outp{v}{x}. \outp{u}{y}. P  :: k{:}E \\
\tag{\text{II-}7}\Gamma, u{:}C; \Delta,x{:}A \lolli B & \vdash  \outp{u}{ z}.\outp{x}{y}.(P \para Q) \pceq \outp{x}{y}.(\outp{u}{ z}.P \para Q):: k{:}E \\
\tag{\text{II-}8}\Gamma, u{:}C; \Delta,x{:}A \lolli B & \vdash  \outp{u}{ z}.\outp{x}{y}.(P \para Q) \pceq \outp{x}{y}.(P \para \outp{u}{ z}.Q):: k{:}E \\
\tag{\text{II-}9}\Gamma,u{:}A; \Delta, z{:}C \otimes D & \vdash  \outp{u}{y}.z(w).P \pceq z(w).\outp{u}{y}.P :: k{:}E \\
\Gamma; \Delta, x{:}A \oplus B, y{:}C \oplus D & \vdash  \mycaseb{y}{\mycaseb{x}{P_{1},Q_{1}}{}, \mycaseb{x}{P_{2},Q_{2}}{}}{}  \pceq \mycaseb{x}{\mycaseb{y}{P_{1},P_{2}}{}, \mycaseb{y}{Q_{1},Q_{2}}{}}{}::k{:}E \tag{\text{II-}10} \\
\tag{\text{II-}11}\Gamma,u{:}C; \Delta,x{:}A \oplus B & \vdash \outp{u}{z}.\mycaseb{x}{P, Q}{} \pceq \mycaseb{x}{\outp{u}{z}.P \, , \, \outp{u}{z}.Q}{} :: k{:}E\\
\Gamma; \Delta, w{:}A \lolli E, z{:}C \oplus D & \vdash  \mycaseb{z}{\outp{w}{y}.(P \para R_1) \, , \, \outp{w}{y}.(P \para R_2)}{} \pceq  \outp{w}{y}.(P \para \mycaseb{z}{R_{1}, R_{2}}{})::k{:}E  & \tag{\text{II-}12} \\
\tag{\text{II-}13} \Gamma; \Delta, z{:}C \oplus D, x{:}A \otimes B & \vdash  \mycaseb{z}{x(y).P, x(y).Q}{} \pceq x(y).\mycaseb{z}{P, Q}{} :: k{:}E \\
\tag{\text{II-}14}\Gamma; \Delta, x{:}A \with B, y{:}C \with D & \vdash  \mysel{x}{\lb{l}'_i};\mysel{y}{\lb{l}_i};P \pceq \mysel{y}{\lb{l}_i};\mysel{x}{\lb{l}'_i};P ::k{:}E \\
\tag{\text{II-}15} \Gamma; \Delta, x{:}A \oplus B, y{:}C \with D & \vdash  \mycaseb{x}{\mysel{y}{\lb{l}_i};P , \mysel{y}{\lb{l}_i};Q}{} \pceq \mysel{y}{\lb{l}_i};\mycaseb{x}{P, Q}{}::k{:}E  \\
\tag{\text{II-}16}\Gamma,u{:}C;\Delta,  z{:}A \with B & \vdash  \mysel{z}{\lb{l}_i};\outp{u}{y}.P \pceq \outp{u}{y}.\mysel{z}{\lb{l}_i};P :: k{:}E\\
\tag{\text{II-}17}\Gamma; \Delta, z{:}C \with D, x{:}A \lolli B & \vdash  \mysel{z}{\lb{l}_i}; \outp{x}{y}.(P \para Q) \pceq  \outp{x}{y}.(\mysel{z}{\lb{l}_i};P \para Q)::k{:}E \\
\tag{\text{II-}18}\Gamma; \Delta, z{:}C \with D, x{:}A \lolli B & \vdash  \mysel{z}{\lb{l}_i};\outp{x}{y}.(P \para Q) \pceq  \outp{x}{y}.(P \para \mysel{z}{\lb{l}_i};Q)::k{:}E \\
\tag{\text{II-}19}\Gamma; \Delta, z{:}C \with D, x{:}A \otimes B & \vdash  \mysel{z}{\lb{l}_i};x(y).P \pceq x(y).\mysel{z}{\lb{l}_i};P::k{:}E 
\end{align}
\caption{Process equalities induced by proof conversions, second kind (Part II - cf. Def.~\ref{d:eqcc}). 
}\label{f:cc2}
}
\vspace{-2ex}
\end{figure}

\subsection{Labeled Transition Systems for Processes and Global Types}\label{app:lts}
We now present auxiliary notions, needed for the operational correspondence result stated in \S\,\ref{s:opcorr} (and proved in \S\,\ref{app:opcorr}).

\paragraph{LTS for Processes.}
To characterize the interactions of a well-typed process with its environment,
we extend
the 
early labeled transition system (LTS) for the
$\pi$-calculus~\cite{sangiorgi-walker:book} with 
labels and transition rules for  choice and forwarding constructs.  
A transition
$P\tra{\,\labelset\,}Q$ denotes that 
$P$ may evolve to 
$Q$
by performing the action represented by label $\labelset$. 
Transition labels are defined below:
\begin{eqnarray*}
\labelset & ::= &  \tau  \sep x(y)  \sep  \mysel{x}{\lb{l}}    
  \sep   \out{x}y  \sep  \outa{x}{y} \sep  \ov{\mysel{x}{\lb{l}}}   
\end{eqnarray*}
Actions are name input $x(y)$, 
the 
 offer
$\mysel{x}{\lb{l}}$,   and their matching
co-actions,
respectively the output $\out{x}y $ and bound output $\outa{x}{y}$ actions, 
and the selection
$\ov{\mysel{x}{\lb{l}}}$. 
The bound output $\outa{x}{y}$ denotes extrusion of a fresh name $y$
along 
$x$. Internal action is denoted by $\tau$. 
In general, an action
requires a matching 
co-action
in the environment to enable progress. 

\begin{definition}[Labeled Transition System]\label{def:lts} 
  The relation \emph{labeled transition} ($ P\tra{\labelset}Q$) is defined
  by the rules in Fig.~\ref{fig:LTS},
   subject to the side
  conditions: in rule $(\mathsf{res})$, we require $y_{}\not\in\fn{\labelset}$; 
  in rule $(\mathsf{par})$, we require $\bn{\labelset} \cap \fn{R} = \emptyset$; in rule
  $(\mathsf{close})$, we require $y_{}\not\in\fn{Q}$. We omit the symmetric versions
  of rules $(\mathsf{par})$, $(\mathsf{com})$, and $(\mathsf{close})$.
  \end{definition} 
  
We write $subj(\lambda)$ for the subject of the action $\lambda$, that
is, the channel along which the action takes place. Weak transitions are defined as usual.
Let us write $\rho_1 \rho_2$ for the composition of relations $\rho_1, \rho_2$
and $\wtra{}$ for the reflexive, transitive closure of
$\tra{\tau}$. 
Notation $\wtra{\labelset}$ stands for $\wtra{~}\tra{\labelset}\wtra{~}$ (given $\labelset \neq \tau$)
and $\wtra{\tau}$ stands for $\wtra{}$.
We recall basic facts about reduction, structural congruence,
and labeled transition: closure of labeled transitions under
structural congruence, and coincidence of $\tau$-labeled transition
and reduction~\cite{sangiorgi-walker:book}: (1) if $P
\equiv\tra{\labelset}Q$ then $P \tra{\labelset}\equiv Q$; 
(2) $P\to Q$ iff $P \tra{\tau} \equiv Q$.

\begin{figure}[t!]
{\small
 $$
\begin{array}{c}
\inferrule*[Left=\name{$\mathsf{id}$}]{}{(\nub x_{})(\linkr{x_{}}{y_{}} \para P) \tra{\,\tau\,}  P\subst{y_{}}{x_{}}} 
\hspace{1.5cm}
\inferrule*[Left=\name{$\mathsf{rep}$}]{}{\bang x_{}(y_{}).P \tra{x(z)} P \subst{z_{}}{y_{}}\para \bang x_{}(y_{}).P}
\vspace{0.2cm}
\\
\inferrule*[Left=\name{$\mathsf{n.out}$}]{}{\out{x} y.P \tra{\ov{x \, y_{}}} P}
\hspace{1.5cm}
\inferrule*[Left=\name{$\mathsf{n.in}$}]{}{x_{}(y).P \tra{x_{}(z_{})} P \subst{z_{}}{y}}
\hspace{1.5cm}
\vspace{0.2cm}
\inferrule*[Left=\name{$\mathsf{s.out}$}]{}{\mysel{x}{\lb{l}};P \tra{ \overline{\mysel{x}{\,\lb{l}}} } P}
\hspace{1.5cm}
\inferrule*[Left=\name{$\mathsf{s.in}$}]{}{\mycase{x}{\lb{l}_i : P_i}{i \in I} \tra{\mysel{x}{\,\,\lb{l}_j}} P_j} \quad (j \in I)
\\
\inferrule[\name{$\mathsf{par}$}]{P \tra{\labelset} Q}{P\para R \tra{\labelset} Q \para R}
\hspace{0.5cm} 
\inferrule[\name{$\mathsf{com}$}]{P\tra{\overline{\labelset}} P' \;\;\; Q \tra{\labelset} Q'} {P \para
  Q \tra{\tau} P' \para Q'}
\hspace{0.5cm}
\inferrule[\name{$\mathsf{res}$}]{P \tra{\labelset} Q} {(\nub y)P \tra{\labelset} (\nub y_{})Q}
  \vspace{0.1cm}
\hspace{0.5cm} 
\inferrule[\name{$\mathsf{open}$}]{P \tra{\out{x}y} Q} {(\nub y_{})P \tra{\outa{x}{y}} Q}
\hspace{0.5cm}
\inferrule[\name{$\mathsf{close}$}]{P \tra{\outa{x}{y}} P' \;\;\; Q \tra{x(y)} Q'}
{P \para Q \tra{\tau} (\nub y_{})(P' \para Q')}
\end{array}
$$
}
\caption{\label{fig:LTS} LTS for Processes.}
\end{figure}

\paragraph{LTS for Global Types With Recursion.}
We define an LTS over extended global types, which adapts the LTS in~\cite{DBLP:conf/icalp/DenielouY13}, 
with refined intermediate steps. We first define the set of observables:
\begin{eqnarray*}
\gtlabelset & ::= &  \pt{p}  \sep  \mysel{\pt{p}}{\lb{l}}   \sep  \out{\pt{p}}  \sep   \ov{\mysel{\pt{p}}{\lb{l}}}   
\end{eqnarray*}

\begin{definition}[LTS on Global Types]
  The relation \emph{labeled transition over global types} ($ G\tra{\gtlabelset}G'$) is defined
  by the rules in Figure~\ref{f:gtltss}.
  \end{definition}

  \begin{figure}[t!]
  $$
\begin{array}{c}
\inferrule*[Left=\name{G1}]{}{\gto{p}{q}\{\lb{l}_i\langle U_i\rangle.G_i\}_{i \in I}  \tra{\,\ov{\mysel{\pt{p}}{~\lb{l}_j}}\,}  \igto{p}{q}\lb{l}_j\langle U_j\rangle.G_j}  \quad (j \in I) 
\vspace{0.2cm}
\\
\inferrule*[Left=\name{G2}]{}{\igto{p}{q}\lb{l}\langle U\rangle.G  \tra{\,\out{\pt{p}}\,}  \igto{p}{q}\lb{l}(\!(U)\!).G}  
\hspace{1.5cm}
\inferrule*[Left=\name{G3}]{}{ \igto{p}{q}\lb{l}(\!(U)\!).G \tra{\,\mysel{\pt{q}}{~\lb{l}}\,}  \igto{p}{q}(\!(U)\!).G }  
\vspace{0.2cm}
\\
\inferrule*[Left=\name{G4}]{ }{ \igto{p}{q}(\!(U)\!).G \tra{\,{\pt{q}}\,}  G}  
\hspace{1.5cm}
\inferrule*[Left=\name{G5}]{G\subst{\mu\rv{X}.G}{\rv{X}} \tra{\gtlabelset}  G' }{ \mu\rv{X}.G \tra{\gtlabelset}  G'}  
\end{array}
$$
\caption{LTS over (Extended) Global Types \label{f:gtltss}}
\vspace{-4mm}
\end{figure}

\paragraph{Relating Labels.}
We now relate the labels for a global type $G$  and the labels for processes in $\mathcal{S}^k(G)$.
The latter are defined as follows, with associated transition rules as in, e.g.,~\cite{CairesP10}:
\begin{eqnarray*}
\labelset & ::= &  \tau  \sep x(y)  \sep  \mysel{x}{\lb{l}}    
  \sep   \out{x}y  \sep  \outa{x}{y} \sep  \ov{\mysel{x}{\lb{l}}}   
\end{eqnarray*}

This relation is tight, with minor differences on output actions:

\begin{definition}\label{d:labelst}
Let $k$ and $\pt{p}$ be a name and a participant identity, respectively.
The mapping $\mlab{\cdot}{k}$ from global type labels $\gtlabelset$ to process labels 
$\labelset$,
and the mapping $\glab{\cdot}{\pt{p}}$ from process labels $\labelset$ 
to global type labels $\gtlabelset$ 
are defined inductively as follows:
$$
\begin{array}{ccc}
\mlab{\pt{p}}{k}  =  k & & \glab{k}{\pt{p}}  =  \pt{p} \\
\mlab{\mysel{\pt{p}}{\lb{l}}}{k}  =  \mysel{k}{\lb{l}} & & \glab{\mysel{k}{\lb{l}}}{\pt{p}}  =   \mysel{\pt{p}}{\lb{l}}\\
\mlab{\out{\pt{p}}}{k}  =  \about{k} & & \glab{\about{k}}{\pt{p}}  =  \out{\pt{p}} \\ 
\mlab{\,\ov{\mysel{\pt{p}}{\lb{l}}} \,}{k}  =  \ov{\mysel{k}{\lb{l}}} & & \glab{\,\ov{\mysel{k}{\lb{l}}}\,}{\pt{p}}  =  \ov{\mysel{\pt{p}}{\lb{l}}}
\end{array}
$$
\end{definition}

\paragraph{Extended Global Types and Annotated Mediums.}
To establish operational correspondence, we consider \emph{extended} global types, defined as follows:
\begin{eqnarray}
G & ::= & \gend \sep \gto{p}{q}\{\lb{l}_i\langle U_i\rangle.G_i\}_{i \in I} \sep \mu \rv{X}.G \sep \rv{X} \nonumber \\ 
& \sep &  \igto{p}{q}\lb{l}\langle U\rangle.G \sep \igto{p}{q}\lb{l}(\!(U)\!).G \sep \igto{p}{q}(\!(U)\!).G \nonumber
\end{eqnarray}
We have introduced three auxiliary forms for global types; denoted with $\leadsto$, 
they represent intermediate steps,  as we describe next.

First, global type $ \igto{p}{q}\lb{l}\langle U\rangle.G$ denotes the commitment of $\pt{p}$ to \emph{output} along label $\lb{l}$.
The global type $ \igto{p}{q}\lb{l}(\!(U)\!).G$ denotes the commitment of $\pt{q}$ to \emph{input} along $\lb{l}$.
Finally, type
$\igto{p}{q}(\!(U)\!).G$ represents the state just before the actual input action by $\pt{q}$.
The definition of annotated mediums (Def.~\ref{d:raether}) is extended as well:
\begin{enumerate}[$\bullet$]
\item $\raether{ \igto{p}{q}\lb{l}\langle U\rangle.G }{\lb{l}}{k}    = $ \\
$ c_\pt{p}(u).\about{k}.\big(\mysel{c_\pt{q}}{\lb{l}};\mycase{k}{\lb{l} : \outp{c_\pt{q}}{v}.( \linkr{u}{v} \para k.\raether{G}{\lb{l}}{k} )}{}\, \big) $

\item $\raether{\igto{p}{q}\lb{l}(\!(U)\!).G}{\lb{l}}{k}    = $ \\
$\mysel{c_\pt{q}}{\lb{l}};\mycase{k}{\lb{l} : \outp{c_\pt{q}}{v}.( \linkr{u}{v} \para k.\raether{G}{\lb{l}_i}{k} )}{} $

\item $\raether{\igto{p}{q}(\!(U)\!).G}{\lb{l}}{k}    = \outp{c_\pt{q}}{v}.( \linkr{u}{v} \para k.\raether{G}{\lb{l}_i}{k} )  $
\end{enumerate}

\section{Omitted Proofs from \S\,\ref{ss:frela}} 

\subsection{Relating SWF Global Types and Typed Mediums\label{app:sec:swf}}
Following Remark~\ref{rem:swf}, 
here we consider the analogous of Theorems~\ref{l:ltypesmedp} and \ref{l:medltypes} but in the setting of global types
which are SWF.

\begin{proposition}\label{p:indpc}
Let $G = G_1 \para G_2$ be a 
global type in \fgtypes. 
If $\G; \D \vdash \ether{G}$
is a compositional typing 
then there exist disjoint $\D_1, \D_2$ such that: 
(i)~$\D = \D_1 \cup \D_2$, and 
(ii) $\G; \D_1 \vdash \ether{G_1} $,
and 
$\G; \D_2 \vdash \ether{G_2} $ are compositional typings.
\end{proposition}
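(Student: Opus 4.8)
The plan is to exploit the fact that for parallel composition of global types, the medium is \emph{syntactically} a parallel composition: $\ether{G_1 \para G_2} = \ether{G_1} \para \ether{G_2}$ (Def.~\ref{d:ether}). So the statement is essentially an inversion property for typing a parallel composition whose right-hand side offered type is $\one$. First I would inspect the given typing derivation $\G; \D \vdash \ether{G_1} \para \ether{G_2} :: z{:}\one$. Since the top-level process constructor is `$\para$', the only typing rule that could have introduced it is \name{T$\cut$} (possibly after some applications of \name{T$\lft\one$} that move $\one$-typed names around, which is harmless and can be permuted). Actually, because the offered type on the right is $\one$ and $z \notin \fn{\ether{G}}$, I would argue that the last non-structural rule is an instance of the derived \name{\textsc{indComp}} rule of Appendix~\ref{app:indc}: the derivation must split $\D$ into $\D_1, \D_2$ with $\G; \D_1 \vdash \ether{G_1} :: -{:}\one$ and $\G; \D_2 \vdash \ether{G_2} :: z{:}C$, where $C = \one$ as well (again since $\ether{G_2}$ does not use $z$). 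This gives (i) $\D = \D_1 \cup \D_2$, disjointness of $\D_1, \D_2$ (forced by linearity of $\D$), and conditions (i) and (iv) of compositional typing for each half.

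Next I would check the remaining conditions of Definition~\ref{d:compty} for the two sub-judgments. Condition (ii) — that each $\D_i$ consists solely of assignments of the form $c_j{:}A_j$ — follows because $\D = \D_1 \cup \D_2$ already has this shape, being a compositional typing for $\ether{G}$. Condition (iii) — that for every $\pt{r}_i \in G_1$ there is a corresponding $c_i{:}A_i \in \D_1$, and likewise for $G_2$ — is where I would use Fact~\ref{fc:name} together with the name convention for mediums: every participant $\pt{r}$ of $G_i$ contributes its free name $c_{\pt{r}}$ to $\fn{\ether{G_i}}$, and since $\ether{G_i}$ is the component typed with $\D_i$ under \name{\textsc{indComp}}, the typing discipline forces $c_{\pt{r}} \in \operatorname{dom}(\D_i)$ (a linear free name of a typed process must be declared in its linear environment). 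Conversely, names in $\D_1 \cap \D_2$ cannot occur, since linear contexts cannot be shared across \name{T$\cut$}/\name{\textsc{indComp}}; combined with $\partp{G_1 \para G_2} = \partp{G_1} \cup \partp{G_2}$, this pins down which assignments of $\D$ land in $\D_1$ versus $\D_2$.

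The main obstacle I anticipate is the bookkeeping around structural congruence and the ``silent'' rules \name{T$\lft\one$} and \name{T$\lft\with_2$}: a priori, the derivation of $\G;\D \vdash \ether{G_1}\para\ether{G_2}$ need not have \name{T$\cut$}/\name{\textsc{indComp}} as literally its last rule — there could be interleaved applications of \name{T$\lft\one$} (introducing $x{:}\one$ assignments) or even \name{T$\lft\with_2$}. I would handle this by a small permutation argument: these rules commute down past the composition rule, or equivalently, one argues by induction on the derivation that any typing of a parallel process of this form can be rearranged into the \name{\textsc{indComp}} shape. A subtle point is that an assignment $x{:}\one$ added by \name{T$\lft\one$} could in principle be attributed to either $\D_1$ or $\D_2$; but since condition (iii) only constrains names corresponding to actual participants, and $x{:}\one$ names that are not free in either medium can be discarded (or placed arbitrarily), this ambiguity does not threaten the statement. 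Once the \name{\textsc{indComp}}-shaped derivation is obtained, the conclusion is immediate. I would then remark that this proposition is exactly the inversion lemma needed in the inductive step $G = G_1 \para G_2$ of Theorem~\ref{l:medltypes}.
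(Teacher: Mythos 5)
Your overall strategy coincides with the paper's: invert the typing of $\ether{G_1}\para\ether{G_2}$, argue that the composition must be an instance of the derived rule \name{\textsc{indComp}}, and then read off the split $\D=\D_1\cup\D_2$, using linearity and the name convention (Fact~\ref{fc:name}) for conditions (ii)--(iii) of Definition~\ref{d:compty}.

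However, the one step whose justification does not hold up is the crucial one: you assert that the last non-structural rule must be \name{\textsc{indComp}} ``because the offered type on the right is $\one$ and $z\notin\fn{\ether{G}}$''. That observation by itself does not exclude a genuine, dependent use of \name{T$\cut$} in which $\ether{G_1}$ offers a session on some channel that $\ether{G_2}$ consumes: \name{T$\cut$} is perfectly compatible with the conclusion offering $z{:}\one$. What rules this out---and what the paper's proof turns on---is the compositional-typing hypothesis itself: a dependent cut restricts the cut name, so that name cannot occur in $\D$; since the free names of the mediums are exactly the participant channels, the cut channel would be $c_k$ for some $\pt{r}_k\in\partp{G_1}\cup\partp{G_2}$, contradicting condition (iii), which requires every participant of $G=G_1\para G_2$ to have its channel declared in $\D$. (Equivalently, one can argue syntactically that $\ether{G}$ carries no top-level restriction, so any application of \name{T$\cut$} must be vacuous, i.e.\ precisely \name{\textsc{indComp}}.) Your second paragraph actually contains the needed ingredients (participant channels are free in the mediums and linear free names must be declared in the corresponding context), but you never deploy them to exclude the dependent cut. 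Once that is patched in, the rest of your bookkeeping---disjointness of $\D_1,\D_2$ by linearity, permuting the silent rules \name{T$\lft\one$} and \name{T$\lft\with_2$}, and the harmlessness of spurious $x{:}\one$ assignments---is sound and elaborates on the paper's terser inversion argument.
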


\begin{proof}
By inversion on typing. By assumption the typing for \ether{G} is a compositional one; hence, 
there is a $c_j{:}A_j \in \D$ for each $\mathtt{r}_j \in \partp{G} = \partp{G_1} \cup \partp{G_2}$.
This means that, necessarily, the typed parallel composition between \ether{G_1} and \ether{G_2} is independent
(in the sense of the derived rule~$\name{\textsc{indComp}}$ in \S\,\ref{app:indc}).
In fact, a non independent composition (i.e., using rule~\name{T$\cut$}) would contradict the assumption that 
$c_j{:}A_j \in \D$ for each $\mathtt{r}_j \in \partp{G} = \partp{G_1} \cup \partp{G_2}$, for there would have to exist
a participant $\pt{r}_k \in \partp{G_1} \cup \partp{G_2}$ but without a  $c_k : A_k \in \D$.
\end{proof}

\subsubsection{SWF Global Types Ensure Compositional Typings\label{app:ltypesmed}}

We state and prove 
Theorem~\ref{l:ltypesmed}, the analogous of Theorem~\ref{l:ltypesmedp}. 
The proof uses the following auxiliary proposition, whose proof follows by construction.

\begin{proposition}\label{p:gtypecons}
If $ \gto{p_1}{p_2}\{\lb{l}_i\langle U_i\rangle.G^i\}_{i \in I}$ is 
SWF
in \fgtypes
then
all $G^i$ ($i \in I$) are 
SWF
too. Also, if 
$G = G_1 \para G_2$ is 
SWF
then $G_j$ ($j \in \{1,2\}$) are 
SWF
too.
\end{proposition}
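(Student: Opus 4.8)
The plan is to argue directly from the definitions of simple projection (Def.~\ref{d:sproj}) and simple well-formedness (Def.~\ref{d:swfltypes}), with no induction: it suffices to show that if every simple projection of $G$ is defined, then so is every simple projection of each immediate subterm of $G$. Two ingredients are used. First, the inclusion $\partp{G'} \subseteq \partp{G}$ whenever $G'$ is an immediate subterm of $G$, which is immediate from the definition of $\partp{\cdot}$ in Def.~\ref{d:gltypes}. Second, the standing convention that a simple projection counts as \emph{defined} only when all its constituent sub-projections are defined and all the side conditions in Def.~\ref{d:sproj} are satisfied.

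For the prefix case $G = \gto{p_1}{p_2}\{\lb{l}_i\langle U_i\rangle.G^i\}_{i\in I}$, fix $i_0 \in I$ and $\pt{r}\in\partp{G^{i_0}}$; then $\pt{r}\in\partp{G}$, so $\sproj{G}{\pt{r}}$ is defined by hypothesis. Inspecting the three clauses of Def.~\ref{d:sproj} for a prefix, one checks in each case that definedness of $\sproj{G}{\pt{r}}$ forces definedness of $\sproj{G^i}{\pt{r}}$ for \emph{every} $i\in I$: if $\pt{r}=\pt{p_1}$ (resp. $\pt{r}=\pt{p_2}$) this is because $\sproj{G}{\pt{r}}$ literally contains each $\sproj{G^i}{\pt{r}}$ as a subexpression; if $\pt{r}\notin\{\pt{p_1},\pt{p_2}\}$ this is because the side condition $\forall i,j\in I.\ \sproj{G^i}{\pt{r}}=\sproj{G^j}{\pt{r}}$ presupposes each $\sproj{G^i}{\pt{r}}$ to be defined. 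In particular $\sproj{G^{i_0}}{\pt{r}}$ is defined; since $\pt{r}$ and $i_0$ were arbitrary, every $G^i$ is SWF.

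For the parallel case $G = G_1\para G_2$, fix $j\in\{1,2\}$ and $\pt{r}\in\partp{G_j}$, so $\pt{r}\in\partp{G}$ and $\sproj{G}{\pt{r}}$ is defined. The two clauses of Def.~\ref{d:sproj} for $\para$ require either $\pt{r}\notin\partp{G_1}\cup\partp{G_2}$ or that $\pt{r}$ belong to exactly one of $\partp{G_1},\partp{G_2}$; since $\pt{r}\in\partp{G_j}$, only the latter can apply, which moreover forces $\pt{r}\notin\partp{G_{j'}}$ for $j'\neq j$ and yields $\sproj{G}{\pt{r}} = \sproj{G_j}{\pt{r}}$. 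Hence $\sproj{G_j}{\pt{r}}$ is defined, and as $\pt{r}$ was arbitrary, $G_j$ is SWF.

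The argument is routine; the only point needing care is the implicit partiality conventions of Def.~\ref{d:sproj} — specifically, that the branch-equality side condition in the third clause of the prefix case genuinely entails definedness of all branch projections, and that SWF of $G_1 \para G_2$ is possible only when $\partp{G_1}$ and $\partp{G_2}$ are disjoint, which is exactly what makes the single-component clause applicable. The analogous statement for merge-based well-formedness (Def.~\ref{d:wfltypes}) follows by the same reasoning, with the branch-equality condition replaced by definedness of the $\fuse$ of the branch projections.
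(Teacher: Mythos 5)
Your argument is correct and is essentially the paper's: the paper asserts Proposition~\ref{p:gtypecons} "by construction," and your case analysis on the clauses of Def.~\ref{d:sproj} (with the explicit partiality convention that a projection is defined only when all its sub-projections and side conditions are) is just that observation spelled out, including the parallel case where SWF rules out shared participants. The closing remark about the merge-based analogue (Prop.~\ref{p:gtypeconsm}) is likewise consistent with the paper.
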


\begin{theorem}\label{l:ltypesmed}
Let $G \in \fgtypes$ be a global type, with $\partp{G} = \{\pt{p}_1, \ldots, \pt{p}_n\}$.
If $G$ is SWF then $$\G; c_1{:}\lt{\sproj{G}{\pt{p}_1}}, \ldots, c_n{:}\lt{\sproj{G}{\pt{p}_n}} \vdash \ether{G}$$ is a compositional typing, for some $\G$.
\end{theorem}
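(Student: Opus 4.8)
The plan is to prove Theorem~\ref{l:ltypesmed} by structural induction on $G$, mirroring the proof of Theorem~\ref{l:ltypesmedp} but exploiting the fact that simple projectability (Def.~\ref{d:sproj}) is \emph{more} restrictive than merge-based projectability: whereas the merge-based case (\S\ref{app:ltypesmedp}) allows projections onto non-participating roles to differ across branches and then patches the mismatch via rule~\name{T$\lft\with_2$}, simple projection requires $\sproj{G_i}{\pt{r}} = \sproj{G_j}{\pt{r}}$ for all $i,j$ whenever $\pt{r} \notin \{\pt{p},\pt{q}\}$. Consequently the left-hand-side types arising from the branches are literally identical, and the use of rule~\name{T$\lft\oplus$} needs no preprocessing. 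This makes the proof strictly simpler than that of Theorem~\ref{l:ltypesmedp}, which is the content of Remark~\ref{rem:swf}.

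First I would set up the induction with the three cases of $\fgtypes$. The base case $G = \gend$ is immediate: $\ether{\gend} = \zero$, $\sproj{\gend}{\pt{r}} = \lend$ so $\lt{\sproj{\gend}{\pt{r}}} = \one$, and $\cdot\,;\cdot \vdash \zero :: z{:}\one$ by rule~\name{T$\rgt\one$}; adding the (empty) set of participants' assignments is vacuous, and any $\with\one$-style assignments that might appear can be absorbed by \name{T$\lft\one$}. The communication case $G = \gto{p}{q}\{\lb{l}_i\langle U_i\rangle.G_i\}_{i \in I}$ is the heart of the argument: by Proposition~\ref{p:gtypecons} each $G_i$ is SWF, so the induction hypothesis gives a compositional typing $\G;\D_i \vdash \ether{G_i}$ where $\D_i$ assigns $c_r{:}\lt{\sproj{G_i}{\pt{r}}}$ to each $\pt{r} \in G_i$. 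For $\pt{r} \notin \{\pt{p},\pt{q}\}$, simple projectability forces all $\sproj{G_i}{\pt{r}}$ equal, so the $\D_i$ agree on those components; I would then type, for each branch, the prefix $c_\pt{p}(u).\mysel{c_\pt{q}}{\lb{l}_i};\outp{c_\pt{q}}{v}.(\linkr{u}{v} \para \ether{G_i})$ using \name{T$\lft\otimes$} (to receive $u$ of type $\lt{U_i}$ on $c_\pt{p}$), \name{T$\lft\with_1$} (to select $\lb{l}_i$ on $c_\pt{q}$), \name{T$\lft\lolli$} (to send the fresh $v$ on $c_\pt{q}$), \name{T$\mathsf{id}$} (for $\linkr{u}{v}$, linking $u{:}\lt{U_i}$ to $v{:}\lt{U_i}$), and \name{T$\cut$}/independent composition to glue $\linkr{u}{v}$ with $\ether{G_i}$. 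Assembling the branches with \name{T$\lft\oplus$} produces a process offering $c_\pt{p}{:}\myoplus{\lb{l}_i : \lt{U_i}\otimes\lt{\sproj{G_i}{\pt{p}}}}{i\in I} = c_\pt{p}{:}\lt{\sproj{G}{\pt{p}}}$ and requiring $c_\pt{q}{:}\mywith{\lb{l}_i : \lt{U_i}\lolli\lt{\sproj{G_i}{\pt{q}}}}{i\in I} = c_\pt{q}{:}\lt{\sproj{G}{\pt{q}}}$, plus the common components for the other roles — exactly the required compositional typing, with $C = \one$ preserved throughout. The composition case $G = G_1 \para G_2$ follows from Proposition~\ref{p:gtypecons} (both $G_j$ SWF), the induction hypothesis, and independent parallel composition (rule~\name{\textsc{indComp}} of \S\ref{app:indc}), since $\ether{G_1 \para G_2} = \ether{G_1} \para \ether{G_2}$ and both components have right-hand side type $\one$ over disjoint participant names (disjointness being part of the side condition on $\para$ for global types).

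The one point requiring a little care — and the closest thing to an obstacle — is handling the side condition in the definition of a \emph{compositional} typing that forbids declaring names for participants \emph{not} in $G$: in the communication case, when $\pt{r} \notin G_i$ for some $i$ but $\pt{r} \in G_j$ for another $j$ (equivalently $\sproj{G_i}{\pt{r}} = \lend$), I need $\lt{\lend} = \one$ in $\D_i$ for that role while $\D_j$ carries genuine behavior; reconciling these under \name{T$\lft\oplus$} again uses \name{T$\lft\one$} to introduce the $\one$ assignment in the branch that does not use it, so all branches end up typed in the same context. I expect this bookkeeping to be entirely routine given the tools already in Fig.~\ref{fig:type-systemii}, and I would relegate the detailed derivation trees to the appendix, stating in the main text only that the argument is a simplification of the proof of Theorem~\ref{l:ltypesmedp} that avoids rule~\name{T$\lft\with_2$} because simple projectability yields identical branch behaviors for non-participating roles.
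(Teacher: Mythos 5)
Your proposal follows essentially the same route as the paper's proof: structural induction on $G$, Proposition~\ref{p:gtypecons} to propagate SWF to the branches, the observation that simple projection forces $\sproj{G_i}{\pt{r}} = \sproj{G_j}{\pt{r}}$ for all third-party roles so that the branch contexts $\D_i$ coincide literally, assembly of the branches with \name{T$\lft\oplus$}, and independent parallel composition (rule~\name{\textsc{indComp}}) for $G_1 \para G_2$; this matches the argument and the derivation of Fig.~\ref{fig:ltypesmed}.

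One correction, though: your claim that the SWF proof \emph{avoids} rule~\name{T$\lft\with_2$} altogether is not accurate, and as written your branch derivations would not yet be composable by \name{T$\lft\oplus$}. In the branch for label $\lb{l}_i$ you type the selection on $c_\pt{q}$ with \name{T$\lft\with_1$}, which gives $c_\pt{q}$ the singleton type $\mywith{\lb{l}_i : \lt{U_i}\lolli\lt{\sproj{G_i}{\pt{q}}}}{\{i\}}$; since \name{T$\lft\oplus$} requires all premises to share the same context, each branch must first be widened with \name{T$\lft\with_2$} so that $c_\pt{q}$ carries the full type $\lt{\sproj{G}{\pt{q}}} = \mywith{\lb{l}_i : \lt{U_i}\lolli\lt{\sproj{G_i}{\pt{q}}}}{i\in I}$ — exactly as in the paper's derivation. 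What SWF buys (and what distinguishes this proof from that of Theorem~\ref{l:ltypesmedp}) is only that \name{T$\lft\with_2$} is never needed to reconcile \emph{differing} branching types for non-participating roles in $\D$; its use on $c_\pt{q}$ itself is unavoidable. With that adjustment, and noting that under SWF your "point requiring care" cannot actually produce mismatched third-party contexts (equality of projections forces the common value, $\lend$, whenever a role is absent from some branch), the proof is sound.
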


\begin{proof} 
By induction on the structure of $G$, relying on  simple projection (Def.~\ref{d:sproj}).
\begin{enumerate}[$\bullet$]
\item (Case $G = \gend$): Then the thesis holds vacuously, for $\partp{G} = \emptyset$.

\item (Case $G = \gto{p_1}{p_2}\{\lb{l}_i\langle U_i\rangle.G^i\}_{i \in I}$~):
By the well-formedness assumption (Def.~\ref{d:swfltypes}), local types $\sproj{G}{\pt{p}_1}, \ldots, \sproj{G}{\pt{p}_n}$
are all defined. 
Writing $\mathtt{p}$ and $\mathtt{q}$ instead of
$ \mathtt{p}_1$ and $ \mathtt{p}_2$,
by Def.~\ref{d:sproj} we have:
\begin{eqnarray}
\sproj{G}{\pt{p}} & = & \pt{p}!\{\lb{l}_i\langle U_i\rangle.\sproj{G^i}{\pt{p}}\}_{i \in I} \label{eq:sp00000} \\
\sproj{G}{\pt{q}} & = & \pt{p}?\{\lb{l}_i\langle U_i\rangle.\sproj{G^i}{\pt{q}}\}_{i \in I} \label{eq:sp0000}\\
\sproj{G}{\pt{p}_j} & = & \sproj{G^1}{\pt{p}_j} \quad\text{for  every $j \in  \{3,\ldots, n\}$}  \label{eq:sp000}
\end{eqnarray}
In \eqref{eq:sp000}, it is useful to recall that Def.~\ref{d:sproj} decrees that, for  every $j \in  \{3,\ldots, n\}$, 
\begin{equation}
\sproj{G^l}{\pt{p}_j} = \sproj{G^k}{\pt{p}_j}, ~~\text{for any $l, k \in I$}\label{eq:sp1}
\end{equation}
which explains why taking $\sproj{G^1}{\pt{p}_j}$ is enough. 
We need to show that:
\begin{equation}
\G; c_\pt{p}{:}\lt{\sproj{G}{\pt{p}}},~ c_\pt{q}{:}\lt{\sproj{G}{\pt{q}}},~  c_{\pt{p}_3}{:}\lt{\sproj{G^1}{\pt{p}_3}},~ \ldots, 
~c_{\pt{p}_n}{:}\lt{\sproj{G^1}{\pt{p}_n}}  \vdash \ether{G} \label{eq:sp0}
\end{equation}
is a compositional typing, for some $\G$.
To improve readability, and without loss of generality,
we describe the case  $I = \{1,2\}$. 
By Def.~\ref{d:ether} we have:
$$
\ether{G} = c_\mathtt{p}\triangleright\!\begin{cases}
\lb{l}_1 : c_\mathtt{p}(u).\mysel{c_\mathtt{q}}{\lb{l}_1};\outp{c_\mathtt{q}}{v}.( \linkr{u}{v} \para \ether{G^1}) \label{eq:sp2}\\
\lb{l}_2 : c_\mathtt{p}(u).\mysel{c_\mathtt{q}}{\lb{l}_2};\outp{c_\mathtt{q}}{v}.( \linkr{u}{v} \para \ether{G^2}) 
\end{cases}
$$
and 
by combining \eqref{eq:sp00000} and \eqref{eq:sp0000} with Def.~\ref{d:loclogt}  we have:
\begin{eqnarray}
\lt{\sproj{G}{\pt{p}}}  &=&  \myoplus{\lb{l}_1: \lt{U_1} \otimes \lt{\sproj{G^1}{\pt{p}}} \,,\,\lb{l}_2: U_2 \otimes \lt{\sproj{G^2}{\pt{p}}} }{\{1,2\}} \\
\lt{\sproj{G}{\pt{q}}}  &=& \mywith{\lb{l}_1: \lt{U_1} \lolli \, \lt{\sproj{G^1}{\pt{q}}}\,,\,\lb{l}_2: U_2 \lolli \, \lt{\sproj{G^2}{\pt{q}}}}{\{1,2\}} 
\end{eqnarray}
Now, 
by assumption $G$ is SWF; by Prop.~\ref{p:gtypecons}, 
then also both its sub-types $G^1$ and $G^2$ are SWF.
Thus, using IH twice we may infer that both
\begin{eqnarray}
\G; c_\pt{p}{:}\lt{\sproj{G^1}{\pt{p}}},\, 
c_\pt{q}{:}\lt{\sproj{G^1}{\pt{q}}},\, 
\underbrace{c_{\pt{p}_3}{:}\lt{\sproj{G^1}{\pt{p}_3}},\, \ldots, 
c_{\pt{p}_n}{:}\lt{\sproj{G^1}{\pt{p}_n}}}_{\Delta_1} \!\vdash \!\ether{G^1} \label{eq:sp2}\\
\G; c_\pt{p}{:}\lt{\sproj{G^2}{\pt{p}}},\, 
c_\pt{q}{:}\lt{\sproj{G^2}{\pt{q}}},\, 
\underbrace{c_{\pt{p}_3}{:}\lt{\sproj{G^2}{\pt{p}_3}},\, \ldots, 
c_{\pt{p}_n}{:}\lt{\sproj{G^2}{\pt{p}_n}}}_{\D_2} \!\vdash \!\ether{G^2}\label{eq:sp3}
\end{eqnarray}
are compositional typings for some $\G$. Now, using \eqref{eq:sp1} we infer  
$\sproj{G^1}{\pt{p}_j} = \sproj{G^2}{\pt{p}_j}$, for all $j \in \{3,\ldots, n\}$. Therefore,
\begin{equation}
\D_1 = \D_2 \label{eq:sp4}.
\end{equation}
Using \eqref{eq:sp2}, the derivation for
\begin{align}
\nnum \G; c_\mathtt{p}{:}\lt{U_1} \otimes \lt{\sproj{G^1}{\pt{p}}}, \, c_\mathtt{q}{:}\mywith{\lb{l}_1 :\lt{U_1} \lolli \lt{\sproj{G^1}{\pt{q}}}, \lb{l}_2 :\lt{U_2} \lolli \lt{\sproj{G^2}{\pt{q}}}}{\{1,2\}}  , \D_1 \vdash \qquad  \\
  \qquad    c_\mathtt{p}(u).\mysel{c_\mathtt{q}}{\lb{l}_1};\outp{c_\mathtt{q}}{v}.(\linkr{u}{v} \para  \ether{G^1}) \label{eq:sp5}
      \end{align}
      is as in Fig.~\ref{fig:ltypesmed}. Using \eqref{eq:sp3}, the derivation for 
\begin{align}
\nnum \G;  c_\mathtt{p}{:}\lt{U_2} \otimes \lt{\sproj{G^2}{\pt{p}}}, \,c_\mathtt{q}{:}\mywith{\lb{l}_1 :\lt{U_1} \lolli \lt{\sproj{G^1}{\pt{q}}}, \lb{l}_2 :\lt{U_2} \lolli \lt{\sproj{G^2}{\pt{q}}}}{\{1,2\}}, \D_2 \vdash \qquad  \\
      c_\mathtt{p}(u).\mysel{c_\mathtt{q}}{\lb{l}_2};\outp{c_\mathtt{q}}{v}.(\linkr{u}{v} \para  \ether{G^2}) \label{eq:sp6}
\end{align}
is obtained in an analogous way. 
We now have all requirements for completing the desired typing.
Using rule \name{T$\lft\oplus$} (cf. Fig.~\ref{fig:type-systemii}) 
with \eqref{eq:sp5} and \eqref{eq:sp6} as premises, and crucially relying on \eqref{eq:sp4}, we may derive:
\begin{align*}
\nnum \G; c_\mathtt{p}{:} \myoplus{\lb{l}_1: \lt{U_1} \otimes \lt{\sproj{G^1}{\pt{p}}} \,,\,\lb{l}_2: \lt{U_2} \otimes \lt{\sproj{G^2}{\pt{p}}} }{ \{1,2\}}, \qquad \qquad \qquad \\
c_\mathtt{q}{:}\mywith{\lb{l}_1 :\lt{U_1} \lolli \lt{\sproj{G^1}{\pt{q}}}, \lb{l}_2 :\lt{U_2} \lolli \lt{\sproj{G^2}{\pt{q}}}}{\{1,2\}}, \D_1 \vdash \ether{G} \label{eq:sp6}
\end{align*}
which is easily seen to be a compositional typing.

\begin{figure}[t!]
{\footnotesize
\[
{
\infer[\name{T${\lft\otimes}$}]
{ \G; c_\mathtt{p} : \lt{U_1} \otimes \lt{\sproj{G^1}{\pt{p}}}, \, c_\mathtt{q}{:}\mywith{\lb{l}_i :(\lt{U_i} \lolli \lt{\sproj{G^i}{\pt{q}}})}{i \in I}  , \D_1 \vdash 
      c_\mathtt{p}(u).\mysel{c_\mathtt{q}}{\lb{l}_1};\outp{c_\mathtt{q}}{v}.(\linkr{u}{v} \para  \ether{G^1} ) }{
      \infer=[\name{T$\lft\with_2$}]{\G; u: \lt{U_1},\,  c_\mathtt{p} :  \lt{\sproj{G^1}{\pt{p}}},\, c_\mathtt{q}{:}\mywith{\lb{l}_i :(\lt{U_i} \lolli \lt{\sproj{G^i}{\pt{q}}})}{i \in I} , \D_1 \vdash \mysel{c_\mathtt{q}}{\lb{l}_1};\outp{c_\mathtt{q}}{v}.( \linkr{u}{v} \para  \ether{G^1} ) \mathstrut}{
\infer[\name{T$\lft\with_1$}]
      {  \G; u: \lt{U_1},\,  c_\mathtt{p} :  \lt{\sproj{G^1}{\pt{p}}},\, c_\mathtt{q}{:}\mywith{\lb{l}_1 :\lt{U_1} \lolli \lt{\sproj{G^1}{\pt{q}}}}{\{1\}} , \D_1 \vdash \mysel{c_\mathtt{q}}{\lb{l}_1};\outp{c_\mathtt{q}}{v}.( \linkr{u}{v} \para  \ether{G^1} )}
      {\infer[\name{T${\lft\lolli}$}]
             { \G; u: \lt{U_1},\,  c_\mathtt{p} :  \lt{\sproj{G^1}{\pt{p}}},\, c_\mathtt{q}{:}\lt{U_1} \lolli \lt{\sproj{G^1}{\pt{q}}} , \D_1 \vdash \outp{c_\mathtt{q}}{v}.( \linkr{u}{v} \para  \ether{G^1} )}
             {\infer[\name{T$\mathsf{id}$}]{\G; u{:}\lt{U_1} \vdash \linkr{u}{v} :: v : \lt{U_1}}{} &
              \infer[\name{T${\lft\one}$}]{ \G; c_\pt{p}{:}\lt{\sproj{G^1}{\pt{p}}},\, c_\pt{q}{:}\lt{\sproj{G^1}{\pt{q}}},\,  \D_1 \vdash  \ether{G^1}}{
            }}}}}
              }
\]
}
\caption{Derivation for $c_\mathtt{p}(u).\mysel{c_\mathtt{q}}{\lb{l}_1};\outp{c_\mathtt{q}}{v}.(\linkr{u}{v} \para  \ether{G^1})$  (cf. \eqref{eq:sp5} in Page~\pageref{eq:sp5})\label{fig:ltypesmed}}
\end{figure}

\item (Case $G = G_1 \para G_2$): 
By Def.~\ref{d:gltypes}, 
we know that $\partp{G} = \partp{G_1} \cup \partp{G_2}$.
Let us write $\partp{G_1} = \{\pt{p}_1, \ldots, \pt{p}_k\}$
and 
$\partp{G_2} = \{\pt{p}_{k+1}, \ldots, \pt{p}_n\}$.
By assumption $G$ is SWF; by Prop.~\ref{p:gtypecons}, then also $G_1$ and $G_2$ are SWF.
Thus, using IH twice we may infer that both
\begin{eqnarray}
\G; \underbrace{c_1{:}\lt{\sproj{G_1}{\pt{p}_1}}, \ldots, c_k{:}\lt{\sproj{G_1}{\pt{p}_k}}}_{\D_1} \vdash \ether{G_1} \\ \label{eq:sp7}
\G; \underbrace{c_{k+1}{:}\lt{\sproj{G_2}{\pt{p}_{k+1}}}, \ldots, c_n{:}\lt{\sproj{G_2}{\pt{p}_n}}}_{\D_2} \vdash \ether{G_2} \label{eq:sp8}
\end{eqnarray}
are compositional typings for some $\G$.
Now, 
by Def.~\ref{d:sproj}, we infer that for all $\pt{p}_i \in \partp{G}$
then either $\sproj{G_1}{\pt{p}_i}$ is defined 
or  $\sproj{G_2}{\pt{p}_i}$ is defined, but not both.
We therefore infer that $\D_1 \# \D_2$. Then, 
using independent parallel composition (cf. rule~\name{\textsc{indComp}} in \S\,\ref{app:indc}) we may infer the typing:
$$
\G;  \D_1, \D_2 \vdash \ether{G_1} \para  \ether{G_2} 
$$
Hence, since by Def.~\ref{d:ether} $\ether{G} = \ether{G_1} \para  \ether{G_2} $, the thesis follows.
\end{enumerate}

\end{proof}

\subsubsection{Compositional Typings Induce SWF Global Types \label{app:s-medltypes}}
We state and prove the analogous of 
Theorem~\ref{l:medltypes} for SWF global types.
\begin{theorem}\label{l:s-medltypes}
Let $G \in \fgtypes$ be a global type. If 
$$\G; c_1{:}A_1, \ldots, c_ n{:}A_ n \vdash \ether{G}$$
is a compositional typing for \ether{G}
then 
$\exists T_1, \ldots, T_ n$ 
s.t. $\sproj{G}{\mathtt{r}_j} \subt T_j$ and 
$\lt{T_j} = A_j$, for all $\pt{r}_j \in G$.
\end{theorem}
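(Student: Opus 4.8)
\textbf{Proof plan for Theorem~\ref{l:s-medltypes}.}
The plan is to proceed by structural induction on the global type $G \in \fgtypes$, mirroring the three cases of Definition~\ref{d:ether} for finite mediums. In each case, I would invert the typing derivation of the compositional typing $\G; c_1{:}A_1, \ldots, c_n{:}A_n \vdash \ether{G}$ and read off the required local types $T_j$ together with the fact that $\sproj{G}{\mathtt{r}_j} \subt T_j$ and $\lt{T_j} = A_j$.

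First, the base case $G = \gend$: here $\partp{G} = \emptyset$, $\ether{G} = \zero$, and the statement holds vacuously. Next, the communication case $G = \gto{p}{q}\{\lb{l}_i\langle U_i\rangle.G_i\}_{i \in I}$. Since $\ether{G}$ begins with a branching $\mycase{c_\pt{p}}{\lb{l}_i : \cdots}{i \in I}$ and its continuations act on $c_\pt{q}$ by selection and output, I would invert through rules \name{T$\lft\oplus$} (forcing $A_\pt{p}$ to have the shape $\myoplus{\lb{l}_i : \cdots}{i \in I}$), then through \name{T$\lft\with_1$}, possibly several applications of \name{T$\lft\with_2$} (which silently introduce extra labeled options in $A_\pt{q}$, exactly what $\subt$ is designed to absorb), and then \name{T$\lft\lolli$} and \name{T$\lft\otimes$} to expose the continuation typings. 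The derived forms will be $A_\pt{p} = \myoplus{\lb{l}_i : \lt{U_i} \otimes A^i_\pt{p}}{i \in I}$ and $A_\pt{q} = \mywith{\lb{l}_j : \lt{U_j} \lolli A^j_\pt{q}}{j \in I\cup J}$ for some extra index set $J$, where $\G; c_\pt{p}{:}A^i_\pt{p}, c_\pt{q}{:}A^i_\pt{q}, \ldots \vdash \ether{G_i}$ is a compositional typing for each $i \in I$. Applying the IH to each $\ether{G_i}$ yields local types witnessing $\sproj{G_i}{\mathtt{r}_k} \subt T^i_k$ and $\lt{T^i_k} = A^i_k$. For $\pt{r} \in \{\pt{p},\pt{q}\}$ one then sets $T_\pt{p} = \pt{p}!\{\lb{l}_i\langle U_i\rangle.T^i_\pt{p}\}_{i\in I}$ and $T_\pt{q} = \pt{p}?\{\lb{l}_j\langle U_j\rangle.T^j_\pt{q}\}_{j\in I\cup J}$ (filling in the $J$-indexed branches from the typings forced by \name{T$\lft\with_2$}); one checks $\lt{T_\pt{p}} = A_\pt{p}$, $\lt{T_\pt{q}} = A_\pt{q}$, and that $\sproj{G}{\pt{p}} = \pt{p}!\{\lb{l}_i\langle U_i\rangle.\sproj{G_i}{\pt{p}}\}_{i\in I} \subt T_\pt{p}$ by simple projection (Def.~\ref{d:sproj}) and congruence of $\fuse$, and similarly for $\pt{q}$. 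For a third participant $\pt{r}_k$ with $k \notin\{\pt{p},\pt{q}\}$, simple projectability (Def.~\ref{d:sproj}) requires $\sproj{G_i}{\pt{r}_k} = \sproj{G_l}{\pt{r}_k}$ for all $i,l\in I$, and the typing gives one common $A_k$ shared across all branches; I would take $T_k = T^1_k$ obtained from the IH applied to $\ether{G_1}$, and the side condition of simple projection guarantees $\sproj{G}{\pt{r}_k} = \sproj{G_1}{\pt{r}_k} \subt T_k$. For the parallel case $G = G_1 \para G_2$, I would invoke Proposition~\ref{p:indpc}: the compositional typing splits as $\G;\D_1 \vdash \ether{G_1}$ and $\G;\D_2 \vdash \ether{G_2}$ with disjoint $\D_1,\D_2$; apply the IH to each, and merge the resulting local types, using that each $\pt{r}_j \in G$ lies in exactly one of $G_1,G_2$ (by Def.~\ref{d:sproj}) so the witnesses do not conflict.

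The main obstacle I expect is the bookkeeping in the communication case for participants not directly involved in the prefix: one must reconcile the \emph{syntactic} requirement of simple projection (literal equality of the projections in all branches) with the fact that the typing derivation may, via repeated \name{T$\lft\with_2$}, present different-looking $\with$ types before they are forced to coincide at \name{T$\lft\oplus$}. The key technical point is that once the branching rule \name{T$\lft\oplus$} is applied, all branch typings on the non-active channels are literally identical, so no $\subt$-slack is needed there; the slack is confined to the receiver $\pt{q}$, where it is exactly handled by allowing $T_\pt{q}$ to be larger than $\sproj{G}{\pt{q}}$ under $\subt$. Making this precise—carefully tracking which index sets $\with$ types carry at each inversion step and confirming that $\lt{\cdot}$ commutes with the constructions—is the fiddly part; everything else is routine. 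A full accounting of the inversion steps and the $\fuse$-congruence lemmas is deferred to the appendix.
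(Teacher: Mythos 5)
Your plan follows essentially the same route as the paper's own proof: structural induction on $G$, inversion on the compositional typing with the slack introduced by \name{T$\lft\with_2$} confined to the receiver and absorbed by $\subt$, the observation that the typing of uninvolved participants is shared literally across the branch premises of \name{T$\lft\oplus$} (so the simple-projection side condition does the rest), and Proposition~\ref{p:indpc} for the parallel case. The only cosmetic difference is that you take the witness for an uninvolved participant from the first branch rather than arguing that the branch-wise witnesses coincide, which is an equally valid (indeed slightly leaner) way to finish that case.
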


\begin{proof} 
By induction on the structure of $G$:
\begin{enumerate}[$\bullet$]
\item (Case $G = \gend$): Then $\ether{G} = \zero$, $\partp{G} = \emptyset$, and the thesis follows vacuously.
Notice that from the assumption $\G; \cdot \vdash \ether{G} $ and rule \name{T$\lft\one$}
 we may derive 
$\G; c_j{:}\one \vdash \ether{G}$, for any name $c_j$. 
In such a case, we observe that Def.~\ref{d:sproj} decrees that 
$\sproj{\gend}{\mathtt{r}_j} = \lend$, for any $\mathtt{r}_j$.
The thesis holds, for $\lt{\lend} = \one$.


\item (Case $G = \gto{p_1}{p_2}\{\lb{l}_i\langle U_i\rangle.G^i\}_{i \in I}$~):  
Then 
$\partp{G} = \{\mathtt{p}_1,  \ldots, \mathtt{p}_n\}$, with $n \geq 2$. 
By Def.~\ref{d:sproj}, we have that
\begin{align}
\sproj{G}{\pt{p}_1} & = \pt{p}_1!\{\lb{l}_i\langle U_i \rangle. \sproj{G^i}{\pt{p}_1}\}_{i \in I} \label{eq:sp101} \\
\sproj{G}{\pt{p}_2} &  = \pt{p}_1?\{\lb{l}_i\langle U_i \rangle. \sproj{G^i}{\pt{p}_2}\}_{i \in I} \label{eq:sp102} \\
\sproj{G}{\pt{p}_j} & =  \sproj{G^1}{\pt{p}_j}, \quad \text{for all $\pt{p}_j \in \{\pt{p}_3, \ldots, \pt{p}_n\}$} \label{eq:sp103}
\end{align}
Without loss of generality,
we describe the case  $I = \{1,2\}$.
Writing $\mathtt{p}$ and $\mathtt{q}$ instead of
$ \mathtt{p}_1$ and $ \mathtt{p}_2$,
by expanding Def.~\ref{d:ether} we obtain:
\begin{equation}
{\small
\ether{G} = c_\mathtt{p}\triangleright\!\begin{cases}
\lb{l}_1 : c_\mathtt{p}(u).\mysel{c_\mathtt{q}}{\lb{l}_1};\outp{c_\mathtt{q}}{v}.( \linkr{u}{v} \para \ether{G^1}) \\
\lb{l}_2 : c_\mathtt{p}(u).\mysel{c_\mathtt{q}}{\lb{l}_2};\outp{c_\mathtt{q}}{v}.( \linkr{u}{v} \para \ether{G^2})
\end{cases} \label{eq:spnewp00}
}
\end{equation}
while by assumption we have the compositional typing
\begin{equation}
\G; c_\mathtt{p}{:}A_1, \, c_\mathtt{q}{:}A_2, \, c_{\mathtt{p}3}{:}A_3, \ldots, c_{\mathtt{p}n}{:}A_n \vdash \ether{G} \label{eq:spnewp0}
\end{equation}
We must exhibit local types $T_1, \ldots, T_n$ 
such that, for all $i \in \{1,\ldots, n\}$: \\
 (i)~$\sproj{G}{\mathtt{p}_i} \subt T_i$ and  $A_i = \lt{T_i}$.

First, by inversion on typing on \eqref{eq:spnewp00} and \eqref{eq:spnewp0}, we  infer 
that 
there exist binary session types
$C_1, C_2$, $D_1, D_2, \ldots,  D_k $, and $U_1, U_2,  \ldots, U_k$ such that
\begin{eqnarray} 
A_1  &=& \myoplus{\lb{l}_1:U_1 \otimes C_1 \, , \, \lb{l}_2:U_2 \otimes C_2}{ } \\
A_2  &=&  \mywith{\lb{l}_1:U_1 \lolli D_1  \, , \, \lb{l}_2:U_2 \lolli D_2 \, , \,  \lb{l}_3{:}U_3 \lolli D_3, \, \cdots \, , \lb{l}_k{:}U_k \lolli D_k}{} \label{eq:rspnewp0}
\end{eqnarray} 
In \eqref{eq:rspnewp0},
 $\lb{l}_3{:}U_3 \lolli D_3, \, \cdots \, , \lb{l}_k{:}U_k \lolli D_k$
correspond to labelled alternatives that may be silently added by rule~\name{T$\lft\with_2$}.
Now, using rule~\name{T$\lft\oplus$}:
\begin{small}
\[\hspace{-1ex}
\inferrule[]
{
\begin{array}{cc}
\G; c_\mathtt{p}{:}U_1 \otimes C _1, \, c_\mathtt{q}{:}A_2, 
 \D \vdash  c_\mathtt{p}(u).\mysel{c_\mathtt{q}}{\lb{l}_1};\outp{c_\mathtt{q}}{v}.(\linkr{u}{v} \para \ether{G^1} ) & (*)
\vspace{1mm}\\ 
\G; c_\mathtt{p}{:}U_2 \otimes C _2, \, c_\mathtt{q}{:}A_2,  \D \vdash  c_\mathtt{p}(u).\mysel{c_\mathtt{q}}{\lb{l}_2};\outp{c_\mathtt{q}}{v}.(\linkr{u}{v} \para \ether{G^2} ) & (**)
\end{array}
}{\G; c_\mathtt{p}{:}A_1, \, c_\mathtt{q}{:}A_2,  \D \vdash \ether{G} :: - : \one\mathstrut}
\]
\end{small}

where $\D= \, c_{\mathtt{p}3}{:}A_3, \ldots, c_{\mathtt{p}n}{:}A_n$, i.e.,
$\D$ collects typings for participants not involved in the exchange. 
Notice that the assumption of compositional typing (in particular, the fact that \ether{G} does not
offer any behavior on the right-hand side typing) is crucial in the above inversion.
By further inversion on typing, we may infer typings for $\ether{G^1}$ and $\ether{G^2}$:
\begin{align}
\G; c_\mathtt{p}{:}C_1, c_\mathtt{q}{:}D_1, \D & \vdash  \ether{G^1}   \label{eq:spnewp1} \\
\G; c_\mathtt{p}{:}C_2, c_\mathtt{q}{:}D_2, \D & \vdash  \ether{G^2}    \label{eq:spnewp2}
\end{align}
This way, e.g., the derivation for~\eqref{eq:spnewp2} is below, based on premise $(**)$ above:
where we have denoted explicitly the several possible uses of silent rule~\name{T$\lft\with_2$}.
It is easy to see that \eqref{eq:spnewp1} and \eqref{eq:spnewp2} are compositional typings.
Then, IH  ensures the existence of 
 local types $R_1,  \ldots, R_n, S_1,  \ldots, S_n$ such that: 
$$\sproj{G^1}{\mathtt{p}_j} \subt R_j \text{  and  } \sproj{G^2}{\mathtt{p}_j} \subt S_j, \quad \text{ for all $j \in \{1,\ldots,n\}$ }$$
In particular, 
$\lt{R_1} = C_1$, 
$\lt{R_2} = D_1$, 
$\lt{S_1} = C_2$, and
$\lt{S_2} = D_2$.

We notice that $\D$ is always kept unchanged in the 
typing derivations for~\eqref{eq:spnewp1} and~\eqref{eq:spnewp2}.
Therefore, for all $k \in \{3, \ldots, n\}$, we have: 
\begin{align}
A_k = \lt{R_k} = \lt{S_k} \quad \text{and} \quad
R_k =  S_k \label{eq:sp104}
\end{align}
In turn, by combining \eqref{eq:sp103} and \eqref{eq:sp104}
we infer the thesis for participants $\pt{p}_3, \ldots, \pt{p}_n$:
\begin{equation*}
\sproj{G}{\pt{p}_k} \subt R_k = T_k,  \quad \text{for all $k \in \{3, \ldots, n\}$}
\end{equation*}

We are thus left to show the thesis for $\pt{p}_1$ and $\pt{p}_2$.
We first establish $T_1$ and $T_2$ by 
building upon local types $R_1, R_2, S_1, S_2$ (just established), following
the typing derivation  for $$c_\mathtt{p}(u).\mysel{c_\mathtt{q}}{\lb{l}_2};\outp{c_\mathtt{q}}{v}.(\linkr{u}{v} \para  \ether{G^2} )$$
(shown above) and 
for $$c_\mathtt{p}(u).\mysel{c_\mathtt{q}}{\lb{l}_1};\outp{c_\mathtt{q}}{v}.(\linkr{u}{v} \para  \ether{G^1} )$$ (which is built analogously).
We thus have:
  \begin{eqnarray*}
  T_1 & = & \mathtt{p}_1!\{\lb{l}_1\langle U_1 \rangle.R_1 \, , \, \lb{l}_2\langle U_2 \rangle.S_1\}  \\
  T_2 & = & \mathtt{p}_1?\{\lb{l}_1\langle U_1 \rangle.R_2 \, , \, \lb{l}_2\langle U_2 \rangle.S_2 \, , \, \lb{l}_3\langle U_3 \rangle.S_3 \, , \, \ldots \, , \lb{l}_k\langle U_k \rangle.S_k\}  
  \end{eqnarray*}
and using \eqref{eq:sp101}, \eqref{eq:sp102}, and Definitions~\ref{d:subt} and~\ref{d:loclogt}, we may verify that: \\
(i)~$\sproj{G}{\pt{p}_1} = T_1$, $\sproj{G}{\pt{p}_2} \subt T_2$
and (ii)~$\lt{T_1} = A_1$  and $\lt{T_2} = A_2$. 

\item (Case $G = G_1 \para G_2$): 
Then $\partp{G} =  \{\mathtt{p}_1, \ldots,  \mathtt{p}_n\}$, with $n \geq 2$.
Recall that $\partp{G_1 \para G_2} =  \partp{G_1} \cup \partp{G_2}$.
By Def.~\ref{d:ether} we may state the compositional typing assumption as
$$
\G; \D \vdash \ether{G_1} \para \ether{G_2} 
$$
where $\D = c_1{:}A_1, \ldots, c_n{:}A_n$, with  a $c_j$ for each $\mathtt{r}_j \in \partp{G_1 \para G_2}$ (with $j \in\{1,\ldots,n\}$).
Let $\partp{G_1} = \{\pt{p}_1, \ldots, \pt{p}_k\}$ and $\partp{G_2} = \{\pt{p}_{k+1}, \ldots, \pt{p}_n\}$.
By Prop.~\ref{p:indpc},
there exist 
disjoint
$\D_1, \D_2$ such that $\D = \D_1 \cup \D_2$
and the compositional typings 
\begin{align*}
\G; \underbrace{c_1{:}A_1, \ldots, c_k{:}A_k}_{\D_1} \vdash & \ether{G_1}  \\
\G; \underbrace{c_{k+1}{:}A_{k+1}, \ldots, c_n{:}A_n}_{\D_2}   \vdash & \ether{G_2} 
\end{align*}
hold. 
We may then apply IH on both $\ether{G_1}$ and $\ether{G_2}$, and
so infer local types $R_1, \ldots, R_k$ and $S_{k+1}, \ldots, S_n$ such that
(i)~$\sproj{G_1}{\pt{p}_i} \subt R_i$
 and 
 (ii)~$A_i = \lt{R_i}$ (with $i \in \{1, \ldots, k\}$)
and 
(iii)~$\sproj{G_1}{\pt{p}_h} \subt S_h$
(iv)~$A_h = \lt{S_h}$ (with $h \in \{{k+1}, \ldots, n\}$).
This is enough to conclude the thesis, for Def.~\ref{d:sproj} says that parallel global types do not share participants.
Therefore, for every $\pt{p}_l \in \{\pt{p}_1, \ldots, \pt{p}_n\}$ then either
$\sproj{G}{\pt{p}_l} = \sproj{G_1}{\pt{p}_l}$ or 
$\sproj{G}{\pt{p}_l} = \sproj{G_2}{\pt{p}_l}$.
\end{enumerate}
\end{proof}

\subsection{Proof of Theorem~\ref{l:ltypesmedp}: MWF Global Types Ensure Compositional Typings\label{app:ltypesmedp}}
We first state the analogous of Prop.~\ref{p:gtypecons}:

\begin{proposition}\label{p:gtypeconsm}
If $ \gto{p_1}{p_2}\{\lb{l}_i\langle U_i\rangle.G^i\}_{i \in I}$ is 
MWF 
in \fgtypes
then
all $G^i$ ($i \in I$) are 
MWF
too. Also, if 
$G = G_1 \para G_2$ is 
MWF
then $G_j$ ($j \in \{1,2\}$) are 
MWF
too.
\end{proposition}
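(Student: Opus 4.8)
The plan is to prove both statements by a direct inversion on the definition of merge-based projection (Def.~\ref{d:proj}), resting on two observations: first, that $\partp{G^i} \subseteq \partp{G}$ for every branch $G^i$ and $\partp{G_j} \subseteq \partp{G}$ for every parallel component (immediate from Def.~\ref{d:gltypes}); second, that $\fuse$ (Def.~\ref{d:mymerg}) is a \emph{partial} operator, so that a merge $\fuse_{i \in I}\proj{G^i}{\pt{r}}$ can be defined only if each $\proj{G^i}{\pt{r}}$ is defined. Since we work in $\fgtypes$, the recursion clauses of Def.~\ref{d:proj} play no role. The overall shape of the argument mirrors the proof of the simple-projection analogue, Prop.~\ref{p:gtypecons}.

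For the branching case, suppose $G = \gto{p}{q}\{\lb{l}_i\langle U_i\rangle.G^i\}_{i \in I}$ is MWF, fix $j \in I$, and take an arbitrary $\pt{r} \in \partp{G^j}$. Since $\partp{G^j} \subseteq \partp{G}$, MWF of $G$ ensures that $\proj{G}{\pt{r}}$ is defined; I would then inspect which clause of Def.~\ref{d:proj} produced it. If $\pt{r} = \pt{p}$, then $\proj{G}{\pt{r}} = \pt{p}!\{\lb{l}_i\langle U_i\rangle.\proj{G^i}{\pt{r}}\}_{i\in I}$, which is defined only when every $\proj{G^i}{\pt{r}}$ is, in particular $\proj{G^j}{\pt{r}}$; the case $\pt{r} = \pt{q}$ is symmetric; otherwise $\proj{G}{\pt{r}} = \fuse_{i\in I}\proj{G^i}{\pt{r}}$, and partiality of $\fuse$ forces each argument, hence $\proj{G^j}{\pt{r}}$, to be defined. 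As $\pt{r}$ was arbitrary in $\partp{G^j}$, we conclude $G^j$ is MWF.

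For the parallel case, suppose $G = G_1 \para G_2$ is MWF. The first step is to observe $\partp{G_1}\cap\partp{G_2} = \emptyset$: if some $\pt{r}$ belonged to both, no clause of Def.~\ref{d:proj} for $\proj{(G_1\para G_2)}{\pt{r}}$ would apply (each requires $\pt{r}$ to lie in at most one component), so $\proj{G}{\pt{r}}$ would be undefined, contradicting MWF. Now fix $j \in \{1,2\}$ and take $\pt{r} \in \partp{G_j}$. Disjointness gives $\pt{r} \notin \partp{G_{3-j}}$, so Def.~\ref{d:proj} yields $\proj{G}{\pt{r}} = \proj{G_j}{\pt{r}}$; since the left-hand side is defined, so is the right, and therefore $G_j$ is MWF.

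I do not anticipate a genuine obstacle: the proof is entirely an inversion on Def.~\ref{d:proj}. The only point deserving a moment's care is the ``bystander'' subcase of the branching, where $\fuse$ may legitimately combine branching types with differing label sets (item~4 of Def.~\ref{d:mymerg}); one must not be misled into thinking this relaxes the requirement that the arguments themselves be defined. But ``$\fuse$ undefined whenever an argument is undefined'' holds simply because $\fuse$ is declared a partial operator, so nothing subtle is at stake.
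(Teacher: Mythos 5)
Your proof is correct and follows the only natural route: the paper itself gives no explicit argument for Proposition~\ref{p:gtypeconsm} (it treats it, like its SWF analogue Prop.~\ref{p:gtypecons}, as immediate ``by construction''), and your inversion on Def.~\ref{d:proj} --- using $\partp{G^i}\subseteq\partp{G}$, strictness/partiality of $\fuse$, and the disjointness forced by the parallel clause --- is exactly that by-construction argument spelled out. Nothing further is needed.
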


We now repeat the statement in Page~\pageref{l:ltypesmedp}:

\begin{theorem}[\ref{l:ltypesmedp}]
Let $G \in \fgtypes$ be a global type, with $\partp{G} = \{\pt{p}_1, \ldots, \pt{p}_n\}$. \\
If $G$ is MWF then $\G; c_1{:}\lt{\proj{G}{\pt{p}_1}}, \ldots, c_n{:}\lt{\proj{G}{\pt{p}_n}} \vdash \ether{G}$ is a compositional typing, for some $\G$.
\end{theorem}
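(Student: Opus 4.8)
The plan is to prove Theorem~\ref{l:ltypesmedp} by structural induction on $G \in \fgtypes$, following very closely the structure of the SWF case already established as Theorem~\ref{l:ltypesmed} in Appendix~\ref{app:ltypesmed}, but accounting for the extra flexibility of the merge operator $\fuse$ (Def.~\ref{d:mymerg}). The base case $G = \gend$ is immediate, since $\partp{G} = \emptyset$ and $\ether{\gend} = \zero$, which is typed by $\G;\cdot \vdash \zero :: z{:}\one$ via rule \name{T$\rgt\one$}. The parallel case $G = G_1 \para G_2$ is handled exactly as in Theorem~\ref{l:ltypesmed}: by Prop.~\ref{p:gtypeconsm} both $G_1$ and $G_2$ are MWF, so the induction hypothesis yields compositional typings $\G; \D_1 \vdash \ether{G_1}$ and $\G; \D_2 \vdash \ether{G_2}$; by Def.~\ref{d:proj} parallel global types have disjoint participant sets, so $\D_1 \# \D_2$, and rule~\name{\textsc{indComp}} (Appendix~\ref{app:indc}) gives $\G; \D_1,\D_2 \vdash \ether{G_1}\para\ether{G_2} = \ether{G}$, which is again compositional since the right-hand side remains $z{:}\one$.

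The interesting case is the communication prefix $G = \gto{p}{q}\{\lb{l}_i\langle U_i\rangle.G_i\}_{i \in I}$. Writing $\pt{p}=\pt{p}_1$, $\pt{q}=\pt{p}_2$ and using Def.~\ref{d:proj}, we have $\proj{G}{\pt{p}} = \pt{p}!\{\lb{l}_i\langle U_i\rangle.\proj{G_i}{\pt{p}}\}_{i\in I}$, $\proj{G}{\pt{q}} = \pt{p}?\{\lb{l}_i\langle U_i\rangle.\proj{G_i}{\pt{q}}\}_{i\in I}$, and for every other participant $\pt{r}$, $\proj{G}{\pt{r}} = \fuse_{i\in I}\,\proj{G_i}{\pt{r}}$. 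By the well-formedness assumption all these are defined, hence by Prop.~\ref{p:gtypeconsm} each $G_i$ is MWF, so the IH gives, for each $i\in I$, a compositional typing $\G; c_\pt{p}{:}\lt{\proj{G_i}{\pt{p}}}, c_\pt{q}{:}\lt{\proj{G_i}{\pt{q}}}, \til{c}{:}\lt{\proj{G_i}{-}} \vdash \ether{G_i}$ for the remaining participants. From each of these, as in Fig.~\ref{fig:ltypesmed}, I build the derivation for $c_\pt{p}(u).\mysel{c_\pt{q}}{\lb{l}_i};\outp{c_\pt{q}}{v}.(\linkr{u}{v}\para\ether{G_i})$, using \name{T$\lft\otimes$}, then \name{T$\lft\with_1$}, then \name{T$\lol$L}, then \name{T$\mathsf{id}$} on the $\linkr{u}{v}$ branch and the IH typing on the $\ether{G_i}$ branch. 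Finally rule~\name{T$\lft\oplus$} assembles the $|I|$ branches into a typing for $\ether{G}$ at $c_\pt{p}{:}\myoplus{\lb{l}_i:\lt{U_i}\otimes\lt{\proj{G_i}{\pt{p}}}}{i\in I} = \lt{\proj{G}{\pt{p}}}$.

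The main obstacle is precisely the point flagged in the paper's proof sketch: rule~\name{T$\lft\oplus$} requires the continuation typings $\D, c_\pt{q}{:}A_i$ in all branches to be \emph{syntactically identical} (only the type $A_i$ of $c_\pt{p}$ may vary), but the IH only supplies, for each branch $i$, the typing $c_\pt{q}{:}\lt{\proj{G_i}{\pt{q}}}$ and $c_\pt{r}{:}\lt{\proj{G_i}{\pt{r}}}$, and these need not agree across branches --- the merge $\proj{G}{\pt{r}} = \fuse_{i}\proj{G_i}{\pt{r}}$ may legitimately collect strictly more branching labels than any individual $\proj{G_i}{\pt{r}}$, and similarly for $\proj{G}{\pt{q}}$ in nested branchings inside the $U_i$/$T_i$. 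The fix, as in the proof sketch, is to use the silent rule~\name{T$\lft\with_2$} (and, at deeper positions, to observe that $\lt{\cdot}$ maps branching local types to $\with$-types) to freely pad each $c_\pt{q}$- and $c_\pt{r}$-typing up to the common type $\lt{\fuse_{i}\proj{G_i}{\pt{r}}}$, which is well-defined precisely because $G$ is MWF. I would phrase this as a small auxiliary lemma: if $T_1 \subt T$ (Def.~\ref{d:subt}) then any derivation $\G;\D, c{:}\lt{T_1} \vdash P :: z{:}C$ can be extended to $\G;\D, c{:}\lt{T} \vdash P :: z{:}C$ by a finite number of applications of \name{T$\lft\with_2$} --- provable by induction on the $\fuse$-complement $T'$ with $T_1\fuse T' = T$ --- and then apply it with $T_1 = \proj{G_i}{\pt{r}}$ and $T = \proj{G}{\pt{r}}$ before invoking \name{T$\lft\oplus$}. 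With all branches thus normalized to a common left-hand environment, \name{T$\lft\oplus$} applies and yields the compositional typing, since conditions (i)--(iv) of Def.~\ref{d:compty} are manifestly preserved (the right-hand side is $z{:}\one$ throughout, and $\D$ has exactly one $c_i{:}A_i$ per participant by construction).
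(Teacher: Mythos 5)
Your proposal is correct and follows essentially the same route as the paper's proof: structural induction on $G$, reusing the SWF argument for $\gend$ and for $G_1 \para G_2$ (via Prop.~\ref{p:gtypeconsm}, disjointness of participants, and \name{\textsc{indComp}}), and in the communication case building each branch derivation as in Fig.~\ref{fig:ltypesmed} and reconciling the branch environments up to the merged projections with the silent rule \name{T$\lft\with_2$} before applying \name{T$\lft\oplus$}. Your auxiliary padding lemma (lifting a typing at $\lt{T_1}$ to $\lt{T}$ when $T_1 \subt T$) is just a cleaner packaging of the paper's inner induction on the size of $\D$ with its case analysis on $\one$, $\oplus$, and $\with$ types, and is sound provided its proof is run on the typing derivation (so that padding can also be inserted at nested $\with$ positions), which your parenthetical remark already anticipates.
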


\begin{proof} 
By induction on the structure of $G$. 
The most interesting case is when 
$G = \gto{p_1}{p_2}\{\lb{l}_i\langle U_i\rangle.G^i\}_{i \in I}$.
Remaining cases are as in the proof of Theorem~\ref{l:ltypesmed}~(Page~\pageref{l:ltypesmed}).

By the well-formedness assumption (Def.~\ref{d:wfltypes}), local types $\proj{G}{\pt{p}_1}, \ldots, \proj{G}{\pt{p}_n}$
are all defined. 
Writing $\mathtt{p}$ and $\mathtt{q}$ instead of
$ \mathtt{p}_1$ and $ \mathtt{p}_2$,
by Def.~\ref{d:proj} we have:
\begin{eqnarray}
\proj{G}{\pt{p}} & = & \pt{p}!\{\lb{l}_i\langle U_i\rangle.\proj{G^i}{\pt{p}}\}_{i \in I} \label{eq:p00000} \\
\proj{G}{\pt{q}} & = & \pt{p}?\{\lb{l}_i\langle U_i\rangle.\proj{G^i}{\pt{q}}\}_{i \in I} \label{eq:p0000}\\
\proj{G}{\pt{p}_j} & = & \fuse_{i \in I} \, \proj{G^i}{\pt{p}_j} \quad\text{for  every $j \in  \{3,\ldots, n\}$}  \label{eq:p000}
\end{eqnarray}
We need to show that, for some $\G$, 
\begin{equation}
\G; c_\pt{p}{:}\lt{\proj{G}{\pt{p}}},~ 
c_\pt{q}{:}\lt{\proj{G}{\pt{q}}},~ 
c_{\pt{p}_3}{:}\lt{\proj{G}{\pt{p}_3}},~ \ldots, 
~c_{\pt{p}_n}{:}\lt{\proj{G}{\pt{p}_n}}  \vdash \ether{G}  \label{eq:p0}
\end{equation}
is a compositional typing.
Without loss of generality, we detail the case  $I = \{1,2\}$. 
By Def.~\ref{d:ether}, we have:
$$
\ether{G} = c_\mathtt{p}\triangleright\! \begin{cases}
\lb{l}_1 : c_\mathtt{p}(u).\mysel{c_\mathtt{q}}{\lb{l}_1};\outp{c_\mathtt{q}}{v}.( \linkr{u}{v} \para \ether{G^1}) \label{eq:p2}\\
\lb{l}_2 : c_\mathtt{p}(u).\mysel{c_\mathtt{q}}{\lb{l}_2};\outp{c_\mathtt{q}}{v}.( \linkr{u}{v} \para \ether{G^2}) 
\end{cases}
$$
and 
by combining \eqref{eq:p00000} and \eqref{eq:p0000} with Def.~\ref{d:loclogt}  we have:
\begin{eqnarray*}
\lt{\proj{G}{\pt{p}}}  &=&  \myoplus{\lb{l}_1: \lt{U_1} \otimes \lt{\proj{G^1}{\pt{p}}} \,,\,\lb{l}_2:\lt{U_2} \otimes \lt{\proj{G^2}{\pt{p}}} }{i \in I} \\
\lt{\proj{G}{\pt{q}}}  &=& \mywith{\lb{l}_1: \lt{U_1} \lolli \, \lt{\proj{G^1}{\pt{q}}}\,,\,\lb{l}_2: \lt{U_2} \lolli \, \lt{\proj{G^2}{\pt{q}}}}{i \in I} 
\end{eqnarray*}
Now, 
by assumption $G$ is MWF; then, by Prop.~\ref{p:gtypeconsm}, 
both $G^1$ and $G^2$ are MWF too.
Therefore, by using IH twice we may infer that both
\begin{eqnarray}
\G; c_\pt{p}{:}\lt{\proj{G^1}{\pt{p}}},\, 
c_\pt{q}{:}\lt{\proj{G^1}{\pt{q}}},\, 
\underbrace{c_{\pt{p}_3}{:}\lt{\proj{G^1}{\pt{p}_3}},\, \ldots, 
c_{\pt{p}_n}{:}\lt{\proj{G^1}{\pt{p}_n}}}_{\Delta_1} \!\vdash \!\ether{G^1}  \label{eq:p2}\\
\G; c_\pt{p}{:}\lt{\proj{G^2}{\pt{p}}},\, 
c_\pt{q}{:}\lt{\proj{G^2}{\pt{q}}},\, 
\underbrace{c_{\pt{p}_3}{:}\lt{\proj{G^2}{\pt{p}_3}},\, \ldots, 
c_{\pt{p}_n}{:}\lt{\proj{G^2}{\pt{p}_n}}}_{\D_2} \!\vdash \!\ether{G^2}  \label{eq:p3}
\end{eqnarray}
are compositional typings, for any $\G$. 

Now, to obtain a compositional typing for \ether{G}, we must first 
address the fact that, 
differently from what occurs in the proof of Theorem~\ref{l:ltypesmed} (cf. equality \eqref{eq:sp4} in Page~\pageref{eq:sp4}), in this case
it is not necessarily the case that $\D_1$ and $\D_2$ are equal.
This discrepancy is due to the merge-based well-formedness assumption, which admits non identical behaviors in branches $G^1$ and $G^2$
in the case of (local) branching types.

We proceed by induction on $k$, defined as the size of $\D_1$ and $\D_2$ (note that $k = n -2$).
\begin{enumerate}[$-$]
\item (Case $k = 0$): Then $\D_1 = \D_2 = \emptyset$ and $\pt{p}$ and $\pt{q}$ are the only participants  in $G$.
Let us write $A_\pt{q}$ to stand for the session type 
$$\mywith{\lb{l}_1 :\lt{U_1} \lolli \lt{\proj{G^1}{\pt{q}}} \,,\, \lb{l}_2 :\lt{U_2} \lolli \lt{\proj{G^2}{\pt{q}}}}{}$$

Based on \eqref{eq:p2} and \eqref{eq:p3}, following the derivation in Fig.~\ref{fig:ltypesmed},
we may derive typings 
\begin{align}
\G; c_\mathtt{p} : \lt{U_1} \otimes \lt{\proj{G^1}{\pt{p}}}, \, c_\mathtt{q}{:}A_\pt{q} \vdash 
      c_\mathtt{p}(u).\mysel{c_\mathtt{q}}{\lb{l}_1};\outp{c_\mathtt{q}}{v}.(\linkr{u}{v} \para  \ether{G^1})  \label{eq:p4}\\
      \G; c_\mathtt{p} : \lt{U_2} \otimes \lt{\proj{G^2}{\pt{p}}}, \, c_\mathtt{q}{:}A_\pt{q} \vdash 
      c_\mathtt{p}(u).\mysel{c_\mathtt{q}}{\lb{l}_2};\outp{c_\mathtt{q}}{v}.(\linkr{u}{v} \para  \ether{G^2}) \label{eq:p5}
\end{align}
Then, using \eqref{eq:p4} and \eqref{eq:p5} we may derive, using rule~\name{T$\lft\oplus$}:
$$
 \G; c_\mathtt{p} : \mywith{\lb{l}_1 : \lt{U_1} \otimes \lt{\proj{G^1}{\pt{p}}} \,,\, \lb{l}_2 : \lt{U_2} \otimes \lt{\proj{G^2}{\pt{p}}}}{}, \, c_\mathtt{q}{:}A_\pt{q} \vdash 
      \ether{G}
$$
and the thesis follows.

\item (Case $k > 0$): Then there exists
a participant  $\pt{p}_k$, 
 types $B_1 = \lt{\proj{G^1}{\pt{p}_k}}$, $B_2  = \lt{\proj{G^2}{\pt{p}_k}}$ and environments
$\D'_1, \D'_2$ such that $\D_1 = c_{\pt{p}_k}{:}B_1, \D'_1$ and $\D_2 = c_{\pt{p}_k}{:}B_2, \D'_2$.

By induction hypothesis, 
there is a compositional typing starting from 
\begin{align*}
\G; c_\pt{p}{:}\lt{\proj{G^1}{\pt{p}}},\, 
c_\pt{q}{:}\lt{\proj{G^1}{\pt{q}}},\, {\D'_1} & \vdash \!\ether{G^1}   \\
\G; c_\pt{p}{:}\lt{\proj{G^2}{\pt{p}}},\, 
c_\pt{q}{:}\lt{\proj{G^2}{\pt{q}}},\, {\D'_2} & \vdash \!\ether{G^2}  
\end{align*}
 resulting into 
$$
 \G; c_\mathtt{p} : \mywith{\lb{l}_1 : \lt{U_1} \otimes \lt{\proj{G^1}{\pt{p}}} \,,\, \lb{l}_2 : \lt{U_2} \otimes \lt{\proj{G^2}{\pt{p}}}}{}, \, c_\mathtt{q}{:}A_\pt{q}, \D'_1 \vdash 
      \ether{G}
$$
since $\D'_1 = \D'_2$.
To extend the typing derivation to $\D_1$ and $\D_2$, 
we proceed by a case analysis on the shape of $B_1$ and $B_2$.
 We aim to show that (a) $B_1$ and $B_2$ are already identical session types or (b) that 
typing allows us to transform them into identical types. 
We rely  on the definition of $\fuse$ (Def.~\ref{d:mymerg}). There are three cases:
\begin{enumerate}[(1)]
\item Case $B_1 = \one$: 
Then, since Def.~\ref{d:mymerg} decrees $\lend \fuse \lend = \lend$
and the fact that merge-based well-definedness depends on $\fuse$, 
we may infer 
$B_2 = \one$.
Hence, $\D_1 = \D_2$ and the
desired derivation is obtained as in the base case.

\item Case $B_1 = \myoplus{\lb{l}_k: A_k}{k \in K}$: Then, similarly as in the previous sub case, 
by Def.~\ref{d:mymerg} we immediately infer that $B_2 = \myoplus{\lb{l}_k: A_k}{k \in K}$. We this may infer that 
 $\D_1 = \D_2$ and complete the derivation.

\item Case $B_1 = \mywith{\lb{l}_h: A_h}{h \in H}$: This is the interesting case, for 
even if 
merge-based well-formedness of $G$
ensures that both 
$B_1$ and $B_2$
are both  selection types, 
it is not necessarily the case
that $B_1$ and $B_2$ are identical.

If $B_1$ and $B_2$ are identical then we proceed as in previous sub cases.
Otherwise, then 
due to $\fuse$
there are some labeled alternatives in $B_1$ but not in $B_2$ and/or viceversa.
Also, Def.~\ref{d:mymerg} ensures that common options (if any) are identical in both branches.
We may then use the rule \name{T$\lft\with_2$} to ``complement'' occurrences of types 
$B_1$ and $B_2$ in \eqref{eq:p2} and \eqref{eq:p3}
with appropriate options, 
so as to make them coincide and achieve identical typing. 
This rule is silent; as labels are finite, this completing task is also finite,  and results into $\D_1 = \D_2$. 
We then may complete the derivation as in the base case.
This concludes the proof.
\end{enumerate}
\end{enumerate}
\end{proof}

\subsection{Proof of Theorem~\ref{l:medltypes}: Compositional Typings Induce MWF Global Types \label{app:medltypes}}

We repeat the statement given in Page \pageref{l:medltypes}:
\begin{theorem}[\ref{l:medltypes}]
Let $G \in \fgtypes$ be a 
global type. 
If 
$$\G; c_1{:}A_1, \ldots, c_ n{:}A_ n \vdash \ether{G}$$
is a compositional typing for \ether{G}
then 
$\exists 
T_1, \ldots, T_ n$ 
s.t. 
$\proj{G}{\mathtt{r}_j} \subt T_j$ and 
$\lt{T_j} = A_j$, 
for all $\pt{r}_j \in G$.
\end{theorem}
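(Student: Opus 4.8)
The plan is to prove Theorem~\ref{l:medltypes} by structural induction on $G \in \fgtypes$, in the mirror image of the proof of Theorem~\ref{l:ltypesmedp}. The base case $G = \gend$ is immediate: $\ether{G} = \zero$, $\partp{G} = \emptyset$, so the statement holds vacuously (modulo the observation that $\name{T$\lft\one$}$ may have introduced dummy assignments $c_j{:}\one$, which are matched by $\proj{\gend}{\pt{r}_j} = \lend$ and $\lt{\lend} = \one$, with $\subt$ reflexive).

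For the inductive step with $G = \gto{p}{q}\{\lb{l}_i\langle U_i\rangle.G_i\}_{i \in I}$, I would argue by inversion on the typing derivation of $\ether{G}$. Since the typing is \emph{compositional}, the right-hand side is $z{:}\one$, so the outermost construct $\mycase{c_\pt{p}}{\ldots}{i\in I}$ must have been typed by rule~$\name{T$\lft\oplus$}$ (possibly after uses of the silent rule~$\name{T$\lft\oplus_2$}$, but the \emph{subject} $c_\pt{p}$ is linear so it must be consumed; hence $A_\pt{p} = \myoplus{\lb{l}_i : U'_i \otimes C_i}{i\in I'}$ for some binary types $U'_i, C_i$, where $I'$ may be larger than $I$ due to $\name{T$\lft\oplus_2$}$ — this is why we need $\subt$/$\subts$ on the $\pt{p}$ side, though here it is the $\with$ side where $\name{T$\lft\with_2$}$ bites). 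Then each branch $c_\pt{p}(u).\mysel{c_\pt{q}}{\lb{l}_i};\outp{c_\pt{q}}{v}.(\linkr{u}{v}\para\ether{G_i})$ is inverted in turn: $\name{T$\lft\otimes$}$ splits off $u{:}U'_i$; $\name{T$\lft\with_1$}$ (and possibly $\name{T$\lft\with_2$}$) handles the selection on $c_\pt{q}$, forcing $A_\pt{q}$ to be a branching type $\mywith{\lb{l}_j : V_j \lolli D_j}{j\in J}$ with $\{l_i\}_{i\in I}\subseteq\{l_j\}_{j\in J}$ and $V_i = U'_i$; $\name{T$\lft\lolli$}$ together with the $\name{T$\mathsf{id}$}$ typing of $\linkr{u}{v}$ forces $U'_i = \lt{U_i}$ (recalling $\lt{U_i}=\one$ when $U_i$ is a base type — here the $\linkr{}{}$ carries whatever binary type was received, so the match with $\lt{U_i}$ follows from the structure of the branch, using that base-type messages are opaque $\one$-typed names); the residual typing is exactly a compositional typing for $\ether{G_i}$ over $c_\pt{p}{:}C_i,\ c_\pt{q}{:}D_i,\ \D$, where $\D$ collects the unchanged assignments for non-participating roles. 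Applying the IH to each $\ether{G_i}$ yields local types witnessing $\proj{G_i}{\pt{r}_k}\subt T_k^{(i)}$ and $\lt{T_k^{(i)}} = A_k$ for $k\notin\{p,q\}$ and $\lt{\cdot}$-images of $C_i,D_i$ for $p,q$. Since $\D$ is threaded identically through every branch, the IH outputs for non-participants agree across branches, so $\fuse_{i\in I}\proj{G_i}{\pt{r}_k}$ is defined and bounded above (w.r.t.\ $\subt$) by the common $T_k$; for $p$ set $T_p = \pt{p}!\{\lb{l}_i\langle U_i\rangle.T_p^{(i)}\}_{i\in I}$ and for $q$ set $T_q = \pt{p}?\{\lb{l}_j\langle U_j\rangle.T_q^{(j)}\}_{j\in J}$, padding with the extra $\name{T$\lft\with_2$}$ options; one then checks $\proj{G}{\pt{p}} = T_p$, $\proj{G}{\pt{q}}\subt T_q$, and $\lt{T_p}=A_p$, $\lt{T_q}=A_q$ directly from Definitions~\ref{d:proj},~\ref{d:subt}, and~\ref{d:loclogt}.

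The case $G = G_1 \para G_2$ uses Proposition~\ref{p:indpc}: a compositional typing for $\ether{G}=\ether{G_1}\para\ether{G_2}$ must split as an \emph{independent} composition, giving disjoint $\D_1,\D_2$ with $\D = \D_1\cup\D_2$ and compositional typings for $\ether{G_1}$ and $\ether{G_2}$ separately; the IH applies to each, and since $\partp{G_1}$ and $\partp{G_2}$ are disjoint (Def.~\ref{d:proj}) the witness local types simply combine with no merge obligation, and $\proj{G}{\pt{r}}$ coincides with $\proj{G_i}{\pt{r}}$ for the unique $i$ with $\pt{r}\in G_i$.

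I expect the main obstacle to be the bookkeeping around the silent rule~$\name{T$\lft\with_2$}$: the left-hand $\with$-type $A_\pt{q}$ recovered by inversion may carry spurious labeled alternatives that appear in $A_\pt{q}$ but in none of the actual branches of $\ether{G}$, and conversely the branch typings only constrain the options actually used. Making precise that $\subt$ (Def.~\ref{d:subt}) exactly absorbs this slack — and verifying that the merge $\fuse$ of the per-branch projections for non-participating roles is well-defined and dominated by the reconstructed $T_k$ — is the delicate point; it is the converse-side analogue of the argument in Theorem~\ref{l:ltypesmedp} where $\name{T$\lft\with_2$}$ is used \emph{constructively} to equalize typings before applying $\name{T$\lft\oplus$}$. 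Everything else is routine inversion on the type system of Fig.~\ref{fig:type-systemii}. Full details are deferred to Appendix~\ref{app:medltypes}.
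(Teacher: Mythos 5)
Your proposal is correct and follows essentially the same route as the paper's proof (Appendix~\ref{app:medltypes}): structural induction, inversion on the compositional typing (exploiting the $\one$ on the right-hand side), absorption of the slack introduced by the silent rule~\name{T$\lft\with_2$} via $\subt$, the observation that $\D$ is threaded unchanged through every branch so the non-participants' reconstructed types agree across branches, and Proposition~\ref{p:indpc} for the parallel case. The only slip is the parenthetical appeal to a rule \name{T$\lft\oplus_2$}, which does not exist in the system---rule~\name{T$\lft\oplus$} forces $A_{\pt{p}}$ to carry exactly the labels in $I$, so, as you yourself then note, the slack lives only on the $\with$ side, exactly as in the paper.
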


\begin{proof}
By induction on the structure of $G$:
\begin{enumerate}[$\bullet$]
\item (Case $G = \gend$): Then $\ether{G} = \zero$, $\partp{G} = \emptyset$, and the thesis follows vacuously.
Notice that from the assumption $\G; \cdot \vdash \ether{G} $ and rule \name{T$\lft\one$}
 we may derive 
$\G; c_j{:}\one \vdash \ether{G}$, for any name $c_j$. 
In such a case, we observe that Def.~\ref{d:proj} decrees that 
$\proj{\gend}{\mathtt{r}_j} = \lend$, for any $\mathtt{r}_j$.
The thesis holds, for $\lt{\lend} = \one$.


\item (Case $G = \gto{p_1}{p_2}\{\lb{l}_i\langle U_i\rangle.G^i\}_{i \in I}$~):  
Then 
$\partp{G} = \{\mathtt{p}_1,  \ldots, \mathtt{p}_n\}$, with $n \geq 2$. 
By Def.~\ref{d:proj}, we have that
\begin{align}
\proj{G}{\pt{p}_1} & = \pt{p}_1!\{\lb{l}_i\langle U_i \rangle. \proj{G^i}{\pt{p}_1}\}_{i \in I} \label{eq:p101} \\
\proj{G}{\pt{p}_2} &  = \pt{p}_1?\{\lb{l}_i\langle U_i \rangle. \proj{G^i}{\pt{p}_2}\}_{i \in I} \label{eq:p102} \\
\proj{G}{\pt{p}_j} & =  \fuse_{i \in I} \, \proj{G^i}{\pt{p}_j} \label{eq:p103}
\end{align}
Without loss of generality,
we detail the case  $|I| = 2$.
Writing $\mathtt{p}$ and $\mathtt{q}$ instead of
$ \mathtt{p}_1$ and $ \mathtt{p}_2$,
by expanding Def.~\ref{d:ether} we obtain:
\begin{equation}
{\small
\ether{G} = c_\mathtt{p}\triangleright\!\begin{cases}
\lb{l}_1 : c_\mathtt{p}(u).\mysel{c_\mathtt{q}}{\lb{l}_1};\outp{c_\mathtt{q}}{v}.( \linkr{u}{v} \para \ether{G^1}) \\
\lb{l}_2 : c_\mathtt{p}(u).\mysel{c_\mathtt{q}}{\lb{l}_2};\outp{c_\mathtt{q}}{v}.( \linkr{u}{v} \para \ether{G^2})
\end{cases} \label{eq:pnewp0s0}
}
\end{equation}
while by assumption we have the compositional typing
\begin{equation}
\G; c_\mathtt{p}{:}A_1, \, c_\mathtt{q}{:}A_2, \, c_{\mathtt{p}3}{:}A_3, \ldots, c_{\mathtt{p}n}{:}A_n \vdash \ether{G} \label{eq:pnewp0}
\end{equation}
We must exhibit local types $T_1, \ldots, T_n$ 
such that, for all $i \in \{1,\ldots, n\}$: \\
 (i)~$\proj{G}{\mathtt{p}_i} \subt T_i$ and  $A_i = \lt{T_i}$.

First, by inversion on typing on \eqref{eq:pnewp0s0} and \eqref{eq:pnewp0}, we  infer 
that 
there exist binary session types
$C_1, C_2$, $D_1, D_2, \ldots, D_k$, and $U_1, U_2, \ldots, U_k$ such that
\begin{eqnarray} 
A_1 & = & \myoplus{\lb{l}_1:U_1 \otimes C_1 \, , \, \lb{l}_2:U_2 \otimes C_2}{ } \\
A_2 & = & \mywith{\lb{l}_1:U_1 \lolli D_1 \, , \, \lb{l}_2:U_2 \lolli D_2 \, , \, \cdots \, , \lb{l}_k:U_k \lolli D_k}{}  \label{eq:sspnewp0}
\end{eqnarray} 
Notice that in \eqref{eq:sspnewp0}, we consider labeled alternatives that may be silently added by rule~\name{T$\lft\with_2$}.
Now, using rule~\name{T$\lft\oplus$}:
\begin{small}
\[\hspace{-1ex}
\inferrule[]
{
\begin{array}{cc}
\G; c_\mathtt{p}{:}U_1 \otimes C _1, \, c_\mathtt{q}{:}A_2, 
 \D \vdash  c_\mathtt{p}(u).\mysel{c_\mathtt{q}}{\lb{l}_1};\outp{c_\mathtt{q}}{v}.(\linkr{u}{v} \para \ether{G^1} ) & (*)
\vspace{1mm}\\ 
\G; c_\mathtt{p}{:}U_2 \otimes C _2, \, c_\mathtt{q}{:}A_2,  \D \vdash  c_\mathtt{p}(u).\mysel{c_\mathtt{q}}{\lb{l}_2};\outp{c_\mathtt{q}}{v}.(\linkr{u}{v} \para \ether{G^2} ) & (**)
\end{array}
}{\G; c_\mathtt{p}{:}A_1, \, c_\mathtt{q}{:}A_2,  \D \vdash \ether{G} :: - : \one\mathstrut}
\]
\end{small}

where $\D= \, c_{\mathtt{p}3}{:}A_3, \ldots, c_{\mathtt{p}n}{:}A_n$, i.e.,
$\D$ collects typings for participants not involved in the exchange. 
Notice that the assumption of compositional typing (in particular, the fact that \ether{G} does not
offer any behavior on the right-hand side typing) is crucial in the above inversion.
By further inversion on typing, we may infer typings for $\ether{G^1}$ and $\ether{G^2}$:
\begin{align}
\G; c_\mathtt{p}{:}C_1, c_\mathtt{q}{:}D_1, \D & \vdash  \ether{G^1}   \label{eq:pnewp1} \\
\G; c_\mathtt{p}{:}C_2, c_\mathtt{q}{:}D_2, \D & \vdash  \ether{G^2}    \label{eq:pnewp2}
\end{align}

This way, e.g., the derivation for~\eqref{eq:pnewp2} is below, based on premise $(**)$ above:
\begin{small}
\[
{
\infer[\name{T$\lft\otimes$}]
{\G;  c_\mathtt{p} : U_2 \otimes C_2, \, c_\mathtt{q}{:}A_2  , \D \vdash 
      c_\mathtt{p}(u).\mysel{c_\mathtt{q}}{\lb{l}_2};\outp{c_\mathtt{q}}{v}.(\linkr{u}{v} \para  \ether{G^2} )  }{
      \infer=[\name{T$\lft\with_2$}]{\G; u: U_2,\,  c_\mathtt{p} :  C_2,\, c_\mathtt{q}{:}A_2 , \D \vdash \mysel{c_\mathtt{q}}{\lb{l}_2};\outp{c_\mathtt{q}}{v}.( \linkr{u}{v} \para  \ether{G^2} )}{
\infer[\name{T$\lft\with_1$}]
      { \G; u: U_2,\,  c_\mathtt{p} :  C_2,\, c_\mathtt{q}{:}\mywith{\lb{l}_2 :U_2 \lolli D_2}{\{2\}} , \D \vdash \mysel{c_\mathtt{q}}{\lb{l}_2};\outp{c_\mathtt{q}}{v}.( \linkr{u}{v} \para  \ether{G^2} ) }
      {\infer[\name{T$\lft{\lolli}$}]
             {\G; u: U_2,\,  c_\mathtt{p} :  C_2,\, c_\mathtt{q}{:}U_2 \lolli D_2 , \D \vdash \outp{c_\mathtt{q}}{v}.( \linkr{u}{v} \para  \ether{G^2} )  \mathstrut}
             {\infer[\name{T$\mathsf{id}$}]{\G; u{:}U_2 \vdash \linkr{u}{v} :: v : U_2 }{} &
              \infer[\name{T$\lft\one$}]{ \G; c_\mathtt{p} : C_2, c_\mathtt{q} : D_2  , \D \vdash  \ether{G^2}\mathstrut}{
              }}}}}
              }
\]
\end{small}
where we have explicitly denoted the several possible uses of silent rule~\name{T$\lft\with_2$}.
It is easy to see that \eqref{eq:pnewp1} and \eqref{eq:pnewp2} are compositional typings.
Then, IH  ensures the existence of 
 local types $R_1,  \ldots, R_n, S_1,  \ldots, S_n$ such that: 
$$\proj{G^1}{\mathtt{p}_j} \subt R_j \text{  and  } \proj{G^2}{\mathtt{p}_j} \subt S_j, \quad \text{ for all $j \in \{1,\ldots,n\}$ }$$
In particular, 
$\lt{R_1} = C_1$, 
$\lt{R_2} = D_1$, 
$\lt{S_1} = C_2$, and
$\lt{S_2} = D_2$.

We notice that 
$\D$ remains unchanged in the derivations for~\eqref{eq:pnewp1} and~\eqref{eq:pnewp2}.
Hence, intuitively, 
considering merge-based projectability does not play a role in the proof, for our assumption is the compositional typing for
\ether{G}. All mergeable branching types for branches of $G$ appear already merged in $\D$; such merged types are propagated in derivations.
Therefore, for all $k \in \{3, \ldots, n\}$, we have: 
\begin{align}
A_k = \lt{R_k} = \lt{S_k} \quad \text{and} \quad
R_k =  S_k \label{eq:p104}
\end{align}
In turn, by combining \eqref{eq:p103} and \eqref{eq:p104}
we infer the thesis for participants $\pt{p}_3, \ldots, \pt{p}_n$:
\begin{equation*}
\proj{G}{\pt{p}_k} \subt R_k = T_k,  \quad \text{for all $k \in \{3, \ldots, n\}$}
\end{equation*}

We are thus left to show the thesis for $\pt{p}_1$ and $\pt{p}_2$.
We first establish $T_1$ and $T_2$ by 
building upon local types $R_1, R_2, S_1, S_2$ (just established), following
the typing derivation  for $$c_\mathtt{p}(u).\mysel{c_\mathtt{q}}{\lb{l}_2};\outp{c_\mathtt{q}}{v}.(\linkr{u}{v} \para  \ether{G^2} )$$
(shown above) and 
for $$c_\mathtt{p}(u).\mysel{c_\mathtt{q}}{\lb{l}_1};\outp{c_\mathtt{q}}{v}.(\linkr{u}{v} \para  \ether{G^1} )$$ (which is built analogously).
We thus have:
  \begin{eqnarray*}
  T_1 & = & \mathtt{p}_1!\{\lb{l}_1\langle U_1 \rangle.R_1 \, , \, \lb{l}_2\langle U_2 \rangle.S_1\}  \\
  T_2 & = & \mathtt{p}_1?\{\lb{l}_1\langle U_1 \rangle.R_2 \, , \, \lb{l}_2\langle U_2 \rangle.S_2 \, , \, \cdots \, , \, \lb{l}_k\langle U_k \rangle.S_k\}  
  \end{eqnarray*}
and using \eqref{eq:p101}, \eqref{eq:p102}, and Definitions~\ref{d:subt} and \ref{d:loclogt}, we may verify that: \\
(i)~$\proj{G}{\pt{p}_1} = T_1$, $\proj{G}{\pt{p}_2} \subt T_2$
and (ii)~$\lt{T_1} = A_1$  and $\lt{T_2} = A_2$. 

\item (Case $G = G_1 \para G_2$): 
Then $\partp{G} =  \{\mathtt{p}_1, \ldots,  \mathtt{p}_n\}$, with $n \geq 2$.
Recall that $\partp{G_1 \para G_2} =  \partp{G_1} \cup \partp{G_2}$.
By Def.~\ref{d:ether} we may state the compositional typing assumption as
$$
\G; \D \vdash \ether{G_1} \para \ether{G_2} 
$$
where $\D = c_1{:}A_1, \ldots, c_n{:}A_n$, with  a $c_j$ for each $\mathtt{r}_j \in \partp{G_1 \para G_2}$ (with $j \in\{1,\ldots,n\}$).
Let $\partp{G_1} = \{\pt{p}_1, \ldots, \pt{p}_k\}$ and $\partp{G_2} = \{\pt{p}_{k+1}, \ldots, \pt{p}_n\}$.
By Prop.~\ref{p:indpc},
there exist 
disjoint
$\D_1, \D_2$ such that $\D = \D_1 \cup \D_2$
and the compositional typings 
\begin{align*}
\G; \underbrace{c_1{:}A_1, \ldots, c_k{:}A_k}_{\D_1} \vdash & \ether{G_1}  \\
\G; \underbrace{c_{k+1}{:}A_{k+1}, \ldots, c_n{:}A_n}_{\D_2}   \vdash & \ether{G_2} 
\end{align*}
hold. 
We may then apply IH on both $\ether{G_1}$ and $\ether{G_2}$, and
so infer local types $R_1, \ldots, R_k$ and $S_{k+1}, \ldots, S_n$ such that
(i)~$\proj{G_1}{\pt{p}_i} \subt R_i$
 and 
 (ii)~$A_i = \lt{R_i}$ (with $i \in \{1, \ldots, k\}$)
and 
(iii)~$\proj{G_1}{\pt{p}_h} \subt S_h$
(iv)~$A_h = \lt{S_h}$ (with $h \in \{{k+1}, \ldots, n\}$).
This is enough to conclude the thesis, for Def.~\ref{d:proj} says that parallel global types do not share participants.
Therefore, for every $\pt{p}_l \in \{\pt{p}_1, \ldots, \pt{p}_n\}$ then either
$\proj{G}{\pt{p}_l} = \proj{G_1}{\pt{p}_l}$ or 
$\proj{G}{\pt{p}_l} = \proj{G_2}{\pt{p}_l}$.
\end{enumerate}
\end{proof}

\subsection{Proof of Theorem~\ref{p:swapmeds}\label{app:swapmeds}: Behavioral Characterization of Swapping}
We repeat the statement given in Page~\pageref{p:swapmeds}:

\begin{theorem}[\ref{p:swapmeds}]
Let $G_1 \in \fgtypes$ be a global type, such that $\ether{G_1}$ has 
a compositional typing $\G; \D \vdash \ether{G_1} $, for some $\G, \D$. 
If $G_1 \eqsw G_2$ then  $\G;  \D \vdash \ether{G_1} \tybis \ether{G_2}  $.
\end{theorem}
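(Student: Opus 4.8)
The plan is to proceed by induction on the derivation of $G_1 \eqsw G_2$ (Definition~\ref{d:gswap}), reducing the behavioral equivalence $\tybis$ on mediums to the proof-conversion congruence $\eqcc$, and then invoking Theorem~\ref{t:soundcc}. Concretely, I first establish the auxiliary claim~\eqref{pr:swpa}: if $G_1 \eqsw G_2$ then $\G; \D \vdash \ether{G_1} \eqcc \ether{G_2}$. Since $\eqsw$ is the smallest congruence closed under rules \name{sw1} and \name{sw2} (Figure~\ref{fig:swap}), and $\eqcc$ is itself a congruence on well-typed processes, it suffices to handle the two base rules: the congruence (context) cases follow because $\ether{\cdot}$ is defined compositionally (Definition~\ref{d:ether}), so a swap inside a subterm of $G_1$ corresponds to applying $\eqcc$ under a process context, which is sound by congruence of $\eqcc$. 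Given~\eqref{pr:swpa}, the theorem follows immediately: $G_1 \eqsw G_2$ gives $\G; \D \vdash \ether{G_1} \eqcc \ether{G_2}$, and Theorem~\ref{t:soundcc} upgrades this to $\G; \D \vdash \ether{G_1} \tybis \ether{G_2}$.

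\textbf{Case \name{sw1}.} Here $G_1 = \gto{p_1}{q_1}\{\lb{l}_i\langle U_i\rangle.\gto{p_2}{q_2}\{\lb{l}'_j\langle U'_j\rangle.G_{ij}\}_{j \in J}\}_{i \in I}$ and $G_2$ swaps the two outer communications, under the side condition $\{\pt{p_1},\pt{q_1}\} \# \{\pt{p_2},\pt{q_2}\}$. Unfolding Definition~\ref{d:ether} twice, $\ether{G_1}$ is a branching on $c_{\pt{p_1}}$ whose $i$-th branch is $c_{\pt{p_1}}(u_1).\mysel{c_{\pt{q_1}}}{\lb{l}_i};\outp{c_{\pt{q_1}}}{v_1}.(\linkr{u_1}{v_1} \para M_i)$ where $M_i$ is itself a branching on $c_{\pt{p_2}}$; dually $\ether{G_2}$ branches first on $c_{\pt{p_2}}$ and contains the $c_{\pt{p_1}}$-actions inside. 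The key observation — exactly the one flagged in the proof sketch — is that by Fact~\ref{fc:name}, the disjointness of participant sets translates into distinctness of the names $c_{\pt{p_1}}, c_{\pt{q_1}}, c_{\pt{p_2}}, c_{\pt{q_2}}$, together with the fact that the forwarder $\linkr{u_1}{v_1}$ acts on fresh local names disjoint from everything else. This name-distinctness is precisely the hypothesis required to apply the commuting-conversion equalities of Figures~\ref{f:cc1} and~\ref{f:cc2}: one repeatedly uses the $\eqcc$-equalities permuting a branching prefix (on $c_{\pt{p_1}}$) past another prefix acting on a different name, the equalities of the form \text{II-}10, \text{II-}15, \text{II-}13, and those pushing outputs/selections/forwarders inward (e.g.\ \text{II-}11, \text{II-}12), to transport the whole block of $c_{\pt{p_1}}, c_{\pt{q_1}}$-actions and its forwarder through the $c_{\pt{p_2}}, c_{\pt{q_2}}$-block. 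A careful bookkeeping shows the result is syntactically $\ether{G_2}$ up to $\equiv$, and every rewrite is type-preserving because each $\eqcc$-equality comes with the same typing judgement on both sides. The appeal to the compositional typing hypothesis $\G; \D \vdash \ether{G_1}$ guarantees that all the intermediate processes remain well-typed (in particular that the right-hand side type stays $\one$), so that $\eqcc$ and then $\tybis$ are applicable throughout.

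\textbf{Case \name{sw2}.} Here $G_1 = \gto{p}{q}\{\lb{l}_i\langle U_i\rangle.(G^1_i \para G^2_i)\}_{i \in I}$ with $\{\pt{p},\pt{q}\}\#\partp{G^1_i}$ and $G^1_i = G^1_j$ for all $i,j$, and $G_2 = G^1_1 \para \gto{p}{q}\{\lb{l}_i\langle U_i\rangle.G^2_i\}_{i \in I}$. By Definition~\ref{d:ether}, $\ether{G_1}$ is the branching on $c_{\pt{p}}$ whose $i$-th branch ends in $(\linkr{u}{v} \para \ether{G^1_i} \para \ether{G^2_i})$, while $\ether{G_2} = \ether{G^1_1} \para (\text{branching on }c_{\pt{p}}\text{ with branches ending in }\linkr{u}{v}\para\ether{G^2_i})$. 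Since the participants of $G^1_i$ are disjoint from $\{\pt{p},\pt{q}\}$, by Fact~\ref{fc:name} the free names of $\ether{G^1_i}$ are disjoint from $c_{\pt{p}},c_{\pt{q}}$ and from $u,v$; moreover $\ether{G^1_i} = \ether{G^1_1}$ by the hypothesis $G^1_i = G^1_j$. Hence each branch has the shape $\pi_i.(\linkr{u}{v}\para \ether{G^1_1}\para\ether{G^2_i})$ with $\ether{G^1_1}$ typed independently (its right-hand side type is $\one$, so rule \name{\textsc{indComp}} of \S\ref{app:indc} applies). One uses the $\eqcc$-equalities that pull a process sharing no free names out of a branching (the analogues of \text{II-}10/\text{II-}12, i.e.\ distributing/factoring a parallel component across the branches of a $\oplus$-case, valid precisely because $\ether{G^1_1}$ has no linear interaction with the branch prefixes), together with the equalities pushing $\ether{G^1_1}$ outward past the $c_{\pt p}$-branching prefix. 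This yields $\ether{G^1_1} \para (\text{branching})$, i.e.\ $\ether{G_2}$, up to $\equiv$, and again every step is type-preserving. The symmetric version is identical.

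\textbf{Main obstacle.} The routine but delicate part is the bookkeeping in case \name{sw1}: one must track exactly which $\eqcc$-equality from Figures~\ref{f:cc1}--\ref{f:cc2} applies at each micro-step (there is one such equality for every pairing of a left-rule prefix with each other connective), confirm the name-distinctness side condition using Fact~\ref{fc:name} in each instance, and verify that the forwarder $\linkr{u}{v}$ is carried along correctly (it must be permuted using equalities \text{I-}21/\text{I-}22 and their analogues, since it is the proof-term residue of the identity axiom). The conceptual content is entirely in the two observations — (i) the $\ether{\cdot}$ map turns participant-disjointness into name-disjointness, and (ii) $\eqcc$ permutes exactly those prefixes that act on distinct names and carries typing with it — after which the thesis is a mechanical application of Theorem~\ref{t:soundcc}. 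Full details are deferred to Appendix~\ref{app:swapmeds}.
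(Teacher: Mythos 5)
Your proposal is correct and takes essentially the same route as the paper's own proof: establish $\G;\D \vdash \ether{G_1} \eqcc \ether{G_2}$ by induction on $\eqsw$, using Fact~\ref{fc:name} to turn the participant-disjointness side conditions into name-distinctness so that the commuting conversions of Figs.~\ref{f:cc1}--\ref{f:cc2} can permute the prefix blocks (carrying the forwarder along), and then conclude via Theorem~\ref{t:soundcc}. The only differences are presentational (you name individual conversion rules and invoke independent composition explicitly in the \name{sw2} case, where the paper simply exhibits the rewrite chain), so no changes are needed.
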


\begin{proof}
We first prove the following property:
\begin{equation}
\text{If $G_1 \eqsw G_2$ then $\G; \D \vdash \ether{G_1} \eqcc \ether{G_2}$}  \label{pr:swp}
\end{equation}
Recall that we have defined $\eqcc$ in Def.~\ref{d:eqcc}.
The thesis will then follow by combining~\eqref{pr:swp} with Theorem~\ref{t:soundcc}~(soundness of \eqcc wrt \tybis).
The proof of~\eqref{pr:swp} proceeds by induction on the definition of $\eqsw$ (cf. Def.~\ref{d:gswap} and Fig.~\ref{fig:swap}). 
We consider only the cases for \name{\textsc{sw1}} and \name{\textsc{sw3}}:
\begin{enumerate}[$\bullet$]
\item (Case \name{\textsc{sw3}}): Then we have:
\begin{align*}
G_1 & = \gto{p}{q}\{\lb{l}_i\langle U_i\rangle.(G^i \para G)\}_{i \in I} \\
G_2 & = G \para \gto{p}{q}\big\{\lb{l}_i\langle U_i\rangle.G^i\}_{i \in I}
\end{align*}
with 
\begin{equation}
\{\mathtt{p},\mathtt{q}\} \# \{\mathtt{r}_1, \ldots, \mathtt{r}_n\}. \label{eq:swp2}
\end{equation}
where $\{\mathtt{r}_1, \ldots, \mathtt{r}_n\} = \partp{G}$.
Without loss of generality, we consider the case $I = \{1,2\}$.
Then, by Def.~\ref{d:ether}, we have:
\begin{align*}
\ether{G_1} & =  c_\mathtt{p}\triangleright\!\begin{cases}
\lb{l}_1 : c_\mathtt{p}(u).\mysel{c_\mathtt{q}}{\lb{l}_1};\outp{c_\mathtt{q}}{v}.( \linkr{u}{v}  \para \ether{G^1} \para \ether{G}) \\
\lb{l}_2 : c_\mathtt{p}(u).\mysel{c_\mathtt{q}}{\lb{l}_2};\outp{c_\mathtt{q}}{v}.( \linkr{u}{v} \para \ether{G^2} \para \ether{G})
\end{cases}
\end{align*}
Moreover, since by assumption \ether{G_1} has a compositional typing, by Theorem~\ref{l:medltypes} we have:
\begin{equation*}
\G; c_\pt{p}{:}\lt{\proj{G_1}{\pt{p}}} , c_\pt{q}{:}\lt{\proj{G_1}{\pt{q}}} , 
c_{\pt{r}_1}{:}\lt{\proj{G_1}{\pt{r}_1}} , \ldots, c_{\pt{r}_n}{:}\lt{\proj{G_1}{\pt{r}_n}} \vdash \ether{G_1} 
\end{equation*}
Observe that in \ether{G_1} we have the name distinctions: $\{c_\pt{p},c_\pt{q}\} \# \{c_{\pt{r}_1},\ldots, c_{\pt{r}_n}\}$, thanks to \eqref{eq:swp2}. 
These distinctions will be useful in commuting prefixes inside \ether{G_1} in a type-preserving way, for they ensure that 
all these sessions are causally independent. 
We may now use $\eqcc$ to perform the prefix commutations, using the equalities in 
Figs.~\ref{f:cc1} and~\ref{f:cc2}:
\begin{align*}
\ether{G_1} & =  c_\mathtt{p}\triangleright\!\begin{cases}
\lb{l}_1 : c_\mathtt{p}(u).\mysel{c_\mathtt{q}}{\lb{l}_1};\outp{c_\mathtt{q}}{v}.( \linkr{u}{v}  \para \ether{G^1} \para \ether{G}) \\
\lb{l}_2 : c_\mathtt{p}(u).\mysel{c_\mathtt{q}}{\lb{l}_2};\outp{c_\mathtt{q}}{v}.( \linkr{u}{v} \para \ether{G^2} \para \ether{G})
\end{cases} \\
& \eqcc  c_\mathtt{p}\triangleright\!\begin{cases}
\lb{l}_1 : c_\mathtt{p}(u).\mysel{c_\mathtt{q}}{\lb{l}_1};(\ether{G} \para \outp{c_\mathtt{q}}{v}.( \linkr{u}{v}  \para \ether{G^1})) \\
\lb{l}_2 : c_\mathtt{p}(u).\mysel{c_\mathtt{q}}{\lb{l}_2};(\ether{G} \para \outp{c_\mathtt{q}}{v}.( \linkr{u}{v} \para \ether{G^2})) 
\end{cases} \\
& \eqcc  c_\mathtt{p}\triangleright\!\begin{cases}
\lb{l}_1 : c_\mathtt{p}(u).(\ether{G} \para \mysel{c_\mathtt{q}}{\lb{l}_1};\outp{c_\mathtt{q}}{v}.( \linkr{u}{v}  \para \ether{G^1})) \\
\lb{l}_2 : c_\mathtt{p}(u).(\ether{G} \para \mysel{c_\mathtt{q}}{\lb{l}_2};\outp{c_\mathtt{q}}{v}.( \linkr{u}{v} \para \ether{G^2})) 
\end{cases} \\
& \eqcc  c_\mathtt{p}\triangleright\!\begin{cases}
\lb{l}_1 : \ether{G} \para c_\mathtt{p}(u).\mysel{c_\mathtt{q}}{\lb{l}_1};\outp{c_\mathtt{q}}{v}.( \linkr{u}{v}  \para \ether{G^1}) \\
\lb{l}_2 : \ether{G} \para c_\mathtt{p}(u). \mysel{c_\mathtt{q}}{\lb{l}_2};\outp{c_\mathtt{q}}{v}.( \linkr{u}{v} \para \ether{G^2}) 
\end{cases} \\
& \eqcc  \ether{G} \para c_\mathtt{p}\triangleright\!\begin{cases}
\lb{l}_1 :  c_\mathtt{p}(u).\mysel{c_\mathtt{q}}{\lb{l}_1};\outp{c_\mathtt{q}}{v}.( \linkr{u}{v}  \para \ether{G^1}) \\
\lb{l}_2 :  c_\mathtt{p}(u). \mysel{c_\mathtt{q}}{\lb{l}_2};\outp{c_\mathtt{q}}{v}.( \linkr{u}{v} \para \ether{G^2}) 
\end{cases} \\
& = \ether{G_2}
\end{align*}

\item (Case \name{\textsc{sw1}}): Then we have:
\begin{align*}
G_1 & = \gto{p_1}{q_1}\big\{\lb{l}_i\langle U_i\rangle.\gto{p_2}{q_2}\{\lb{l}'_j\langle U'_j\rangle.G_{ij}\}_{j \in J}\big\}_{i \in I} \\
G_2 & = \gto{p_2}{q_2}\big\{\lb{l}'_j\langle U'_j\rangle.\gto{p_1}{q_1}\{\lb{l}_i\langle U_i\rangle.G_{ij}\}_{i \in I}\big\}_{j \in J}
\end{align*}
with $\{\mathtt{p_1},\mathtt{q_1}\} \# \{\mathtt{p_2},\mathtt{q_2}\}$.
Without loss of generality, 
we consider the following instance:
\begin{align*}
G_1 & = \gto{p}{q}\big\{\lb{l}\langle U_1\rangle.\gto{r}{s}\{\lb{h}\langle U_2\rangle.G^{}\}_{}\big\}_{} \\
G_2 & = \gto{r}{s}\big\{\lb{h}\langle U_2\rangle.\gto{p}{q}\{\lb{l}\langle U_1\rangle.G^{}\}_{}\big\}_{} 
\end{align*}
with
\begin{equation}
\{\pt{p},\pt{q}\} \# \{\pt{r},\pt{s}\}. \label{eq:swp1}
\end{equation}
Then, by Def.~\ref{d:ether} on $G_1$ we have:
\begin{align*}
\ether{G_1} & =  \mycasebig{c_\mathtt{p}}{\lb{l}: c_\mathtt{p}(u_1).\mysel{c_\mathtt{q}}{\lb{l}};\outp{c_\mathtt{q}}{v_1}.\big( \linkr{u_1}{v_1}  \para \\
& \qquad \qquad \qquad \qquad \qquad \qquad \mycase{c_\mathtt{r}}{\lb{h}: c_\mathtt{r}(u_2).\mysel{c_\mathtt{s}}{\lb{h}};\outp{c_\mathtt{s}}{v_2}.( \linkr{u_2}{v_2}  \para \ether{G} )}{} \, \big)}{} 
\end{align*}
Moreover, since by assumption \ether{G_1} has a compositional typing, by Theorem~\ref{l:medltypes} we have:
\begin{equation*}
\G; c_\pt{p}{:}\lt{\proj{G_1}{\pt{p}}} , c_\pt{q}{:}\lt{\proj{G_1}{\pt{q}}} , 
c_\pt{r}{:}\lt{\proj{G_1}{\pt{r}}} , c_\pt{s}{:}\lt{\proj{G_1}{\pt{s}}} \vdash \ether{G_1} 
\end{equation*}
Observe that in \ether{G_1} we have the name distinctions: $\{c_\pt{p},c_\pt{q}\} \# \{c_\pt{r},c_\mathtt{s}\}$, thanks to \eqref{eq:swp1}. 
These distinctions will be useful in commuting prefixes inside \ether{G_1} in a type-preserving way, for they ensure that 
all these sessions are causally independent. 
We may now use $\eqcc$ to perform the prefix commutations shown in Fig.~\ref{f:swap}, using the equalities in 
Figs.~\ref{f:cc1} and~\ref{f:cc2}.
\begin{figure*}
\begin{align*}
\ether{G_1} & =  \mycasebig{c_\mathtt{p}}{\lb{l}: c_\mathtt{p}(u_1).\mysel{c_\mathtt{q}}{\lb{l}};\outp{c_\mathtt{q}}{v_1}.\big( \linkr{u_1}{v_1}  \para \mycase{c_\mathtt{r}}{\lb{h}: c_\mathtt{r}(u_2).\mysel{c_\mathtt{s}}{\lb{h}};\outp{c_\mathtt{s}}{v_2}.( \linkr{u_2}{v_2}  \para \ether{G} )}{} \, \big)}{} \\
& \eqcc  \mycasebig{c_\mathtt{p}}{\lb{l}: c_\mathtt{p}(u_1).\mysel{c_\mathtt{q}}{\lb{l}}; \mycasecol{c_\mathtt{r}}{\lb{h}: \outp{c_\mathtt{q}}{v_1}.( \linkr{u_1}{v_1}  \para c_\mathtt{r}(u_2).\mysel{c_\mathtt{s}}{\lb{h}};\outp{c_\mathtt{s}}{v_2}.( \linkr{u_2}{v_2}  \para \ether{G} )}{} )}{} \\
& \eqcc  \mycasebig{c_\mathtt{p}}{\lb{l}: c_\mathtt{p}(u_1).\mycasecol{c_\mathtt{r}}{\lb{h}: \mysel{c_\mathtt{q}}{\lb{l}}; \outp{c_\mathtt{q}}{v_1}.( \linkr{u_1}{v_1}  \para c_\mathtt{r}(u_2).\mysel{c_\mathtt{s}}{\lb{h}};\outp{c_\mathtt{s}}{v_2}.( \linkr{u_2}{v_2}  \para \ether{G} )}{} )}{} \\
& \eqcc  \mycasebig{c_\pt{p}}{\lb{l}: \mycasecol{c_\pt{r}}{\lb{h}: c_\pt{p}(u_1).\mysel{c_\pt{q}}{\lb{l}}; \outp{c_\mathtt{q}}{v_1}.( \linkr{u_1}{v_1}  \para c_\pt{r}(u_2).\mysel{c_\pt{s}}{\lb{h}};\outp{c_\mathtt{s}}{v_2}.( \linkr{u_2}{v_2}  \para \ether{G} )}{} )}{} \\
& \eqcc  \mycasebigcol{c_\pt{r}}{\lb{h}: \mycase{c_\pt{p}}{\lb{l}: c_\pt{p}(u_1).\mysel{c_\pt{q}}{\lb{l}}; \outp{c_\mathtt{q}}{v_1}.( \linkr{u_1}{v_1}  \para c_\pt{r}(u_2).\mysel{c_\pt{s}}{\lb{h}};\outp{c_\mathtt{s}}{v_2}.( \linkr{u_2}{v_2}  \para \ether{G} )}{} )}{} \\
& \eqcc  \mycasebig{c_\pt{r}}{\lb{h}: \mycase{c_\pt{p}}{\lb{l}: c_\pt{p}(u_1).\mysel{c_\pt{q}}{\lb{l}};\jblue{c_\pt{r}(u_2)}. \outp{c_\mathtt{q}}{v_1}.( \linkr{u_1}{v_1}  \para \mysel{c_\pt{s}}{\lb{h}};\outp{c_\mathtt{s}}{v_2}.( \linkr{u_2}{v_2}  \para \ether{G} )}{} )}{} \\
& \eqcc  \mycasebig{c_\pt{r}}{\lb{h}: \mycase{c_\pt{p}}{\lb{l}: c_\pt{p}(u_1).\jblue{c_\pt{r}(u_2)}.\mysel{c_\pt{q}}{\lb{l}}; \outp{c_\mathtt{q}}{v_1}.( \linkr{u_1}{v_1}  \para \mysel{c_\pt{s}}{\lb{h}};\outp{c_\mathtt{s}}{v_2}.( \linkr{u_2}{v_2}  \para \ether{G} )}{} )}{} \\
& \eqcc  \mycasebig{c_\pt{r}}{\lb{h}: \mycase{c_\pt{p}}{\lb{l}: \jblue{c_\pt{r}(u_2)}.c_\pt{p}(u_1).\mysel{c_\pt{q}}{\lb{l}}; \outp{c_\mathtt{q}}{v_1}.( \linkr{u_1}{v_1}  \para \mysel{c_\pt{s}}{\lb{h}};\outp{c_\mathtt{s}}{v_2}.( \linkr{u_2}{v_2}  \para \ether{G} )}{} )}{} \\
& \eqcc  \mycasebig{c_\pt{r}}{\lb{h}: \mycase{c_\pt{p}}{\lb{l}: c_\pt{r}(u_2).c_\pt{p}(u_1).\mysel{c_\pt{q}}{\lb{l}}; \jblue{\mysel{c_\pt{s}}{\lb{h}};}\outp{c_\mathtt{q}}{v_1}.( \linkr{u_1}{v_1}  \para \outp{c_\mathtt{s}}{v_2}.( \linkr{u_2}{v_2}  \para \ether{G} )}{} )}{} \\
& \eqcc  \mycasebig{c_\pt{r}}{\lb{h}: \mycase{c_\pt{p}}{\lb{l}: c_\pt{r}(u_2).c_\pt{p}(u_1).\jblue{\mysel{c_\pt{s}}{\lb{h}};}\mysel{c_\pt{q}}{\lb{l}}; \outp{c_\mathtt{q}}{v_1}.( \linkr{u_1}{v_1}  \para \outp{c_\mathtt{s}}{v_2}.( \linkr{u_2}{v_2}  \para \ether{G} )}{} )}{} \\
& \eqcc  \mycasebig{c_\pt{r}}{\lb{h}: \mycase{c_\pt{p}}{\lb{l}: c_\pt{r}(u_2).\jblue{\mysel{c_\pt{s}}{\lb{h}};}c_\pt{p}(u_1).\mysel{c_\pt{q}}{\lb{l}}; \outp{c_\mathtt{q}}{v_1}.( \linkr{u_1}{v_1}  \para \outp{c_\mathtt{s}}{v_2}.( \linkr{u_2}{v_2}  \para \ether{G} )}{} )}{} \\
& \eqcc  \mycasebig{c_\pt{r}}{\lb{h}: \mycase{c_\pt{p}}{\lb{l}: c_\pt{r}(u_2).\mysel{c_\pt{s}}{\lb{h}};c_\pt{p}(u_1).\mysel{c_\pt{q}}{\lb{l}}; \jblue{\outp{c_\mathtt{s}}{v_2}}.( \linkr{u_2}{v_2}  \para \outp{c_\mathtt{q}}{v_1}.( \linkr{u_1}{v_1}  \para \ether{G} )}{} )}{} \\
& \eqcc  \mycasebig{c_\pt{r}}{\lb{h}: \mycase{c_\pt{p}}{\lb{l}: c_\pt{r}(u_2).\mysel{c_\pt{s}}{\lb{h}};c_\pt{p}(u_1).\outp{c_\mathtt{s}}{v_2}.( \linkr{u_2}{v_2}  \para \jblue{\mysel{c_\pt{q}}{\lb{l}};}\outp{c_\mathtt{q}}{v_1}.( \linkr{u_1}{v_1}  \para \ether{G} )}{} )}{} \\
& \eqcc  \mycasebig{c_\pt{r}}{\lb{h}: \mycase{c_\pt{p}}{\lb{l}: c_\pt{r}(u_2).\mysel{c_\pt{s}}{\lb{h}}; \outp{c_\mathtt{s}}{v_2}.( \linkr{u_2}{v_2}  \para \jblue{c_\pt{p}(u_1).}\mysel{c_\pt{q}}{\lb{l}};\outp{c_\mathtt{q}}{v_1}.( \linkr{u_1}{v_1}  \para \ether{G} )}{} )}{} \\
& \eqcc  \mycasebig{c_\pt{r}}{\lb{h}: c_\pt{r}(u_2).\mycasecol{c_\pt{p}}{\lb{l}: \mysel{c_\pt{s}}{\lb{h}}; \outp{c_\mathtt{s}}{v_2}.( \linkr{u_2}{v_2}  \para c_\pt{p}(u_1).\mysel{c_\pt{q}}{\lb{l}};\outp{c_\mathtt{q}}{v_1}.( \linkr{u_1}{v_1}  \para \ether{G} )}{} )}{} \\
& \eqcc  \mycasebig{c_\pt{r}}{\lb{h}: c_\pt{r}(u_2).\mysel{c_\pt{s}}{\lb{h}};\mycasecol{c_\pt{p}}{\lb{l}:  \outp{c_\mathtt{s}}{v_2}.( \linkr{u_2}{v_2}  \para c_\pt{p}(u_1).\mysel{c_\pt{q}}{\lb{l}};\outp{c_\mathtt{q}}{v_1}.( \linkr{u_1}{v_1}  \para \ether{G} )}{} )}{} \\
& \eqcc  \mycasebig{c_\pt{r}}{\lb{h}:  c_\pt{r}(u_2).\mysel{c_\pt{s}}{\lb{h}}; \outp{c_\mathtt{s}}{v_2}.( \linkr{u_2}{v_2}  \para \mycasecol{c_\pt{p}}{\lb{l}: c_\pt{p}(u_1).\mysel{c_\pt{q}}{\lb{l}};\outp{c_\mathtt{q}}{v_1}.( \linkr{u_1}{v_1}  \para \ether{G} )}{} )}{} \\
& = \ether{G_2} 
\end{align*}
\caption{Type-preserving process transformations induced by swapping, as required in the proof of Theorem \ref{p:swapmeds}\label{f:swap}}
\end{figure*}
\end{enumerate}
\end{proof}

\section{Omitted Proofs from \S\,\ref{ss:rela}\label{ap:rec}}
\subsection{Characterization Results for Recursive Mediums}

This proof follows the lines of that given in Appendix~\ref{app:ltypesmedp} for Theorem~\ref{l:ltypesmedp}.
%

\begin{myfact}
Let $G \in \rgtypes$ be a MWF global type.
If  $G$ has a free variable $\rv{X}$ then all the projections of $G$ will have $\rv{X}$ as a free variable.
Moreover, for all $\pt{p}_i$ there is a context $K_i$ such that $\proj{G}{\pt{p}_i} = K_i[\rv{X}]$.
\end{myfact}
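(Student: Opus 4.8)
The statement is a structural fact about merge-based projection, so the plan is to proceed by induction on the structure of $G \in \rgtypes$, simultaneously tracking the free variable $\rv{X}$ and the projection contexts $K_i$. The base case $G = \rv{X}$ is immediate: by Definition~\ref{d:proj}, $\proj{\rv{X}}{\pt{p}_i} = \rv{X}$ for every participant, so $\rv{X}$ is free in each projection and $K_i = [\cdot]$ (the empty context). The cases $G = \gend$ and $G = \rv{Y}$ with $\rv{Y} \neq \rv{X}$ are vacuous, since then $\rv{X}$ is not free in $G$. Note that since we work within $\rgtypes$, there is no parallel composition to consider.

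For the inductive step on $G = \gto{p}{q}\{\lb{l}_i\langle U_i\rangle.G_i\}_{i \in I}$, suppose $\rv{X} \in \fv{G}$; then $\rv{X}$ occurs free in at least one $G_j$. The key subtlety is that $\rv{X}$ need not be free in \emph{every} branch $G_j$. For a fixed participant $\pt{r}$, I would argue by cases following Definition~\ref{d:proj}: if $\pt{r} = \pt{p}$ or $\pt{r} = \pt{q}$, the projection is $\pt{p}!\{\lb{l}_i\langle U_i\rangle.\proj{G_i}{\pt{r}}\}_{i \in I}$ (or the dual $?$-form), which contains $\proj{G_j}{\pt{r}}$ as a subterm, so $\rv{X}$ free in $\proj{G_j}{\pt{r}}$ (by IH applied to $G_j$) yields $\rv{X}$ free in $\proj{G}{\pt{r}}$, with the context $K$ built by prefixing the IH context $K_j$ with $\pt{p}!\{\lb{l}_i\langle U_i\rangle.[\cdot]_{\text{at branch }j}, \ldots\}$. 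If $\pt{r} \notin \{\pt{p},\pt{q}\}$, then $\proj{G}{\pt{r}} = \fuse_{i \in I}\proj{G_i}{\pt{r}}$, and here is where the merge operator (Definition~\ref{d:mymerg}) and well-formedness of $G$ (which guarantees this merge is defined) come in. The crucial observation — which needs Definition~\ref{d:mymerg} — is that $\fuse$ is ``variable-preserving'': by clauses~(3) and~(5) of Definition~\ref{d:mymerg}, merging selection types or recursive types keeps the variable structure, and clause~(4) for branching types takes a union of labelled options while recursively merging the common ones, so a free $\rv{X}$ in any operand survives in the result. Hence $\rv{X}$ free in some $\proj{G_j}{\pt{r}}$ implies $\rv{X}$ free in the merge.

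For $G = \mu\rv{Y}.G'$ with $\rv{Y} \neq \rv{X}$ (the guardedness/binding conventions ensure $\rv{X}$ is genuinely free only if distinct from $\rv{Y}$): if $\rv{X} \in \fv{G')}$ then by IH $\rv{X} \in \fv{\proj{G'}{\pt{r}}}$, and since $\proj{G'}{\pt{r}}$ thus contains a free variable it is not syntactically equal to $\rv{Y}$ alone; so Definition~\ref{d:proj} gives $\proj{\mu\rv{Y}.G'}{\pt{r}} = \mu\rv{Y}.\proj{G'}{\pt{r}}$, and the context is $\mu\rv{Y}.K'[\cdot]$ where $K'$ is from the IH. I expect the \textbf{main obstacle} to be the merge case: one must prove a clean auxiliary lemma stating that if $T_1 \fuse T_2$ is defined and $\rv{X} \in \fv{T_1}$ (or $\fv{T_2}$), then $\rv{X} \in \fv{T_1 \fuse T_2}$, and moreover extract the context uniformly — this requires a careful side-induction over the definition of $\fuse$, paying attention to clause~(4) where the two branching types have differing label sets and the free variable may reside in a label present in only one operand. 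Everything else is routine bookkeeping of contexts through the projection clauses.
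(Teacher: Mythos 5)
Your proof is correct in substance, and it is worth noting that the paper itself states this only as a Fact, with no proof given at all; so the structural induction you propose --- together with the auxiliary lemma that $\fuse$ preserves free variables whenever the merge is defined --- is precisely the argument the authors leave implicit. Your case analysis matches the clauses of Def.~\ref{d:proj}: the base case $\proj{\rv{X}}{\pt{r}} = \rv{X}$, prefix contexts for the sender and receiver, the third-party case via the side lemma on Def.~\ref{d:mymerg} (whose clause~(4) takes a union and merges common continuations, so a defined merge never discards an occurrence of $\rv{X}$), and the $\mu\rv{Y}$ case, where the inductive hypothesis rules out $\proj{G'}{\pt{r}} = \rv{Y}$. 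On that last point, tighten the wording: the reason is that the projection contains the free variable $\rv{X}$ with $\rv{X} \neq \rv{Y}$, not merely that it ``contains a free variable'' ($\rv{Y}$ alone also does). The one issue you should make explicit is how the induction hypothesis is quantified. In the step for $\gto{p}{q}\{\lb{l}_i\langle U_i\rangle.G_i\}_{i \in I}$ you invoke the claim for a branch $G_j$ and a participant $\pt{r}$ (possibly $\pt{p}$ or $\pt{q}$) that need not occur in $\partp{G_j}$, and MWF of $G_j$ alone says nothing about $\proj{G_j}{\pt{r}}$ for such an $\pt{r}$. The clean fix is to prove the statement for an arbitrary participant $\pt{r}$ under the hypothesis that $\proj{G}{\pt{r}}$ is defined; definedness of the sub-projections then propagates downward, since $\proj{G}{\pt{r}}$ is defined only if each $\proj{G_j}{\pt{r}}$ is (either as a branch continuation or as an operand of $\fuse$), and MWF of $G$ supplies definedness at the root for every $\pt{p}_i \in \partp{G}$. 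With that strengthening, your induction --- including the bookkeeping of the contexts $K_i$ through prefixes, merges, and $\mu$-binders --- goes through.
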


We repeat the statement of Theorem~\ref{l:ltypesmedprec} (Page~\pageref{l:ltypesmedprec}):
\begin{theorem}[\ref{l:ltypesmedprec}]
Let $G \in \rgtypes$ be a global type, with~$\partp{G} = \mathcal{P} = \{\pt{p}_1, \ldots, \pt{p}_n\}$.
If $G$ is 
MWF
then 
 $$\G; c_1{:}\clt{\proj{G}{\pt{p}_1}}{\eta}{\mathcal{P}}, \ldots, c_n{:}\clt{\proj{G}{\pt{p}_n}}{\eta}{\mathcal{P}} \vdash_\eta \rether{G}{k} :: k{:} A$$
  is a left compositional typing
for \rether{G}{k} 
for some $\G, \eta, A$.
\end{theorem}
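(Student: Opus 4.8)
The plan is to proceed by induction on the structure of $G \in \rgtypes$, mirroring the argument for Theorem~\ref{l:ltypesmedp} (see Appendix~\ref{app:ltypesmedp}) but now accounting for recursion and for the signalling session $k$. The base cases are $G = \gend$ and $G = \rv{X}$. For $G = \gend$ we have $\rether{\gend}{k} = \mysel{k}{\lb{lb}};\mathbf{0}$, which is typed by $\G;\cdot \vdash_\eta \mysel{k}{\lb{lb}};\mathbf{0} :: k{:}\myoplus{\lb{lb}:\one}{\{\lb{lb}\}}$ using rules~\name{T$\rgt\oplus_1$} and~\name{T$\rgt\one$}; since $\partp{\gend}=\emptyset$ the left-hand side is trivially as required, and we take $A = \myoplus{\lb{lb}:\one}{\{\lb{lb}\}}$. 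For $G = \rv{X}$, the medium is $\mysel{k}{\lb{l}};\rv{X}(\til{z})$; here the typing is provided directly by rule~\name{var} after the $\oplus$-introduction, with $\eta$ supplying the context recorded for $\rv{X}$, and the left-hand side typing reads off as $\clt{\proj{\rv{X}}{\pt{p}_i}}{\eta}{\mathcal{P}} = \clt{\rv{X}}{\eta}{\mathcal{P}}$, matching $\eta(\rv{X})(c_i)$ as in the closure definition.

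The two inductive cases are the prefix case $G = \gto{p}{q}\{\lb{l}_i\langle U_i\rangle.G_i\}_{i \in I}$ and the recursion case $G = \mu\rv{X}.G'$. For the prefix case I would follow exactly the structure used in the proof of Theorem~\ref{l:ltypesmedp}: use the well-formedness hypothesis to obtain the projections via Def.~\ref{d:proj}, apply the induction hypothesis to each $G_i$ (which is still in $\rgtypes$ and MWF by the analogue of Prop.~\ref{p:gtypeconsm}), and then assemble the derivation for $\rether{G}{k}$ by threading the carried value (via \name{T$\lft\otimes$}, \name{T$\lft\lolli$}, \name{T$\mathsf{id}$} on the forwarder) and the choice on $c_\pt{p}$, $c_\pt{q}$ (via \name{T$\lft\oplus$} on $c_\pt{p}$, \name{T$\lft\with_1$} on $c_\pt{q}$). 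As before, the merge-based projection forces the $\with$ types for non-participating $\pt{r}$ to possibly differ across branches, so I would reuse the sub-induction on the size of the non-participant context, inserting silent \name{T$\lft\with_2$} steps to reconcile branch typings before applying \name{T$\lft\oplus$}. The novelty is that the continuation typing now carries the label $\lb{l}_i$ forward in $\retheraux{\cdot}{k}{\lb{l}_i}$, so the right-hand side type on $k$ is no longer $\one$ but some $A_i$; I would simply take $A = \mywith{\lb{l}_i : A_i}{i \in I}$ (or whatever the $\oplus$-accumulation yields, allowing \name{T$\rgt\oplus_2$}) and note this is exactly why the conclusion only claims a \emph{left} compositional typing, not a compositional one.

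For the recursion case $G = \mu\rv{X}.G'$, the medium is $(\recc\rv{X}(\til{z}).\retheraux{G'}{k}{\lb{l}})\,\til{c}$ and the natural move is to apply rule~$(\rgt\nu)$: I would choose $\eta' = \eta[\rv{X}(\til{y}) \mapsto \G;\D \vdash_\eta k{:}\rv{Y}]$ where $\D$ records the projected (closed) local types along $c_1,\dots,c_n$, apply the induction hypothesis to $G'$ under $\eta'$ to type the body $\retheraux{G'}{k}{\lb{l}}$ with right-hand side $\rv{Y}$ (this uses the guardedness of $\rv{X}$ in $G$ to ensure the coinductive type is well-formed, i.e., has a session action before the variable), and then conclude with right-hand side $\nu\rv{Y}.A'$. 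The left-hand side closure $\clt{\proj{\mu\rv{X}.G'}{\pt{p}_i}}{\eta}{\mathcal{P}}$ is designed precisely to match $\eta(\rv{X})(c_i)$ in the recursive occurrences, which is what makes the \name{var} applications inside the body type-check; I would verify this compatibility carefully, as it is the crux linking $\eta$, the closure operator, and Def.~\ref{d:proj}'s clause for $\proj{\mu\rv{X}.G}{\pt{r}}$ (including the degenerate case $\proj{G'}{\pt{r}} = \rv{X}$ where the projection collapses to $\gend$ and the corresponding $c_i$ carries type $\one$). The main obstacle I anticipate is bookkeeping the map $\eta$ through the induction so that the contexts recorded for $\rv{X}$ agree on the nose with what the closure operator $\clt{\cdot}{\eta}{\mathcal{P}}$ produces — in particular handling participants in $G$ that do not themselves occur in the recursive tail (so their projection has no free $\rv{X}$) uniformly with those that do. Getting the statement of the induction hypothesis strong enough to carry this $\eta$-coherence — and choosing the witness $\eta$ in the recursion case so that it is consistent with the $\eta$ demanded at the recursive variables — is where the real work lies; the rest is a routine adaptation of the finite proof.
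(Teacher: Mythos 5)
Your proposal follows essentially the same route as the paper's proof: structural induction on $G$, reusing the finite-case argument (including the \name{T$\lft\with_2$} reconciliation of merge-based branch typings) for the communication case, and handling $\mu\rv{X}.G'$ via rules $(\lft\nu)$/$(\rgt\nu)$ with the closure operator $\clt{\cdot}{\eta}{\mathcal{P}}$ agreeing with $\eta(\rv{X})(c_i)$ — which is precisely the $\eta$-coherence the paper extracts by inversion on the typing given by the induction hypothesis, so you have correctly identified the crux. One small correction: since the recursive medium only ever \emph{selects} on $k$ (at the $\gend$ and $\rv{X}$ leaves), the right-hand type $A$ accumulates as an $\oplus$-type via \name{T$\rgt\oplus_1$} and the silent rule \name{T$\rgt\oplus_2$}, not as a $\with$; your parenthetical already points at the right mechanism.
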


\begin{proof}
By structural induction on $G$. We detail only the case $G = \mu \rv{X}.G'$, for it is the most interesting case.
(Notice that case $G = \rv{X}$ does not correspond to a valid global type.)
We need to show that, for some $\G, \eta$, and $A$ the following judgment is derivable:
\begin{equation}
 \G; c_1{:}\clt{\proj{\mu \rv{X}.G'}{\pt{p}_1}}{\eta}{\mathcal{P}}, \ldots, c_n{:}\clt{\proj{\mu \rv{X}.G'}{\pt{p}_n}}{\eta}{\mathcal{P}} \vdash_\eta \rether{\mu \rv{X}.G'}{k} :: k{:} A \label{eq:rec1}
 \end{equation}
 Using the definitions of $\clt{\cdot}{\eta}{\mathcal{P}}$, $\lt{\cdot}$, and $\rether{\cdot}{k}$, respectively, the thesis  \eqref{eq:rec1} can be equivalently stated as
 \begin{eqnarray}
 \G; c_1{:}\lt{\proj{\mu \rv{X}.G'}{\pt{p}_1}}, \ldots, c_n{:}\lt{\proj{\mu \rv{X}.G'}{\pt{p}_n}} & \vdash_\eta &\rether{\mu \rv{X}.G'}{k} :: k{:} A \label{eq:rec2} \\
  \G; c_1{:}\nu \rv{X}.\lt{\proj{G'}{\pt{p}_1}}, \ldots, c_n{:}\mu \rv{X}.\lt{\proj{G'}{\pt{p}_n}} & \vdash_\eta & \rether{\mu \rv{X}.G'}{k} :: k{:} A \label{eq:rec3} \\
    \G; c_1{:}\nu \rv{X}.\lt{\proj{G'}{\pt{p}_1}}, \ldots, c_n{:}\nu \rv{X}.\lt{\proj{G'}{\pt{p}_n}} & \vdash_\eta &  \recc\rv{X}(k).\rether{G'}{k} :: k{:} A \label{eq:rec4}
 \end{eqnarray}
 We show how to infer \eqref{eq:rec4}. 
 First, we record two facts about $G'$:
 \begin{eqnarray}
 \fv{G'} & = &  \{\rv{X}\} \label{eq:rec6} \\
 \partp{\mu \rv{X}.G'} & = & \partp{G'}\label{eq:rec7}
 \end{eqnarray}
 Now, by IH the following is a derivable typing judgment, for some $\eta_0, A_0$:
 \begin{equation}
 \G; c_1{:}\clt{\proj{G'}{\pt{p}_1}}{\eta_0}{\mathcal{P}}, \ldots, c_n{:}\clt{\proj{G'}{\pt{p}_n}}{\eta_0}{\mathcal{P}} \vdash_{\eta_0} \rether{G'}{k} :: k{:} A_0 \label{eq:rec5}
 \end{equation}
We describe how \eqref{eq:rec5} allows us to infer \eqref{eq:rec4}.

First, 
using \eqref{eq:rec6} and the definition of $\rether{\cdot}{k}$, we infer that $\fv{\rether{G'}{k}} = \{\rv{X}\}$.
Therefore, since \eqref{eq:rec5} is well-typed, by inversion we may verify that $\eta_0$ contains an entry for $\rv{X}$:
\begin{equation}
\eta_0 = \eta'[\rv{X}(k) \mapsto \G; \D_0 \vdash k:\rv{Y}]
\label{eq:rec8}
\end{equation}
for some $\eta', \D_0$. 
We show that $\D_0$ contains an assignment $c_i{:}B_i$ for all $\pt{p}_i \in G'$.
By assumption $G$ is MWF, and therefore $G'$ is MWF too. 
As such, for every $\pt{p}_i \in G'$, 
the local type $\proj{G'}{\pt{p}_i}$ is defined. 
By definition of projection and \eqref{eq:rec6},
the recursive variable $\rv{X}$ occurs in every $B_i$.
Consequently, by construction, the mapping $\D_0$ accounts for all $\pt{p}_i \in G'$.

A crucial observation is that 
while the projected type $\lt{\proj{G'}{\pt{p}_i}}$ does have free recursion variables, 
its corresponding $B_i$ does not. This is induced by typing rule~$\name{var}$.  
Hence, one of the following holds:
\begin{eqnarray}
\eta_0({\rv{X}})({c_i}) & = &  B_i = \nu \rv{X}.\lt{\proj{G'}{\pt{p}_i}} \label{eq:rec9}\\
\eta_0({\rv{X}})({c_i}) & = & B_i = \lt{\proj{G'}{\pt{p}_i}}\subst{\nu \rv{X}.\lt{\proj{G'}{\pt{p}_i}}}{\rv{X}} \label{eq:rec10}
\end{eqnarray}
That is, $\eta_0$ stores the co-recursive type of the involved projection or its  unfolding.
We can assume 
all entries in $\eta_0$ are of shape \eqref{eq:rec9}, for we may 
silently rewrite all entries with shape~\eqref{eq:rec10} using rule~$(\lft\nu)$.
Typing inversion ensures that the body of the co-recursive type is the associated projection.

Now, combining \eqref{eq:rec5} with  the definition of $\clt{\cdot}{\eta}{\mathcal{P}}$ and \eqref{eq:rec9} we infer:
 \begin{equation}
 \G; c_1{:}\lt{\proj{G'}{\pt{p}_1}}\subst{\nu \rv{X}.\lt{\proj{G'}{\pt{p}_1}} }{\rv{X}}, \ldots, c_n{:}\lt{\proj{G'}{\pt{p}_n}}\subst{\nu \rv{X}.\lt{\proj{G'}{\pt{p}_n}} }{\rv{X}} \vdash_{\eta_0} \rether{G'}{k} :: k{:} A_0 \label{eq:rec11}
 \end{equation}
 Then, using rule~$(\lft\nu)$, we may rewrite~\eqref{eq:rec11} as 
 \begin{equation}
 \G; c_1{:}\nu \rv{X}.\lt{\proj{G'}{\pt{p}_1}}, \ldots, c_n{:}\nu \rv{X}.\lt{\proj{G'}{\pt{p}_n}} \vdash_{\eta_0} \rether{G'}{k} :: k{:} A_0 \label{eq:rec12}
 \end{equation}
 Finally, using rule~$(\rgt\nu)$, \eqref{eq:rec12}, and \eqref{eq:rec8}
 we may infer 
 \begin{equation}
 \G; c_1{:}\nu \rv{X}.\lt{\proj{G'}{\pt{p}_1}}, \ldots, c_n{:}\nu \rv{X}.\lt{\proj{G'}{\pt{p}_n}} \vdash_{\eta'} \recc\rv{X}(k).\rether{G'}{} :: k{:} \nu \rv{Y}.A_0 
 \end{equation}
 This completes the proof.
\end{proof}


We repeat the statement in Page~\pageref{l:medltypesrec}

\begin{theorem}[From Well-Typedness To MWF Global Types]\label{app:medltypesrec}
Let $G \in \rgtypes$ be a 
global type, 
with~$\partp{G} = \mathcal{P} = \{\pt{p}_1, \ldots, \pt{p}_n\}$. 
If 
$$\G; c_1{:}A_1, \ldots, c_ n{:}A_ n \vdash_\eta \rether{G}{k} ::k{:}B$$
is a 
left compositional typing for \rether{G}{k}
then $\exists 
T_1, \ldots, T_ n$ 
s.t. 
$\proj{G}{\mathtt{p}_j} \subt T_j$ and 
$\clt{T_j}{\eta}{\mathcal{P}} = A_j$
for all $\pt{p}_j \in G$.
\end{theorem}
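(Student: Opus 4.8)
The plan is to prove Theorem~\ref{app:medltypesrec} by structural induction on $G \in \rgtypes$, mirroring the structure of the proof of Theorem~\ref{l:medltypes} (given in Appendix~\ref{app:medltypes}) but carefully tracking the role of the map $\eta$ and of the coinductive typing rules $(\lft\nu)$, $(\rgt\nu)$, and $\name{var}$. The base case $G = \gend$ is immediate: $\rether{\gend}{k} = \mysel{k}{\lb{lb}};\zero$, $\partp{G} = \emptyset$, and the thesis holds vacuously (any $c_j{:}\one$ added by $\name{T$\lft\one$}$ is handled exactly as in the finite case, since $\proj{\gend}{\pt{r}_j} = \lend$ and $\lt{\lend} = \one$). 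The communication case $G = \gto{p}{q}\{\lb{l}_i\langle U_i\rangle.G_i\}_{i \in I}$ proceeds by inversion on typing exactly as in Appendix~\ref{app:medltypes}: from the (left-)compositional typing of $\mycasebig{c_\pt{p}}{\lb{l}_i : c_\pt{p}(u).\mysel{c_\pt{q}}{\lb{l}_i};\outp{c_\pt{q}}{v}.(\linkr{u}{v}\para\retheraux{G_i}{k}{\lb{l}_i})}{i\in I}$ we read off $A_\pt{p} = \myoplus{\lb{l}_i : U_i \otimes C_i}{i\in I}$ and $A_\pt{q} = \mywith{\lb{l}_i : U_i \lolli D_i}{i\in I\cup J}$ (the extra labels $J$ coming from silent applications of $\name{T$\lft\with_2$}$), obtain left-compositional typings for each $\retheraux{G_i}{k}{\lb{l}_i}$ with $c_\pt{p}{:}C_i$, $c_\pt{q}{:}D_i$, and the untouched context $\D$, then apply the IH to synthesize local types $R^i_j$ with $\proj{G_i}{\pt{p}_j} \subt R^i_j$ and $\clt{R^i_j}{\eta}{\mathcal{P}} = $ the corresponding type in $\D$, and finally assemble $T_\pt{p} = \pt{p}!\{\lb{l}_i\langle U_i\rangle.R^i_\pt{p}\}_{i\in I}$, $T_\pt{q} = \pt{p}?\{\lb{l}_i\langle U_i\rangle.R^i_\pt{q}\}_{i\in I\cup J}$, and $T_j = \fuse_{i\in I} R^i_j$ for participants $\pt{p}_j$ not involved in this step (here one checks, as in the finite case, that the merged branching types already appear identically in $\D$ so the IH outputs are compatible under $\fuse$).

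The genuinely new case is $G = \mu\rv{X}.G'$. Here $\rether{\mu\rv{X}.G'}{k} = (\recc\rv{X}(\til z).\retheraux{G'}{k}{\lb{l}})\,\til c$, so a left-compositional typing of it must end with rule $(\rgt\nu)$, whose premise provides a typing of $\retheraux{G'}{k}{\lb{l}}$ under an extended map $\eta' = \eta[\rv{X}(\til z)\mapsto \G;\D\vdash k{:}\rv{Y}]$ and with right-hand side some $A'$ (so $B = \nu\rv{Y}.A'$). We then apply the IH to this premise to get $T'_1,\dots,T'_n$ with $\proj{G'}{\pt{p}_j}\subt T'_j$ and $\clt{T'_j}{\eta'}{\mathcal{P}} = A_j$. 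The key observation, which I will isolate as an auxiliary fact (analogous to the ``free-variable'' fact stated just before Theorem~\ref{l:ltypesmedprec}), is that $\fv{G'} = \{\rv{X}\}$ forces every $\proj{G'}{\pt{p}_j}$ to have $\rv{X}$ free, hence $\clt{\proj{G'}{\pt{p}_j}}{\eta'}{\mathcal{P}}$ either is $\eta'(\rv{X})(c_j)$ or its unfolding; combined with $\proj{\mu\rv{X}.G'}{\pt{p}_j} = \mu\rv{X}.\proj{G'}{\pt{p}_j}$ (when that differs from $\rv{X}$) and $\clt{\cdot}{\eta}{\mathcal{P}}$ applied to a closed recursive projection unfolding to $\nu\rv{X}.\lt{\cdot}$, we set $T_j = \mu\rv{X}.T'_j$ (or $\gend$ in the degenerate case $\proj{G'}{\pt{p}_j} = \rv{X}$) and verify $\proj{G}{\pt{p}_j}\subt T_j$ by Def.~\ref{d:subt} together with the evident congruence of $\fuse$ under $\mu\rv{X}.(\cdot)$ (clause~5 of Def.~\ref{d:mymerg}), and $\clt{T_j}{\eta}{\mathcal{P}} = A_j$ by the same bookkeeping, modulo silently rewriting entries of $\eta$ between $\nu\rv{X}.\lt{\cdot}$ and its unfolding via rule $(\lft\nu)$, exactly as done in the proof of Theorem~\ref{l:ltypesmedprec}.

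The main obstacle I anticipate is the interplay between the map $\eta$, the silent rules $\name{T$\lft\with_2$}$, $(\lft\nu)$, and the $\subt$/$\clt{\cdot}{\eta}{\mathcal{P}}$ relations: when I recurse through a communication prefix inside the body of a $\recc$, the IH hands me local types and a context whose recursion variables are resolved via $\eta$, and I must show these are consistent with what $\clt{\cdot}{\eta}{\mathcal{P}}$ demands at the outer level, even though the labeled-choice case may have introduced extra branches (the $J$ labels) whose presence must be absorbed into $\subt$. Concretely, the delicate point is that $\fuse$ of the recursion-variable clause (clause~5 of Def.~\ref{d:mymerg}) requires the \emph{same} $\rv{X}$ in both arguments and pushes $\fuse$ under $\mu\rv{X}$, so I must argue that the branch projections agree on their recursion structure before merging — which follows from MWF of $G$ but requires a small lemma that projection commutes suitably with the branch decomposition. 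I expect this to be routine but notation-heavy; the proof otherwise follows the template of Appendices~\ref{app:medltypes} and~\ref{app:ltypesmedp} closely, and I will write it in that style, citing those arguments wherever the finite-case reasoning transfers verbatim.
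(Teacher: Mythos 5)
Your proposal matches the paper's approach: the paper proves this theorem simply by structural induction on $G$ ``following the lines of'' the finite-case proof in Appendix~\ref{app:medltypes}, which is exactly your template, and your treatment of the $\mu\rv{X}.G'$ case (inversion on $(\rgt\nu)$, the extended map $\eta'$, silent $(\lft\nu)$ rewriting of $\eta$-entries, and the free-variable fact about projections) is precisely the machinery the paper itself deploys in the forward direction (Theorem~\ref{l:ltypesmedprec}). In fact your write-up is more detailed than the paper's one-sentence proof, and the remaining glosses you flag (the extra $\with$-labels absorbed by $\subt$, merging under $\mu\rv{X}$, and implicitly strengthening the induction to $\retheraux{G}{k}{\lb{l}}$ for an arbitrary label) are routine and consistent with what the paper leaves implicit.
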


\begin{proof}
By structural induction on $G$, following the lines of the proof given in Appendix~\ref{app:medltypes} for Theorem~\ref{l:medltypes} (Page~\pageref{l:medltypes}).
\end{proof}

\subsection{Characterization Results for Annotated Mediums}
We repeat the statements in Page~\pageref{l:ltypesmedpan}:

\begin{theorem}[From Well-Formedness To Typed Annotated Mediums]
Let $G \in \rgtypes$ be a 
global type with~$\partp{G} = \mathcal{P} = \{\pt{p}_1, \ldots, \pt{p}_n\}$.
If $G$ is WF
(Def.~\ref{d:wfltypes}) then 
judgment 
$$\G; c_1{:}\clt{\proj{G}{\pt{p}_1}}{\eta}{\mathcal{P}}, \ldots, c_n{:}\clt{\proj{G}{\pt{p}_n}}{\eta}{\mathcal{P}} \vdash_\eta \raether{G}{k}{k} :: k{:} \gt{G}$$
is well-typed,  for some $\G$.
\end{theorem}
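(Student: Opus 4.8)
\textbf{Proof plan for Theorem~\ref{l:ltypesmedpan}.}
The plan is to reduce this statement to the already-available Theorem~\ref{l:ltypesmedprec} (from well-formedness to typed recursive mediums) by carefully accounting for the extra session $k$, which in annotated mediums carries the observable signal rather than the termination/recursion marker. The proof proceeds by structural induction on $G$, mirroring the inductive structure used for Theorem~\ref{l:ltypesmedprec} in Appendix~\ref{ap:rec}, but tracking the type $\gt{G}$ along the right-hand side of the judgment as dictated by Definition~\ref{d:glolog}. The key conceptual point is that, in $\raether{G}{k}{k}$ (cf. Definition~\ref{d:raether} and its extension to intermediate global types in Appendix~\ref{app:lts}), the actions on $k$ are \emph{causally independent} from the actions on the participant names $c_{\pt{p}_1},\dots,c_{\pt{p}_n}$: each action on a participant name is merely interleaved with a matching action on $k$, and the continuation structure is preserved. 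Hence the typing of $k$ can be built "on the side", orthogonally to the typing of the participant sessions.

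First I would handle the base case $G = \gend$, where $\raether{\gend}{\lb{l}'}{k} = \zero$ and $\gt{\gend} = \one$; the judgment $\G;\cdot \vdash_\eta \zero :: k{:}\one$ holds by rule \name{T$\rgt\one$}, and empty left-hand typing is compatible since $\partp{\gend} = \emptyset$ (extended with $\one$-assignments via \name{T$\lft\one$} as needed). The main case is the directed communication $G = \gto{p}{q}\{\lb{l}_i\langle U_i\rangle.G_i\}_{i \in I}$. Here $\gt{G} = \myoplus{\lb{l}_i : \rpart{\pt{p}} \otimes \mywith{\lb{l}_i :\rpart{\pt{q}} \lolli \gt{G_i}}{\{i\}}}{i \in I}$, which under the simplifying assumption $\rpart{\pt{p}} = \rpart{\pt{q}} = \one$ becomes $\myoplus{\lb{l}_i : \one \otimes \mywith{\lb{l}_i :\one \lolli \gt{G_i}}{\{i\}}}{i \in I}$. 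I would unfold $\raether{G}{\lb{l}}{k}$ according to Definition~\ref{d:raether}: the outer $\mycase{c_\pt{p}}{\cdots}{i \in I}$ consumes the branching type $\lt{\proj{G}{\pt{p}}}$ on $c_\pt{p}$ (using \name{T$\lft\oplus$}); then in each branch the prefix $\mysel{k}{\lb{l}_i}$ is typed by \name{T$\rgt\oplus_1$}/\name{T$\rgt\oplus_2$} offering $\oplus\{\lb{l}_i:\dots\}$ on $k$; the input $c_\pt{p}(u)$ consumes the $\otimes$-continuation on $c_\pt{p}$ by \name{T$\lft\otimes$}; the output $\about{k}$, i.e.\ $\outp{k}{v}.(\zero \para \cdots)$, provides the $\one\otimes\cdots$ step on $k$ (using the right-tensor rule and \name{T$\rgt\one$} for the $\zero$); the selection $\mysel{c_\pt{q}}{\lb{l}_i}$ uses $\lt{\proj{G}{\pt{q}}}$ on $c_\pt{q}$ via \name{T$\lft\with_1$} (possibly \name{T$\lft\with_2$} first, to silently trim merge-induced extra labels, exactly as in the proof of Theorem~\ref{l:ltypesmedp}); the branching $\mycase{k}{\lb{l}_i:\cdots}{\{i\}}$ on $k$ offers the singleton $\with\{\lb{l}_i:\cdots\}$; the output $\outp{c_\pt{q}}{v}$ together with $\linkr{u}{v}$ consumes the $\lolli$-continuation on $c_\pt{q}$ using \name{T$\lft\lolli$} and \name{T$\mathsf{id}$}; finally the input $k.(-)$, i.e.\ $k(x).(-)$, provides the $\one\lolli\gt{G_i}$ step on $k$ (left arrow is the $\lolli$ \emph{right} rule in this orientation), exposing $\gt{G_i}$ on $k$ for the continuation $\raether{G_i}{\lb{l}_i}{k}$. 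At that point the induction hypothesis applies to $G_i$ (which is WF by Proposition~\ref{p:gtypeconsm}, extended to $\rgtypes$), yielding the typed continuation with $\gt{G_i}$ on $k$. The recursion case $G = \mu\rv{X}.G'$ is handled exactly as in the proof of Theorem~\ref{l:ltypesmedprec}: the co-recursive definition is typed by $(\rgt\nu)$, the variable case by \name{var}, the left projections are closed via $(\lft\nu)$ and the map $\clt{\cdot}{\eta}{\mathcal{P}}$, and the right-hand side becomes $\nu\rv{Y}.\gt{G'} = \gt{\mu\rv{X}.G'}$ (up to the variable renaming inherent in $\eta$). The intermediate global type forms $\igto{p}{q}\lb{l}\langle U\rangle.G$, $\igto{p}{q}\lb{l}(\!(U)\!).G$, $\igto{p}{q}(\!(U)\!).G$ (needed only if one states the result at this granularity) are treated as partial unfoldings of the communication case, slicing the above derivation at the appropriate prefix.

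The main obstacle I anticipate is the bookkeeping around the interleaving of $k$-actions with participant-actions inside a single labeled-communication block: one must check that the sequence $\mysel{k}{\lb{l}_i};\, c_\pt{p}(u).\,\about{k}.\,(\mysel{c_\pt{q}}{\lb{l}_i};\,\mycase{k}{\cdots}{\{i\}}\dots)$ is typable in the \emph{order} the prefixes appear, which requires the $k$-type of Definition~\ref{d:glolog} to have exactly the nesting $\oplus(\one\otimes\with(\one\lolli\cdots))$ and no other — so the $\zero$ in $\about{k}$ and the dummy input in $k.P$ must line up with the $\one$ summands of $\otimes$ and $\lolli$ respectively, and the singleton $\with\{\lb{l}_i:\cdots\}_{\{i\}}$ must match the fact that after committing to $\lb{l}_i$ only the $i$-th alternative remains live. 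A secondary subtlety, shared with Theorem~\ref{l:ltypesmedp}, is the use of \name{T$\lft\with_2$} to reconcile non-identical merged branching types on non-participating names before applying \name{T$\lft\oplus$}; since the $k$-session is disjoint from those names, this part of the argument carries over verbatim from Appendix~\ref{app:ltypesmedp}. Beyond these points the proof is a routine, if lengthy, induction, and I would present it by exhibiting the derivation for the communication case in a figure (analogous to Fig.~\ref{fig:ltypesmed}) and referring to Appendix~\ref{app:ltypesmedp} and~\ref{ap:rec} for the remaining cases.
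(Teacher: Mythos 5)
Your proposal is correct and follows essentially the same route as the paper: a structural induction on $G$ whose communication case threads the prefixes through the rules \name{T$\lft\oplus$}, \name{T$\rgt\oplus$}, \name{T$\lft\otimes$}, \name{T$\rgt\otimes$}, \name{T$\lft\with$}/\name{T$\rgt\with$}, \name{T$\lft\lolli$}/\name{T$\rgt\lolli$} (plus \name{T$\lft\one$} for the dummy $\one$-assignments) exactly as in the paper's Figure~\ref{f:typann}, reusing the \name{T$\lft\with_2$} merge-reconciliation from Theorem~\ref{l:ltypesmedp} and handling $\mu\rv{X}.G'$ as in Theorem~\ref{l:ltypesmedprec}. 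No gaps worth flagging.
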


\begin{proof}[Proof (Sketch)]
By structural induction on $G$. 
The most interesting case is $G = \gto{p}{q}\{\lb{l}_i\langle U_i\rangle.G_i\}_{i \in I}$, which follows by 
extending the argument detailed in Appendix~\ref{app:ltypesmedp} for the proof of Theorem~\ref{l:ltypesmedp}, using 
the typing derivation in Figure~\ref{f:typann}, where we 
omit $\G$ and
$\D$ stands for the typing of all $\pt{p}_j \in G_i$.
\end{proof}

\begin{figure}[!t]
{\small
\[\hspace{-10ex}
\infer=[\name{T$\lft\oplus$}]{
c_\pt{p}:\lt{\proj{G}{\pt{p}}}, 
c_\pt{q}:\lt{\proj{G}{\pt{q}}},
\D
\vdash 
\raether{\gto{p}{q}\{\lb{l}_i\langle U_i\rangle.G_i\}_{i \in I}}{\lb{l}}{k} ::
k{:}\gt{G} }{
\infer=[\name{T$\rgt\oplus$}]{
c_\pt{p}: U_i \otimes A_i,
c_\pt{q}:\lt{\proj{G}{\pt{q}}},
\D
\vdash \mysel{k}{\lb{l}_i}; c_\pt{p}(u).\outp{k}{\mathsf{p}}.\big(\zero_\mathsf{p}  \para\mysel{c_\pt{q}}{\lb{l}_i};\mycase{k}{\lb{l}_i : \outp{c_\pt{q}}{v}.( \linkr{u}{v} \para k(\mathsf{q}).\raether{G_i}{\lb{l}_i}{k} )}{\{i\}}\, \big)::
k{:}\gt{G} 
}{
\infer[\name{T$\lft\otimes$}]{
c_\pt{p}: U_i \otimes A_i,
c_\pt{q}:\lt{\proj{G}{\pt{q}}},
\D
\vdash c_\pt{p}(u).\outp{k}{\mathsf{p}}.\big(\zero_\mathsf{p}  \para\mysel{c_\pt{q}}{\lb{l}_i};\mycase{k}{\lb{l}_i : \outp{c_\pt{q}}{v}.( \linkr{u}{v} \para k(\mathsf{q}).\raether{G_i}{\lb{l}_i}{k} )}{\{i\}}\, \big)::k{:}\mathtt{P} \otimes \mywith{\lb{l}_i : \mathtt{Q} \lolli \gt{G_i}}{\{i\}}}{
\infer[\name{T$\rgt\otimes$}]{
c_\pt{p}: A_i,
u{:}U_i,
c_\pt{q}:\lt{\proj{G}{\pt{q}}},
\D
\vdash \outp{k}{\mathsf{p}}.\big(\zero_\mathsf{p}  \para\mysel{c_\pt{q}}{\lb{l}_i};\mycase{k}{\lb{l}_i : \outp{c_\pt{q}}{v}.( \linkr{u}{v} \para k(\mathsf{q}).\raether{G_i}{\lb{l}_i}{k} )}{\{i\}}\, \big)::k{:}\mathtt{P} \otimes \mywith{\lb{l}_i : \mathtt{Q} \lolli \gt{G_i}}{\{i\}}}{
\vdash \zero_\mathsf{p} :: \mathsf{p}:\mathtt{P} \quad & 
\infer=[\name{T$\lft\with$}]{
c_\pt{p}: A_i,
u{:}U_i,
c_\pt{q}:\lt{\proj{G}{\pt{q}}},
\D
\vdash \mysel{c_\pt{q}}{\lb{l}_i};\mycase{k}{\lb{l}_i : \outp{c_\pt{q}}{v}.( \linkr{u}{v} \para k(\mathsf{q}).\raether{G_i}{\lb{l}_i}{k} )}{\{i\}} ::k{:}\mywith{\lb{l}_i : \mathtt{Q} \lolli \gt{G_i}}{\{i\}}}{
\infer[\name{T$\rgt\with$}]{
c_\pt{p}: A_i,
u{:}U_i,
c_\pt{q}:U_i \otimes B_i,
\D
\vdash \mycase{k}{\lb{l}_i : \outp{c_\pt{q}}{v}.( \linkr{u}{v} \para k(\mathsf{q}).\raether{G_i}{\lb{l}_i}{k} )}{\{i\}}::k{:}\mywith{\lb{l}_i : \mathtt{Q} \lolli \gt{G_i}}{\{i\}}
}{
\infer[\name{T$\lft\lolli$}]{
c_\pt{p}: A_i,
u{:}U_i,
c_\pt{q}:U_i \otimes B_i,
\D
\vdash \outp{c_\pt{q}}{v}.( \linkr{u}{v} \para k(\mathsf{q}).\raether{G_i}{\lb{l}_i}{k} )::k{:}\mathtt{Q} \lolli \gt{G_i}}{
\infer{u{:}U_i \vdash \linkr{u}{v} :: v{:}U_i}{} & 
\infer[\name{T$\rgt\lolli$}]{
c_\pt{p}: A_i,
c_\pt{q}: B_i,
\D
\vdash k(\mathsf{q}).\raether{G_i}{\lb{l}_i}{k} ::k{:}\mathtt{Q} \lolli \gt{G_i}}{
\infer[\name{T$\lft\one$}]{
c_\pt{p}: A_i,
c_\pt{q}: B_i,
\mathsf{q}:\mathtt{Q},
\D
\vdash \raether{G_i}{\lb{l}_i}{k} ::k{:}\gt{G_i}}{
\infer{
c_\pt{p}: A_i,
c_\pt{q}: B_i,
\D
\vdash \raether{G_i}{\lb{l}_i}{k} ::k{:}\gt{G_i}}{
}
}
}
}
}
}
}
}
} 
} 
\]
}
\caption{Typing for Annotated Mediums: Case $G = \gto{p}{q}\{\lb{l}_i\langle U_i\rangle.G_i\}_{i \in I}$ \label{f:typann}}
\end{figure}

\begin{theorem}[From Well-Typed Annotated Mediums To WF Global Types]
Let $G \in \rgtypes$ be a 
global type. 
If the following judgment is well-typed
$$\G; c_1{:}A_1, \ldots, c_ n{:}A_ n \vdash \raether{G}{k}{k} :: k:A_0$$
then 
$\gt{G} \subts A_0$ and 
$\exists T_1, \ldots, T_ n$ 
s.t. 
$\proj{G}{\mathtt{r}_j} \subt T_j$ and 
$\clt{T_j}{\eta}{\mathcal{P}} = A_j$
for all $\pt{r}_j \in G$.
\end{theorem}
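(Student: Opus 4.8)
The statement is the converse of Theorem~\ref{l:ltypesmedpan} for annotated mediums, so the plan is to mirror the proof of Theorem~\ref{l:medltypesrec} (see Appendix~\ref{ap:rec}) while additionally tracking the dedicated session $k$ on the right-hand side. I would proceed by structural induction on $G \in \rgtypes$, with the case $G = \gto{p}{q}\{\lb{l}_i\langle U_i\rangle.G_i\}_{i \in I}$ as the crux, and the cases $G = \gend$, $G = \mu\rv{X}.G'$, $G = \rv{X}$ handled as in the proof of Theorem~\ref{l:medltypesrec} (for the recursion case, using rules $(\lft\nu)$, $(\rgt\nu)$ and $\name{var}$ exactly as in Appendix~\ref{ap:rec}; for $\gend$, noting $\gt{\gend} = \one$ and $\raether{\gend}{\lb{l}'}{k} = \zero$ so $A_0$ must be $\one$ up to $\name{T}\lft\one$, whence $\gt{\gend} \subts A_0$ trivially).

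For the communication case, write $\raether{G}{\lb{l}}{k}$ explicitly via Definition~\ref{d:raether} and apply repeated typing inversion, reading the derivation in Figure~\ref{f:typann} backwards. The outermost prefix is the branching $\mycasebig{c_\pt{p}}{\cdots}{i \in I}$ on $c_\pt{p}$, so by inversion on rule~\name{T$\lft\oplus$} (possibly preceded by silent uses of \name{T$\rgt\oplus_2$} on the right, which is precisely what $\subts$ absorbs) we get $A_1 = \myoplus{\lb{l}_i : C_i}{i \in I'}$ for some $I' \supseteq I$ and $A_0 = \myoplus{\lb{l}_i : A_0^i}{i \in I''}$ for some $I'' \supseteq I$, with one premise per $i \in I$ typing the continuation. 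For each such branch, inversion down the spine — $\mysel{k}{\lb{l}_i}$ (rule \name{T$\rgt\oplus$}), $c_\pt{p}(u)$ (rule \name{T$\lft\otimes$}), $\outp{k}{\mathsf{p}}$ (rule \name{T$\rgt\otimes$}), $\mysel{c_\pt{q}}{\lb{l}_i}$ (rule \name{T$\lft\with_1$}, again possibly preceded by silent \name{T$\lft\with_2$}, which $\subt$ absorbs), the branching on $k$ (rule \name{T$\rgt\with$}), $\outp{c_\pt{q}}{v}$ (rule \name{T$\lft\lolli$}), the forwarder $\linkr{u}{v}$ (rule \name{T$\mathsf{id}$}), and the input $k(\mathsf{q})$ (rule \name{T$\rgt\lolli$} followed by \name{T$\lft\one$}) — forces $C_i = U_i \otimes (\mywith{\lb{l}_i : U_i \lolli A_0^i}{\{i\}})$ after identifying the carried types, exposes a subderivation $\G; c_\pt{p}{:}A_i, c_\pt{q}{:}B_i, \D \vdash \raether{G_i}{\lb{l}_i}{k} :: k{:}A_0^i$ for suitable $A_i, B_i$, and recovers $A_0^i$ so that $\gt{G_i} \subts A_0^i$ after the inductive step. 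Applying the IH to these subderivations yields $T^i_j$ for all $\pt{p}_j \in G_i$ with $\proj{G_i}{\pt{p}_j} \subt T^i_j$ and $\clt{T^i_j}{\eta}{\mathcal{P}} = $ the corresponding component. Then assemble: put $T_\pt{p} = \pt{p}!\{\lb{l}_i\langle U_i\rangle.T^i_\pt{p}\}_{i \in I}$, $T_\pt{q} = \pt{p}?\{\lb{l}_i\langle U_i\rangle.T^i_\pt{q}\}_{i \in I'}$ (the larger index set coming from the silent branch additions, which is exactly why $\subt$ and not equality appears in the conclusion), and $T_\pt{r} = \fuse_{i\in I}T^i_\pt{r}$ for $\pt{r} \notin \{\pt{p},\pt{q}\}$; check by Definitions~\ref{d:proj}, \ref{d:subt}, \ref{d:loclogt} and \ref{d:glolog} that $\proj{G}{\pt{r}_j} \subt T_j$ and $\clt{T_j}{\eta}{\mathcal{P}} = A_j$, and that $\gt{G} = \myoplus{\lb{l}_i : \rpart{\pt{p}}\otimes \mywith{\lb{l}_i : \rpart{\pt{q}}\lolli\gt{G_i}}{\{i\}}}{i\in I} \subts A_0$ (under $\rpart{\pt{p}} = \rpart{\pt{q}} = \one$, so $\mathtt{P} = \mathtt{Q} = \one$, consistent with \name{T$\rgt\one$}/\name{T$\lft\one$}); the $\subts$ here accounts for the silent options possibly present in $A_0$ but not in $\gt{G}$.

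\textbf{Main obstacle.} The delicate part is bookkeeping the interplay between the two silent rules (\name{T$\lft\with_2$} on the branching side of $c_\pt{q}$ and $k$, and \name{T$\rgt\oplus_2$} on the selection side of $c_\pt{p}$ and $k$) as one walks the spine of the annotated medium, which alternates between left- and right-typed connectives on the same two names $c_\pt{q}$ and $k$. One must argue that these silent insertions can only \emph{enlarge} the label sets (never rename or reorder, and never touch the carried $U_i$ or the continuations, which are pinned by the forwarder $\linkr{u}{v}$ via \name{T$\mathsf{id}$} and by the tensor/lollipop structure), so that the relations $\subt$ and $\subts$ are exactly strong enough to state the conclusion — and in particular that the label sets of $T_\pt{q}$, of $A_0$, and of the outer branching all nest coherently around the common core $I$. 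Getting the quantifiers over these index sets right, and ensuring the merge $\fuse_{i\in I}T^i_\pt{r}$ is well-defined (which follows because each $T^i_\pt{r}$ is typed by the \emph{same} $A_{\pt{r}}$ by the shared-$\D$ observation, as in the proof of Theorem~\ref{l:medltypes}), is where the real care is needed; the rest is routine inversion.
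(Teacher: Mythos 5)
Your proposal is essentially the paper's proof: the paper proves this theorem simply ``by structural induction on $G$,'' and your induction fleshes that out with exactly the ingredients the paper uses in the analogous detailed proofs — inversion along the derivation shape of Fig.~\ref{f:typann}, $\subt$ and $\subts$ to absorb the silent rules \name{T$\lft\with_2$} and \name{T$\rgt\oplus_2$}, the shared-$\D$ observation to make $\fuse_{i\in I}T^i_{\pt{r}}$ well-defined, and the $(\lft\nu)$/$(\rgt\nu)$/\name{var} treatment of recursion from Appendix~\ref{ap:rec}. One small correction: inversion on \name{T$\lft\oplus$} forces the label set of $A_1$ (the $\oplus$ on $c_\pt{p}$) to be exactly $I$, since no silent rule enlarges a left $\oplus$, so your hedge $I' \supseteq I$ there is unnecessary — and if such enlargement could occur it would conflict with your later claim $\clt{T_\pt{p}}{\eta}{\mathcal{P}} = A_1$, because $\subt$ (being merge-based) cannot absorb extra labels in selection types; only $c_\pt{q}$'s $\with$ type and the right-hand $\oplus$ on $k$ can acquire extra labels.
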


\begin{proof}
By structural induction on $G$.
\end{proof}

\subsection{Operational Correspondence via Annotated Mediums\label{app:opcorr}}
  The following property follows directly from the definition of annotated mediums 
(Definition~\ref{d:raether})
and systems (Definition~\ref{d:systems}) as well as from the properties of typed processes.

\begin{proposition}\label{p:opcorrt}
Let $G$ be a MWF global type, and let $\mathcal{S}^k(G)$ be the systems implementing $G$.
We have:
\begin{enumerate}[1.]
\item for all $P_j \in \mathcal{S}^k(G)$, we have $\G; \cdot \vdash P_j :: k{:} \gt{G}$, for some $\G$.
\item for all $P_j \in \mathcal{S}^k(G)$, we have that if $P_j \tra{\,\labelset\,} P'$ with
 $\labelset \neq \tau$ then $subj(\lambda) = k$.
\end{enumerate}
\end{proposition}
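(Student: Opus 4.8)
The statement to prove is Proposition~\ref{p:opcorrt}, whose two parts concern well-typed systems $P_j \in \mathcal{S}^k(G)$. My plan is to derive both parts by combining the already-established typing theorems for annotated mediums with the structure of systems (Definition~\ref{d:systems}) and the general properties of well-typed processes (Theorem~\ref{th:prop}). I would begin with part~1, which is essentially a typing bookkeeping argument: a system is built by closing the free names $c_{\pt{p}_1}, \ldots, c_{\pt{p}_n}$ of the annotated medium against implementations drawn from $\closetc{\D}$ (Definition~\ref{d:elrauxc}). By Theorem~\ref{l:ltypesmedpan} (the annotated-medium version of well-formedness implies typing), we have $\G; \D \vdash_\eta \raether{G}{k}{k} :: k{:}\gt{G}$ with $\D = c_1{:}\clt{\proj{G}{\pt{p}_1}}{\eta}{\mathcal{P}}, \ldots, c_n{:}\clt{\proj{G}{\pt{p}_n}}{\eta}{\mathcal{P}}$. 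Each $Q_j$ in the closure is typed as $\cdot;\cdot \vdash Q_j :: x_j{:}A_j$ with $A_j = \clt{\proj{G}{\pt{p}_j}}{\eta}{\mathcal{P}}$, so repeated application of rule~\name{T$\cut$} to cut each $Q_j$ against $\raether{G}{k}{k}$ along the name $c_{\pt{p}_j}$ yields $\G; \cdot \vdash (\nub c_{\pt{p}_1}, \ldots, c_{\pt{p}_n})(Q \para \raether{G}{k}{k}) :: k{:}\gt{G}$, which is exactly the claim. I would spell out that the cuts are well-formed because the domains match and no linear resource is duplicated.

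For part~2, I would argue that in a system all behavior except that on $k$ has been internalized by the restrictions. The only free name of $P_j$ is $k$ (the restriction in Definition~\ref{d:systems} binds all the $c_{\pt{p}_i}$, and by construction closure processes $Q_j$ and the annotated medium have no other free names), hence any visible action $\labelset$ with $\labelset \neq \tau$ must have its subject among the free names of $P_j$, i.e., $subj(\labelset) = k$. I would make this precise by invoking the standard property of the LTS that for $P \tra{\labelset} P'$ with $\labelset \neq \tau$ one has $subj(\labelset) \in \fn{P}$ (this is immediate from the transition rules in Figure~\ref{fig:LTS}: each non-$\tau$ action rule exposes a free name of the process as the subject, and the rules $(\mathsf{par})$, $(\mathsf{res})$ only preserve or shrink the set of free names; in particular a restricted name can never be the subject of an observable action). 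Combining this with $\fn{P_j} = \{k\}$ gives the result. Alternatively, and perhaps more cleanly given the typed setting, I would appeal to part~1: since $\G; \cdot \vdash P_j :: k{:}\gt{G}$ with empty linear context, the only channel along which $P_j$ can interact observably with a typed environment is $k$ — this is a direct consequence of the way typed transitions are set up in~\cite{CairesP10}, where observable actions of a process $\G;\D \vdash P :: z{:}C$ occur only on $z$ or on names in $\D$.

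\textbf{Main obstacle.} I expect the genuinely delicate point to be nailing down $\fn{P_j} = \{k\}$ rigorously, i.e., confirming that nothing leaks out. This requires checking that the annotated medium $\raether{G}{k}{k}$ has free names exactly $\{k\} \cup \{c_{\pt{p}} : \pt{p} \in \partp{G}\}$ (which is Fact~\ref{fc:name} together with a small induction on Definition~\ref{d:raether} to see that $k$ is the only additional free name introduced), and that each closure component $Q_j$ has free names exactly $\{c_{\pt{p}_j}\}$ (immediate from $\cdot;\cdot \vdash Q_j :: x_j{:}A_j$ and the fact that well-typed processes have free names contained in the domain of their judgment). Once these are in place, the restriction $(\nub c_{\pt{p}_1}, \ldots, c_{\pt{p}_n})$ removes all the $c_{\pt{p}_i}$, leaving only $k$, and the rest is routine. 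I would therefore front-load a short lemma recording $\fn{\raether{G}{k}{k}} = \{k\} \cup \{c_{\pt{p}} : \pt{p}\in\partp{G}\}$, after which both parts of the proposition fall out quickly.
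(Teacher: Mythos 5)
Your proposal is correct and takes essentially the approach the paper intends: the paper gives no detailed argument, stating only that Proposition~\ref{p:opcorrt} ``follows directly'' from Definitions~\ref{d:raether} and~\ref{d:systems} and the properties of typed processes, and your argument (Theorem~\ref{l:ltypesmedpan} plus repeated \name{T$\cut$} for part~1, and the free-name/restriction bookkeeping for part~2) is exactly that reasoning made explicit. One minor caution: your ``alternative'' justification of part~2 via typed observability is stated too strongly (a process typed $\G;\cdot \vdash P :: k{:}\gt{G}$ may in general also perform actions on shared names in $\G$), but your primary syntactic argument that $\fn{P_j} \subseteq \{k\}$ is the right one and suffices.
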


We repeat the statement given in Page~\pageref{th:opcorr}:

\begin{theorem}[Global Types and Mediums: Operational Correspondence]\label{app:opcorr}
Let $G$ be a MWF global type and $P$ any process in $\mathcal{S}^k(G)$. We have:
\begin{enumerate}[(1)]
\item If $G \tra{\,\gtlabelset\,} G'$ then 
there exist $\labelset, P'$ such that $P \wtra{\,\labelset\,} P'$,
$\labelset = \mlab{\gtlabelset}{k}$, 
 and $P' \in \mathcal{S}^k(G')$.
 \item If there is some $P_0$ such that $P \wtra{} P_0 \tra{\,\labelset\,} P'$ with
 $\labelset \neq \tau$ then there exist $\gtlabelset,  G'$ such that $G \tra{\,\gtlabelset\,} G'$, 
 $subj(\gtlabelset) = \pt{p}$,
 $ \gtlabelset = \glab{\labelset}{\pt{p}} $, 
 and $P' \in \mathcal{S}^k(G')$.
\end{enumerate}
\end{theorem}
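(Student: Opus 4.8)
\textbf{Proof plan for Theorem~\ref{th:opcorr} (operational correspondence).}
The plan is to prove both directions by induction on the structure of $G$, working up to a suitable prefix of the LTS for extended global types (Figure~\ref{f:gtltss}) and the LTS for processes (Figure~\ref{fig:LTS}). The core observation is that a single communication step $\gto{p}{q}\{\lb{l}_i\langle U_i\rangle.G_i\}_{i\in I}$ is decomposed in the global LTS into the four micro-steps \name{G1}--\name{G4}, and that the annotated medium $\raether{G}{k}{k}$ (as extended in Appendix~\ref{app:lts}) was deliberately designed so that each such micro-step corresponds to exactly one visible action of $\raether{G}{k}{k}$ on the session name $k$, namely $\ov{\mysel{k}{\lb{l}_i}}$, $\about{k}$, $\mysel{k}{\lb{l}_i}$, and $k$, respectively. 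The mappings $\mlab{\cdot}{k}$ and $\glab{\cdot}{\pt{p}}$ of Definition~\ref{d:labelst} are set up precisely to align these. I would therefore first prove a \emph{micro-step lemma}: for every extended global type $G$ and every $P\in\mathcal{S}^k(G)$, if $G\tra{\,\gtlabelset\,}G'$ via one of \name{G1}--\name{G4}, then $P\wtra{\mlab{\gtlabelset}{k}}P'$ for some $P'\in\mathcal{S}^k(G')$, and conversely any non-$\tau$ transition of some $P_0$ reachable from $P$ is on subject $k$ (Proposition~\ref{p:opcorrt}(2)) and is matched by a global micro-step. The $\wtra{}$ (rather than $\tra{}$) is essential: before the medium can fire its action on $k$, internal $\tau$-synchronizations between the medium and the local implementations $Q\in\closetc{\Delta}$ may need to occur (e.g.\ $\pt{p}$ sending its label and payload, $\pt{q}$ receiving the medium's selection and the forwarded name). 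Type preservation and progress (Theorem~\ref{th:prop}) guarantee these $\tau$-steps exist and keep the system well-typed, hence inside $\mathcal{S}^k(G)$ for the appropriate residual global type.

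\textbf{Direction (a): $G\tra{\gtlabelset}G'$ implies $P\wtra{\labelset}P'$.} I would proceed by induction on the derivation of $G\tra{\gtlabelset}G'$. For the base rules \name{G1}--\name{G4}, unfold $\raether{G}{k}{k}$ according to the extended Definition~\ref{d:raether} in Appendix~\ref{app:lts}; one reads off directly which prefix on $k$ is enabled, possibly after $\tau$-reductions with the components $Q_j$ of the closure that supply the matching co-actions on $c_\pt{p}$ and $c_\pt{q}$. In each case, the residual process is, up to structural congruence, exactly the annotated medium of the residual extended global type, composed with (a residual of) the closure; verifying that this residual closure is still of the form $\closetc{\Delta'}$ for the environment $\Delta'$ dictated by the typing in Theorem~\ref{l:ltypesmedpan} is the bookkeeping heart of the argument, and here I lean on type preservation. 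For rule \name{G5} (recursion), I unfold $\mu\rv{X}.G$ to $G\subst{\mu\rv{X}.G}{\rv{X}}$ on the global side and correspondingly apply the process reduction $(\recc\rv{X}(\widetilde y).P)\,\widetilde c\red P\subst{\widetilde c}{\widetilde y}\subst{\recc\rv{X}(\widetilde y).P}{\rv{X}}$ on the medium, which is a $\tau$-step absorbed into $\wtra{}$; then invoke the induction hypothesis. The equality $\labelset=\mlab{\gtlabelset}{k}$ is immediate from Definition~\ref{d:labelst} in each base case and propagates through \name{G5}.

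\textbf{Direction (b): a non-$\tau$ transition of $P$ is matched by $G$.} Here I would first use Proposition~\ref{p:opcorrt}(2) to conclude $subj(\labelset)=k$, so the only visible actions are $\ov{\mysel{k}{\lb{l}}}$, $\about{k}$, $\mysel{k}{\lb{l}}$, $k$. Then, by inspecting the syntactic shape of $\raether{G}{k}{k}$ (and exploiting that all $\tau$-steps from $P$ either reduce the $\recc$ construct, or synchronize medium with closure on some $c_\pt{r}$, none of which can consume or reorder the guarded prefixes on $k$), a case analysis on which $k$-prefix is exposed after $P\wtra{}P_0$ identifies a unique pending (extended) global micro-step \name{G1}--\name{G4}, whose label is $\glab{\labelset}{\pt{p}}$ with $\pt{p}$ the participant whose name-subscript governs the enclosing branch. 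Formally one shows that $P_0\in\mathcal{S}^k(G_0)$ for some extended $G_0$ reachable from $G$ by $\tau$-equivalent unfoldings, and that $G_0$ has the matching transition. The residual membership $P'\in\mathcal{S}^k(G')$ again follows from type preservation together with the structural identification of medium-residual with annotated-medium-of-residual.

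\textbf{Main obstacle.} The delicate point is not the alignment of labels (which is essentially definitional) but establishing and maintaining the invariant ``every intermediate process reachable from a system of $G$ is itself a system of some extended global type reachable from $G$''. This requires: (i) closing $\mathcal{S}^k(\cdot)$ under the internal $\tau$-reductions that necessarily interleave medium and local-participant code — I expect to need the extended Definition~\ref{d:raether} with the $\leadsto$-forms precisely so that after, say, $\pt{p}$ has sent but $\pt{q}$ has not yet received, the configuration is still captured; and (ii) checking that the forwarder $\linkr{u}{v}$ introduced for each message exchange is eliminated by a $\tau$-step ($(\nub x)(\linkr{x}{y}\para P)\red P\subst{y}{x}$) in a way that does not affect observability on $k$, so that modulo $\equiv$ and these administrative reductions the residual really is the annotated medium of the continuation. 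Handling delegation/name-passing through $\linkr{\cdot}{\cdot}$ cleanly, and making the ``$\tau$-closure of $\mathcal{S}^k$'' precise enough to serve as the induction invariant while remaining lightweight, is where the real work lies; once that invariant is in place, both directions are routine inductions over the two LTSs.
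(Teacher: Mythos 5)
Your plan matches the paper's own proof in essence: direction (a) is handled by a case analysis on the global LTS rules, unfolding the (extended) annotated medium and absorbing the medium/closure synchronizations into a weak transition whose existence and label-match rest on type preservation, progress and non-divergence (Theorem~\ref{th:prop}), with the residual recognized as a system of the residual extended global type; direction (b) follows from Proposition~\ref{p:opcorrt}(2) plus inspection of the exposed $k$-prefix. Your explicit ``micro-step lemma'' and the $\tau$-closure invariant are just a more systematic packaging of exactly the argument the paper sketches (it details only rule \name{G1} and declares the rest similar), so the proposal is correct and essentially the same proof.
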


\begin{proof}[Proof]
Part (2) is easy, relying on Proposition~\ref{p:opcorrt} and on the definitions of system and annotated mediums.
We notice that the (finite) reduction sequence leading to $P_0$ (i.e., preceding the observable action on $\labelset$) must necessarily involve at least one synchronization between the annotated medium and its closure. 
As for part (1), we consider only the case  in which the global transition is realized via rule
$$
\inferrule*[Left=\name{G1}]{}{\gto{p}{q}\{\lb{l}_i\langle U_i\rangle.G_i\}_{i \in I}  \tra{\,\ov{\mysel{\pt{p}}{~\lb{l}_j}}\,}  \igto{p}{q}\lb{l}_j\langle U_j\rangle.G_j}  \quad (j \in I) 
$$
since other cases are similar.
Writing $\pt{p}_1$ and $\pt{p}_2$ instead of $\pt{p}$ and $\pt{q}$, by Definition~\ref{d:systems} we have:
\begin{equation*}
P  =   (\nub c_{\pt{p_1}}, \ldots, c_{\pt{p_n}})(Q_{\pt{p_1}} \para Q^* \para \raether{\gto{p_1}{p_2}\{\lb{l}_i\langle U_i\rangle.G_i\}_{i \in I}}{k}{k}) 
\end{equation*}
where
$Q^* = Q_{\pt{p_2}} \para \cdots \para Q_{\pt{p_n}}$ 
and for all $i \in \{1, \ldots, n\}$ Definition~\ref{d:elrauxc} ensures that
\begin{equation*}
\G ; \cdot  \vdash   Q_{{\pt{p_i}}} :: c_{{\pt{p_i}}}{:}\lt{\proj{G}{\pt{p}_i}}
\end{equation*}
In particular, 
well-typedness of $\raether{\gto{p_1}{p_2}\{\lb{l}_i\langle U_i\rangle.G_i\}_{i \in I}}{k}{k}$
ensures that 
$$\G ; \cdot  \vdash   Q_{{\pt{p_1}}} :: c_{{\pt{p_1}}}{:}\myoplus{\lb{l}_i : U_i \otimes B_i}{i \in I}$$
We now have a weak transition (which is finite, due to Theorem~\ref{th:prop}(3)):
\begin{eqnarray*}
P  & \wtra{} & (\nub c_{\pt{p_1}}, \ldots, c_{\pt{p_n}})(Q'_{\pt{p_1}} \para \cdots \para Q^*_{1} \para \raether{\gto{p_1}{p_2}\{\lb{l}_i\langle U_i\rangle.G_i\}_{i \in I}}{k}{k}) = P_1
\end{eqnarray*}
where typing preservation (Theorem~\ref{th:prop}(1)) ensures that $Q'_{\pt{p_1}} \tra{\ov{\mysel{c_{\pt{p_1}}}{~\lb{l}_j}}} Q''_{\pt{p_1}}$. Then,
Definition~\ref{d:raether} ensures 
a synchronization on name $c_{\pt{p_1}}$; by expanding the definition of annotated medium we obtain:
\begin{eqnarray*}
P_1  & \tra{\,\tau\,} & (\nub c_{\pt{p_1}}, \ldots, c_{\pt{p_n}})(Q''_{\pt{p_1}} \para \cdots \para Q^*_{1} \para \mysel{k}{\lb{l}_j};\raether{ \igto{p_1}{p_2}\lb{l}_j\langle U_j\rangle.G_j }{\lb{l}}{k}) \\
& \tra{\ov{\mysel{k}{\, \lb{l}_j}}} & (\nub c_{\pt{p_1}}, \ldots, c_{\pt{p_n}})(Q''_{\pt{p_1}} \para \cdots \para Q^*_{1} \para \raether{ \igto{p_1}{p_2}\lb{l}_j\langle U_j\rangle.G_j }{\lb{l}}{k}) = P'
\end{eqnarray*}
and by Definition~\label{d:labels} we have that 
$\mlab{\,\ov{\mysel{\pt{p}}{\lb{l}}} \,}{k}  =  \ov{\mysel{k}{\lb{l}}}$.
Finally, 
we observe that by 
performing a selection action, we obtain that the offer of $Q''_{{\pt{p_1}}}$ evolves: 
$\G ; \cdot  \vdash   Q''_{{\pt{p_1}}} :: c_{{\pt{p_1}}}{:}U_j \otimes B_j$.
Therefore, by expanding the definition of system,  
we may infer $P' \in  \mathcal{S}^k(\igto{p_1}{p_2}\lb{l}_j\langle U_j\rangle.G_j)$.
\end{proof}

%
%
%


 

%
%

\end{document}